%% For double-blind review submission, w/o CCS and ACM Reference (max submission space)
\documentclass[acmsmall,screen]{acmart}
\settopmatter{}
% \renewcommand{\footnotetextcopyrightpermission}[1]{}
%% For double-blind review submission, w/ CCS and ACM Reference
%\documentclass[acmsmall,review,anonymous]{acmart}\settopmatter{printfolios=true}
%% For single-blind review submission, w/o CCS and ACM Reference (max submission space)
%\documentclass[acmsmall,review]{acmart}\settopmatter{printfolios=true,printccs=false,printacmref=false}
%% For single-blind review submission, w/ CCS and ACM Reference
%\documentclass[acmsmall,review]{acmart}\settopmatter{printfolios=true}
%% For final camera-ready submission, w/ required CCS and ACM Reference
%\documentclass[acmsmall]{acmart}\settopmatter{}

%%% If you see 'ACMUNKNOWN' in the 'setcopyright' statement below,
%%% please first submit your publishing-rights agreement with ACM (follow link on submission page).
%%% Then please update our instructions page and copy-and-paste the NEW commands into your article.
%%% Please contact us in case of questions; allow up to 10 min for the system to propagate the information.
%%%
%%% The following is specific to OOPSLA2 '22 and the paper
%%% 'Generic Go to Go: Dictionary-Passing, Monomorphisation, and Hybrid'
%%% by Stephen Ellis, Shuofei Zhu, Nobuko Yoshida, and Linhai Song.
%%%
\setcopyright{rightsretained}
\acmPrice{}
\acmDOI{10.1145/3563331}
\acmYear{2022}
\copyrightyear{2022}
\acmSubmissionID{oopslab22main-p418-p}
\acmJournal{PACMPL}
\acmVolume{6}
\acmNumber{OOPSLA2}
\acmArticle{168}
\acmMonth{10}

%% Copyright information
%% Supplied to authors (based on authors' rights management selection;
%% see authors.acm.org) by publisher for camera-ready submission;
%% use 'none' for review submission.
% \setcopyright{none}
%\setcopyright{acmcopyright}
%\setcopyright{acmlicensed}
%\setcopyright{rightsretained}
%\copyrightyear{2018}           %% If different from \acmYear

%% Bibliography style
\bibliographystyle{ACM-Reference-Format}

%% Citation style
%% Note: author/year citations are required for papers published as an
%% issue of PACMPL.

\citestyle{acmauthoryear}   %% For author/year citations

%%%%%%%%%%%%%%%%%%%%%%%%%%%%%%%%%%%%%%%%%%%%%%%%%%%%%%%%%%%%%%%%%%%%%%
%% Note: Authors migrating a paper from PACMPL format to traditional
%% SIGPLAN proceedings format must update the '\documentclass' and
%% topmatter commands above; see 'acmart-sigplanproc-template.tex'.
%%%%%%%%%%%%%%%%%%%%%%%%%%%%%%%%%%%%%%%%%%%%%%%%%%%%%%%%%%%%%%%%%%%%%%

%% Split env 
\usepackage{xr-hyper}
\usepackage{xr}

\providecommand{\ifnotsplit}[2]{#1}% fallback definition
\usepackage{afterpage}

\usepackage{amssymb}% http://ctan.org/pkg/amssymb
\usepackage{pifont}% http://ctan.org/pkg/pifont

\usepackage{booktabs}   %% For formal tables:
                        %% http://ctan.org/pkg/booktabs
\usepackage{subcaption} %% For complex figures with subfigures/subcaptions
                        %% http://ctan.org/pkg/subcaption
\usepackage{tabularx}
\usepackage{multirow} % for borders and merged ranges
\usepackage{soul}% for underlines
\usepackage{proof}
\usepackage{mathpartir}
\usepackage{ifthen}
\usepackage{thm-restate}

\usepackage{bbding}

\usepackage{pfsteps} 
\usepackage[shortlabels, inline]{enumitem}
\usepackage[acronym]{glossaries}

\usepackage{tikz}
\usepackage{float}
\usetikzlibrary{positioning, arrows, automata, calc}
\usetikzlibrary{calc} 
\usepackage{tikz-cd}
\usetikzlibrary{decorations.pathmorphing}
\usepackage{listings}
\usepackage{wrapfig}
\usepackage{xcolor}
\usepackage{xspace}
\usepackage{stmaryrd}
\usepackage{mathtools}
\usepackage{cancel}
\usepackage{caption}

\declaretheoremstyle[
  % spaceabove=6pt, 
  % spacebelow=6pt,
  % headindent=\parindent,
  headfont=\normalfont\scshape,
  notefont=\normalfont\scshape,
  notebraces={(}{)},
  bodyfont=\normalfont\itshape
]{thmstyle}
\declaretheoremstyle[
  % spaceabove=6pt, 
  % spacebelow=6pt,
  % headindent=\parindent,
  headfont=\normalfont\scshape,
  notefont=\normalfont\scshape,
  notebraces={(}{)},
  bodyfont=\normalfont
]{defstyle}

\theoremstyle{defstyle}
\newtheorem{definition}{Definition}[section]

\newacronym{fg}{FG}{Featherweight Go}
\newacronym{fgg}{FGG}{Featherweight Generic Go}
\newacronym{gotogo}{\texttt{go2go}}{Go prototype translator}
\newacronym{dict}{\texttt{dict}}{Dictionary-Passing Translation}
\newacronym{mono}{\texttt{mono}}{Monomorphisation Translation}
\newacronym{erasure}{\texttt{erasure}}{Erasure Translation}

\newcommand{\dictpassddaggerunicodesymbol}{}
\DeclareUnicodeCharacter{01C2}{\dictpassddaggerunicodesymbol}

\newcommand{\gomacro}{Go~1.18\xspace}

%%%%%%%%%%%%%%%%%%%%%%%%%%%%%
%   Syntax 
%%%%%%%%%%%%%%%%%%%%%%%%%%%%%

\newcommand{\lit}[1]{\ensuremath{\operatorname{\text{\sffamily\bfseries{#1}}}}}
\newcommand{\old}[1]{{\color{gray}#1}}

\newcommand{\sytxRulepr}[2]{{\footnotesize#1}&{\footnotesize$#2$}}
\newcommand{\sytxRuleprMulti}[2]{{\footnotesize#1}&\multicolumn{3}{l}{\footnotesize$#2$}}
\newcommand{\sytxRuleIndentpr}[2]{\quad{\footnotesize#1}&\quad{\footnotesize$#2$}}
\newcommand{\sytxRuleIndentprMulti}[2]{\quad{\footnotesize#1}&\multicolumn{3}{l}{\quad{\footnotesize$#2$}}}

\newcommand{\oldSytxRulepr}[2]{\old{\footnotesize #1} & \old{\footnotesize$#2$}}
\newcommand{\oldSytxRuleIndentpr}[2]{\quad\old{\footnotesize#1} & \quad \old{\footnotesize$#2$}}

\newcommand{\sytxBrace}[1]          {\texttt{\{}#1\texttt{\}}}

\newcommand{\sType}[1]{\ensuremath{#1_S}}
\newcommand{\sTypeInit}[2]{\ensuremath{\sType{#1}\sytxBrace{#2}}}
\newcommand{\iType}[1]{\ensuremath{#1_I}}
\newcommand{\jType}[1]{\ensuremath{#1_J}}

\newcommand{\var}[1]{\mathit{#1}}
\newcommand{\multi}[1]{\overline{#1}}

\newcommand{\struct}[1]     {\lit{struct}       \sytxBrace{#1}}
\newcommand{\interface}[1]  {\lit{interface}   \sytxBrace{#1}}
\newcommand{\func}          {\lit{func}}
\newcommand{\funcDelc}[5]   {\func~(\this~#1)~#2(#3)~#4~\sytxBrace{#5}}
\newcommand{\type}          {\lit{type}}
\newcommand{\return}       {\!\lit{return}}

\newcommand{\typeFormal}[1][\multi{\alpha~\iType{\tau}}]{#1}

\newcommand{\prog}{P}

\newcommand{\program}[2][D]{\multi{#1} \triangleright #2}

\newcommand{\typeActualMethod}      {[\psi]}
\newcommand{\typeActualReceive}     {[\phi]}
\newcommand{\typeFormalMethod}      {[\Psi]}
\newcommand{\typeFormalType}      {[\Phi]} 
\newcommand{\typeFormalReceive}     {[\multi{\alpha}]}

\newcommand{\this}{\texttt{this}}

\newcommand{\dom}[1]{\operatorname{\var{dom}}\!\left({#1}\right)}
\newcommand{\fv}[1]{\operatorname{\var{fv}}\!\left({#1}\right)}

\newcommand{\panic}{\ensuremath{\mathsf{panic}}}

%%%%%%%%%%%%%%%%%%%%%%%%%%%%%
%   Inference Rules
%%%%%%%%%%%%%%%%%%%%%%%%%%%%%
\definecolor{ruleColor}{rgb}{0.1, 0.3, 0.1}
\newcommand{\rulename}[1]{{\color{ruleColor}\text{\small [#1]}}}
\newcommand{\rulenamesmall}[1]{{\color{ruleColor}\text{\small [#1]}}}

\newcommand{\axiom}[1]{\infer{#1}{\strut}}

\newcommand{\namedRule}[2]{
   \begin{array}{l}
      \rulenamesmall{#1}\\
       #2
   \end{array}
}

\newcommand{\namedRuleTwo}[3]{
    \begin{array}{lr}
      \rulenamesmall{#1} & #2\\
      \multicolumn{2}{r}{#3}
    \end{array}
}

%%%%%%%%%%%%%%%%%%%%%%%%%%%%%
%   Typing
%%%%%%%%%%%%%%%%%%%%%%%%%%%%%
\newcommand{\ok}{\operatorname{\var{ok}}}
\newcommand{\wellTyped}[3][\Gamma]{#1 \vdash #2 : #3}
\newcommand{\wellTypedMulti}[3][\Gamma]{#1 \vdash \multi{#2 : #3}}
\newcommand{\wellFormed}[2][\Gamma]{#1 \vdash #2 \ok}
\newcommand{\wellFormedMulti}[2][\Gamma]{#1 \vdash \multi{#2 \ok}}

\newcommand{\subtype}[3][\Delta]{#1\vdash#2<:#3}
\newcommand{\subtypeMulti}[3][\Delta]{#1\vdash\multi{#2<:#3}}

\newcommand{\subParam}{<:-param}
\newcommand{\subS}{<:s}
\newcommand{\subI}{<:i}
\newcommand{\subFormal}{<:-formal}
\newcommand{\tParam}{t-param}
\newcommand{\tNamed}{t-named}
\newcommand{\tActual}{t-actual}
\newcommand{\tformal}{t-formal}
\newcommand{\tnested}{t-nested}
\newcommand{\tspecification}{t-specification}
\newcommand{\tstruct}{t-struct}
\newcommand{\tinterface}{t-interface}
\newcommand{\ttype}{t-type}
\newcommand{\tfunc}{t-func}
\newcommand{\tvar}{t-var}
\newcommand{\tliteral}{t-literal}
\newcommand{\tcall}{t-call}
\newcommand{\tfield}{t-field}
\newcommand{\tasserts}{t-assert$_S$}
\newcommand{\tasserti}{t-assert$_I$}
\newcommand{\tstupid}{t-stupid}
\newcommand{\tprog}{t-prog}

%%%%%%%%%%%%%%%%%%%%%%%%%%%%%
%   Aux funcs
%%%%%%%%%%%%%%%%%%%%%%%%%%%%%
\newcommand{\fields}  {\var{fields}}
\newcommand{\body}    {\var{body}}
\newcommand{\vtype}   {\var{type}}
\newcommand{\unique}  {\var{unique}}
\newcommand{\tdecls}  {\var{tdecls}}
\newcommand{\mdecls}  {\var{mdecls}}
\newcommand{\methods}[1][] {\var{methods}\ifthenelse{\equal{#1}{}}{}{_#1}}
\newcommand{\distinct} {\var{distinct}}

\newcommand{\bounds}{\var{bounds}_\Delta}

%%%%%%%%%%%%%%%%%%%%%%%%%%%%%
%   Semantics
%%%%%%%%%%%%%%%%%%%%%%%%%%%%%

\newcommand{\reduction}[2]{#1 \longrightarrow #2}
\newcommand{\red}{\longrightarrow}
\newcommand{\dicttrans}{\Longrightarrow}
\newcommand{\dictred}{\rightarrowtriangle}
\newcommand{\dictredleft}{\leftarrowtriangle}

\newcommand{\by}{\mathbin{:=}}
\newcommand{\noteq}{\mathbin{!=}}
\newcommand{\hole}{\square}

\newcommand{\rfields}{r-fields}
\newcommand{\rcall}{r-call}
\newcommand{\rassert}{r-assert}
\newcommand{\rcontext}{r-context}

\newcommand{\prepre}{\red_{\text{e}}}

\newcommand{\prepostsim}{\red_{\text{s}}}
\newcommand{\prepostdict}{\dictred}%\succ_{\mathit{dict}}}

\newcommand{\precongdict}{\dictred}

%%%%%%%%%%%%%%%%%%%%%%%%%%%%%
%   Dict
%%%%%%%%%%%%%%%%%%%%%%%%%%%%%

\newcommand{\dict}[3][\Delta; \eta ;\Gamma]{#1 \vdash #2 \Mapsto #3}
\newcommand{\dictMulti}[3][\Delta; \eta ;\Gamma]{#1 \vdash \multi{#2 \Mapsto #3}}

\newcommand{\dictDict}[2]{\vdash #1 \Mapsto_{\text{dict}} #2}
\newcommand{\dictMultiDict}[2]{\vdash \multi{#1 \Mapsto_{\text{dict}}
    #2}}

\newcommand{\lex}[1]{#1^\ddagger}

\newcommand{\dprogram}{d-program}
\newcommand{\dinterface}{d-interface}
\newcommand{\dmeth}{d-meth}
\newcommand{\dspec}{d-spec}
\newcommand{\ddict}{d-dict}
\newcommand{\dstruct}{d-struct}
\newcommand{\dvar}{d-var}
\newcommand{\dfield}{d-field}
\newcommand{\dvalue}{d-value}
\newcommand{\dassert}{d-assert}
\newcommand{\ddictcall}{d-dictcall}
\newcommand{\dcall}{d-call}

% // Code literals

\newcommand{\apply}{\texttt{Apply}}
\newcommand{\trycast}{\texttt{tryCast}}

\newcommand{\any}{\texttt{Any}}
\newcommand{\rec}{\texttt{rec}}
\newcommand{\dictlit}{\texttt{dict}}
\newcommand{\typeField}{\texttt{\_type}}
\newcommand{\paramTypeMeta}{\texttt{param\_index}}
\newcommand{\fnMeta}[1]{\texttt{spec\_mdata}_{#1}}
\newcommand{\nAryFunction}[1]{\texttt{Function}_#1}
\newcommand{\typeMetadataLit}{\texttt{\_type\_mdata}}

\newcommand{\typemeta}[2][\zeta]{\ensuremath{\operatorname{\var{type\_meta}}_{#1}({#2})}}

\newcommand{\asParameters}{\method{asParam}}

\newcommand{\arities}{\method{arity}}
\newcommand{\maxFormals}{\method{maxFormal}}
\newcommand{\makeDictMeth}{\method{meth\_ptr}}
\newcommand{\metadataName}[1]{\method{mdata\_name}{{#1}}}
\newcommand{\signatureMeta}[2][\zeta]{\ensuremath{\operatorname{\var{sig\_mdata}}_{#1}({#2})}}

\newcommand{\specMetadata}{\method{spec\_mdata}}
\newcommand{\specName}{\method{spec\_name}}
\newcommand{\dictName}[1]{\method{typeDict}{#1}}
\newcommand{\methName}{\method{mName}}

\newcommand{\makeDict}[2][\eta, \Delta]{\ensuremath{\operatorname{\var{makeDict}}_{#1}({#2})}}

\newcommand{\method}[2]{\operatorname{\var{#1}}({#2})}

%%%%%%%%%%%%%%%%%%%%%%%%%%%%%
%   Proofs
%%%%%%%%%%%%%%%%%%%%%%%%%%%%%

\newcommand{\caseof}[1]{  \resetpfcounter\textbf{Case : } Rule \rulename{#1}}
\newcommand{\caseStd}{  \resetpfcounter\textbf{Case : }}
\newcommand{\pfSubcase}[1]{\\\hspace{-1cm}\textbf{Subcase : } $#1$ }
\newcommand{\pf}[2][]{\item $#2$ \pflabel{#1}}
\newcommand{\pfstep}[3]{\\\vspace{-0.2cm}\hrulefill\vspace{-0.05cm}\item $#3$ \pflabel{#2} \BY{#1}}

\newcommand{\inversion}[1]{inversion~on~\rulename{#1}}
\newcommand{\pfbreakline}{$\\\qquad$}

%%%%%%%%%%%%%%%%%%%%%%%%%%%%%
%   Editing
%%%%%%%%%%%%%%%%%%%%%%%%%%%%%

% Shuofei TODO 

%%%%%%%%%%%%%%%%%%%%%%%%%%%%%
%   Properties
%%%%%%%%%%%%%%%%%%%%%%%%%%%%%

\newcommand{\map}[2][\Delta; \eta ;\Gamma]{\ensuremath{\llbracket#2\rrbracket_{#1}}}

\newcommand{\congEval}{C}

\newcommand{\dictpattern}{\ensuremath{\rho_{\text{dict}}}}
\newcommand{\assertpattern}{\ensuremath{\rho_{\text{sim}}}}
\newcommand{\erasepattern}{\ensuremath{\rho_{\text{erase}}}}

%%%%%%%%%%%%%%%%%%%%%%%%%%%%%
%   Text
%%%%%%%%%%%%%%%%%%%%%%%%%%%%%
\newcommand{\myparagraph}[1]{{\textbf{\emph{#1}}.\ }}

\newcommand\ie{\emph{i.e.,}\xspace}
\newcommand\eg{\emph{e.g.,}\xspace}

\newcommand{\mycircledtext}[1]{\raisebox{.5pt}{\textcircled{\raisebox{-.5pt} {\textit{#1}}}}}

\newcommand{\nomono}{{\it nomono}}
\newcommand{\bnfsep}{\mathbin{\;\big|\;}}%

\definecolor{benchnameColor}{rgb}{0.3, 0.2, 0.3}
\newcommand{\benchname}[1]{{\color{benchnameColor}{{\small \texttt{#1}}}}}
\newcommand{\smallgls}[1]{{\small \glsentryshort{#1}}}

\newcommand{\cmark}{\ding{51}}%
\newcommand{\xmark}{\ding{55}}%

%%%%%%%%%%%
%Observation/Finding counter
%%%%%%%%%%%
\newcounter{observationcounter}

\newcommand{\gocode}[1]{\inlinelstfcgg{#1}}

\newcommand{\ENCan}[1]{\langle{#1}\rangle}

\newcommand{\sectSpace}{}%\vspace{-2mm}}
\newcommand{\subsectSpace}{}%\vspace{-2mm}}
\newcommand{\subsubsectSpace}{}%\vspace{-1mm}}

\newcommand{\mapother}[1]{\llparenthesis #1 \rrparenthesis}

\newcommand{\tick}{\CheckmarkBold}
\newcommand{\cross}{\XSolidBrush}

\definecolor{turkishrose}{rgb}{0.71, 0.45, 0.51}

\definecolor{dkgreen}{rgb}{0,0.6,0}
\definecolor{gray}{rgb}{0.5,0.5,0.5}
\definecolor{mauve}{rgb}{0.58,0,0.82}
\definecolor{Cerulean}{rgb}{0.0, 0.48, 0.65}
\definecolor{PineGreen}{rgb}{0.0, 0.47, 0.44}
\definecolor{Violet}{rgb}{0.56, 0.0, 1.0}
\definecolor{Brown}{rgb}{0.59, 0.29, 0.0}

\definecolor{primekeywords}{rgb}{0.13,0.13,1}
\definecolor{secondkeywords}{rgb}{0,0.5,0}
\definecolor{greycomments}{rgb}{0.6,0.6,0.6}

\newcommand{\CODESTYLE}{\ttfamily}

\lstdefinelanguage{FCGG}{
      morekeywords=[1]{ 
        let, new, match, with, open, module, 
        namespace, type, of, member, for, while, 
        true, false, fun, return, yield,
        try, mutable, if, then, else, cloud, async, 
        static, use, abstract, interface, inherit, 
        finally, elif, not,forall,exists, public, 
        switch, case, default, func, var, go, defer, 
        struct, select, chan, new, assert, exists, and, not, import},
    emph={go, chan, <-, close, select, case, Lock, Unlock},
    sensitive=true,%
    morestring=[b]",%
    morestring=[s]{`}{`},%
}

%%% Taken from https://tex.stackexchange.com/questions/15237/highlight-text-in-code-listing-while-also-keeping-syntax-highlighting/49309#49309
\makeatletter
\newenvironment{btHighlight}[1][]
{\begingroup\tikzset{bt@Highlight@par/.style={#1}}\begin{lrbox}{\@tempboxa}}
{\end{lrbox}\bt@HL@box[bt@Highlight@par]{\@tempboxa}\endgroup}

\newcommand\btHL[1][]{%
  \begin{btHighlight}[#1]\bgroup\aftergroup\bt@HL@endenv%
}
\def\bt@HL@endenv{%
  \end{btHighlight}%   
  \egroup
}
\newcommand{\bt@HL@box}[2][]{%
  \tikz[#1]{%
    \pgfpathrectangle{\pgfpoint{1pt}{0pt}}{\pgfpoint{\wd #2}{\ht #2}}%
    \pgfusepath{use as bounding box}%
    \node[anchor=base west, fill=orange!30,outer sep=0pt,inner xsep=1pt, inner ysep=0pt, rounded corners=3pt, minimum height=\ht\strutbox+1pt,#1]{\raisebox{1pt}{\strut}\strut\usebox{#2}};
  }%
}
\makeatother

\lstset{
  language=FCGG,
  basicstyle=\CODESTYLE,
  tabsize=1,
  xleftmargin=2mm,
  numbersep=0.5mm,
  mathescape = true,
  numbers=left,
  numberstyle=\small\color{lightgray}, % style of line number
  keywordstyle=\color{blue}\textbf, % style of keywords %FwGo:\textbf %ASPLOS:\color{Cerulean}
  escapeinside={(*}{*)},
  emphstyle=\scriptsize\color{Violet}\textbf, 
  literate= {<<}{{{{$\texttt{\{}$}}}}2
          {>>}{{{{$\texttt{\}}$}}}}2,
  % basicstyle=\footnotesize,
  columns=fullflexible,
  numbersep=3pt,
  numberstyle=\tiny,
  showstringspaces=false,
  % escapeinside={/*@}{@*/},
  belowcaptionskip=1\baselineskip,
  % language=Golang,
  showstringspaces=true,
  basicstyle=\linespread{1}\scriptsize\ttfamily,
  % keywordstyle=\bfseries\color{Cerulean},
  morecomment=[l]{//},%
  morecomment=[s]{/*}{*/},%
  commentstyle=\itshape\color{PineGreen},
  % identifierstyle=\color{Violet},
  identifierstyle=\color{ruleColor},
  %moredelim=**[is][{\btHL[fill=pink!45]}]{`}{`},
%  moredelim=**[is][{\btHL[fill=pink!45]}]{!}{!},  
  % stringstyle=\color{Brown}, 
  stepnumber=1,                    % the step between two line-numbers. If it's 1, each line will be numbered
  firstnumber=1,
  numberfirstline=true,
  numbers=left
}
\lstnewenvironment{lstfcgg}{}{}
\newcommand{\inlinelstfcgg}[1]{\lstinline[mathescape=true,basicstyle=\ttfamily]!#1!}

\externalcitedocument{supplement}

\begin{document}

%% Title information
\title{Generic Go to Go}                %% [Short Title] is optional;
                                        %% when present, will be used in
                                        %% header instead of Full Title.
% \titlenote{with title note}             %% \titlenote is optional;
                                        %% can be repeated if necessary;
                                        %% contents suppressed with 'anonymous'
\subtitle{Dictionary-Passing, Monomorphisation, and Hybrid}                     %% \subtitle is optional
% \subtitlenote{with subtitle note}       %% \subtitlenote is optional;
                                        %% can be repeated if necessary;
                                        %% contents suppressed with 'anonymous'

%% Author information
%% Contents and number of authors suppressed with 'anonymous'.
%% Each author should be introduced by \author, followed by
%% \authornote (optional), \orcid (optional), \affiliation, and
%% \email.
%% An author may have multiple affiliations and/or emails; repeat the
%% appropriate command.
%% Many elements are not rendered, but should be provided for metadata
%% extraction tools.

%% Author with single affiliation.

\author{Stephen Ellis}
\email{stephen.ellis20@imperial.ac.uk}
\orcid{0000-0002-3556-2201}
\affiliation{%
    \institution{Imperial College London}
    \city{London}
    \country{United Kingdom}
}
\author{Shuofei Zhu}
\authornote{Stephen Ellis and Shuofei Zhu contributed equally in this work.}
\email{sfzhu@psu.edu}
\orcid{0000-0003-3689-7668}
\affiliation{%
    \institution{The Pennsylvania State University}
    \city{University Park, PA}
    \country{United States}
}
\author{Nobuko Yoshida}
    \email{nobuko.yoshida@cs.ox.ac.uk}
    \orcid{0000-0002-3925-8557}
\affiliation{%
    \institution{University of Oxford}
    \city{Oxford}
    \country{United Kingdom}
}
% \author{Ziheng Liu}
% \email{zil060@ucsd.edu}
% \affiliation{%
%     \institution{University of California San Diego }
%     \city{San Diego, CA}
%     \country{United States}
% }
\author{Linhai Song}
\email{songlh@ist.psu.edu}
\orcid{0000-0002-3185-9278}
\affiliation{%
    \institution{The Pennsylvania State University}
    \city{University Park, PA}
    \country{United States}
}

%% Abstract
%% Note: \begin{abstract}...\end{abstract} environment must come
%% before \maketitle command
\begin{abstract}
    Go is a popular statically-typed industrial programming language.
To aid the type safe reuse of code, the recent  
Go release (\gomacro) published early 2022 
includes \emph{bounded parametric polymorphism} via \emph{generic types}.
\gomacro implements generic types 
using a combination of \emph{monomorphisation} and 
\emph{call-graph based dictionary-passing} called \emph{hybrid}.
This hybrid approach can be viewed as an optimised form of monomorphisation
that statically generates specialised methods and types
based on possible instantiations. 
A monolithic dictionary supplements information lost 
during monomorphisation, and is structured according 
to the program’s call graph. Unfortunately, the hybrid 
approach still suffers from code bloat, poor compilation 
speed, and limited code coverage.

In this paper we
propose and formalise a new \emph{non-specialising call-site based dictionary-passing} 
translation. 
Our call-site based translation creates individual dictionaries for each type parameter,
with dictionary construction occurring in place of instantiation, 
overcoming the limitations of hybrid.
We prove it correct
using a novel and general bisimulation up to technique.
To better understand how different generics translation approaches work in practice, 
we benchmark five 
translators, \gomacro, two existing 
monomorphisation translators, 
our dictionary-passing translator, 
and an erasure translator. 
Our findings reveal 
several suggestions for improvements 
for \gomacro --- specifically how to overcome the expressiveness limitations of 
generic Go and
improve compile time and compiled code size performance of \gomacro.

\end{abstract}

%% 2012 ACM Computing Classification System (CSS) concepts
%% Generate at 'http://dl.acm.org/ccs/ccs.cfm'.
\begin{CCSXML}
    <ccs2012>
    <concept>
    <concept_id>10011007.10011006.10011008.10011024.10011025</concept_id>
    <concept_desc>Software and its engineering~Polymorphism</concept_desc>
    <concept_significance>500</concept_significance>
    </concept>
    <concept>
    <concept_id>10011007.10011006.10011041</concept_id>
    <concept_desc>Software and its engineering~Compilers</concept_desc>
    <concept_significance>500</concept_significance>
    </concept>
    <concept>
    <concept_id>10003752.10010124</concept_id>
    <concept_desc>Theory of computation~Semantics and reasoning</concept_desc>
    <concept_significance>500</concept_significance>
    </concept>
    </ccs2012>
\end{CCSXML}
\ccsdesc[500]{Software and its engineering~Polymorphism}
\ccsdesc[500]{Software and its engineering~Compilers}
\ccsdesc[500]{Theory of computation~Semantics and reasoning}
%% End of generated code

%% Keywords
%% comma separated list
 \keywords{Generic types, Dictionary-passing translation, Correctness}
% \keywords{keyword1, keyword2, keyword3}  %% \keywords are mandatory in final camera-ready submission
% Acknowledgments

%% \maketitle
%% Note: \maketitle command must come after title commands, author
%% commands, abstract environment, Computing Classification System
%% environment and commands, and keywords command.
\maketitle

\section{Introduction}
\label{sec:introduction}
Since its creation in 2009, the Go programming language 
has placed a key emphasis on simplicity, safety, and efficiency. 
Based on the 
\citet{stackoverflow-developer-survey} survey, Go is the 5th most beloved 
language, and is used to build 
large systems, \eg \citet{docker}, 
\citet{kubernetes}, and \citet{grpc}.
The recent Go release (\gomacro released on the 15th of March 2022) 
added \emph{generics}, which has been considered Go's 
most critical missing and 
long awaited feature by 
Go programmers and developers~\cite{go-developer-survey}.  
\citet{go-release-notes}, however, has posted 
that much work is still needed to ensure that 
generics in Go are well-implemented. 

The work on implementing generics in Go began in earnest with 
\citet{griesemer2020featherweight},  
in which they formalised two core calculi of (generic) Go; \gls{fgg} and \gls{fg},
as well as formalising 
a \emph{monomorphisation translation} from \gls{fgg} to \gls{fg}.
Monomorphisation statically explores 
a program's call graph and generates multiple 
implementations of each generic 
type and method according to each 
specialisation of that type, or method, required at runtime.

The Go team informally proposed three approaches; 
\begin{enumerate*}
  \item Stencilling (monomorphisation)~\cite{google-mono},
  \item Call-graph dictionary-passing~\cite{go-dict-proposal}, and 
  \item GC shape stencilling (hybrid of (1) and (2))~\cite{google-hybrid}. 
\end{enumerate*}
A monomorphisation-based source-to-source prototype (\gls{gotogo}) 
has been implemented by 
\citet{gotogo}, following the stencilling proposal (1) and 
\cite{griesemer2020featherweight}.  
The current \gomacro implementation
extends (3)~\cite{go118}. 
Unlike more traditional \emph{non-specialising}
dictionary approaches 
(\eg dictionary-passing in Haskell and vtables in C++), 
\gomacro uses an optimised form of monomorphisation to allow types 
in the same GC shape group to share specialised method and type instances. 
In theory, all objects in a GC shape group have an equivalent 
memory footprint and layout, although currently, \gomacro  
only groups pointers.
As multiple types may share the same GC shape group,
their dictionaries provide information lost during monomorphisation, \eg
concrete types and method pointers.
Moreover, \gomacro builds a monolithic dictionary based on 
the program's \emph{call-graph}.
Monomorphisation has a 
number of well-known limitations;
it can substantially increase code 
size, it can be prohibitively slow
during compilation~\cite{jones1995dictionary, StroustrupB:cpppl},
and it does not cover all programs~\cite{griesemer2020featherweight}.
Concretely, there are two core limitations with all the Go team proposals
(1--3), the current \gomacro implementation, and the proposal of \citet{griesemer2020featherweight}.

\begin{wrapfigure}{r}{0.52\linewidth}
  \vspace*{-.4cm}
  \lstset{xleftmargin=7pt}
  \begin{lstfcgg}
type List[T Any] interface { 
  permute() List[List[T]]; insert(val T, i int) List[T]; 
  map[R Any](func(T) R) List[R] ;  len() int }
type Cons[T Any] struct { head T; tail List[T] }
type Nil[T Any] struct {}
func (this Cons[T]) permute() List[List[T]] { 
  if this.len() == 1 { return Cons{this, Nil{}}
  } else {
    return flatten(this.tail.permute().Map(
      func(l List[T]) List[List[T]]{
        var l_new List[List[T]] = Nil[List[T]]{}
        for i := 0; i <= l.len(); i++ {
          l_new = Cons{l.insert(this.head, i), l_new}
        }
        return l_new
}))}}
func (this Nil[T]) permute() List[List[T]] { 
  return Nil[List[T]]{} 
}
  \end{lstfcgg}
  \vspace*{-7mm}
  \caption{List permutation example}
  \label{fig:code:list:perm}
  \vspace*{-3mm}
\end{wrapfigure}
% \vspace{-1mm}

\textit{1) Non-monomorphisable programs.\ } 
All current implementations and proposals for generics in 
Go suffer from the inability 
to handle a class of programs that 
use recursive instantiations, \eg the 
list permutation example\footnote{
  See \cite{gitchanderpermute} for an 
  efficient but type unsafe implementation of list permutation.} 
provided in 
Figure~\ref{fig:code:list:perm}.
This program cannot be monomorphised, as 
a list of integers  
\inlinelstfcgg{List[int]} has a 
\inlinelstfcgg{permute} method which returns a list of type 
\inlinelstfcgg{List[List[int]]}, which in turn has a 
\inlinelstfcgg{permute} method 
that returns type \inlinelstfcgg{List[List[List[int]]]}, and on
\emph{ad infinitum}. 
Monomorphisation cannot explore 
this infinite set of types in finite time, and 
so cannot specialise a method for each instance.

{\textit{2) Specialising translation.\ } 
All currently realised approaches to generics in Go are 
based on method/type specialisation. 
This stands in contrast 
to the approaches taken by other languages with automatic 
memory management, such as Haskell, C\#, and Java. 
Go uses garbage collection for automatic 
memory management. 
In the top  16 statically typed languages 
with generics~\cite{TopProgr17:online}, 
we find a constant theme; 
languages with automatic memory management use 
non-specialising implementations such as 
dictionary-passing or erasure, and those without 
use monomorphisation (see
\ifnotsplit{Appendix~\ref{app:implementations}}{the full version of this paper~\cite{fullversion}} for a breakdown of language implementations).

\myparagraph{Challenges and contributions}
We develop and implement a new non-specialising, call-site dictionary-passing translation 
from Go with generics (\gls{fgg})
to Go (\gls{fg}), and prove its correctness. We then create micro and 
real-world benchmarks for generic Go, 
and examine the trade-offs
between the different translations to suggest 
improvements for \gomacro.

{\textit{1) The first challenge is to design and build a 
non-specialising call-site dictionary-passing translation for Go.\ } 
Go's distinctive 
structural subtyping adds an extra level of complexity that requires careful consideration. 
Our first contribution in \S~\ref{section:dictionary} and \S~\ref{subsec:imple}
is the formalisation and implementation of a new dictionary-passing translation 
that is specifically designed for the unique qualities of Go.

{\textit{2) The second challenge is to overcome the 
non-monomorphisability limitation} of 
the current implementations and translate previously untranslatable
programs such as \inlinelstfcgg{permute}. 
A key aspect of our dictionary design is \emph{call-site}---each polymorphic type parameter 
is represented by its own dictionary, which in turn is created at 
the call-site where that type parameter would have been instantiated. 
This allows any well-formed \gls{fgg} program to be translated.

{\textit{3) The third challenge we meet is  
to establish semantic correctness of our
translation}. Historically, dictionary-passing translations 
have been proven correct using value preservation~\cite{yu2004formalization,yu2004thesis,sulzmann2021dictionary}, 
an approach that cannot ensure 
termination preservation or
generalise to more advanced language 
features (\eg concurrency in Go).
We instead use a fine-grained behavioural equivalence
guided by the work of \citet{Igarashi99FJ}.
Unfortunately, proving the \emph{bisimulation} result in  
\cite[Theorem 5.4]{griesemer2020featherweight} 
is insufficient due to intermediate states 
created by dictionary-passing. 
We propose a novel \emph{bisimulation up to dictionary resolution} reduction,
and use this relation to prove that the translation preserves 
essential properties of the source language (\S~\ref{section:properties}).  
This proof technique is general and translation-agnostic, 
and is useful in other contexts where a standard bisimulation 
is inadequate.

{\textit{4) The fourth challenge is to find an effective evaluation for implementations of 
generics in Go}. 
We compare the five implementations---
\begin{enumerate*}
\item our call-site, non-specialising dictionary-passing
translation;
\item an erasure translation 
built by us for empirical evaluation;
\item a monomorphisation translation by \citet{griesemer2020featherweight}; 
\item the initial source-to-source monomorphisation prototype translation \gls{gotogo} by the Go team; and 
\item \gomacro 
\end{enumerate*}
---along three dimensionalities;
\begin{enumerate*}
  \item complication time,
  \item translated code size, and
  \item performance of compiled executables.
\end{enumerate*}
As \gomacro was just released, 
\emph{there currently exists no real-world Go program
  with generics}.
In \S~\ref{subsec:evaluation}, we contribute a number of benchmarks to overcome this
deficit: we construct micro benchmarks to examine the effect of different 
forms of complexity in generic programs; 
and reimplement the real-world benchmarks from
\cite{odersky2000two,ureche2013miniboxing} in Go.

{\textit{5) The final challenge is to examine 
the trade-offs between the different translations, which 
suggest future improvements of \gomacro}. 
We observe, 
in general, that monomorphisation leads 
to better execution performance, while 
non-specialisation (dictionary-passing) produces 
smaller executables in less compilation time.
We also observe that on the micro benchmarks our dictionary-passing 
translation can generate programs that are comparable in efficiency 
to \gomacro. 
Overall, our results show that \gomacro has much scope for improvement and 
the usefulness of non-specialised call-site dictionary-passing translations for languages such as Go.
We provide concrete suggestions in \S~\ref{section:discussion}.

\myparagraph{\emph{Outline}} 
\S~\ref{section:fg} and \S~\ref{section:fgg} summarise \gls{fg} and
\gls{fgg};  \S~\ref{section:dictionary} proposes a new 
dictionary-passing translation; 
\S~\ref{section:properties} proves 
its semantic correctness; 
\S~\ref{sec:exp} describes our implementations 
and measures the trade-offs between the five translators;
\S~\ref{section:related} gives related work; and \S~\ref{section:conclusion}
concludes. 
Proofs and  
omitted definitions can
be found in 
\ifnotsplit{the Appendix to this paper.}{the full version of the paper \cite{fullversion}.}
The dictionary-passing/erasure translators  
and benchmarks are available in the artifact to this paper~\mbox{\cite{aritfact-entry}}.
Source code is available on GitHub~\cite{zhu22github} 
and Software Heritage~\cite{zhu22heritage}.

\section{\glsentrylong{fg}}
\label{section:fg}
\label{sec:fg}
We briefly summarise the \glsfirst{fg} language~\cite[\S~3]{griesemer2020featherweight};
specifically highlighting the key points 
related to dictionary translation.

\subsection{\glsentrylong{fg} by Examples}
\label{sec:fg:example}
\begin{figure}
%\noindent
\begin{minipage}[t]{0.48\linewidth }
    \vspace{-3mm}  
\begin{center}
\begin{lstfcgg}
type Any interface {}
type Function interface { Apply(x Any) Any } (*\label{fg:code:function:function}*)
type Ord interface { Gt(x Ord) bool }
type List interface { Map(f Function) List }
type Nil struct {} 
type Cons struct { head Any ; tail List } 
func main() {
    _ = Cons{1, Cons{7, Cons{3, Nil{}}}} (*\label{fg:code:function:build}*)
    .Map(GtFunc{5}) (*\label{fg:code:function:tobool}*)
    .Map(GtFunc{5}) // PANIC (*\label{fg:code:function:topanic}*)
} //   Unable to assert bool as type Ord
\end{lstfcgg}
\end{center}
% //  // [1, 7, 3]   (*\label{fg:code:function:build}*)
% .Map(GtFunc{5}) // [true, false, true]         (*\label{fg:code:function:tobool}*)
% .Map(GtFunc{5}) } // PANIC! Unable to assert bool as type Ord (*\label{fg:code:function:topanic}*)
\end{minipage}
\begin{minipage}[t] {0.47\linewidth }
    \vspace{-3mm}  
\lstset{firstnumber=12}
\begin{lstfcgg}
type GtFunc struct { val Ord }
func (this GtFunc) Apply(x Any) Any {  (*\label{fg:code:function:apply}*)
    return this.val.Gt(x.(Ord)) //Gt needs an Ord arg
}
func (this int) Gt(x Ord) bool { 
    return x.(int) < this // < needs an int value
}
func (this Nil) Map(f Function) List { return Nil{} }
func (this Cons) Map(f Function) List {
    return Cons{ f.Apply(this.head), this.tail.Map(f) }
}
\end{lstfcgg}
\end{minipage}
\vspace*{-.5cm}
    \caption{\glsentryshort{fg} List example adapted from \cite[Figures~1~\&~3]{griesemer2020featherweight}}
    \label{code:fg:example}
    % \vspace*{-.5cm}
\end{figure}

\gls{fg} is a core subset 
of the (non-generic) Go 1.16 language containing \emph{structures}, \emph{interfaces}, 
\emph{methods}, and \emph{type assertions}. 
In \gls{fg}, there are two kinds of named types; 
\emph{Interfaces} (\inlinelstfcgg{interface}) specify a collection 
of methods which any implementing type must also possess, and 
\emph{structures} (\inlinelstfcgg{struct}) which are 
data objects containing a fixed 
collection of typed fields. 
\emph{Methods} are functions that apply to a 
specific structure, called the method's \emph{receiver}. 
Finally, \emph{type assertions} ask whether a structure can be used 
as a specific type. If it cannot, then \gls{fg} will produce a 
\emph{type assertion error}.  

In contrast to nominally typed languages, Go uses 
\emph{structural subtyping}. 
As we shall see in \S~\ref{section:dictionary},
it is this 
distinctive feature that makes our dictionary-passing translation 
challenging and non-trivial. 
In a nominally typed language, such as Java, one type implements (subtypes)
another when it explicitly declares such. 
In Go, we do not declare that one type implements another.
Rather, one type implements another precisely when it implements 
(at least) all of the prescribed methods. 

Consider the example Go code in 
Figure~\ref{code:fg:example}, which 
simulates higher order functions, lists, and mapping.  
For simplicity of presentation, we assume 
that there are primitive \inlinelstfcgg{int} and \inlinelstfcgg{bool} 
types along with a $<$ operation. 
The \inlinelstfcgg{Any} interface does not specify any methods; as such, 
all other types are its subtypes, meaning that any object may be used 
when an \inlinelstfcgg{Any} is expected, but also that we cannot 
apply any methods to an \inlinelstfcgg{Any} object without first 
asserting it to some more specific type -- an action which may fail at runtime. 

The \inlinelstfcgg{Function} interface specifies a single method,
which is given by the \emph{method signature} 
\inlinelstfcgg{Apply(x Any) Any}. 
Any structure implementing 
an \inlinelstfcgg{Apply} method
that takes an argument of type \inlinelstfcgg{Any}
and returns a value, also of type \inlinelstfcgg{Any},
is said to 
implement the \inlinelstfcgg{Function} interface. 
Our example code simulates the \emph{greater than} function
as a structure (\inlinelstfcgg{GtFunc}) containing a single \inlinelstfcgg{Ord}
field. Its \inlinelstfcgg{Apply} method then calls the \inlinelstfcgg{Gt} 
method provided by struct's field. 
The \inlinelstfcgg{Ord} interface, however, specifies that \inlinelstfcgg{Gt} 
should accept a single argument of type \inlinelstfcgg{Ord}. 
Before the \inlinelstfcgg{Apply} method of \inlinelstfcgg{GtFunc} can call
\inlinelstfcgg{Gt} it must, then, assert its argument to 
type \inlinelstfcgg{Ord}. 
If the argument does not implement \inlinelstfcgg{Ord}, then a \emph{type assertion error} 
occurs. 
We assume that only one implementation of \inlinelstfcgg{Ord} exists, that being 
\inlinelstfcgg{int}, which itself uses a risky type assertion. 

The example also includes a \inlinelstfcgg{List} interface specifying
a \inlinelstfcgg{Map} method. We provide a cons list implementation 
of \inlinelstfcgg{List}. 
In \gls{fg}, there is a single top-level \inlinelstfcgg{main} function 
that acts as the program's entrance. 
Our program initially builds a simple three 
value \inlinelstfcgg{int} list on line~\ref{fg:code:function:build}, 
and then uses the simulated greater than function (\inlinelstfcgg{GtFunc}) to 
map the list to a \inlinelstfcgg{bool} list.
When, however, we 
attempt to map this \inlinelstfcgg{bool} list using the same function, we 
encounter a runtime type assertion error on line~\ref{fg:code:function:topanic}. 
While we could catch this error at compile time by 
increasing the specificity of the \inlinelstfcgg{Apply}, \inlinelstfcgg{Gt}, and 
\inlinelstfcgg{Map} functions using \inlinelstfcgg{int} and 
\inlinelstfcgg{bool} instead of \inlinelstfcgg{Any}, 
this would severely limit 
code reusability.

\subsectSpace
\subsection{\glsentrylong{fg} Syntax and Semantics}
\label{sec:fg:syntax}
\begin{figure}[t]%[H]
\begin{tabular}{ll@{\hskip 24pt}ll@{\hskip 24pt}ll}
     \sytxRulepr{Field name}{f}&
       \sytxRulepr{Type name}{t, u ::= t_S \mid t_I }&
       \sytxRulepr{Expression} {e ::=}
       \\[-0.6mm]
       \sytxRulepr{Variable name}{x}&
       \sytxRulepr{Method name}{m}&
       \sytxRuleIndentpr{Method call}{e.m(\multi{e})}\\[-0.6mm]
        \sytxRulepr{Interface type}{\hspace{-0.3cm}\text{name} \  \iType{t}, \iType{u}} &
        \sytxRulepr{Method signature}{M ::= (\multi{x\ t})\ t} & 
        \sytxRuleIndentpr{Variable}{x}\\[-0.6mm]
        \sytxRulepr{Structure type}{\hspace{-0.3cm}\text{name} \  \sType{t}, \sType{u}}&
        \sytxRulepr{Method specification}{S ::= mM} & 
        \sytxRuleIndentpr{Type assertion}{e.(t)} \\[-0.6mm]
        \sytxRulepr{Type literal}{T ::=} &  \sytxRulepr{Declaration}{D::=} & 
        \sytxRuleIndentpr{Field select}{e.f}\\[-0.6mm]
            \sytxRuleIndentpr{Structure}{\struct{\multi{f\ t}}} & \sytxRuleIndentpr{Type decl}{\type t~T} & 
            \sytxRuleIndentpr{Structure
              literal}{\sTypeInit{t}{\multi{e}}}\\[-0.6mm]
            \sytxRuleIndentpr{Interface}{\interface{\multi{S}}} & 
            \sytxRuleIndentprMulti{Method decl}{\func (x\ \sType{t})\ mM\ \sytxBrace{\return e}}\\[-0.6mm]
        \sytxRuleprMulti{Program}{\prog ::=\textbf{ package main};\ \multi{D}\ \textbf{func main}() \sytxBrace{\_ = e}}\\[-0.6mm]

    \end{tabular}
\vspace{-3mm}
    \caption{Syntax of \glsentrylong{fg}}
\vspace{-3mm}
    \label{fig:fg:syntax}
\end{figure}
  
Figure~\ref{fig:fg:syntax} presents the syntax of \gls{fg} from  
\cite{griesemer2020featherweight}.
We use the $\multi{x}$ notation for a sequences of $x$, namely $x_0, x_1,
\dots, x_n$.    
A program ($\prog$) is given by a sequence of declarations ($\multi{D}$)
along with a {\bf main} function which acts as the top-level expression ($e$). 
Shortened as $\prog = \program{e}$.

%   ----------------------
%       Types
%   ----------------------

\gls{fg} is statically typed: 
all \gls{fg} typing rules follow the Go 1.16 specification.  
If, in the variable-type
environment $\Gamma$, 
an expression $e$ is of type $t$, 
then it satisfies the judgement $\wellTyped[\Gamma]{e}{t}$. 
We assume that all programs $\prog$ are \emph{well-formed}, written 
$\prog\ok$. 
Since the rules/notations are identical to those 
in \cite{griesemer2020featherweight}, 
we omit them here, but provide
definitions and details in
\ifnotsplit{Appendix~\ref{app:fg}}{the full version of this paper~\cite{fullversion}}.

\label{section:fg:reduction}
Figure~\ref{fig:fg:semantics} presents the \gls{fg} semantics with values and 
evaluation contexts.
\textbf{\emph{Evaluation context}} $E$ defines the left-to-right call-by-value semantics
for expressions. 
\textbf{\emph{Reductions}} are defined by the field selection rule 
\rulename{r-fields}, type assertion rule \rulename{r-assert}, 
and the method 
invocation \rulename{r-call}, with  \rulename{r-context} for the context
evaluation. We use $\red^\ast$ to denote a multi-step reduction.
\gls{fg} satisfies type preservation and progress properties 
(see \cite[Theorems 3.3 and 3.4]{griesemer2020featherweight}).

\begin{figure}
    {\small
        \begin{tabular}{llllll}
            \sytxRuleprMulti{Value}{v ::= \sTypeInit{t}{\multi{v}}}                    \\[-0.6mm]
            \sytxRulepr{Evaluation context}{E ::=}                               &   & \\[-0.6mm]
            \sytxRuleIndentpr{Hole}{\hole}                                       &
            \sytxRuleIndentpr{Structure}{\sTypeInit{t}{\multi{v}, E, \multi{e}}} &
            \sytxRuleIndentpr{Type assertion}{E.(t)}                                   \\[-0.6mm]
            \sytxRuleIndentpr{Select}{E.f}
                                                                                 &
            \sytxRuleIndentpr{Call receiver}{E.m(\multi{e})}                     &
            \sytxRuleIndentpr{Call arguments }{v.m(\multi{v}, E, \multi{e})}
        \end{tabular}
    }
    {\footnotesize \begin{equation*}
            \begin{gathered}
                \namedRule{\rfields}{
                    \infer{
                        \reduction{
                            \sTypeInit{t}{\multi{v}}.f_i
                        }{
                            v_i
                        }
                    }{
                        (\multi{f~t}) = \fields(\sType{t})
                    }
                }\quad
                \namedRule{\rcall}{
                    \infer{
                        \reduction{
                            v.m(\multi{v})
                        } {
                            e[x \by v, \multi{x \by v}]
                        }
                    }{
                        (x : \sType{t}, \multi{x:t}).e = \body(\vtype(v).m)
                    }
                }\quad
                \namedRule{\rassert}
                {
                \infer
                {
                \reduction{v.({t})}{v}
                }{
                \vtype(v) <: {t}
                }
                }
                \quad
                \namedRule{\rcontext}
                {
                    \infer{
                        \reduction{E[e]}{E[e']}
                    }{
                        \reduction{e }{e'}
                    }
                }
                \\[-0.6mm]
                \vtype(\sTypeInit{t}{\multi{v}}) = \sType{t}
                \quad
                \infer{
                    \body(\sType{t}.m) = (x:\sType{t}, \multi{x:t}).e
                }{
                    (\funcDelc{\sType{t}}{m}{\multi{x~t}}{t}{\return e}) \in \multi{D}
                }
                \quad 
                \infer{
                    \fields (t_S) = \multi{f~t}
                } {
                    (\type t_S \struct{\multi{f~t}}) \in \multi{D}
                }
                % \end{array}
            \end{gathered}
        \end{equation*}
    }
    \vspace{-4.5mm}
    \caption{Reduction semantics of \glsentrylong{fg}}
    %\vspace{-4mm}
    \label{fig:fg:semantics}
    %    \end{framed}
\end{figure}

% \sectSpace
\section{\glsentrylong{fgg} and the limitations of monomorphisation and
\gomacro}
\label{section:fgg}
As with \S~\ref{section:fg}, we briefly summarise the
\glsentryfirst{fgg} language~
\cite[\S~4]{griesemer2020featherweight}.
This section concludes with 
a discussion of limitations in existing generic Go translations and \gomacro.

\subsectSpace
\subsection{\glsentrylong{fgg} by Example}
\begin{figure}
    %\noindent
    \begin{minipage}[t]{0.44\linewidth }
        \vspace{-3mm}  
    \begin{center}
    \begin{lstfcgg}
type Any interface {}
type Function[T Any, R Any] interface { 
    Apply(x T) R 
}
type Ord[T Ord[T]] interface { Gt(x T) bool }
type List[T Any] interface { 
    Map[R Any](f Function[T, R]) List[R] 
}
type Nil[T Any] struct {} 
type Cons[T Any] struct { head T;tail List[T]} 
func main() {
 _ = Cons[int]{1, Cons[int]{7, Cons[int]{3, Nil[int]{}}}} // : List[int]
        .Map[bool](GtFunc[int]{5}) // : List[bool]
        .Map[bool](GtFunc[int]{5}) // This line doesn't pass type checking since   (*\label{fgg:code:function:typefail}*)
} // GtFunc does not implement the Function[bool, bool] interface
    \end{lstfcgg}
    \end{center}
    \end{minipage}
    \begin{minipage}[t] {0.52\linewidth }
        \vspace{-3mm}  
    \lstset{firstnumber=16}
    \begin{lstfcgg}
type GtFunc[T Ord[T]] struct { val T }
func (this GtFunc[T]) Apply(x T) bool {  (*\label{fgg:code:function:apply}*)
    return this.val.Gt(x) (*\label{fgg:code:function:apply:gt}*)
}
func (this int) Gt(x int) bool { return x < this }
func (this Nil[T]) Map[R Any](f Function[T, R]) List[R] {
    return Nil[R]{}
}
func (this Cons[T]) Map[R Any](f Function[T, R]) List[R] { (*\label{code:con:map}*)
    return Cons[R]{ f.Apply(this.head), this.tail.Map[R](f)}
}
    \end{lstfcgg}
    \end{minipage}
    \vspace*{-.5cm}
        \caption{\glsentryshort{fgg} List example adapted from \cite[Figures~4~\&~6]{griesemer2020featherweight}}
        \label{code:fgg:example}
        \vspace*{-.5cm}
    \end{figure}

Figure~\ref{code:fgg:example} extends 
Figure~\ref{code:fg:example} with generics. 
As we saw in \S~\ref{sec:fg:example}, there was
a critical flaw in the original, non-generic, \gls{fg}
code. One part of the logic was polymorphic 
(\ie \inlinelstfcgg{Map} is a natural transformation) while the
other was not (\ie \inlinelstfcgg{Gt}). We concluded that 
section by observing the
two options; either we cater to the strict type 
discipline demanded by \inlinelstfcgg{Gt}, reducing
reusability, or force an excessively permissive 
polymorphism on \inlinelstfcgg{Gt} and risk runtime type assertion errors. 

Generics, or bounded parametric polymorphism, 
provide us with a third solution via the 
precise definition and tracking of polymorphic types in
structures, interfaces, and methods. 
As we shall see momentarily, in \gls{fgg}, each of 
these constructs may now accept any number of 
type variables (type parameters) as a type 
formal, which must then be instantiated upon use. 
Each type variable has a bound, an interface, that 
any instantiating type must satisfy, \ie be an instance of. 
Type formal \inlinelstfcgg{[T Any]} is read as type parameter 
\inlinelstfcgg{T} is bound by type \inlinelstfcgg{Any}.
Objects with a generic type can use all methods 
specified by the type variable's bound. 
Type variables can be bound by any interface type, and may be mutually recursive within a type formal. 
Take, for example, the type bound of \inlinelstfcgg{Ord} in Figure~\ref{code:fgg:example}. 
\inlinelstfcgg{Ord} is bound by \inlinelstfcgg{Ord} itself and is used recursively in the
type bound for \inlinelstfcgg{GtFunc}. 
For a type (\eg \inlinelstfcgg{int}) to instantiate type variable 
\inlinelstfcgg{T} in \inlinelstfcgg{[T Ord[T]]}, 
its \inlinelstfcgg{Gt} method must not only take an argument of \inlinelstfcgg{Ord}, 
but must be precisely the same \inlinelstfcgg{Ord}-implementing type.
This kind of self-referential type bound is known as 
\emph{F-bounded polymorphism} \cite{canning1989f}.

The interface \inlinelstfcgg{Function} 
is now defined over two type variables (\inlinelstfcgg{T} and 
\inlinelstfcgg{R}, both bounded by \inlinelstfcgg{Any}), 
which are used by the specified \inlinelstfcgg{Apply}
method to type the simulated function’s domain and codomain, respectively, 
e.g., a type implementing \inlinelstfcgg{Function[int, bool]} must 
implement the method \inlinelstfcgg{Apply(x int) bool}. 
Unlike the original \gls{fg} code, we do not need \inlinelstfcgg{GtFunc} 
to simulate any arbitrary function, but rather just functions from 
some generic \inlinelstfcgg{Ord} type
to \inlinelstfcgg{bool}. 
    Instantiating \inlinelstfcgg{GtFunc} with \inlinelstfcgg{int}, 
    written \inlinelstfcgg{GtFunc[int]}, 
    gives an implementation of \inlinelstfcgg{Function[int,bool]}.

A type bound not only limits which types may specialise a 
type parameter, but also what methods are available to 
polymorphic values, \ie 
given that all valid specialisations of \inlinelstfcgg{T} 
in \inlinelstfcgg{GtFunc[T]} 
must implement \inlinelstfcgg{Ord[T]}, we know that the 
\inlinelstfcgg{val} field must always 
possess the \inlinelstfcgg{Gt} method, allowing 
us to call to \inlinelstfcgg{Gt} on line~\ref{fgg:code:function:apply:gt}
without a type assertion.

The definition of \inlinelstfcgg{List} tracks not only the type of 
the list, but also the type of the list created 
by \inlinelstfcgg{Map}. The \inlinelstfcgg{Map} 
method accepts a type parameter along 
with a \inlinelstfcgg{Function} argument; this type parameter is then 
used as the codomain of the \inlinelstfcgg{Function} argument, and 
instantiates the \inlinelstfcgg{List} return type.
Line~\ref{fgg:code:function:typefail} thus fails during type
checking because \inlinelstfcgg{GtFunc} does not 
implement \inlinelstfcgg{Function[bool, bool]}.

\subsectSpace
\subsection{\glsentrylong{fgg} Syntax and Semantics}
\label{section:fgg:syntax}

{
\setlength{\tabcolsep}{.9mm}
\begin{figure}[t]%[H]
    \begin{tabular}{ll@{\hskip 24pt}ll@{\hskip 24pt}ll}
        \oldSytxRulepr{Field name}{f}                                                                   & \sytxRulepr{Type}{\tau, \sigma ::=}                                                         & \sytxRulepr{Expression}{e ::=}                                    \\[-0.6mm]
        \oldSytxRulepr{Variable name}{x}                                                                & \sytxRuleIndentpr{Type parameter}{\alpha}                                                   & \sytxRuleIndentpr{Method call}{e.m[\multi{\tau}](\multi{e})}      \\[-0.6mm]
        \oldSytxRulepr{Method name}{m}                                                                  & \sytxRuleIndentpr{Named type}{t[\multi{\tau}]}                                              & \oldSytxRuleIndentpr{Variable}{x}                                 \\[-0.6mm]
        \oldSytxRulepr{Interface type}{\hspace{-0.3cm}\text{name}\ \iType{t}, \iType{u}}                & \sytxRulepr{Interface type}{\iType{\tau}, \iType{\sigma} ::= \iType{t}[\multi{\tau}]}       & \sytxRuleIndentpr{Type assertion}{e.(\tau)}                       \\[-0.6mm]
        \oldSytxRulepr{Structure type}{\hspace{-0.3cm}\text{name}\ \sType{t}, \sType{u}}                & \sytxRulepr{Structure type}{\sType{\tau}, \sType{\sigma} ::= \sType{t}[\multi{\tau}]}       & \oldSytxRuleIndentpr{Field select}{e.f}                           \\[-0.6mm]
        \oldSytxRulepr{Type name}{t,u ::= \iType{t} \mid \sType{t}}                                     & \sytxRulepr{Interface-like type}{\jType{\tau}, \jType{\sigma} ::= \alpha \mid \iType{\tau}} & \sytxRuleIndentpr{Structure literal}{\sTypeInit{\tau}{\multi{e}}} \\[-0.6mm]
        \sytxRulepr{Type parameter}{\alpha}                                                             & \sytxRulepr{Type formal}{\Phi, \Psi ::= \typeFormal}                                        & \oldSytxRulepr{Type literal}{T ::=}                               \\[-0.6mm]
        \sytxRulepr{Declaration}{D::=}                                                                  & \sytxRulepr{Type actual}{\phi, \psi ::= \multi{\tau}}                                       & \oldSytxRuleIndentpr{Structure}{{\struct{\multi{f\ t}}}}          \\[-0.6mm]
        \sytxRuleIndentprMulti{Type decl.}{\type t[\Phi]~T}                                                                                                         & \oldSytxRuleIndentpr{Interface}{\interface{\multi{S}}}            \\[-0.6mm]
        \sytxRuleIndentprMulti{Method decl.}{\func (x\ \sType{t}[\multi{\alpha}])\ mM \sytxBrace{\return e}} & \oldSytxRulepr{Meth. spec.}{S ::= mM}                                                                                              \\[-0.6mm]
        \sytxRuleprMulti{Program}{\prog ::=\textbf{ package main};\ \multi{D}\ \textbf{func main}() \sytxBrace{\_ = e}}        & \sytxRulepr{Meth. signature}{M ::= [\Psi](\multi{x\ t})\ t}                                                                                                                                     
    \end{tabular}
    \vspace{-3mm}
    \caption{Syntax of \glsentrylong{fgg}}
    \vspace{-4mm}
    \label{fig:fgg:syntax}
\end{figure}

}

Figure~\ref{fig:fgg:syntax} presents the syntax of \gls{fgg}. 
The key differences from \gls{fg} are the addition of types formal
($\Psi, \Phi$) for method signatures and declarations. 
A type formal ($\typeFormal$) is a sequence of pairs, 
each of which contains 
a type parameter ($\alpha$) and parameter bound ($\iType{\tau}$).
Type bounds are interface types that 
can be mutually recursive, in that any bound in a type
formal may depend upon any type parameter in that type formal, including itself.
Type parameters are instantiated by a type actual ($\psi, \phi$) -- a
sequence of types that satisfy the requirements imposed by the type
formal. A type ($\tau$)  
in \gls{fgg} is either a type parameter or 
a declared type that has been instantiated
($t\typeActualReceive$).
We simplify method declaration from 
\gls{fgg}~\cite{griesemer2020featherweight}, following 
the \gomacro syntax. 

\label{section:fcgg:typing}

The type system in \gls{fgg} extends 
\gls{fg} with the addition of a new type variable context $\Delta$ 
mapping type variable to its bound. 
Expression $e$ of type $\tau$ is now given by the judgement 
$\wellTyped[\Delta;\Gamma]{e}{\tau}$. Program well-formedness is given 
by $\prog \ok$.
The typing rules follow those
given in \cite[Figure~15]{griesemer2020featherweight}, which can 
be found in
\ifnotsplit{Appendix~\ref{appendix:fgg}}{the full version of this paper~\cite{fullversion}}.

\label{section:fgg:reduction}
The reduction semantics of \gls{fgg} are defined in 
Figure~\ref{fig:fgg:semantics}. 
They extend those of \gls{fg}; notably, \rulename{\rcall}
(via the $\body$ auxiliary function) specialises generic types 
in the resolved method body. 
\gls{fgg} satisfies 
type preservation and progress properties 
(see \cite[Theorems 4.3 and 4.4]{griesemer2020featherweight}).

\begin{figure}[t]%[b]%[H] 
   {
    \setlength{\tabcolsep}{.9mm}
    \begin{tabular}{llllll}
        \sytxRuleprMulti{Value}{v ::= \sTypeInit{\tau}{\multi{v}}}\\[-0.6mm]
        \sytxRulepr{Evaluation context}{E ::=}&&\\[-0.6mm]
            \sytxRuleIndentpr{Hole}{\hole}&
            \sytxRuleIndentpr{Structure}{\sTypeInit{\tau}{\multi{v}, E, \multi{e}}} & 
            \sytxRuleIndentpr{Call receiver}{E.m\typeActualMethod(\multi{e})}\\[-0.6mm]
            \sytxRuleIndentpr{Select}{E.f} & 
            \sytxRuleIndentpr{Type assertion}{E.(\tau)}&
            \sytxRuleIndentpr{Call arguments }{v.m\typeActualMethod(\multi{v}, E, \multi{e})}
    \end{tabular}
    }

    {\footnotesize
    \begin{equation*}
        \begin{gathered}
            \namedRule{\rfields}{
                \infer{
                    \reduction{
                        \sTypeInit{\tau}{\multi{v}}.f_i
                    }{
                        v_i
                    }
                } {
                    (\multi{f~\tau}) = \fields(\sType{\tau})
                }
            }\quad 
            \namedRule{\rcall}{
                \infer{
                    \reduction{
                        v.m\typeActualMethod(\multi{v}) 
                    } {
                        e[x \by v, \multi{x \by v}]
                    }
                }{
                    (x : \sType{\tau}, \multi{x:\tau}).e = \body(\vtype(v).m\typeActualMethod)
                }
            }\quad 
            \namedRule{\rassert}{
                \infer{
                    \reduction{v.({\tau})}{v}
                }{
                    \subtype[\emptyset]{\vtype(v)}{{\tau}}
                }
            }    \quad 
            \namedRule{\rcontext}{
                \infer{
                    \reduction{E[e]}{E[e']}
                }{
                    \reduction{e }{e'}
                }
            } 
            \\   
                    \vtype(\sTypeInit{\tau}{\multi{v}}) = \sType{\tau}
                    \quad 
                    \infer{
                        \body(\sType{t}\typeActualReceive.m\typeActualMethod) = (x:\sType{t}\typeActualReceive, \multi{x:\tau}).e[\theta]
                    }{
                        (\funcDelc{\sType{t}\typeFormalReceive}{m\typeFormalMethod}{\multi{x~\tau}}{\tau}{\return e}) \in \multi{D}
                        & \theta = (\multi{\alpha},\Psi \by \phi,\psi)
                    }
                    \\
                \infer{
                    \fields (\sType{t}\typeActualReceive) = \multi{f~\tau}[\eta]
                } {
                    (\type \sType{t}\typeFormalType \struct{\multi{f~\tau}}) \in \multi{D}
                    & \eta = (\Phi \by \phi)
                }
                \end{gathered}
            \end{equation*}
\vspace{-5mm}
        \caption{Reduction semantics of \glsentrylong{fgg}}
        \label{fig:fgg:semantics}
    %    \end{framed}
    }
\vspace{-3mm}
\end{figure}

\subsection{The Limitation of  Monomorphisation}
\label{section:nomono}

\citet{griesemer2020featherweight} 
define 
a class of programs that their monomorphisation approach
cannot translate.
This limitation also applies to the \gomacro call-graph based dictionary 
implementation for the same rationale.  
Consider the model non-monomorphisable 
program in Figure~\ref{fig:example:nomono}.

\begin{wrapfigure}{l}{0.32\textwidth}
\vspace*{-.2cm}
    \begin{lstfcgg}
type Box[$\alpha$ Any] struct { value $\alpha$ }
func (b Box[$\alpha$]) Nest(n int) Any {
  if (n > 0) { 
    return Box[Box[$\alpha$]]{b}.Nest(n-1) 
  } else { return b } 
}
    \end{lstfcgg}
\vspace*{-.3cm}
\caption{\inlinelstfcgg{Box} example\\\cite[Figure~10]{griesemer2020featherweight}}
\label{fig:example:nomono}
\vspace*{-.3cm}
\end{wrapfigure}

Intuitively, the fundamental issue with this 
deceptively simple program 
is that \emph{instance set discovery} is 
non-terminating. 
To monomorphise a program, 
we first need to discover all possible type instantiations used in 
said program. 
Perfectly well-behaved programs may however 
produce infinitely many type instantiations. 
This occurs when an instance of a (mutually) 
recursive method eventually depends upon a greater 
instantiation of itself, which in turn depends on an even
greater instantiation of itself \textit{ad infinitum}, \eg
\inlinelstfcgg{Box[int].Nest()} depends 
upon the specialisation \inlinelstfcgg{Box[Box[int]].Nest()}.
In \cite{griesemer2020featherweight}, such programs are called
\nomono.

\subsection{\gomacro Implementation}
The official release of \gomacro uses an optimised version of monomorphisation called
\emph{dictionaries and GC shape stenciling}~\cite{go118}.
When possible, their implementation reuses monomorphised functions to reduce code size. 
Two objects may share the same specialised method 
implementation when they have the same GC shape.
In the current implementation, the criteria of having the same 
GC shape means they are of the same data type, or both are pointers.
Each function therefore must to have a dictionary to differentiate
 concrete types at runtime.
A dictionary contains (1) the runtime type information of 
generic type parameters, as well as (2) their derived types used in the function. 
In the function body, each generic function call that 
depends on the generic type parameters also needs a dictionary;  
(3) these sub-dictionaries required by the method calls are also provided in the dictionary.
Additionally, the dictionary provides each generic object 
with (4) the data structure that Go runtime uses to conduct method calls.

\gomacro would also need to create an infinite call-graph dictionary 
for the \inlinelstfcgg{Box} example in Figure~\ref{fig:example:nomono},
as well as for 
the \inlinelstfcgg{permute} example in 
Figure~\ref{fig:code:list:perm}. Hence, \gomacro cannot handle either
example. Our call-site dictionary-passing approach does 
not suffer this limitation.

% \sectSpace
\section{Call-Site, Non-Specialising Dictionary-Passing Translation}
\label{section:dictionary}
This section presents 
our new dictionary-passing translation from \gls{fgg} to \gls{fg}. 

\myparagraph{High level overview} 
Our call-site, non-specialising dictionary-passing translation can be 
split into a number of parts, each tackling a 
different challenge. Specifically, we consider:
the preservation of typeability, 
the use of dictionaries to resolve 
generic method implementations, 
the creation of dictionaries, and 
the preservation of  type assertion behaviour.
These challenges may have been discussed in other works, 
yet the structural type system of Go serves to 
hinder any existing solutions. 
We explain the key ideas and challenges in \S~\ref{sec:dictexample},
and detail the formal translation rules in \S~\ref{sec:dictexp}. 

\subsectSpace
\subsection{Dictionary-Passing by Example}
\label{sec:dictexample}

\subsubsection{\bf Structural Subtyping and Type Erasure}
\label{paragraph:structure}
The first challenge we encounter is that 
subtypes must be preserved. If, in
the source program, expression $e$ can 
be used as an argument to \inlinelstfcgg{Foo}, then 
the translation of $e$ should likewise
be usable as an argument to the translation of \inlinelstfcgg{Foo}. 
We should also desire that any non-subtypes 
are preserved, we leave this challenge to \S~\ref{subsubsec:typecollision}.

As a first naive attempt at removing 
polymorphic types, we might observe that 
regardless of the value we pass to a 
polymorphic argument, it must implement the \inlinelstfcgg{Any} type. 
From this, we could -- again, naively --
conclude that lifting all polymorphic 
arguments to the \inlinelstfcgg{Any} type solves our problem. 
Unfortunately, such a solution fails upon closer inspection. 
Consider the code in Figure~\ref{code:fgg:example}. 
By erasing the polymorphic types in
\inlinelstfcgg{Function}, we lose the subtype 
\inlinelstfcgg{GtFunc[int]} $<:$ \inlinelstfcgg{Function[int, bool]} 
(The naively erased \inlinelstfcgg{GtFunc} implements \inlinelstfcgg{Apply(in Any) bool},
while the erased 
\inlinelstfcgg{Function} demands \inlinelstfcgg{Apply(in Any) Any}). 
This issue is noted in \citet[\S~4.4]{Igarashi99FJ}.

Their solution, however, is inappropriate in a 
structurally typed language such as Go. 
In nominally typed languages like Java, 
it is clear that one type subtypes another. 
One need only inspect the implementing type’s 
declaration, as a subtype exists only when 
it is explicitly declared. 
\citet{Igarashi99FJ} insert \emph{bridge methods} to
handle cases such as the \inlinelstfcgg{GtFunc}-\inlinelstfcgg{Function} example. 
A bridge method is an overloaded method added 
to the subtype whose type matches the erased 
method as specified by the supertype, \ie{} adding an 
overloaded method of type \inlinelstfcgg{Apply(in Any) Any} to \inlinelstfcgg{GtFunc}. 
This method is immediately inappropriate as Go does 
not allow method overloading. 

The bridge method solution would still be inappropriate
were we to simulate 
overloading using name mangling.  
To add bridge methods, we need to 
know -- statically -- that a subtype exists. 
In \gls{fgg}, we need to know how two 
types are instantiated before we can conclude 
that a subtype relation exists. 
This requires the kind of potentially infinite 
whole program analysis (\S~\ref{section:nomono})
that we wished to avoid in our dictionary-passing translation. 
Instead, we ensure that subtypes are 
preserved by erasing \emph{all} method types, rather 
than just polymorphic types. 
As with \inlinelstfcgg{GtFunc}'s \inlinelstfcgg{Apply} 
method in Figure~\ref{code:fg:example}, 
when a variable
of a known type is used, we assert it to that 
type; although unlike Figure~\ref{code:fg:example}, the \gls{fgg} type 
checker has already ensured the safety of these synthetic assertions.

\subsubsection{\bf Dictionaries}
\label{subsubsec:dictionaries}

\begin{figure}
% \noindent
\begin{minipage}[t]{0.4\linewidth }
    \vspace{-3mm}  
\begin{center}
\begin{lstfcgg}
type Ord[T Ord[T]] interface { 
    Gt[](that T) bool 
}
type GtFunc[T Ord[T]] struct { val T }
func (this GtFunc[T]) Apply(in T) bool {
    return this.val.Gt[](in)
}
type Max struct {}
func (this Max) Of[T Ord[T]](l T, r T) T {
    $\cdots$ l.Gt(r) $\cdots$
}
func main() { GtFunc[int]{5}.Apply(7) }
\end{lstfcgg}
\end{center}
\end{minipage}\hspace*{-2mm}
\begin{minipage}[t] {0.59\linewidth }
    \vspace{-3mm}  
\lstset{firstnumber=1}
\begin{lstfcgg}
type Ord interface{  Gt(that Any) Any }
type OrdDict struct { 
    Gt func(rec Any, in Any) Any ; //Gt method pointer
    /*Simulated type*/ }
type GtFunc struct { val Any ; dict OrdDict }
func (this GtFunc) Apply(in Any) Any {
    return this.dict.Gt(this.val /*Receiver*/, in) (*\label{code:dictexample:resolve1}*)
}
func (this Max) Of(dict OrdDict, l Any, r Any) Any {
    $\cdots$ dict.Gt(l, r) $\cdots$ }(*\label{code:dictexample:resolve2}*)
func main() {
  od := OrdDict{Gt: func(rec Any, in Any) Any { rec.(int).Gt(in)}}
  GtFunc{5, od}.Apply(7) }
\end{lstfcgg}
\end{minipage}
\vspace*{-.5cm}
\caption{Dictionary-passing translation example extending 
Figure~\ref{code:fg:example}. 
\glsentryshort{fgg} source (Left),  
\glsentryshort{fg} translation (Right)}
\vspace*{-.4cm}
\label{code:dict:passing:example}
\end{figure}

We are now confronted with the primary 
challenge of concern to dictionary-passing 
translations; how do we resolve generic 
method calls without polymorphic type information? 
A dictionary is, at its simplest, 
a map from method names to their 
specific implementation for some type. 
A dictionary-passing translation, then, 
is one which substitutes the specialisation 
of type parameters with the passing of 
dictionaries as supplementary value-argument. 
One may then resolve a method call on 
a generic value by performing a dictionary lookup. 

Presently, we consider the structure 
and usage of dictionaries while delaying 
our discussion of call-site dictionary 
construction and type simulation until 
\S~\ref{subsubsec:dynamicdict} and \S~\ref{subsubsec:typecollision}, \emph{resp}.
Consider Figure~\ref{code:dict:passing:example} (left) extending a fragment of 
 Figure~\ref{code:fg:example} with a \inlinelstfcgg{Max.Of} method. 
 For us to call \inlinelstfcgg{Gt} in \inlinelstfcgg{GtFunc[T].Apply} 
 or \inlinelstfcgg{Max.Of[T]}, we need to know the 
 concrete type of \inlinelstfcgg{T}. This information 
is lost during erasure. 

The translation (right) includes a 
fresh struct \inlinelstfcgg{OrdDict} which is, 
quite naturally, the dictionary for 
\inlinelstfcgg{Ord} bounded type parameters. 
Dictionaries contain a method pointer 
field for each method in the original 
interface, along with a \emph{type-rep} which 
shall be discussed in \S~\ref{subsubsec:typecollision}. 
\gls{fg} does not include method pointers; 
instead, we must simulate them using
 higher order functions with the 
 first argument being the receiver. 
While this adds a small amount of 
complexity to the final correctness 
proofs, we see this as a worthwhile 
compromise, as it allows us to focus 
on the translation of generics alone, 
rather than on generics \emph{and} on
 a translation to some low level language. 
By containing each method specified
 by the \gls{fgg} bounding
interface, dictionaries have a fixed internal representation.
This reflects real-world dictionary-passing implementations
and allows entries to be accessed efficiently~\cite{driesen1996direct}.

\begin{wrapfigure}{r}{0.42\linewidth}
    \vspace*{-.7cm}
    \lstset{xleftmargin=5pt}
    \begin{lstfcgg}
type Foo[$\alpha$ Any] interface { 
    do[$\beta$ Any](a $\beta$, b bool) $\alpha$ 
} 
type Bar[$\alpha$ Any] struct {}
func (x Bar[$\alpha$]) do[$\beta$ Any](a $\beta$, b $\alpha$) int {$\cdots$}
func main() { 
    Bar[bool]{}.(Foo[int]); (*\label{code:assertion:source}*) 
    Bar[bool]{}.(Foo[bool]) (*\label{code:assertion:source2}*) 
}
    \end{lstfcgg}
    \vspace*{-.5cm}
    \caption{Type-rep example. \glsentryshort{fgg} source}
    \label{fig:code:dict:type-rep:fgg}
    \vspace*{-.3cm}
    \end{wrapfigure}
Dictionaries are passed to methods via 
two mechanisms, namely the method's receiver, 
and as regular value-arguments. 
Generic structures, \eg \inlinelstfcgg{GtFunc}, 
possess a dictionary for each type parameter.
When used as a receiver, these dictionaries can 
be accessed using standard field destructuring. 
Method dispatch then takes the form of a dictionary
lookup and method invocation as seen on lines~\ref{code:dictexample:resolve1}
and \ref{code:dictexample:resolve2} (right).

\subsubsection{\bf Type Collision}
\label{subsubsec:typecollision}

Here we consider the challenge of ensuring that 
type assertion behaviour is preserved by our translation.
Erasing type parameters may 
introduce new subtypes which did not
exist in the source program. 
Consider the expression 
\inlinelstfcgg{GtFunc[int]\{5\}.(Function[bool, bool])} 
where \inlinelstfcgg{GtFunc} and \inlinelstfcgg{Function} are 
defined in Figure~\ref{code:fgg:example}. 
Upon evaluation, this expression 
produces a runtime type assertion 
error as \inlinelstfcgg{GtFunc[int]\{5\}} is
not a subtype of \inlinelstfcgg{Function[bool, bool]}. 
The erased types as described in 
\S~\ref{paragraph:structure}, however, form a subtype relation,
meaning the error will not 
occur in the translated code. 
This behaviour would be incorrect. 
To ensure that type assertion errors 
are correctly preserved we simulate the FGG type
assertion system inside the 
translated \gls{fg} code via type-reps~\cite{crary1998intensional}. 
A simulated \gls{fgg} type implements \inlinelstfcgg{_type_metadata}
by specifying a method, 
\inlinelstfcgg{tryCast}, which throws an error 
if and only if the \gls{fgg} assertion would have failed.

\begin{figure}
    % \vspace*{-.7cm}
    % \noindent
    \lstset{xleftmargin=10pt}
    \begin{lstfcgg}
type _type_metadata interface { tryCast (in Any) Any }
type AnyDict struct {_type _type_metadata}
type Foo interface { do(dict$_0$ Anydict, in Any) Any ; spec_do() spec_metadata$_4$ }
type Foo_meta struct { _type$_0$ _type_metadata }
func (this Foo_meta) tryCast(x Any) Any { (*\label{code:trycast}*) // Type formal, Parametrised arg, Literal arg, return type $\alpha$
  if (x.(Foo).spec_do() $!=$  spec_metadata$_4${Any_meta{}, param_index$_0${}, Bool_meta{}, this._type$_0$ }) { panic } 
  return x }
type Bar struct {dict$_0$ AnyDict}
func (this Bar) spec_do() spec_metadata$_4$ { // Type formal, Parametrised arg, Arg type $\alpha$, return type literal
  return spec_metadata$_4${Any_meta{}, param_index$_0${}, this.dict$_0$._type, Int_meta{}}}(*\label{code:specdo}*)
func main() { 
  Foo_meta{Int_meta{}}.tryCast(Bar{AnyDict{Bool_meta{}}}) (*\label{code:assertion:target}*)
  Foo_meta{Bool_meta{}}.tryCast(Bar{AnyDict{Bool_meta{}}}) } (*\label{code:assertion:target}*)
    \end{lstfcgg}
    \vspace*{-.5cm}
\caption{Type-rep example. \glsentryshort{fg} translation}
\label{fig:code:dict:type-rep:fg}
 \vspace*{-.4cm}
\end{figure}

Consider the code in Figure~\ref{fig:code:dict:type-rep:fgg}.
The source \gls{fgg} code contains two assertions; 
the one on line~\ref{code:assertion:source} 
passes, while line~\ref{code:assertion:source2} 
produces a type assertion error.

A struct implements an 
interface when it correctly implements
each method specified by the interface. 
This means that not only does the struct
define a method of the same name, but 
also of precisely the same type. 
Assertion to an interface, then, need
only ensure that each method is correctly implemented. 
Assertion to a structure is a simple type equality check.

The translated interface, Figure~\ref{fig:code:dict:type-rep:fg}, 
now includes the meta method \inlinelstfcgg{spec_do}, 
returning simulated \gls{fgg} type information for a struct's 
\inlinelstfcgg{do} implementation.

The \inlinelstfcgg{spec_metadata$_4${}} object 
returned by \inlinelstfcgg{spec_do} on 
line~\ref{code:specdo} of the target code is a four-element 
tuple containing: type parameter bounds, 
argument types, and the return type. This object 
simulates the \gls{fgg} method type for 
\inlinelstfcgg{do} on \inlinelstfcgg{Bar[$\tau$]} for some $\tau$, \ie
\inlinelstfcgg{do[$\beta\ $ Any](a $\ \beta$, b $\ \tau$) Int[]}.
The first entry \inlinelstfcgg{Any_meta\{\}} gives the simulated 
type bound of the source method's type parameter $\beta$. 
The next gives the type of argument \inlinelstfcgg{a}, namely $\beta$. 
As there is no suitable concrete metadata type for $\beta$, we 
use an index \inlinelstfcgg{param_index$_0${}} to indicate
that \inlinelstfcgg{a}'s type is the method's first type parameter. 
The third, that of \inlinelstfcgg{b}, is not known at compile time, 
but is rather given by the
type parameter of the receiver. 
Finally, the return type is given by the constant \inlinelstfcgg{Int_metadata}.  

The type assertion on line~\ref{code:assertion:target}
uses the \inlinelstfcgg{Foo_meta}'s \inlinelstfcgg{tryCast} method defined on line~\ref{code:trycast}.
This method first checks that the erased types are compatible, \ie{} that \inlinelstfcgg{Bar} 
implements all erased methods in \inlinelstfcgg{Foo}. The \inlinelstfcgg{spec_do} method is then
used to check the simulated method type matches the interface specification.
If any of these checks is failed then the assertion 
fails and a {$\panic$} is thrown.

\subsubsection{\bf Call-Site Dictionary Creation}
\label{subsubsec:dynamicdict}

As discussed in \S~\ref{section:nomono}, 
the approach taken by \gomacro 
is fundamentally limited by its use of call-graph based 
dictionary construction. 
In contrast we consider the challenge of the call-site 
construction of dictionaries 
in a structurally typed language. 
Our approach 
overcomes the aforementioned limitation of
 \cite{griesemer2020featherweight} and \gomacro.

We note a few key facts. 
A $\tau$-dictionary provides all 
the methods specified by the 
type bound $\tau$, and we may build a 
dictionary for any specialising type 
which is a subtype of $\tau$. 
We can also use a type variable to 
specialise some other type variable as 
long as the bound of the later is a 
supertype of the former. In a translation 
this \emph{dictionary-supertyping} 
involves using a $\tau$-dictionary to 
build a potentially different $\sigma$-dictionary. 
In a nominally typed 
language the explicit, and fixed, 
hierarchy allows a dictionary-passing translation
to easily structure and construct dictionaries
according to the subtype hierarchy. 
Dictionary-supertyping in nominally typed languages 
is generally a 
matter of extracting the appropriate sub-dictionary~\cite{bottu2019coherence}. 

In a structurally typed language, however, there is not 
a fixed subtype hierarchy. Recall that in order
to infer subtype relationships, we first need the specific
type instances. We 
have two choices: either explore the entire 
call graph to discover all type instantiations and 
construct our dictionaries according to the call-graph, 
or construct/supertype our dictionaries at the call-site where
specialisation would have happened. 
The former approach was taken by \gomacro 
and beyond the significant static analysis 
required, this approach also suffers from the 
same finiteness limitation encountered 
by monomorphisation approaches \cite{griesemer2020featherweight}.

\begin{figure}
    \noindent
\begin{minipage}{0.33\linewidth}
    \begin{lstfcgg}
type Eq[$\alpha$ Eq[$\alpha$]] interface {
    Equal(that $\alpha$) bool
}
type Ord[$\alpha$ Ord[$\alpha$]] interface {
    Gt(that $\alpha$) bool; 
    Equal(that $\alpha$) bool
}
func Foo[$\beta$ Ord[$\beta$]](val $\beta$) Any {
    return Bar[$\beta$](val)
}
func Bar[$\beta$ Eq[$\beta$]](val $\beta$) Any {$\cdots$}
func main() { Foo[int](5) }
    \end{lstfcgg}
\end{minipage}
\begin{minipage}{0.64\linewidth}
    \begin{lstfcgg}
type EqDict struct { Equal func(rec Any, that Any) Any }
type OrdDict struct {
  Equal func(rec Any, that Any) Any ;
  Gt func(rec Any, that Any) Any  
}
func Foo(dict OrdDict, val Any) Any {return Bar(EqDict{dict.Equal}, val)}
func Bar(dict EqDict, val Any) Any { $\cdots$ }
func main() {
  old_dict := OrdDict{
    Equal : func(rec Any, that Any) Any { return rec.(int).Equal(that) }
    Gt : func(rec Any, that Any) Any { return rec.(int).Gt(that) } }
  Foo(old_dict, 5) }
    \end{lstfcgg}
\end{minipage}
\vspace*{-.3cm}
\caption{Call-site dictionary creation example. 
\glsentryshort{fgg} source (Left). 
\glsentryshort{fg} translation (Right)}
\label{fig:code:dict:dynamic}
\vspace*{-.4cm}
\end{figure}

We demonstrate our call-site approach in Figure~\ref{fig:code:dict:dynamic}.
This example consists 
of two interfaces, \inlinelstfcgg{Eq} and 
\inlinelstfcgg{Ord}, which form a 
subtype relation along with a method \inlinelstfcgg{Foo} which 
uses a type parameter bounded by \inlinelstfcgg{Ord} to 
instantiate a type parameter bounded by \inlinelstfcgg{Eq}. 

If, in the source program, there are two types $\sigma$ and $\tau$ 
where there exists an instantiation creating a subtype relation, then 
the two erased types form a subtype relation.
This is precisely the result discussed in \S~\ref{paragraph:structure}. 
When initially creating a dictionary, we
populate it with the required method pointers
for the known instantiating type.
If, however, we are creating a 
$\tau$-dictionary for type parameter $\beta$ bounded by $\sigma$, 
then the method contained by the supertyping 
$\tau$-dictionary (\inlinelstfcgg{Eq}) is a subset of 
the $\sigma$-dictionary (\inlinelstfcgg{Ord}) for type parameter $\alpha$. 
Dictionary-supertyping then consists of destructuring the 
subtype's dictionary and -- along with the type-rep --
adding all required method pointers to a 
new supertype-dictionary. 

While conceptually simple, 
our call-site approach directly addresses the 
unique issues raised by structural typing systems and 
allows us to 
overcome the limitation discussed in \S~\ref{section:nomono} 
that afflicts 
both monomorphisation~\cite{griesemer2020featherweight} and \gomacro.

\subsection{Dictionary-Passing Judgement}
\label{sec:dictexp}

This subsection is technical:  
readers who are not interested in the formal translation rules 
can safely skip this subsection.

We define the judgement $\dict[]{\prog}{\lex{\prog}}$ as the 
dictionary-passing translation from $\prog$ in \gls{fgg} to
$\lex{\prog}$ in \gls{fg}. 
The expression judgement $\dict{e}{\lex{e}}$ is parametrised by
variable and type variable environments 
($\Gamma$ and $\Delta$ \emph{resp.})
as well as a dictionary map $\eta$ from type variable names to 
dictionary variables. 
We provide auxiliary functions in Figure~\ref{fig:dict:aux} and translation rules in 
Figure~\ref{fig:dict:prog}. 

\myparagraph{Name constants}
We introduce a set of maps from name constants in \gls{fgg} to unique
 \gls{fg} names which are assumed 
to never produce a collision, 
    \begin{enumerate*}
        \item $\dictName{\iType{t}}$ --- from a type bound (interface) to the 
            dictionary struct name for that bound; 
        \item $\metadataName{t}$ --- from a type name to a simulated type name; 
        \item $\specName{m}$ --- from a method name to a method producing simulated specification; and
        \item  $\methName{t,m}$ --- the method applicator (pointer) for method $m$ on type $t$. 
    \end{enumerate*}

\myparagraph{Auxilary functions}
\begin{figure}
    \scriptsize
    \begin{equation*}
        \begin{gathered}
            \arities{\multi{D}} = \bigcup \multi{\arities{D}}
            \quad          \arities{\typeFormalMethod(\multi{x~\tau})~\tau} = |\Psi| + | \multi{x} |\
            \quad
            \arities{\type \iType{\tau} \interface{\multi{mM}}} =
            \multi{\arities{M}}
            \\
            \arities{\type \sType{\tau} \struct{\multi{f~\tau}}} = \emptyset 
            \quad 
            \arities{\func~(\this~\sType{\tau})~mM~\sytxBrace{\return e}}
            = \{\arities{M}\}
            \\
            \maxFormals{\funcDelc{\sType{t}\typeFormalReceive}{m\typeFormalMethod}{\multi{x~\tau}}{\tau}{\return e}} = |\Psi|
            \\
            \maxFormals{\type \iType{t}\typeFormalType \interface{\multi{m\typeFormalMethod(\multi{x~\tau})~\tau}}} = \max(\multi{|\Psi|}, |\Phi|)
            \quad
            \maxFormals{\type \sType{t}\typeFormalType \struct{\multi{f~\tau}}} = |\Phi|
            \\
            \typemeta{\alpha} = \zeta(\alpha)
            \quad
            \typemeta{t[\multi{\tau}]} = \metadataName{t}\sytxBrace{\multi{\typemeta{\tau}}}
            \quad
            \asParameters{\typeFormal[\alpha~\iType{t}\typeActualReceive]}
            = \multi{\dictlit~\dictName{\iType{t}}}\\
            \infer{
                \specMetadata{mM} = \lex{m}()~\fnMeta{n}
            }{
                n = \arities{M}
                & \lex{m} = \specName{m}
            }
            \\
            \infer{
            \signatureMeta{[\typeFormal[\beta~\iType{\sigma}]](\multi{x~\tau})~\tau} =
            \fnMeta{n}\{
            \multi{\typemeta[\zeta']{\iType{\sigma}}},
            \multi{\typemeta[\zeta']{\tau}},
            \typemeta[\zeta']{\tau}\}
            }{
            \zeta' = \zeta, \{\beta_i\mapsto\paramTypeMeta_i\{\}\}_{i}
            & n = \arities{[\typeFormal[\beta~\iType{\sigma}]](\multi{x~\tau})~\tau}
            }
            \\
            \infer{
                \makeDictMeth{t, m[\Psi](\multi{x~\tau})~\tau} =
                \left\{
                \begin{array}{l}
                    \type \methName{t, m} \struct{}; \\
                    \func (x~\methName{t, m})~\apply(
                    \rec~\any,
                    \multi{\dictlit~\any},
                    \multi{x~\any}
                    )~\any \{                        %\\
                    %                    \quad
                    \return \rec.(t).m(
                    \multi{\dictlit.(u)},~
                    \multi{x\vphantom{(u)}}
                    )~
                    \}
                \end{array}
                \right\}
            }{
                \multi{\dictlit~u} = \asParameters{\Psi}
            }
            \\
            \makeDict{\multi{\tau}, \typeFormal[\multi{\alpha~\sigma}]} = \multi{\makeDict{\tau, \sigma}}
            \,\
            \infer{
                \makeDict{\alpha, \iType{t}\typeActualReceive}
                = \sTypeInit{t}{\multi{\eta(\alpha).m},~
                    \eta(\alpha).\typeField}
            }{
                \subtype{\iType{t}\typeActualReceive}{\alpha}
                & \multi{mM} = \methods_\Delta(t\typeActualReceive)
                & \sType{t} = \dictName{\iType{t}}
            }
            \,\
            \infer{
                \makeDict{\alpha, \iType{\tau}}
                = \eta(\alpha)
            }{
                \Delta(\alpha) = \iType{\tau}
            }
            \\
            \infer{
                \makeDict{t[\phi], \iType{u}[\psi]}
                = \sType{t}\{\multi{
                    \methName{t, m}
                },~ \mathit{meta}\}
            }{
                \begin{array}{c}
                    \multi{m\typeFormalMethod(\multi{x~\tau})~\sigma} = \methods_\Delta(\iType{u}[\psi])
                    \quad \sType{t} = \dictName{\iType{u}}\quad
                    \zeta = (-.\typeField) \circ \eta
                    \quad  \mathit{meta} = \typemeta{t[\phi]}
                \end{array}
            }
        \end{gathered}
    \end{equation*}
    \vspace{-3mm}
    \caption{Dictionary-passing auxiliary function}
    %(program and interface)}
    \label{fig:dict:aux}
    \vspace{-6mm}
\end{figure}

Figure \ref{fig:dict:aux} provides a number of auxiliary functions used in 
the dictionary-passing translation. 
The overloaded $\arities{}$ function 
computes the number of type and value parameters required by each 
method signature, including method signatures in an interface's specifications. 
Function $\maxFormals{D}$ computes the number of type parameters expected 
by the largest type formal. 
Function $\asParameters{\Phi}$ converts a type formal into dictionary arguments.
The function $\makeDictMeth{t, mM}$ constructs the 
simulated method pointer struct and implementation 
-- called the \emph{abstractor/applicator pair} --  for method $m$ on type $t$.

To build a type simulation of type $\tau$ we call $\typemeta{\tau}$ where $\zeta$ is a 
map from type variables to existing simulated types. 
When simulating the type assertion to an interface in 
\S~\ref{subsubsec:typecollision}, we used the  
$\specName{m}$ method \inlinelstfcgg{spec_do} to produce
the instantiated simulated signature for method $m$. 
The $\specMetadata{mM}$ function takes an interface's method specification 
and produces the specification for the $\specName{m}$ method. 
Simulated method signatures are built using $\signatureMeta{M}$. 
This function takes a map $\zeta$ 
from type variables to simulated types, 
and extends $\zeta$ with the indexing structs for the method's type formal. 

A set of simulated method signature ($\fnMeta{n}$) and type parameter index 
($\paramTypeMeta_i$) structs are created by the program translation 
rule (\rulename{\dprogram}); $\arities{\multi{D}}$ and $\maxFormals{D}$ 
are used to ensure that all needed structs are constructed. $\fnMeta{n}$ is an $n+1$ tuple 
used in interface assertion simulation, and describes a method signature of arity $n$
and gives 
type parameter bounds, 
value argument types, and the method's return type. 
To allow interface assertion simulation to reference type 
variables, we use $\paramTypeMeta_i\{\}$ to reference a method's $i^{\text{\tiny th}}$ 
type parameter. 

Given a type environment $\Delta$ and a map $\eta$ from type variables to existing dictionaries,
we build a $\iType{\tau}$-dictionary for type~$\sigma$ using the 
\makeDict{\sigma, \iType{\tau}} function. In the case that $\sigma$ is already 
a type variable $\alpha$, then the map $\eta$ must contain a dictionary for $\alpha$. 
When $\alpha$ is bounded by $\iType{\tau}$ in $\delta$, we are done, 
whereas if $\iType{\tau}$ is a subtype of $\alpha$, 
but not 
$\iType{\tau} = \alpha$, then we need copy 
method pointers required by the new (and smaller) $\iType{\tau}$-dictionary.
A new dictionary is built for a constant type $\sigma$ by providing a method pointer (abstractor) for each 
method specified by $\iType{\tau}$ and the simulated type of $\sigma$.

\begin{figure}[hbtp]
    \scriptsize
    \begin{equation*}
        \begin{gathered}
            \namedRuleTwo{\dprogram}{
            \mathit{fns} = \{\type \nAryFunction{n} \interface{\apply(\rec~\any, \{x_i~\any\}_{i<n})~\any}\}_{n\in\multi{n}}
            }{
            \infer{
                \dict[]{\program{e}}{\program[\lex{D}]{\lex{e}}}
            }{
                \begin{gathered}
                    \multi{n} = \arities{\multi{D}}
                    \quad
                    \mathit{metas} = \{\type \fnMeta{n} \struct{\{\typeField_i~~\typeMetadataLit\}_{i\leq n}}\}_{n\in\multi{n}}
                    \quad m = \max~~\bigcup\multi{\maxFormals{D}}
                    \\\mathit{params} = \{\type \paramTypeMeta_i \struct{}\}_{i<m}
                    \quad
                    \mathit{typeMeta} = \{\type \typeMetadataLit \interface{
                        \trycast (x~\any)~\any} \}
                    \\
                    \dictMulti[]{D}{\mathcal{D}}
                    \quad \multi{\lex{D}} = \{\type \any \interface{}\}
                    \cup \mathit{params}
                    \cup \mathit{typeMeta}
                    \cup \mathit{metas}
                    \cup \mathit{fns} \cup \bigcup \multi{\mathcal{D}}
                    \quad \dict[\emptyset; \emptyset; \emptyset]{e}{\lex{e}}
                    \\
                \end{gathered}
            }
            }\\
            \namedRuleTwo{\dinterface}{
            \dictMulti[]{S}{\lex S}
            \quad \dictMultiDict{S}{S_{\text{dict}}}
            \quad \zeta = \{\multi{\alpha\mapsto\this.\typeField}\}
            }{
            \infer{
                \begin{gathered}
                    \dict[]{
                        \type \iType{t}[\typeFormal] \interface{\multi{S}}
                    }{
                        \left\{
                        \begin{aligned}
                             & \type \iType{t} \interface{\multi{\lex{S}},~\multi{\specMetadata{S}}}                     \\
                             & \type \dictName{\iType{t}} \struct{\multi{S_{\text{dict}}},~\typeField~~\typeMetadataLit} \\
                             & \type \metadataName{\iType{t}} \struct{
                            \{\typeField_i~~\typeMetadataLit\}_{i<|\alpha| }
                            }                                                                                            \\
                             & \func~(\this \metadataName{\iType{t}})~\trycast(x~\any)~\any \{
                            \multi{\mathit{assertions}}~;~\return x
                            \}                                                                                           \\
                             & \multi{\makeDictMeth{t, S}}
                        \end{aligned}
                        \right\}
                    }
                \end{gathered}
            }{
                &
                \multi{\mathit{assertions}} =
                \{
                \lit{if} (x.(\iType{t}).\specName{m}() \noteq
                \signatureMeta{M}
                )~ \sytxBrace{ \lit{panic}
                }
                ~|~
                \begin{matrix}
                    mM \in \multi{S}
                \end{matrix}
                \}
                }
            }
            \\[-.1cm]
            \namedRuleTwo{\dmeth}{
                (\type~\sType{t}[\typeFormal[\multi{\alpha~\iType{\tau}}]]~T) \in \multi{D}
                \quad 
                \eta = \multi{\alpha \mapsto \this.\dictlit}, \multi{\beta \mapsto \dictlit}
                \quad
                \lex{\Psi} = \asParameters{\typeFormal[\beta ~ \iType{\sigma}]}
            }{
                \infer{
                    \begin{gathered}
                        \dict[]{
                            \funcDelc{\sType{t}[\multi{\alpha}]}
                            {m[\typeFormal[\beta~\iType{\sigma}]]}
                            {\multi{x~\tau}}
                            {\tau}
                            {\return e}
                        }{
                            %                            \\
                            \left\{
                            \begin{aligned}
                                 & \funcDelc{\sType{t}}
                                {m}
                                {\lex{\Psi}, \multi{x~\any}}
                                {\any}
                                {\return \lex{e}}                                                              \\
                                 & \func~(\this~\sType{t})~\lex{m}\lex{M} \{
                                \return\signatureMeta{[\typeFormal[\beta~\iType{\sigma}]](\multi{x~\tau})~\tau}
                                \\
                                 & \makeDictMeth{t, m[\typeFormal[\beta~\iType{\sigma}]] (\multi{x~\tau})\tau}
                            \end{aligned}
                            \right\}
                        }
                    \end{gathered}
                }{
                    \begin{gathered}
                         \dict[\multi{\alpha~\iType{\tau}\vphantom{\beta}},
                            \multi{\beta~\iType{\sigma}};
                            \eta;
                            \this : \sType{t} {[\multi{\alpha}]},
                            \multi{x : \tau}
                        ]{e}{\lex{e}} 
                        \quad \zeta = \{\alpha_i \mapsto \this.\dictlit_i.\typeField\}_i
                        \quad \lex{m}\lex{M} =  \specMetadata{m[\typeFormal[\beta~\iType{\sigma}]](\multi{x~\tau})~\tau}
                    \end{gathered}
                }
            }
            \\[-.25cm]
            \namedRule{\dstruct}{
            \infer{
            \begin{gathered}
                \dict[]{
                    \type \sType{t}\typeFormalType \struct{\multi{f~\tau}}
                }{
                    \left\{
                    \begin{aligned}
                         & \type \sType{t} \struct{\multi{f~\any}, \multi{\dictlit~u}}     \\
                         & \type~\metadataName{\sType{t}} \struct{
                        \{\typeField_i~~\typeMetadataLit\}_{i<n}
                        }                                                                  \\
                         & \func~(\this~\metadataName{\sType{t}})~\trycast(x~\any)~\any \{
                        x.(\sType{t})~;~\multi{\mathit{assertions}}~;~\return x\}
                    \end{aligned}
                    \right\}
                }
            \end{gathered}
            }{
            \multi{\dictlit~u} = \asParameters{\Phi}
            & n = |\Phi|
            & \multi{\mathit{assertions}} = \{\lit{if} ~ \this.\typeField_i \noteq x.(\sType{t}).\dictlit_i.\typeField~\sytxBrace{\lit{panic}}\}_{i<n}
            }
            }
            \\
            \namedRule{\dfield}{
                \infer{
                    \dict{e.f}{\lex{e}.(\sType{t}).f}
                }{
                    \dict{e}{\lex{e}}
                    & \wellTyped[\Delta;\Gamma]{e}{\sType{t}\typeActualReceive}
                }
            } \!\!\!
            \namedRuleTwo{\dvalue}{
                (\type \sType{t}\typeFormalType~T)\in \multi{D}
            }{
                \infer{
                    \dict{
                        \sType{t}\typeActualReceive\sytxBrace{\multi{e}}
                    }{
                        \sType{t}\{\multi{\lex{e}}, \lex{\phi}\}
                    }
                }{
                    \quad \lex{\phi} = \makeDict{\phi, \Phi}
                    & \dictMulti{e}{\lex{e}}
                }
            } \!\!\!
            \namedRule{\dassert}{
                \infer{
                    \dict{e.(\tau)}{\typemeta[\zeta]{\tau}.\trycast(\lex{e})}
                }{
                    \dict{e}{\lex{e}}
                    & \zeta = (-.\typeField) \circ \eta
                }
            }\\
            \namedRuleTwo{\ddictcall}{
                \wellTyped[\Delta; \Gamma]{e}{\alpha}
                \quad (m\typeFormalMethod(\multi{x~\tau})~\tau) \in \methods[\Delta](\alpha)
            } {
                \infer{
                    \dict{
                        e.m\typeActualMethod(\multi{e})
                    }{
                        \eta(\alpha).m.\apply(\lex{e}, \lex{\psi}, \multi{\lex{e}})
                    }
                }{
                    \lex{\psi} = \makeDict{\psi, \Psi}
                    & \dict{e}{\lex{e}}
                    & \dictMulti{e}{\lex{e}}
                }
            } 
            \namedRuleTwo{\dcall}
            {
                \wellTyped[\Delta; \Gamma]{e}{t\typeActualReceive}
                \quad (m[\Psi](\multi{x~\tau})~\tau) \in \methods[\Delta](t\typeActualReceive)
            } {
                \infer{
                    \dict{
                        e.m\typeActualMethod(\multi{e})
                    }{
                        \lex{e}.(t).m(\lex{\psi}, \multi{\lex{e}})
                    }
                }{
                    \lex{\psi} = \makeDict{\psi, \Psi}
                    & \dict{e}{\lex{e}}
                    & \dictMulti{e}{\lex{e}}
                }
            }\\ 
                \namedRule{\dspec}{
                    \infer{
                            \dict[]{
                                m[\Psi](\multi{x~\tau})~\tau
                            }{} 
                            \quad m(\lex{\Psi}, \multi{x~\any})~\any
                    }{
                        \lex{\Psi} = \asParameters{\Psi}
                    }
                } 
                \
            \namedRule{\ddict}{
                \infer{
                        \dictDict{
                            m[\Psi](\multi{x~\tau})~\tau
                        }{}
                        m~\nAryFunction{n}
                }{
                    n = |\Phi| + |\multi{x}|
                }
            }\ 
            \namedRule{\dvar}{
                \axiom{\dict{x}{x}}
            }
        \end{gathered}
    \end{equation*}
    \vspace{-2mm}
    \caption{Dictionary-passing translation}
    %(program and interface)}
    \vspace{-0.4cm}
    \label{fig:dict:prog}
    \label{fig:dict:interface}
    \label{fig:dict:expr}
\end{figure}

\myparagraph{Program translation} Rule \rulename{\dprogram}
introduces new declarations required for method pointers 
and type simulations as described in \S~\ref{sec:dictexample}, and 
the \any\ interface to provide a uniform, erased, type representation. 
Each method applicator must implement an $n$-arity function interface 
$\nAryFunction{n}$, that accepts the receiver and the $n$ arguments for the 
desired method call. The arity of a method includes both the regular value 
arguments as well as the dictionary arguments. 
A simulated type implements the $\typeMetadataLit$ interface by providing 
an assertion simulation method (\trycast), which panics if 
the assertion is invalid. 
The $\fnMeta{n}$ and 
$\paramTypeMeta_i$ structs are created as required by 
the $\arities{\multi{D}}$ and $\maxFormals{D}$ 
functions, respectively.
Each declaration is translated to multiple declarations; we use $\mathcal{D}$ 
to indicate this.

\myparagraph{Interface and dictionary construction}
The translation of 
interfaces produces a number 
of \gls{fg} declarations (\rulename{\dinterface}). 
They are (1) an \gls{fg} interface, and 
(2) a dictionary for that interface.

The interface $\iType{t}\typeFormalType$ becomes the erased type 
$\iType{t}$ (1). 
For each method specification $S$ defined by the source interface, we produce 
two specifications in the target; the first is defined by \rulename{\dspec} 
and replaces types formal with appropriate dictionaries while erasing all other 
types, and the second defines a method producing the simulated 
\gls{fgg} method specification. 
Since \rulename{\dmeth} produces such a simulated specification method for 
each method, it is guaranteed that any type which 
implements the former will implement the latter.

The dictionary (2) for an interface $\iType{t}$ is given by a new struct 
$\dictName{\iType{t}}$, which contains a method pointer (abstractor) for 
each specified method and the simulated type ($\typeField$) 
for the type parameter's specialising type.
Type simulation is also defined here.
For type $\iType{t}\typeFormalType$, the simulation  struct 
($\metadataName{\iType{t}}$) contains a field for each type parameter in $\Phi$. 
The \trycast\ method 
checks that each specified method is implemented correctly by the target of the 
assertion (See \S~\ref{subsubsec:typecollision}).
For clarity of presentation, we assume a number of extra language features that can 
be easily implemented in \gls{fg}, including; if-statement, struct inequality, 
explicit panic, and sequencing \cite{griesemer2020featherweight}.

\myparagraph{Struct declaration}
To translate $\sType{t}\typeFormalType$, 
we erase all field types 
and add a new dictionary field for each 
type parameter in $\Phi$. The simulated type $\metadataName{\sType{t}}$
is constructed with a
variable for each type parameter, and $\trycast$ checks that the 
target value is exactly the assertion type.

\myparagraph{Method declaration}
Judgement on method $m\typeFormalMethod(\multi{x~\tau})~\tau$ (\rulename{\dmeth})
produces a primary method,  a method returning the simulated method type, 
and an abstractor/applicator pair.
The primary method and simulation method's types match those from \rulename{\dinterface}.
The body of the implementing method is translated in the 
$\Delta;\eta;\Gamma$ environments,
where $\Delta$ and $\Gamma$ are built 
according to the typing system.  
There are two locations for type variables -- and thus dictionaries -- to be 
passed into a method, namely in the receiver or as an argument;
consequently, $\eta$ may map into either a dictionary argument ($\dictlit_i$) or 
a receiver's dictionary field ($\this.\dictlit_i$).

\myparagraph{Expressions}
The struct literal ($\sType{t}\typeActualReceive\{\multi{e}\}$) is translated by 
first translating each field assignment and then 
building an appropriate dictionary for each type in $\phi$ using 
\textit{makeDict} (\rulename{\dvalue}). 

Method calls are translated in one of two ways. The first (\rulename{\dcall})
is the immediate structural translation of sub terms and creation of appropriate dictionaries; this
translation is only possible if the type of the receiver is not a type variable, although 
it does not need to be a closed type. 
The second (\rulename{\ddictcall}) translates arguments and creates dictionaries in the 
same way as the former, but needs to resolve the 
method implementation using a dictionary lookup.

\sectSpace
\section{Correctness of Dictionary-Passing Translation}
\label{section:properties}

In this section, we define, justify, and prove 
the correctness of our dictionary-passing translation 
using a behavioural equivalence.
We first introduce a general 
\emph{correctness criteria} 
which good translations should satisfy. 
We then propose a novel \emph{bisimulation up to} technique
to prove that translated programs are behaviourally equivalent to their source program. 
We use this result to prove the correctness of our dictionary-passing translation. 
Full proofs can be found in
\ifnotsplit{Appendix~\ref{app:proofs}}{the full version of this paper~\cite{fullversion}}.

\subsection{Correctness Criteria}
\label{section:properties:correctness}
The correctness criteria is defined 
using a number of preliminary
predicates provided below.  

\begin{definition}[Type assertion errors]
  \label{def:panics}
  We say expression $e$ in \gls{fg} is a \emph{type assertion error}
  (\emph{panic} in \cite{griesemer2020featherweight})
  if there exists an evaluation
  context $E$, value $v$, and
  type $t$ such that
  $e=E[v.(t)]$
  and $\vtype(v)\not <: t$.
  We say expression $e$ gets
  a \emph{type assertion error} (denoted by
  $e\Downarrow_\panic$)
  if it reduces to an expression that contains a type assertion error,
  \ie{} $e\red^\ast e'$ and $e'$
  is a type assertion error. 
We write $\prog\Downarrow_\panic$ when  
$\prog = \program{e}$ and $e\Downarrow_\panic$.
Similarly, we define 
$e\Downarrow_\panic$ and $\prog\Downarrow_\panic$ for \gls{fgg}.
\end{definition}

We write $e\Downarrow v$ if there exists $v$ such that 
$e\red^\ast v$ and extend this predicate to $\prog$. 
We abbreviate $\dict[\emptyset;\emptyset;\emptyset]{e}{\lex{e}}$ to
$\dict[]{e}{\lex{e}}$. 

We define the following general correctness 
criteria 
related to typability, error correctness, 
and preservation of a program's final result.

\begin{definition}[Preservation properties]%\rm
\label{def:type:preservation}
Let $\prog \ok $ in \gls{fgg},
and let there exist $\lex{\prog}$ such that $\dict[]{\prog}{\lex{\prog}}$.
A translation is: 
%\vspace{-1mm}
\begin{enumerate}
   \item \textbf{\emph{type preserving}}: if
         $\prog \ok$, then $\lex\prog\ok$.
   \item \textbf{\emph{type assertion error preserving}}:
         $\prog\Downarrow_\panic$ iff $\lex{\prog}\Downarrow_\panic$.
   \item \textbf{\emph{value preserving}}:
         $\prog\Downarrow v$ iff $\lex{\prog}\Downarrow \lex{v}$ 
with 
$\dict[]{v}{\lex{v}}$.
\end{enumerate}
\end{definition}
We only require the left-to-right direction for 
type preservation, as due to type erasure
(\S~\ref{paragraph:structure}), 
we cannot obtain the right-to-left 
direction for dictionary-passing. 
Our type preservation criteria matches that defined in 
\citet[Theorem~5.3]{griesemer2020featherweight}.
We can, however, show
that type assertions are precisely 
simulated (\S~\ref{subsubsec:typecollision}).

\subsection{Behavioural Equivalence -- Bisimulation up to Dictionary Resolution}
\label{subsec:prop:beh}

\citet[Theorem 5.4]{griesemer2020featherweight} prove the correctness
of the monomorphism translation using 
a simple (strong) bisimulation:
the binary relation $\Re$ is a \emph{bisimulation} iff 
for every pair of $\ENCan{e,d}$ in $\Re$, where $e$ is a \gls{fgg} expression 
and $d$ is a \gls{fg} expression: 
\begin{enumerate*}
\item if $e \red e'$, then $d \red d'$ such that
  $\ENCan{e',d'}\in \Re$; and  
\item if $d \red d'$, then $e \red e'$ such that
  $\ENCan{e',d'}\in \Re$. 
\end{enumerate*}
This strong bisimulation suffices for translations that 
preserve a simple one-to-one reduction-step correspondence.

Unlike monomorphisation,  
dictionary-passing relies
on runtime computation, which prevents such a simple 
correspondence. 
We can, however, distinguish between reductions 
introduced by dictionary-passing and those 
inherited from the source program. This distinction allows 
us to construct
a one-to-one correspondence relation 
\emph{up to} dictionary resolution.  
The formulation is non-trivial since, in \gls{fg}, 
dictionary resolution can occur at any point in a subterm. 
\begin{wrapfigure}{r}{.67\linewidth}
%  \centering
%  \vspace{-5mm}

\begin{minipage}[t]{0.37\linewidth }
\vspace{-4mm}
\begin{lstfcgg}
func foo[$\alpha$ Num](a $\alpha$) $\alpha$ {
  return a.Add(bar($\cdots$))
}
func main() {
  foo[Int](Zero{})
}
\end{lstfcgg}
\end{minipage}
\begin{minipage}[t]{0.62\linewidth }
\vspace{-4mm}
\lstset{firstnumber=1}
\begin{lstfcgg}
func foo(dict NumDict, a Any) Any {
  return dict.Add.Apply(a, bar($\cdots$)) }
type Int_Add struct {} // method pointer
func (i Int_Add) Apply(this Any, a Any) Any {
  return this.(Int).Add(a) }
func main() {
  foo(NumDict{Int_Add{}}, Zero{}) }
\end{lstfcgg}
\end{minipage}
\vspace{-3mm}
\caption{Non-trivial dictionary example. Source (Left). Translation (Right)}
\label{fig:example:nontriv}
\vspace{-3mm}
\end{wrapfigure}
We demonstrate this issue by evaluating the example in Figure~\ref{fig:example:nontriv}. 
Importantly, the translated
function \inlinelstfcgg{foo}
cannot resolve the generic \inlinelstfcgg{Add} method 
from dictionary \inlinelstfcgg{dict}
until \emph{after} expression \inlinelstfcgg{bar($\cdots$)}
is fully evaluated.

After one step, the \gls{fgg} program (left) is
\inlinelstfcgg{Zero<<>>.Add(bar($\cdots$))}.  
If we translate the afore reduced term, we get
\inlinelstfcgg{Zero<<>>.(Zero).Add(bar($\cdots$))} ($\lex{Q_0}$).
But reducing the translated \gls{fg} program (right), we obtain the 
\inlinelstfcgg{NumDict<<Int_Add<<>>>>.Add.Apply(Zero<<>>, bar($\cdots$))} ($\lex{Q_1}$).

To show $\lex{Q_0}$ equivalent to $\lex{Q_1}$ using 
the standard \gls{fg} reduction, 
we would first have to fully resolve 
\inlinelstfcgg{bar($\cdots$)} before we could start 
to the resolve dictionary 
access in $\lex{Q_1}$. 
We might attempt to show that the translation in Figure~\ref{fig:example:nontriv} 
is correct using a many-to-many reduction-step relation, \ie 
some binary relation $\Re$ where 
for every pair of $\ENCan{e,d}$ in $\Re$ it holds that
\begin{enumerate*}
\item if $e \red^\ast e'$, then $d \red^\ast d'$ such that
  $\ENCan{e',d'}\in \Re$; and  
\item if $d \red^\ast d'$, then $e \red^\ast e'$ such that
  $\ENCan{e',d'}\in \Re$. 
\end{enumerate*}
This approach is both complicated by the presence of non-termination, \eg
if \inlinelstfcgg{bar($\cdots$)} does not return a value,
then we could never show that
$\lex{Q_0}$ and $\lex{Q_1}$ are related. 
And more importantly, many-to-many relationships give less information 
about the nature of a translation 
than one-to-one relationships.

Were we to consider just the 
\inlinelstfcgg{NumDict<<Int_Add<<>>>>.Add.Apply($\cdots$)}
portion of $\lex{Q_1}$ we observe that 
using a pre-congruence reduction
$\lex{Q_1}$ resolves to \inlinelstfcgg{Zero<<>>.(Int).Add(bar($\cdots$))}. 
We may then safely increase the accuracy of the 
assertion \inlinelstfcgg{Zero<<>>.(Int)} to \inlinelstfcgg{Zero<<>>.(Zero)}
without altering the semantics of the term. 
The later step is required because while the 
dictionary stored the information 
that \inlinelstfcgg{Zero<<>>} was passed to \inlinelstfcgg{foo} 
as type \inlinelstfcgg{Int}, the reduction of the \gls{fgg} term 
forgot this 
information. 
We call these two steps \emph{dictionary resolution},
as they resolve only those computations introduced by the use 
of dictionaries for method resolution.  
$\lex{Q_0}$ is equivalent to $\lex{Q_1}$ 
\emph{up to dictionary resolution}.
Our translation also adds type simulation computations
and type assertions.
Unlike dictionary resolution,
these extra computation steps are subsumed by the 
standard \gls{fg} reduction.

\begin{definition}[Dictionary resolution] 
  \label{def:dictreso}
  We define three pattern sets in \gls{fg}:
$\erasepattern$   
  (type assertions as a result of erasure),
  $\assertpattern$ (type assertion simulation),
  and $\dictpattern$ (dictionary resolution):
\vspace{-2.2mm}
{\small
\begin{flalign*}
    \erasepattern ::=& \left\{\begin{array}{l}
      v.(t)
    \end{array}\right\}\\[-1.5mm]
    \assertpattern ::=& \left\{\begin{array}{l}
      v.\typeField_i,\
      v.\typeField,\
      v.(t),\
      v.\specName{m}(), \
      v.\dictlit_i, \
      \lit{if} ~ v \noteq v~\sytxBrace{\lit{panic}}, \
      \return v
    \end{array}\right\}&\\[-1.5mm]
\dictpattern  ::=& \left\{\begin{array}{l}
      \dictName{t}\{\multi{v}\}.f, \
      v.\dictlit_i, \
      v.\typeField_i,\                                 
      v.\typeField,\\%[-0.5mm]
      \methName{t,m}\{\}.\apply(\multi{e}),\
      \dictName{t}\{\multi{v}\}.(\dictName{t})
    \end{array}\right\}
  \end{flalign*}
}
From these patterns, we define a number of reductions. 
We define the first of these as
$E[e]\prepre E[e']$
%(resp.~$E[e]\prepostsim E[e']$)  
if $e\red e'$ with $e\in \erasepattern$; and   
$E[e]\prepostsim E[e']$
if $e\red e'$ with $e\in \assertpattern$.  
We write $d\dicttrans d'$ if 
$d\prepre^\ast \red \prepostsim^\ast d'\not\prepostsim$.

Let $\congEval$ be the context:
{$\congEval::=\hole \bnfsep  
\congEval.f \bnfsep
\congEval.(t) \bnfsep
\sTypeInit{t}{\multi{e}, \congEval, \multi{e}'} \bnfsep
\congEval.m(\multi{e}) \bnfsep 
e.m(\multi{e},\congEval,\multi{e}')$}.
We define the dictionary resolution reduction $\precongdict$ as
\begin{enumerate*}
  \item $\congEval[e]\precongdict \congEval[e']$ 
  if $\reduction{e}{e'}$ where $e\in\dictpattern$; and  
  \item $\congEval[e.(t)]\precongdict \congEval[e.(u)]$ if  
  $\wellTyped[]{e}{u}$ and $u<: t$.
\end{enumerate*}
\end{definition}
 
Notice that if $e \dicttrans e'$, then $e \red^+ e'$; and that   
$\dicttrans$ can be viewed as 
a one-step reduction which corresponds
to a one-step of the source language.
Reduction $\prepostsim$ only occurs following a
call to $\trycast$, and simulates whether or not the source \gls{fgg} assertion
is a type assertion error (See \S~\ref{subsubsec:typecollision}). 
The reduction $\prepre$ resolves only the assertions
introduced during the type erasure step (See \S~\ref{paragraph:structure}).
The dictionary resolution reduction $\dictpattern$ 
will occur following a method call
\rulename{r-call} and simulates the type parameter specialisation.
As demonstrated in the above example, 
the $\precongdict$ reduction may reduce any subterm matching
$\dictpattern$ or refine any type assertion.

\begin{restatable}{lemrest}{lemprec}
\label{lem:rec}
  Let $e$ be an \gls{fg} expression. 
  Assume $\wellTyped[\emptyset]{e}{u}$.
  \begin{enumerate}
      \item $\dicttrans$ is deterministic, \ie if $e \dicttrans e_1$ and
            $e \dicttrans e_2$, then $e_1=e_2$.
      \item $\precongdict$ is confluent, \ie if $e \precongdict e_1$ and
            $e \precongdict e_2$, then there exists $e'$ such that $e_1 \precongdict
                e'$ and $e_2 \precongdict e'$.
  \end{enumerate}
\end{restatable}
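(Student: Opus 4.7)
The plan is to reduce both claims to properties of the underlying \gls{fg} reduction relation and a careful case analysis of the patterns in $\erasepattern$, $\assertpattern$, and $\dictpattern$.

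For part (1), I would first observe that each of $\prepre$, $\prepostsim$, and the standard $\red$ is obtained from \gls{fg} reduction by restricting to the unique left-to-right call-by-value evaluation context $E$ together with a syntactic pattern on the redex. Because \gls{fg} is deterministic---any well-typed $e$ has a unique decomposition $e = E[r]$ with $r$ a redex---each of these three sub-relations is a deterministic single-step reduction. Both $\prepre$ and $\prepostsim$ are strongly normalising at the focus of $E$, since every step strictly shrinks the number of outstanding value-assertion or metadata-selection redexes. Consequently the definition $d \prepre^\ast \red \prepostsim^\ast d' \not\prepostsim$ resolves to the unique composition: the maximal $\prepre^\ast$ run (maximal because any non-maximal prefix would force the subsequent $\red$-step to coincide with a $\prepre$-step, which would then be absorbed into the prefix), the unique $\red$-step in the canonical evaluation context, and the unique maximal $\prepostsim^\ast$ run terminated by the $\not\prepostsim$ condition. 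Hence $\dicttrans$ is deterministic.

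For part (2), I would prove the diamond property for $\precongdict$; confluence then follows because every $\precongdict$-step is matched by at most one residual single step. The proof proceeds by case analysis on the relative positions in $\congEval$ of the two competing redexes. When they lie in disjoint sub-contexts the two rewrites commute and yield a common reduct in one step each. When one contains the other, and the outer step is rule (b) (an assertion refinement $e_1.(t) \to e_1.(u)$) while the inner step is a rule (a) rewrite inside $e_1$, the crucial ingredient is a small subject-reduction lemma for $\dictpattern$: every pattern in $\dictpattern$ preserves the \gls{fg} type of the whole term, so the outer refinement remains valid after the inner step and the two reductions commute. When both redexes occur at the same position, rule (a) versus rule (a) is immediate because the patterns in $\dictpattern$ are syntactically discriminable by their head constructor/selector pair, so at most one rule (a) step applies; rule (b) versus rule (b) uses uniqueness of \gls{fg} typing to argue that the two competing refinements $u_1, u_2$ both lie above a common type $u_0$ with $\wellTyped[]{e_1}{u_0}$, whence each side takes one further rule (b) step to $e_1.(u_0)$; and rule (a) versus rule (b) at the same position can only occur for the pattern $\dictName{t}\{\multi{v}\}.(\dictName{t})$, whose principal type is exactly $\dictName{t}$, so rule (b) cannot refine the assertion and the situation collapses to rule (a) versus rule (a).

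The main obstacle is the nested case in part (2): I must verify that every pattern in $\dictpattern$---dictionary field access, metadata selection, applicator unfolding, and dictionary self-assertion---strictly preserves the \gls{fg} type of the enclosing term. This requires walking through the translation rules of Figure~\ref{fig:dict:prog} to check that the dictionary structures and method applicators generated by \rulename{\dprogram}, \rulename{\dinterface}, and \rulename{\dmeth} have the expected field and return types. The bookkeeping here is delicate but routine; once it is in place, the diamond property follows without further machinery.
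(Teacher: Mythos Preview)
Your approach to part (1) is sound and essentially matches the paper's one-line argument: $\dicttrans\subseteq\red^+$ together with determinism of $\red$ suffices.

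For part (2) there is a genuine gap in your case analysis. In the nested case you only treat the situation where the \emph{outer} step is a rule~(b) assertion refinement. But the pattern $\methName{t,m}\{\}.\apply(\multi{e})$ in $\dictpattern$ is the unique rule~(a) pattern whose subterms are arbitrary expressions~$\multi{e}$ rather than values. Hence you can have a rule~(a) step at the outer position (unfolding the applicator) while simultaneously a $\precongdict$-step is available strictly inside some~$e_i$. This is a true critical pair: unfolding $\apply$ substitutes the~$e_i$ into the applicator body, and you must argue that an inner reduction on~$e_i$ commutes with that substitution. Your disjoint/nested/same-position trichotomy does not cover it, and your ``subject reduction for $\dictpattern$'' lemma does not help here since the issue is not typing but duplication or discarding of the inner redex under substitution.

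The paper's proof singles out exactly this case as the only non-trivial one and closes it by observing that the body generated by $\makeDictMeth{}$ uses each argument \emph{linearly} (it is $\rec.(t).m(\multi{\dictlit.(u)},\multi{x})$), so the inner redex has exactly one residual after the outer unfolding and the diamond closes in one step on each side. You need this linearity observation; without it the confluence argument is incomplete.
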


We now extend the bisimulation relation to 
bisimulation up to dictionary resolution.

\begin{definition}[Bisimulation up to dictionary resolution] \label{def:sb:upto}
\ \\ 
%\hspace*{-1cm}
\begin{tabular}{ll}
\begin{tabular}{l}
The relation $\Re$ is 
a \emph{bisimulation up to dictionary}
\\ 
\emph{resolution} if 
$\Re\cdot (\dictredleft)^\ast$ is a bisimulation,
\\  
\ie 
if $\prog \ok $ in \gls{fgg}
  and $\dict[]{\prog}{\lex{\prog}}$ 
\\   
  where
  $\prog = \program{e}$ and
  $\lex{\prog} = \program[\lex{D}]{\lex{e}}$
\\  
then the diagram (right)
commutes. 
\end{tabular}
&
%\begin{center}
\hspace{-0.5cm}
\begin{tabular}{l}
\small
\newcommand{\vheightmax}{\vphantom{$\ENCan{d,\lex{d}}\in\Re$}}
  \begin{tikzpicture}[line width=rule_thickness,
      arrowlabel/.style={inner sep=.5,fill=white},
    ]
      % \node[draw] (a) {A};
      % \node[draw, above=1 of fcg] (b) {B};
      \node (a) [] {$e$ \vheightmax};
      \node (b) [below=.5 of a] {$d$ \vheightmax};
      \node (bsone) [right=.7 of a] {$\ENCan{e,\lex{e}}\in\Re$ \vheightmax};
      \node (bstwo) [below=.5 of bsone] {$\ENCan{d,\lex{d}}\in\Re$ \vheightmax};
      \node (d) [right=.7 of bstwo] {$\lex{d}$ \vheightmax};
      \node (e) [right=.7 of d] {$d'$ \vheightmax};
      \node (c) [above=.5 of e] {$\lex{e}$ \vheightmax};
      \draw[->] (a) to (b);
      \draw[->,dotted] (bsone) to node [arrowlabel,xshift=.5mm,yshift=1.2mm] {\footnotesize $\pi_1$} (a);
      \draw[->,dotted] (bstwo) to node [arrowlabel,xshift=.5mm,yshift=1.2mm] {\footnotesize $\pi_1$} (b);
      \draw[->,dotted] (bsone) to node [arrowlabel,xshift=-.4mm,yshift=1.2mm] {\footnotesize $\pi_2$} (c);
      \draw[->,dotted] (bstwo) to node [arrowlabel,xshift=-.4mm,yshift=1.2mm] {\footnotesize $\pi_2$} (d);
      \draw[-{Latex[open]}] (e) to node[very near end, yshift=.8mm] {${}^*$} (d);
      \draw[-{Implies},double] (c) to (e);
  \end{tikzpicture}
%\end{center}
\end{tabular}
\end{tabular}
\end{definition}

\hspace{.15cm}

Intuitively, our translation forms a bisimulation 
up to dictionary resolution
if 
\begin{enumerate*}
  \item each step that the source program takes can be mimicked
  by the translated program; and  
  \item conversely, that if the translated program 
  reduces,
  then the source program must have been able to make an equivalent step
\end{enumerate*}
-- albeit with the translated program still needing 
to evaluate the added dictionary resolution computations 
  at some future point during computation.

By considering the observable behaviour of a program 
to be non-dictionary resolution reduction steps, 
type assertion errors, and 
termination (value production), we ensure that the translated program 
is behaviourally equivalent to that of the source 
program. 
Note that  
this formulation may be extended to a concurrent or effectful
fragment of Go
with the standard addition of \emph{barbs}~\cite{MiSa92} or 
transition labels.

Finally, we arrive at our main theorem --- that the translation
satisfies the correctness criteria.  

\begin{restatable}[Correctness of dictionary-passing]{thmrest}{thmcorrect}
\label{thm:main:correctness}
Let $\prog \ok $ in \gls{fgg}
and $\dict[]{\prog}{\lex{\prog}}$ with 
$\prog = \program{e}$ and 
$\lex{\prog} = \program[\lex{D}]{\lex{e}}$.%
\begin{enumerate*}
  \item Dictionary-passing translation 
$\lex{(-)}$ is type preserving;
  \item $e$ and $\lex{e}$ are bisimilar up to dictionary resolution; 
  \item $\lex{(-)}$ is 
type assertion error
    preserving; and  
  \item $\lex{(-)}$ is value preserving. 
\end{enumerate*}
\end{restatable}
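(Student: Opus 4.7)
The plan is to prove the four clauses in turn, establishing clause (1) first, then clause (2), from which clauses (3) and (4) follow by essentially standard arguments about bisimulations that respect observable behaviour.

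For clause (1), \textbf{type preservation}, I would proceed by induction on the derivation $\prog \ok$. The backbone is an expression-level preservation lemma: if $\wellTyped[\Delta;\Gamma]{e}{\tau}$ and $\dict{e}{\lex{e}}$, then $\wellTyped[\lex\Gamma']{\lex{e}}{\any}$ in the translated program, where $\lex\Gamma'$ extends an erased $\lex\Gamma$ with the dictionary variables recorded in $\eta$. This proceeds by induction on the typing derivation. The delicate cases are \rulename{\dcall} (one must check that the synthesised assertion $\lex{e}.(t)$ is well-typed because $e$ had type $t\typeActualReceive$ in the source, so its translation inhabits the erased $\any$) and \rulename{\dvalue} and \rulename{\ddictcall} (where $\makeDict{\phi,\Phi}$ must return values of the right dictionary struct types; this is ensured by a separate lemma on $\makeDict{\cdot}{}$ that inspects the three rules defining it). At the top level, one must confirm that the auxiliary declarations introduced by \rulename{\dprogram} (namely $\any$, $\typeMetadataLit$, $\nAryFunction{n}$, $\fnMeta{n}$, $\paramTypeMeta_i$) are mutually well-formed, and that the bodies produced by \rulename{\dmeth}, \rulename{\dstruct}, and \rulename{\dinterface} type-check, including the \trycast\ method bodies.

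For clause (2), \textbf{bisimulation up to dictionary resolution}, the candidate relation is $\Re = \{\,\ENCan{e,\lex{e}} \mid \dict[]{e}{\lex{e}}\,\}$, and I must show that $\Re\cdot(\dictredleft)^\ast$ is a bisimulation. Forward simulation proceeds by case analysis on the FGG reduction that fires under \rulename{\rcontext}. For \rulename{\rfields}, the translated expression $\sType{t}\{\multi{\lex{v}},\lex\phi\}.(\sType{t}).f_i$ reduces via $\prepostsim$ (assertion on the struct literal) and then \rulename{\rfields} in FG to $\lex{v_i}$, exactly matching the translation of the reduct. For \rulename{\rassert}, the subject reduction makes $\lex{e}$ pass through $\trycast$, requiring a separate soundness lemma stating that $\typemeta{\tau}.\trycast(\lex{v})$ reduces to $\lex{v}$ exactly when $\subtype[\emptyset]{\vtype(v)}{\tau}$ in the source, and panics otherwise --- this uses the design of \rulename{\dinterface} and \rulename{\dstruct} and an inductive invariant that $\typemeta{\cdot}$ faithfully encodes subtyping. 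The central case is \rulename{\rcall}, which requires a \emph{translation commutes with substitution} lemma of the form $\lex{e[x\by v,\multi{x\by v}, \multi{\alpha\by\tau}]}$ is $\dictred^\ast$-equivalent to $\lex{e}[x\by\lex{v},\multi{x\by\lex{v}}, \multi{\dictlit\by\makeDict{\tau, \iType\sigma}}]$; the $\dictred^\ast$ gap is exactly the dictionary resolution one would expect (method-pointer lookups in the substituted body must be run in order to match the specialised method body produced by the FGG substitution $[\theta]$). Backward simulation is handled by the same case analysis: any FG step that is not one of $\prepre$, $\prepostsim$, or $\precongdict$ must correspond to an honest $\rfields/\rcall/\rassert$ in the source, while steps inside $\dictpattern$ or the refinement of an assertion only drive $\lex{d}$ further along $\dictred^\ast$ and are absorbed by the up-to closure. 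Confluence and determinism of $\dicttrans$ and $\precongdict$ from Lemma~\ref{lem:rec} guarantee that the up-to closure is well-behaved.

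For clauses (3) and (4), I would lift the bisimulation to whole-program executions. \textbf{Type assertion error preservation} follows because a type assertion error in FGG arises from \rulename{\rassert} with $\vtype(v)\not<:\tau$, and by the $\trycast$ soundness lemma this is mirrored exactly by a $\panic$ in the translated run; conversely, any $\panic$ in the translation can only arise from either an explicit panic introduced by \rulename{\dinterface}/\rulename{\dstruct} (i.e., a failed $\trycast$, which by the same lemma implies a failed source assertion) or from an erased assertion $v.(t)\in\erasepattern$, and the type preservation of the source ensures such erased assertions always succeed in well-typed FGG terms. \textbf{Value preservation} is immediate from the bisimulation together with the observation that translations of values are values (inspection of \rulename{\dvalue}): if $\prog\Downarrow v$, the bisimulation produces a run $\lex\prog\red^\ast\lex{d}\dictred^\ast \lex{v}$, which is still a $\red^\ast$ in FG; the reverse direction uses backward simulation together with the fact that values have no further outgoing $\red$ steps in FG beyond completing outstanding $\dictred$ resolutions inside the value constructor. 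The main obstacle I anticipate is the commute-with-substitution lemma in the \rulename{\rcall} case, because type-parameter substitution in FGG interacts non-trivially with the dictionary arguments threaded by $\eta$, and one must align the $\makeDict{\cdot}$ constructions on both sides, including the dictionary-supertyping case, where a dictionary for $\alpha$ is repackaged into a dictionary for a smaller bound --- this is precisely the step that motivated the up-to-$\dictred^\ast$ relaxation.
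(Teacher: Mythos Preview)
Your proposal for clauses (1)--(3) follows essentially the same route as the paper. The candidate relation $\Re$ you choose is exactly the one the paper uses, and your forward/backward case analysis on the FGG reduction corresponds directly to the paper's Operational Correspondence theorem (Theorem~\ref{thm:operational:correspondence}); in particular, the ``translation commutes with substitution'' lemma you single out as the main obstacle is precisely what the paper proves (split into a type-substitution part and a value-substitution part, both up to $\precongdict^*$), and your $\trycast$-soundness observation corresponds to parts (c) and (d) of that theorem.

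There is, however, a genuine gap in your treatment of clause (4). You write that the bisimulation ``produces a run $\lex\prog\red^\ast\lex{d}\dictred^\ast \lex{v}$, which is still a $\red^\ast$ in FG''. Neither half of this is correct as stated. First, the bisimulation does not hand you a sequence shaped $\red^*\dictred^*$: what you get from the diagram is an \emph{interleaved} chain $\lex{e_1}\dicttrans{\dictred^*}\lex{e_2}\dicttrans{\dictred^*}\cdots\lex{v}$. Second, and more importantly, $\dictred$ is \emph{not} contained in $\red$: it is defined on the pre-congruence context $C$ (which allows firing a redex inside, e.g., a not-yet-evaluated argument position $e.m(\multi{e},C,\multi{e}')$), and its second clause---assertion refinement $C[e.(t)]\dictred C[e.(u)]$---is not an FG reduction step at all. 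So you cannot simply declare the composite to be a $\red^*$.

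The paper closes this gap with two dedicated lemmas that you are missing. Lemma~\ref{lem:red:rewrite} (Reduction rewrite) shows that a $\dictred$ step can always be commuted to the right of a following $\red$ step: either the congruence context coincides with the evaluation context, in which case $\dictred\red$ collapses to $\red^2$, or it does not, in which case $\dictred\red$ becomes $\red\dictred$. Iterating this pushes every $\dictred$ to the tail of the run. Lemma~\ref{lem:red:val} (Resolution to value) then observes that a $\dictred$ step whose target is a value must in fact be a $\red$ step (a non-evaluation-context reduction cannot yield a value, and an assertion refinement cannot yield a value). Together these give $\lex{e_1}\red^+\lex{v}$, which is what value preservation requires. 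Your proposal needs these two lemmas (or an equivalent argument) to be complete.
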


Theorem~\ref{thm:main:correctness}
states that 
our translation is correct, as 
translated programs behave exactly as the source program would have behaved,
and that any extra computations are accounted for by 
machinery introduced for dictionary-passing. 

It is worth stressing that
our statement
is \emph{stronger} than the various definitions of 
dictionary-passing translation correctness considered in the
literature (see \S~\ref{section:related}), which limit themselves 
to non-termination preserving versions of value preservation. 
By providing an account of intermediate state equivalence,
Theorem~\ref{thm:main:correctness}(2) not only gives a 
meaningful equivalence for non-terminating programs, but 
may also be extended to languages with non-determinism or concurrency.

\subsection{Proof of Theorem \ref{thm:main:correctness}}

We provide the key lemmata, theorems, and corollaries used in the proof 
of Theorem \ref{thm:main:correctness}. All omitted proofs 
may be found in
\ifnotsplit{Appendix~\ref{app:proofs}}{the full version of this paper~\cite{fullversion}}.

\myparagraph{Type preservation} 
The type preservation criteria given in Definition~\ref{def:type:preservation}
only considers whole programs. 
We must first show that the dictionary-passing translation 
is type preserving for expressions. Note that the translation of 
structure literals is 
the only non-\any\ typed expression.

\begin{restatable}[Type preservation of expressions]{lemrest}{lemtypepres}
  \label{lem:type:pres:exp}
  Let $\dict{e}{\lex{e}}$ and $\map{\Gamma}$ be the \gls{fg} environment 
  where all variables in $\Gamma$ are erased (\any) and each dictionary 
  in $\eta$ is appropriately typed according to the bound in $\Delta$.
    If $\wellTyped[\Delta;\Gamma]{e}{\tau}$ then 
    \begin{enumerate*}
        \item If $\tau = \alpha$ or $\iType{\tau}$,
              then $\wellTyped[\map{\Gamma}]{\lex{e}}{\any}$.
        \item If $\tau = \sType{t}\typeActualReceive$, then
              either $\wellTyped[\map{\Gamma}]{\lex{e}}{\any}$
              or  $\wellTyped[\map{\Gamma}]{\lex{e}}{\sType{t}}$.
    \end{enumerate*}
\end{restatable}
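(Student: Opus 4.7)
The plan is to proceed by structural induction on the translation derivation $\dict{e}{\lex{e}}$, which (since the translation is syntax-directed and tracks the typing judgement) coincides with induction on the \gls{fgg} typing derivation $\wellTyped[\Delta;\Gamma]{e}{\tau}$. At each step I use inversion on the \gls{fgg} typing rule, the induction hypothesis on sub-expressions, and then apply the corresponding \gls{fg} typing rules to the translated term.

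First I dispatch the easy cases. For \rulename{\dvar}, $\lex{e} = x$ and by construction $\map{\Gamma}(x) = \any$, so (1) and (2) hold in their ``type \any'' sub-case. For \rulename{\dassert}, the translation is $\typemeta[\zeta]{\tau}.\trycast(\lex{e})$; the simulated-type struct $\typemeta[\zeta]{\tau}$ has method $\trycast$ declared by \rulename{\dinterface}/\rulename{\dstruct} returning \any, and the argument $\lex{e}$ typechecks at \any\ by IH (using the fact that erased expressions are always subsumed to \any), so the whole expression has type \any. Cases \rulename{\ddictcall} and \rulename{\dcall} are analogous: by \rulename{\dinterface} and \rulename{\dmeth} the primary method signature is $m(\lex{\Psi}, \multi{x~\any})\,\any$; the IH gives that every $\lex{e}_i$ has type \any\ (by sub-case (1), interface/parameter types are \any; by (2), structure types can be weakened to \any), and the dictionaries produced by $\makeDict{\phi, \Phi}$ are well-typed by a small auxiliary claim that I state below.

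The more delicate cases are \rulename{\dfield} and \rulename{\dvalue}. For \rulename{\dfield} with $\wellTyped[\Delta;\Gamma]{e}{\sType{t}\typeActualReceive}$, the translation is $\lex{e}.(\sType{t}).f$. The IH on $e$ gives $\wellTyped[\map{\Gamma}]{\lex{e}}{u}$ with $u\in\{\any,\sType{t}\}$; in either case the assertion $\lex{e}.(\sType{t})$ is well-typed in \gls{fg} (using \rulename{\tasserts} / stupid cast), yielding type $\sType{t}$. By \rulename{\dstruct} the erased struct $\sType{t}$ has field $f$ of type \any, so the whole expression has type \any\ and falls in clause (1) since the \gls{fgg} type of $e.f$ is determined by the declared field type after substitution. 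For \rulename{\dvalue}, the translation is $\sType{t}\{\multi{\lex{e}}, \lex{\phi}\}$, and here we genuinely land in the ``type $\sType{t}$'' side of clause (2): by \rulename{\dstruct} the erased struct has fields typed \any\ plus dictionary arguments, which match the outputs of the IH and $\makeDict{\phi, \Phi}$.

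The main obstacle, and the only part requiring genuine work, is a companion lemma stating: \emph{if $\Delta \vdash \phi \ok$ under $\Phi$, and $\eta$ provides well-typed dictionaries for every $\alpha \in \dom{\Delta}$, then for each pair $(\tau_i,\iType{t}_i\typeActualReceive) \in (\phi, \Phi)$ the term $\makeDict{\tau_i, \iType{t}_i\typeActualReceive}$ has type $\dictName{\iType{t}_i}$ in $\map{\Gamma}$.} I would prove this by inspecting the three defining clauses of $\makeDict{-,-}$ in Figure~\ref{fig:dict:aux}: the case where $\tau_i = \alpha$ with $\Delta(\alpha) = \iType{\tau}_i$ follows directly from the assumption on $\eta$; the supertyping case requires observing that $\subtype{\iType{t}\typeActualReceive}{\alpha}$ entails $\methods_\Delta(\iType{t}\typeActualReceive) \subseteq \methods_\Delta(\alpha)$ so each field read $\eta(\alpha).m$ is well-typed; and the concrete case uses the fact that each $\makeDictMeth{t,m\cdots}$ declaration produces a struct implementing the appropriate $\nAryFunction{n}$ interface and that $\typemeta{t[\phi]}$ inhabits $\typeMetadataLit$ by \rulename{\dstruct}/\rulename{\dinterface}. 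Once this companion lemma is in hand, the inductive cases for \rulename{\dvalue}, \rulename{\dcall}, and \rulename{\ddictcall} close without further trouble.
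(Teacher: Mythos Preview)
Your proposal is correct and follows the same approach as the paper's proof: induction on the typing derivation, with an auxiliary lemma about the well-typedness of dictionaries produced by $\makeDict{-,-}$ (the paper states this as Lemma~\ref{lem:typeformalsmatchasparams}, phrased as $\vtype(\makeDict{\phi,\Phi}) <: \asParameters{\Phi}$). Your case analysis is in fact slightly more thorough than the paper's written-out proof, which omits \rulename{\dassert}.

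One small point to tighten: your handling of \rulename{\ddictcall} as ``analogous'' to \rulename{\dcall} glosses over a real structural difference. The translation $\eta(\alpha).m.\apply(\lex{e},\lex{\psi},\multi{\lex{e}})$ is not a direct method call on the receiver but a field access on a dictionary followed by $\apply$ on a $\nAryFunction{n}$ value. To type it you need the hypothesis on $\eta$ to give $\eta(\alpha):\dictName{\iType{t}}$, then that the $m$ field of that struct has type $\nAryFunction{n}$ (from \rulename{\dinterface}/\rulename{\ddict}), and finally that $\apply$'s all-$\any$ signature accepts the arguments (with the $\lex{\psi}$ dictionaries subsumed to $\any$). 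This is exactly how the paper handles its $\wellTyped{e}{\alpha}$ sub-case of \rulename{\tcall}, but it is not quite the same reasoning as the concrete-receiver case.
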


\begin{restatable}[Type preservation (Theorem \ref{thm:main:correctness} (1)]{correst}{corprogtypepres}
  \label{lem:type:pres:prog}
  If $\prog \ok$, then $\lex\prog\ok$.
\end{restatable}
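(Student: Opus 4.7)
The plan is to reduce program well-formedness to the per-declaration and per-expression obligations given by rule \rulename{\dprogram}, and then discharge each obligation using Lemma~\ref{lem:type:pres:exp} together with direct inspection of the translation rules. By definition, $\prog \ok$ means that every declaration in $\multi{D}$ is well-formed and that the main expression $e$ type-checks in the empty environment; symmetrically, $\lex{\prog} \ok$ requires well-formedness of every declaration in $\multi{\lex{D}}$ and well-typedness of $\lex{e}$ under the empty \gls{fg} environment. I will therefore split the obligation into (i) the auxiliary declarations introduced by \rulename{\dprogram} ($\any$, the $\nAryFunction{n}$ interfaces, the $\paramTypeMeta_i$ and $\fnMeta{n}$ structs, and the $\typeMetadataLit$ interface); (ii) each translated user declaration $\mathcal{D}$; and (iii) the translated main expression $\lex e$.

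For part (i) the declarations are syntactically closed and use no type parameters, so their well-formedness is immediate from inspection. For part (iii), $\wellTyped[\emptyset]{e}{\tau}$ gives, by Lemma~\ref{lem:type:pres:exp}, that $\wellTyped[\emptyset]{\lex e}{\any}$ or $\wellTyped[\emptyset]{\lex e}{\sType{t}}$ in \gls{fg}, which is precisely what the $\lit{main}$ clause requires. Part (ii) is the substantive step and proceeds by case analysis on the declaration, using the invariant that whenever the translation environment $(\Delta; \eta; \Gamma)$ is used to translate a subexpression, $\eta$ supplies a dictionary of the appropriate type $\dictName{\iType{\tau}}$ for each type variable bounded by $\iType{\tau}$ in $\Delta$, and $\map{\Gamma}$ replaces every variable type by $\any$ or, for struct types, by the erased head $\sType{t}$.

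The cases are:
\begin{itemize}
\item \rulename{\dinterface}: the erased interface $\iType t$ lists only erased-signature methods $\lex S$ and specification methods $\specMetadata{S}$; both use only $\any$, $\dictName{\iType u}$, $\typeMetadataLit$ and $\fnMeta n$, which exist by part (i). The dictionary struct $\dictName{\iType t}$ and metadata struct $\metadataName{\iType t}$ have only well-typed field types, and $\trycast$ is well-typed since each generated assertion compares two $\fnMeta{n}$ values (so the $\noteq$ clause type-checks) and finally returns $x:\any$. The applicator pair $\makeDictMeth{t, S}$ is a non-generic method on a fresh empty struct whose body performs an assertion $\rec.(t)$, dictionary field accesses $\dictlit.(u)$, and a method call at the erased signature: these all type-check directly.
\item \rulename{\dstruct}: the erased struct $\sType t$ has each field at $\any$ and each added dictionary field at an existing $\dictName{\iType u}$ type; $\metadataName{\sType t}$ and its $\trycast$ are handled as in the interface case.
\item \rulename{\dmeth}: the primary method has parameters of type $\dictName{\iType\sigma}$ and $\any$ and return type $\any$, and its body is $\lex e$. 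The premise translates $e$ under $\Delta'; \eta; \Gamma'$ with $\wellTyped[\Delta';\Gamma']{e}{\tau}$ for the source declaration's stated return type $\tau$. Lemma~\ref{lem:type:pres:exp} then yields $\wellTyped[\map{\Gamma'}]{\lex e}{\any}$ (or the struct case, which is an immediate subtype of $\any$ via an assertion that the translation inserts), matching the erased return type. The specification method is non-generic and simply returns an $\fnMeta n$ literal whose components are all declared, and the applicator pair is handled as for \rulename{\dinterface}.
\end{itemize}

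The main obstacle I expect is keeping the bookkeeping of the erased environment consistent: one has to verify that $\map{\Gamma}$ as used by Lemma~\ref{lem:type:pres:exp} is precisely the \gls{fg} context induced by the translated parameter list, and that the dictionaries referenced in $\eta$ (either as plain parameters $\dictlit_i$ or as receiver projections $\this.\dictlit_i$) really do appear at the stated types in the target method. Once this invariant is established at entry to each method, the remaining obligations are routine applications of the expression-level type preservation lemma and syntactic inspection of the rules in Figure~\ref{fig:dict:prog}. Combining the three parts yields $\lex\prog \ok$, as required.
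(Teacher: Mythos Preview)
Your proposal follows essentially the same route as the paper: reduce program well-formedness to the expression-level type preservation lemma (Lemma~\ref{lem:type:pres:exp}) plus inspection of the declaration-translation rules. The paper's own proof is extremely terse---it simply cites Lemma~\ref{lem:type:pres:exp} and ``the assumption that name constant functions are distinct''---so your plan is really just an expansion of that one-line argument.

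There is, however, one concrete omission. Your description of what $\prog\ok$ means lists only ``every declaration is well-formed'' and ``the main expression type-checks,'' but rule \rulename{\tprog} also requires $\distinct(\tdecls(\multi{D}))$ and $\distinct(\mdecls(\multi{D}))$. The translation introduces a large number of fresh type and method names ($\dictName{\iType t}$, $\metadataName{t}$, $\methName{t,m}$, the various $\fnMeta{n}$, $\paramTypeMeta_i$, $\nAryFunction{n}$, the $\specName{m}$ and $\trycast$ methods, etc.), and you must argue that these are pairwise distinct and do not collide with the erased user declarations. This is exactly the obligation that the paper discharges by appealing to the global assumption (stated in \S\ref{sec:dictexp}) that the name-constant maps never produce a collision; you should make that appeal explicit in your argument, since none of your three parts (i)--(iii) covers it.

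A minor remark on part (ii), case \rulename{\dmeth}: when Lemma~\ref{lem:type:pres:exp} gives $\wellTyped[\map{\Gamma'}]{\lex e}{\sType t}$, you do not need ``an assertion that the translation inserts'' to reconcile this with the declared return type $\any$; plain \gls{fg} subtyping already gives $\sType t <: \any$ via \rulename{\subI}, so the method body is well-typed as is.
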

\begin{proof}
  By the assumption that name constant functions are distinct and 
  Lemma~\ref{lem:type:pres:exp}.
\end{proof}

\myparagraph{Bisimulation and error preservation} 
The operational correspondence theorem described the behaviour 
of a source program and its translation as four non-overlapping 
cases. Note that $\lex{e} \dicttrans e'$ 
is the maximum reduction without another 
type assertion simulation reduction 
($e'\not\prepostsim$).

\begin{restatable}[Operational correspondence]{thmrest}{thmopcorrespond}
\label{thm:operational:correspondence}
Let $\prog \ok$ where $\prog = \program{e}$
and let $\dict[]{\program{e}}{\program[\lex{D}]{\lex{e}}}$.
\begin{enumerate}[(a)]
  \item If $\reduction{e}{d}$, then there
        exists $\lex{d}$ such that
        $\dict[\emptyset; \emptyset; \emptyset]{d}{\lex{d}}$ and
        $\lex{e} \dicttrans{\precongdict^\ast} \lex{d}$.
  \item If $\lex{e} \dicttrans e'$ where $e$ is not a type assertion error,
        then there exists $d$ such that
        $\reduction{e}{d}$ and there exists $\lex{d}$ such that
        $\dict[\emptyset; \emptyset; \emptyset]{d}{\lex{d}}$ and
        $e' \precongdict^* \lex{d}$.
  \item
        If $\lex{e} \dicttrans e'$ where $e$ is a type assertion error,
        then $e'$ is a type assertion error.
  \item
        If $e$ is a type assertion error, then there exists an $e'$
        such that $\lex{e} \dicttrans e'$ and $e'$ is a type assertion error.
\end{enumerate}
\end{restatable}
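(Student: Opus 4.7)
I would prove the four clauses of operational correspondence by induction on reduction derivations, case-analysing the reduction rules of \gls{fgg} for (a) and (d), and the structure of the residual left by $\dicttrans$ for (b) and (c). Before the main induction, two preparatory lemmata are needed: a value substitution lemma showing $\dict{e[x\by v]}{\lex{e}[x\by\lex{v}]}$ whenever $\dict{e}{\lex{e}}$ and $\dict{v}{\lex{v}}$; and a type substitution lemma showing that the translation commutes with simultaneous substitution of type actuals for type parameters, provided the dictionary map $\eta$ is updated via $\makeDict{\phi,\Phi}$. Both are necessary because \rcall\ substitutes term and type variables into the method body, and the translated body must end up using the call-site constructed dictionaries rather than any $\eta$-bound ones.

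\textbf{Forward simulation (a).} I would induct on $e\red d$. For \rfields, rule \dfield\ inserts a synthetic assertion $\lex{e}.(\sType{t})$ that discharges via a $\prepre$ step before the field select; the result already matches $\dict{v_i}{\lex{v}_i}$, leaving no $\precongdict^*$ remainder. For \rassert\ on $v.(\tau)$, rule \dassert\ yields $\typemeta{\tau}.\trycast(\lex{v})$, whose reduction triggers the checks in $\trycast$; these match the $\prepostsim^*$ tail of $\dicttrans$. For \rcall\ I split on whether the receiver is a named type or a type variable. In the named-type sub-case (\dcall), the synthetic receiver assertion resolves by $\prepre$, and then \rcall\ in \gls{fg} substitutes arguments and the dictionaries built by $\makeDict$; the substituted body is the translation of the \gls{fgg}-substituted body up to $\precongdict^*$ steps that resolve method-pointer lookups and refine assertions about the concrete type stored in a dictionary. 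In the type-variable sub-case (\ddictcall), the redex $\eta(\alpha).m.\apply(\ldots)$ unfolds the method pointer synthesised by $\makeDictMeth$, again corresponding to a source \rcall\ followed by $\precongdict^*$ cleanup. Finally, \rcontext\ is handled by plain induction, pushing the translated context around the $\dicttrans$ step.

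\textbf{Backward simulation (b) and panic preservation (c)--(d).} For (b), I would invert the definition of $\dicttrans$: the leading $\prepre^*$ only rewrites erasure assertions on values, which are not source-visible; the single $\red$ step identifies the redex, and by determinism of $\dicttrans$ (Lemma~\ref{lem:rec}(1)) this redex is unique; the trailing $\prepostsim^*$ only resolves type-simulation computations. From the translation judgement on $\lex{e}$ one can reconstruct the unique \gls{fgg} redex, yielding $d$, and the residual $\precongdict^*$ gap from $e'$ to $\lex{d}$ accommodates any dictionary resolution that outruns the source step. For (c) and (d) the key is a small auxiliary lemma: $\subtype[\emptyset]{\vtype(v)}{\tau}$ fails iff $\typemeta{\tau}.\trycast(\lex{v})$ panics. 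This follows from inspecting the body of $\trycast$ produced by \dinterface\ (which compares each simulated method signature against the instance's $\specName{m}$) and by \dstruct\ (which checks type-rep equality). Once this lemma is in hand, (c) follows by propagating the synthesised panic through the $\prepostsim^*$ tail, and (d) by running the \dassert-introduced call to $\trycast$ to completion.

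\textbf{Main obstacle.} The principal difficulty lies in the \rcall\ case of (a) and the corresponding inversion in (b): showing that the dictionary installed at the call site by $\makeDict$ faithfully simulates specialisation under type-parameter substitution. Specifically, when a dictionary built for a larger bound $\iType{\sigma}$ is supertyped via the non-trivial clause of $\makeDict$ into one for a smaller bound $\iType{\tau}$, the resulting method pointers must invoke exactly those methods that \gls{fgg} would have selected after specialisation, and the type-rep carried along must still reflect the original instantiating type. This coherence argument relies on combining the type substitution lemma with the dictionary-supertyping construction informally described in \S\ref{subsubsec:dynamicdict}, and will likely require a separate invariant relating the $\typeField$ stored in a dictionary to the \gls{fgg} type instantiating the corresponding parameter. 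Establishing this invariant cleanly, and threading it through the induction so that $\precongdict^*$ remainders always converge to the translation of the source successor, is the crux of the proof.
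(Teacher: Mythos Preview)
Your overall architecture matches the paper's: induction on the reduction derivation, with the real work pushed into substitution lemmata and a correctness argument for the $\trycast$ simulation. Three points need correcting, however.

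First, your substitution lemmata are misstated. Value substitution is \emph{not} literally compositional: $\map[\emptyset;\emptyset;\Gamma]{e}[x\by\map[\emptyset;\emptyset;\emptyset]{v}]$ and $\map[\emptyset;\emptyset;\emptyset]{e[x\by v]}$ are in general different \gls{fg} terms, because the translation is type-directed. If $e$ is $y.m[\psi](\multi{e})$ with $y:t[\phi]$, rule \rulename{\dcall} emits the assertion $\lex{y}.(t)$; after substituting a value of a strict subtype $u[\phi']$, the translation emits $\lex{v}.(u)$ instead. The correct statement (and what the paper proves) is that the substituted translation $\precongdict^*$-reduces to the translation of the substituted term; the assertion-refinement clause of $\precongdict$ is what bridges the gap. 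The same applies to your type-substitution lemma. You seem aware of this later when you speak of ``up to $\precongdict^*$ steps'', but the lemma as you stated it is false and would not drive the proof.

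Second, your case split in the \rulename{\rcall}~(a) case is misplaced. At the top level we have $\Delta=\emptyset$, so no receiver can have a type-parameter type and \rulename{\ddictcall} never fires on the outermost redex; only \rulename{\dcall} applies. The \rulename{\ddictcall} complexity lives entirely inside the type-substitution lemma, where one must show that a dictionary lookup $\eta(\alpha).m.\apply(\ldots)$, after substituting $\makeDict[\emptyset;\emptyset]{\sigma,\iType{\tau}}$ for the dictionary variable, $\precongdict^*$-reduces to the \rulename{\dcall} form that translating the specialised body would have produced. The paper handles this as a separate case of that lemma, not of the main induction.

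Third, \rulename{\rcontext} is not ``plain induction''. Because the translation of $e.m[\psi](\multi{e'})$ inserts an assertion whose target type is the \emph{static} type of the receiver, the translated context for $E=\hole.m[\psi](\multi{e'})$ changes when the hole's type is refined by a step. Concretely, if $e:t[\phi]$ and $e\red d$ with $d:u[\phi']$ and $u[\phi']<:t[\phi]$, then $\map{E[e]}$ contains $\ldots.(t).m(\ldots)$ while $\map{E[d]}$ contains $\ldots.(u).m(\ldots)$. The induction hypothesis gives you only $\map{e}\dicttrans\precongdict^*\map{d}$ inside the \emph{old} context; you then need one extra $\precongdict$ step (the assertion-refinement clause, using subtype preservation $u<:t$ in \gls{fg}) to reach $\map{E[d]}$. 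The paper singles this out as the only non-immediate \rulename{\rcontext} case; your sketch elides it.

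Your treatment of (c) and (d) via a bidirectional ``$\trycast$ panics iff the source assertion fails'' lemma is sound and is essentially what the paper does inline; note that the interface case requires an auxiliary fact---that $\signatureMeta[\zeta]{M}$ with the receiver-indexed $\zeta$ $\prepostsim^*$-reduces to $\signatureMeta[\emptyset]{M[\Phi\by\phi]}$---which you should isolate, since it is what makes the per-method equality checks in $\trycast$ line up with the \gls{fgg} $\methods$ inclusion.
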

\begin{proof}
  By induction over the assumed reduction. Full proof is provided in
  \ifnotsplit{Appendix~\ref{app:proofs}}{the full version of this paper~\cite{fullversion}}.
\end{proof}

\begin{restatable}[Bisimulation up to dictionary resolution (Theorem \ref{thm:main:correctness} (2))]{correst}{corbisim}
  \label{cor:bisim}
  Let $\prog \ok $ 
  and $\dict[]{\prog}{\lex{\prog}}$ with
  $\prog = \program{e}$ and
  $\lex{\prog} = \program[\lex{D}]{\lex{e}}$.
  Then $e$ and $\lex{e}$ are bisimilar up to dictionary resolution.
\end{restatable}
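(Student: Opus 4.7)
The plan is to exhibit the relation $\Re = \{\ENCan{e', \lex{e'}} : \dict[\emptyset;\emptyset;\emptyset]{e'}{\lex{e'}},~e' \text{ well-typed}\}$ and verify that $\Re$ is a bisimulation up to dictionary resolution containing the pair $\ENCan{e, \lex{e}}$. Membership of that pair is immediate from rule \rulename{\dprogram} applied to the hypothesis $\dict[]{\prog}{\lex{\prog}}$, which internally establishes $\dict[\emptyset;\emptyset;\emptyset]{e}{\lex{e}}$. It then remains to verify that $\Re \cdot \dictredleft^{\ast}$ satisfies both directions of the standard bisimulation diagram, which reduces almost entirely to a packaging of Theorem~\ref{thm:operational:correspondence}.

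For the forward leg, I fix an arbitrary $\ENCan{f, \lex{f}} \in \Re$ and assume $f \red g$ in \gls{fgg}. Theorem~\ref{thm:operational:correspondence}(a) produces $\lex{g}$ with $\dict[]{g}{\lex{g}}$ together with an intermediate expression $g'$ such that $\lex{f} \dicttrans g'$ and $g' \precongdict^{\ast} \lex{g}$. Unfolding $\dicttrans = \prepre^{\ast} \red \prepostsim^{\ast}$ shows $\lex{f} \red^{+} g'$, which supplies the multi-step edge demanded by the commuting diagram. The chain $g' \precongdict^{\ast} \lex{g}$ is exactly $\lex{g} \dictredleft^{\ast} g'$, so $\ENCan{g, g'}$ lies in $\Re \cdot \dictredleft^{\ast}$, closing the square.

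The backward leg is dual and invokes parts (b)--(d) of Theorem~\ref{thm:operational:correspondence}. Any $\dicttrans$-reduction $\lex{f} \dicttrans g'$ either originates from a non-error source, in which case (b) yields a matching $f \red g$ and $\lex{g}$ with $\dict[]{g}{\lex{g}}$ and $g' \precongdict^{\ast} \lex{g}$; or it originates from a panicking source, in which case (c) and (d) align the type-assertion-error behaviour on both sides. Either way, the target pair lies in $\Re \cdot \dictredleft^{\ast}$. Pure $\precongdict$-steps of $\lex{f}$ (those firing patterns in $\dictpattern$) do not correspond to any source step and are absorbed into the $\dictredleft^{\ast}$ factor, so they do not break the bisimulation.

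The main obstacle is formalising this absorption cleanly: a general FG reduction sequence can interleave $\dicttrans$-chunks with $\precongdict$-tails at arbitrary evaluation positions, so the decomposition has to be well-defined. This is precisely where Lemma~\ref{lem:rec} is essential: the determinism of $\dicttrans$ and the confluence of $\precongdict$ together guarantee that the intermediate state $g'$ is canonical regardless of the order in which dictionary-resolution reductions fire, which is what justifies the soundness of the "up to dictionary resolution" technique and rules out spurious bisimulation witnesses arising from non-confluent reorderings.
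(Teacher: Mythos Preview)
Your proposal is correct and follows the same strategy as the paper: take $\Re$ to be the translation relation and discharge both bisimulation clauses directly from Theorem~\ref{thm:operational:correspondence}(a) and (b). The paper's proof is considerably terser---it does not unfold $\dicttrans$, does not invoke parts (c)/(d), and does not appeal to Lemma~\ref{lem:rec}---because Definition~\ref{def:sb:upto} is read through its commuting diagram (where the FG side already steps by $\dicttrans$), so the absorption and confluence arguments you sketch are not required for this corollary.
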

\begin{proof}
  By Theorem \ref{thm:operational:correspondence}. 
  Let $\Re$ be the least relation such that all source expressions 
  are paired with their translation. 
  $\Re$ is a bisimulation up to dictionary resolution. 
  Namely, for each element $\ENCan{e,\lex{e}}\in \Re$, we have that:
  \begin{enumerate}
      \item If $e\red e'$, then by Theorem~\ref{thm:operational:correspondence} (a) 
      there exists a $\ENCan{e',d}\in \Re$ such 
      that $\lex{e} \dicttrans{\precongdict^\ast} d$. 
      \item If $\lex{e} \dicttrans{\precongdict^\ast} d$, then by 
      Theorem~\ref{thm:operational:correspondence} (b)  
      there exists a $\ENCan{e',d}\in \Re$ such 
      that $e\red e'$. 
  \end{enumerate}%
\end{proof}

\begin{restatable}[Type error preservation (Theorem \ref{thm:main:correctness} (1))]{correst}{corerrorpres}
  \label{cor:error:preservation}
  Let $\program \ok$ and $\dict[]{P}{\lex{P}}$.
  $P\Downarrow_\panic$ iff $\lex{P}\Downarrow_\panic$.
\end{restatable}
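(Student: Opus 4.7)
The plan is to leverage Theorem~\ref{thm:operational:correspondence} as the main lifting tool, since its four cases already isolate the error-related behaviour on both sides of the translation, and to combine it with the determinism and confluence results in Lemma~\ref{lem:rec} to handle the decoupling introduced by dictionary resolution.

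For the forward direction $P \Downarrow_\panic \Rightarrow \lex{P} \Downarrow_\panic$, I would induct on the length $n$ of a source reduction $e = e_0 \red \cdots \red e_n$ whose final expression contains a type assertion error. The base case $n = 0$ is immediate from Theorem~\ref{thm:operational:correspondence}(d): it produces $e'$ with $\lex{e} \dicttrans e'$ and $e'$ a type assertion error, and since $\dicttrans$ factors through a nonempty $\red^+$ this gives $\lex{P} \Downarrow_\panic$. In the inductive step, Theorem~\ref{thm:operational:correspondence}(a) converts $e_0 \red e_1$ into $\lex{e}_0 \dicttrans d \precongdict^* \lex{e}_1$ with $\dict[]{e_1}{\lex{e}_1}$; the inductive hypothesis on $e_1 \red^{n-1} e_n$ yields $\lex{e}_1 \Downarrow_\panic$, and composing reductions closes the case. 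Here I rely on each $\precongdict$ step being either a standard $\red$ step on a pattern in $\dictpattern$, or an assertion-type refinement whose subtype side-condition preserves all downstream evaluation, and in particular the eventual occurrence of a panic.

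The backward direction is proved by contradiction: suppose $\lex{e} \red^* d'$ with $d'$ a type assertion error, but $e \not\Downarrow_\panic$. The main obstacle is that the $\red^*$ path witnessing the error need not align with $\dicttrans$-boundaries of the simulation, so the error could appear at a $\precongdict$-intermediate state rather than at a direct translation of some source state. To bridge this, I plan to first establish a stability lemma---type assertion errors are preserved under $\precongdict^*$ in both directions---using confluence of $\precongdict$ from Lemma~\ref{lem:rec}(2) together with the observation that the erasure assertions in $\prepre$ are type-safe (already certified by the \gls{fgg} type checker) and that the simulation assertions in $\prepostsim$ fail exactly when the corresponding source assertion would, by the design of \trycast{} in \S~\ref{subsubsec:typecollision}. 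With this stability in hand, iteratively applying Theorem~\ref{thm:operational:correspondence}(b) along the $\dicttrans$-segments of the translated trace (whose ``source not at an error'' hypothesis is supplied by $e \not\Downarrow_\panic$) reconstructs a matched source reduction $e \red^* e_k$ with $\dict[]{e_k}{\lex{e}_k}$; the translated error at $d'$ then transports back, via part (c) of the same theorem, to an error at some $e_j$ in the reconstructed trace, contradicting $e \not\Downarrow_\panic$.
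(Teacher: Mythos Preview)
Your approach is essentially the paper's: both directions are driven by Theorem~\ref{thm:operational:correspondence}, with the forward direction inducting on the length of the source reduction and appealing to parts~(d) and~(a) exactly as you describe, and the backward direction stepping the translation along $\dicttrans$-boundaries using~(b). The paper frames the backward direction as a direct induction rather than a contradiction, and it packages your ``stability lemma'' as the one-line claim that $\precongdict$ cannot cause a type assertion error, but the content is the same.

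One point to tighten: your final move in the backward direction appeals to part~(c) to ``transport the error back'' from $d'$ to some $e_j$, but~(c) is stated in the other direction (source error implies target error). What actually closes the argument is the contrapositive route through~(b) together with your stability property: since no $e_j$ is an error, (b) always applies, each $\dicttrans$-result is $\precongdict^*$-linked to the translation of a non-error, and the fact that $\precongdict$ neither creates nor removes errors then forces the target trace to be error-free as well --- contradicting the assumed panic at $d'$. The paper's own proof is equally terse at this spot (it also cites~(c) for the right-to-left base case and leans on the unproved claim that $\precongdict$ cannot cause errors), so you are not missing anything the paper supplies; just be aware that the appeal to~(c) alone is not what does the work.
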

\begin{proof}
  For this proof, we define $\lex{P}$ as resolving into a type assertion error
  if $\lex{P}\dicttrans P'$ and $P'$ is a type assertion error. This happens
  when $P$ is a type assertion error, as in Theorem~\ref{thm:operational:correspondence} (c) and (d).
  By induction on the reductions in $\Downarrow$.
  \begin{itemize}
      \item[] \caseStd Left to right (base):
            By Theorem~\ref{thm:operational:correspondence} (d).
      \item[] \caseStd Right to left (base): 
            By Theorem~\ref{thm:operational:correspondence} (c). 
      \item[] \caseStd Left to right (induction): \\
            If $P$ is not a type assertion
            error, then it reduces to $Q$ where $Q\Downarrow_\panic$.
            By Theorem~\ref{thm:operational:correspondence} (a) $\lex{P}\dicttrans\prepostdict\lex{Q}$
            where $\dict[]{Q}{\lex{Q}}$.
            Apply induction on if $Q\Downarrow_\panic$ then $\lex{Q}\Downarrow_\panic$.

      \item[] \caseStd Left to Right (induction): \\
            We assume that $\lex{P}$ does not resolve into a type assertion error,
            \ie $\lex{P}\dicttrans Q'$ where $Q'$ is not a type assertion error.
            Since $\prepostdict$ cannot cause a type assertion
            error, we also get that $Q' \prepostdict^* \lex{Q}$ where $\lex{Q}$
            is not a type assertion error.
            By  Theorem~\ref{thm:operational:correspondence} (b) $P\red Q$.
            Apply induction on if $\lex{Q}\Downarrow_\panic$ then $Q\Downarrow_\panic$.
  \end{itemize}
\end{proof}

\myparagraph{Value preservation}
Finally, the value preservation property follows dictionary-passing 
being a bisimulation up to dictionary resolution, as the  
dictionary resolution steps are eager reductions that can equivalently 
be delayed until they become standard reductions.

\begin{restatable}[Reduction rewrite]{lemrest}{lemredrewrite}
  \label{lem:red:rewrite}
  Let $e_1\dictred e_2 \red e_3$ where $e_1=C[d_1]$,  $e_2=C[d_2]$, and $d_1\red d_2$.
  \begin{enumerate}
      \item If there exists an $E$ such that $C=E$ then $e_1 \red^2 e_3$
      \item If there does not exists an $E$ such that $C=E$ then $e_1 \red \dictred e_3$
  \end{enumerate}
\end{restatable}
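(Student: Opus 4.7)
The plan is to case split on whether $C$ is also an evaluation context and to exploit the fact that the $C$-grammar is strictly more permissive than the $E$-grammar.

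For Part~(1), if $C = E$, the argument is immediate: $d_1 \red d_2$ combined with rule \rulename{\rcontext} gives $e_1 = E[d_1] \red E[d_2] = e_2$, and appending the given $e_2 \red e_3$ yields $e_1 \red^2 e_3$.

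For Part~(2), if $C$ is not an evaluation context, the plan is to show that the standard reduction $e_2 \red e_3$ fires at a redex lying strictly outside the hole where $d_2$ sits, so that the same reduction can be applied to $e_1$ first and $\dictred$ deferred. I would start by comparing the grammars of $C$ and $E$: they agree except that $C$ allows arbitrary expressions (not just values) to the left of the hole in struct-literal arguments, in method-call arguments, and in the receiver of a method call. Since $C \neq E$, there must be some position along $C$'s spine where the hole sits behind a non-value sibling. Writing $e_2 \red e_3$ as $e_2 = E'[r_1] \red E'[r_2]$ for some evaluation context $E'$ and one of the rules \rulename{\rfields}, \rulename{\rcall}, \rulename{\rassert}, the call-by-value discipline forces $E'$ to stop at that non-value sibling rather than descending into $d_2$. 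Hence the position of $r_1$ in $e_2$ is disjoint from that of $d_2$. The same redex therefore occurs in $e_1$ at the same position, giving $e_1 \red e_1'$ where $e_1' = C'[d_1]$ and $C'$ is $C$ with $r_1$ replaced by $r_2$ in the appropriate non-hole subterm. Because the $C$-grammar is closed under substitutions at non-hole positions, $C'$ is again a valid context, so applying clause~(1) of Definition~\ref{def:dictreso} to $d_1 \red d_2$ inside $C'$ yields $e_1' \dictred C'[d_2] = e_3$, giving the desired $e_1 \red \dictred e_3$.

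The main obstacle will be formalising the geometric claim that the unique call-by-value redex of $e_2$ cannot overlap with $d_2$ when $C$ is not an evaluation context. I plan to discharge this by structural induction on $C$: at each constructor I verify that the outermost non-value sibling (if any) intercepts the CBV evaluation path before the hole is reached, using determinism of the \gls{fg} evaluation strategy. A minor subtlety arises if reducing $r_1$ happens to turn a non-value sibling into a value, which would bring $d_1$'s surrounding context closer to an evaluation context; this does not break the argument because the first clause of $\dictred$ is defined with respect to the broader grammar $C$ rather than $E$, so dictionary resolution remains available at the unchanged hole position.
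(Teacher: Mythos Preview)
Your proof is correct and follows essentially the same line as the paper's. The paper organises Part~(2) as a case analysis on the reduction rule applied in $e_2 \red e_3$ (showing that the base rules \rulename{\rfields}, \rulename{\rcall}, \rulename{\rassert} force $C$ to be an evaluation context, and handling \rulename{\rcontext} by exactly the sibling-swap you describe), whereas you frame it as structural induction on $C$; both rest on the same observation that any non-$E$ congruence context has a non-value sibling to the left of the hole that intercepts the call-by-value redex.
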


\begin{restatable}[Resolution to value]{lemrest}{lemredvalue}
  \label{lem:red:val}
  If $e\dictred v$ then $e\red v$.
\end{restatable}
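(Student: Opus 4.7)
The plan is a case analysis on the definition of $\dictred$, showing that the clause which can produce a value coincides with a single standard \gls{fg} reduction step. Recall that $\dictred$ has two clauses: (i) $\congEval[d_1]\dictred \congEval[d_2]$ whenever $d_1 \red d_2$ with $d_1 \in \dictpattern$; and (ii) $\congEval[e.(t)]\dictred \congEval[e.(u)]$ (assertion refinement) whenever $\wellTyped[]{e}{u}$ and $u <: t$.

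First I would rule out clause (ii). The hole contents on the right, $e.(u)$, is a type assertion and hence never a value; since values in \gls{fg} are struct literals all of whose components are themselves values, no $\congEval$-context wrapped around a non-value can produce a value. So clause (ii) cannot witness $e \dictred v$, and only clause (i) remains. The next step is to analyse the shape of $\congEval$ under the hypothesis $\congEval[d_2]=v$. The productions $\congEval.f$, $\congEval.(t)$, $\congEval.m(\multi{e})$ and $e.m(\multi{e},\congEval,\multi{e}')$ each place a field select, assertion, or method call at the root, none of which are values; so by a straightforward induction on $\congEval$, the requirement that the overall expression is a value forces $\congEval$ to be built only from $\hole$ and struct-literal frames $\sTypeInit{t}{\multi{v},\congEval',\multi{v}'}$ whose siblings are already values. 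This restricted shape is exactly a legal \gls{fg} evaluation context $E$ (taking the $\multi{e}$ slot in $\sTypeInit{t}{\multi{v},E,\multi{e}}$ to be values).

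Finally, the side condition of clause (i) already provides $d_1 \red d_2$ as a standard \gls{fg} head reduction (an instance of \rulename{\rfields}, \rulename{\rcall}, or \rulename{\rassert}), so rule \rulename{\rcontext} immediately lifts it to $\congEval[d_1] \red \congEval[d_2]$, which is exactly the desired $e \red v$. The only delicate point, which I expect to be the main obstacle, is the structural observation that ``$\congEval[d_2]$ is a value'' constrains $\congEval$ to lie within the \gls{fg} evaluation contexts; once that is in hand, the rest follows directly from the definitions of $\dictpattern$ and of the \gls{fg} reduction.
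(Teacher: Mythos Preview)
Your proposal is correct and follows essentially the same approach as the paper's proof: a case split on the two clauses of $\dictred$, ruling out the assertion-refinement clause because $\congEval[e.(u)]$ cannot be a value, and arguing in the pattern-reduction clause that $\congEval[d_2]=v$ forces $\congEval$ to coincide with a standard evaluation context $E$, whence \rulename{\rcontext} applies. The paper phrases the latter step contrapositively (if $\congEval$ is not an $E$ then $\congEval[d_2]$ still has a standard redex and so is not a value), whereas you argue it directly by structural analysis of $\congEval$; the content is the same.
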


\begin{restatable}[Value preservation (Theorem \ref{thm:main:correctness} (4))]
  {correst}{corvalue}
  \label{cor:valpres}
  Let $\program \ok$ and $\dict[]{P}{\lex{P}}$.
  $P\Downarrow v$ iff $\lex{P}\Downarrow \lex{v}$ where
  $\dict[]{v}{\lex{v}}$.
\end{restatable}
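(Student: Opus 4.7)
The plan is to derive this corollary from the bisimulation result (Corollary~\ref{cor:bisim}) together with the two reduction-manipulation lemmas (\ref{lem:red:rewrite} and \ref{lem:red:val}) by inducting on the length of the multi-step reduction $\red^\ast$ in each direction, with termination at a value serving as the base case.

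For the forward direction $P \Downarrow v \Rightarrow \lex{P}\Downarrow \lex{v}$, I would induct on the number of reduction steps $e \red^\ast v$. Each step $e \red e'$ is mimicked, by Theorem~\ref{thm:operational:correspondence}(a), by $\lex{e} \dicttrans \precongdict^\ast \lex{e'}$ with $\dict[]{e'}{\lex{e'}}$, so chaining these pieces gives $\lex{e} \red^+ d$ for some $d$ with $d \precongdict^\ast \lex{v}$. The base case needs $d = \lex{v}$ exactly: observe that values are syntactically in the image of $\dict[]{-}{-}$ without residual dictionary-resolution redexes, so any remaining $\precongdict$ suffix must itself reduce to $\lex{v}$, and Lemma~\ref{lem:red:val} promotes that suffix to standard $\red$ steps, giving $\lex{P}\Downarrow\lex{v}$.

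For the reverse direction $\lex{P}\Downarrow\lex{v} \Rightarrow P\Downarrow v$, I would induct on the length of $\lex{e} \red^\ast \lex{v}$, but first reshape the sequence so that dictionary-resolution steps and genuine reduction steps are separated. Concretely, whenever a $\dictred$ step is followed by a $\red$ step, apply Lemma~\ref{lem:red:rewrite} to either collapse two $\red$ steps or to swap into the canonical order $\red\dictred$; this lets us factor any prefix as $\lex{e}\dicttrans e'$ followed by a (shorter) continuation. By Theorem~\ref{thm:operational:correspondence}(b), the prefix $\lex{e}\dicttrans e'$ corresponds to a source step $e \red d$ with $e' \precongdict^\ast \lex{d}$; determinism of $\dicttrans$ and confluence of $\precongdict$ from Lemma~\ref{lem:rec} ensure this choice is well-defined and that continuing from $e'$ or $\lex{d}$ leads to the same terminal value. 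The induction hypothesis then yields $d \Downarrow v$ and hence $e \Downarrow v$, where $\dict[]{v}{\lex{v}}$ holds by unwinding the chain of translations recorded along the way.

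The main obstacle will be the right-to-left bookkeeping: a priori, the concrete reduction $\lex{e}\red^\ast \lex{v}$ may interleave dictionary-resolution redexes with ``real'' computation in an order that does not correspond to any stepwise source reduction, so one has to prove that repeatedly applying Lemma~\ref{lem:red:rewrite} terminates, preserves the final value, and always produces a prefix that can be segmented as a $\dicttrans$ move recognised by Theorem~\ref{thm:operational:correspondence}(b). The subtlety is that $\precongdict$ acts under a broader context $\congEval$ than the evaluation context $E$, so a single structural rewrite may turn a $\precongdict$ step into a proper $\red$ step (or vice versa) and one must use the confluence half of Lemma~\ref{lem:rec} to argue that these rewrites commute up to the unique $\dicttrans$-normal form, which is exactly what allows the induction step to go through.
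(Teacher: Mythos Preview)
Your proposal is correct and follows essentially the same approach as the paper: both derive the result from the bisimulation corollary (Corollary~\ref{cor:bisim}) together with Lemmas~\ref{lem:red:rewrite} and~\ref{lem:red:val}, using the rewrite lemma to commute the $\dictred$ steps past the $\dicttrans$ steps until they accumulate at the tail, and then the resolution-to-value lemma to turn that tail into ordinary $\red$ steps. The paper's proof is more condensed---it sets up the bisimulation diagram once and argues the delay $d\,\dictred\,\dicttrans\,d'$ implies $d\,\dicttrans\,\dictred\,d'$ or $d\,\red\,\dicttrans\,d'$ to reach $\lex{e_1}\,\dicttrans^+\,\dictred^+\,\lex{v}$ directly---whereas you spell out separate inductions for each direction and give a more careful account of the right-to-left bookkeeping (the reshaping of an arbitrary $\red^\ast$ sequence into $\dicttrans$-segments via Lemma~\ref{lem:red:rewrite}, and the appeal to confluence from Lemma~\ref{lem:rec}); that extra detail is sound and arguably clarifies the reverse direction, which the paper treats rather tersely.
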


\tikzset{|/.tip={Bar[width=.8ex,round]}}
% \tikzset{=/.tip={Bar[width=.8ex,round]}}
\begin{proof}
  By Corollary~\ref{cor:bisim} we have
  the following diagram (where $\Re$ is created by $\Mapsto$)
  \[
      \begin{tikzpicture}[line width=rule_thickness,
              arrowlabel/.style={inner sep=.5,fill=white},
          ]
          \node (dagone) [] {$\lex{e}_1$} ;
          %   \node (onetwo) [right=.5 of dagone] {} ;
          \node (dagtwo) [right=1 of dagone] {$\lex{e}_2$} ;
          %   \node (twothee) [right=.5 of dagtwo] {} ;
          \node (dagthree) [right=1 of dagtwo] {$\lex{e}_3$} ;
          %   \node (threedots) [right=.5 of dagthree] {} ;
          \node (dagdots) [right=1 of dagthree] {$\cdots$\vphantom{$\lex{e}_3$}} ;
          %   \node (dotsvee) [right=.5 of dagone] {} ;
          \node (dagvee) [right=1 of dagdots] {$\lex{v}$\vphantom{$\lex{e}_3$}} ;

          \node (eone) [above=.4 of dagone] {$e_1$} ;
          \node (etwo) [above=.4 of dagtwo] {$e_2$} ;
          \node (ethree) [above=.4 of dagthree] {$e_3$} ;
          \node (edots) [above=.4 of dagdots] {$\cdots$\vphantom{$e_3$}} ;
          \node (evee) [above=.4 of dagvee] {$v$\vphantom{$e_3$}} ;

          \draw[->] (eone) to (etwo);
          \draw[->] (etwo) to (ethree);
          \draw[->] (ethree) to (edots);
          \draw[->] (edots) to (evee);
          \coordinate (onetwo) at ($ (dagone) !.5! (dagtwo) $);
          \draw[-{Implies},double] (dagone) to (onetwo);
          \draw[-{Latex[open]}] (onetwo) to node[very near end, yshift=.8mm] {${}^*$} (dagtwo);
          \coordinate (twothree) at ($ (dagtwo) !.5! (dagthree) $);
          \draw[-{Implies},double] (dagtwo) to (twothree);
          \draw[-{Latex[open]}] (twothree) to node[very near end, yshift=.8mm] {${}^*$} (dagthree);
          \coordinate (threedots) at ($ (dagthree) !.5! (dagdots) $);
          \draw[-{Implies},double] (dagthree) to (threedots);
          \draw[-{Latex[open]}] (threedots) to node[very near end, yshift=.8mm] {${}^*$} (dagdots);
          \coordinate (dotsvee) at ($ (dagdots) !.5! (dagvee) $);
          \draw[-{Implies},double] (dagdots) to (dotsvee);
          \draw[-{Latex[open]}] (dotsvee) to node[very near end, yshift=.8mm] {${}^*$} (dagvee);

          \draw[|-{Implies},double] (eone) to (dagone);
          \draw[|-{Implies},double] (etwo) to (dagtwo);
          \draw[|-{Implies},double] (ethree) to (dagthree);
          \draw[|-{Implies},double] (evee) to (dagvee);
      \end{tikzpicture}
  \]
  By Lemma~\ref{lem:red:rewrite} and \ref{lem:red:val}
  each dictionary resolution reduction $\dictred$ is
  either subsumed by $\red$ or may be delayed using
  reduction rewriting until
  it becomes a $\red$ reduction.
  In other words, since 
  $e_1 \red e_2 \red \cdots \red v$ iff 
$\lex{e_1} \dicttrans\dictred \lex{e_2} \dicttrans\dictred \cdots
\dicttrans\dictred \lex{v}$.   
We use that 
$\dictred$ can be 
delayed ($d \dictred\dicttrans d'$ implies $d \dicttrans\dictred d$ 
or $d \red\dicttrans d$),
hence
$\lex{e_1} \dicttrans^+\dictred^+ \lex{v}$.   
Finally, from  $e\dictred^+ v$ implies $e\red^+ v$, we have that 
$e_1\Downarrow v$ iff 
$\lex{e_1}\Downarrow \lex{v}$.

\end{proof}

Proof of Theorem~\ref{thm:main:correctness} is given by 
Corollary~\ref{lem:type:pres:prog},
\ref{cor:bisim},
\ref{cor:error:preservation}, and 
\ref{cor:valpres}.

\sectSpace

\section{Implementation and Evaluation}
\label{sec:exp}
Beside the \gls{dict},
we also implement an \gls{erasure}. 
We compare the two implementations with three existing translators: \gls{mono}~\cite{griesemer2020featherweight},  \gls{gotogo} 
(the initial prototype based on a source-to-source monomorphisation), 
and \gomacro~\cite{go118} (the official generic type 
implementation released on 15th
March 2022).
This section first discusses the two implementations, 
then describes the evaluation methodology, 
before finally presenting the evaluation results.

\subsection{Implementation of Dictionary-Passing Translation}
\label{subsec:imple}

\usetikzlibrary{math} %needed tikz library

\begin{wrapfigure}{r}{.42\linewidth }%[t]%[16]{l}{0.42\textwidth}
    \vspace*{-.3cm}
    \centering
    \tikzset{every node/.style={rounded corners}}
    \scalebox{0.8}{
    \small
        %x unit = 1cm y unit = 0.6cm
    \begin{tikzpicture}%[xscale=0.6,yscale=0.6]                     
        \node[draw, fill=white] at (1.5,1.5) (fgg)  {
            \begin{minipage}[t][3cm]{3cm}
                \glsentryshort{fgg} 
        \end{minipage}
        };
%        \draw (3.5, 2.9) -- (4, 2.9);
%        \node[text width=3cm] at (5.8,2.9) {Full generic features};
        \node[draw, fill=yellow, fill opacity=0.1, text opacity=1] at (1.1,1.2) (fgg)  {
            \begin{minipage}[t][2.4cm]{2.2cm}
                \glsentryshort{dict} 
                (\S\ref{section:dictionary})
\ {\color{red}{\cmark}}            \end{minipage}
        };        
%        \draw (3.5, 2.3) -- (4, 2.3);
%        \node[text width=3cm] at (5.8,2.3) {no type simulation};

        \node[draw, fill=blue, fill opacity=0.1, text opacity=1] at (1.3, 0.9) (fgg)  {
            \begin{minipage}[t][1.8cm]{2.6cm}
                \glsentryshort{erasure} 
                \ \xmark

            \end{minipage}
        };        
%       \draw (3.5, 1.7) -- (4, 1.7);
%        \node[text width=3cm] at (5.8,1.7) {TODO};

        \node[draw, fill=green, fill opacity=0.1, text opacity=1] at (1.5, 0.6) (fgg)  {%dashed,
            \begin{minipage}[t][1.2cm]{3cm}
                \glsentryshort{mono}
\ {\color{red}{\cmark}}
            \end{minipage}
        };        
%        \draw (3.5, 1.1) -- (4, 1.1);
%        \node[text width=3cm] at (5.8,1.1) {\nomono};

      \node[draw,  fill=red, fill opacity=0.1, text opacity=1] at (1, 0.15) (fgg)  {
            \begin{minipage}[t][0.9cm]{2cm}
                \glsentryshort{gotogo} 
                \ \xmark\\
                \gomacro
                \ \xmark
            \end{minipage}
        };

\end{tikzpicture}
}
\vspace*{-.3cm}
    \caption{Relationship of implementations. {\color{red}\cmark} denotes 
    a translation proven correct (Theorem~\ref{thm:main:correctness}); 
    \xmark\ denotes a translation that has not.}
    \vspace*{-.3cm}
    \label{fig:venn}
\end{wrapfigure}

We implement the dictionary-passing translator 
(\glsentryshort{dict}) 
and the erasure-based translator (\glsentryshort{erasure}) based on 
the \gls{fgg} artifact~\cite{fgg-artifact} in Go 1.16.
We have fully tested the implementations 
using designed unit tests. 
Figure~\ref{fig:venn} shows the code coverage difference across the five translators. 

\gls{fgg} is the calculus presented in
            \cite{griesemer2020featherweight}; 
            \glsentryshort{dict} does not
            cover receiver type formal subtyping; 
            \glsentryshort{erasure} does not cover \gls{fgg} type assertions; 
            \glsentryshort{mono} does not cover a class 
            of recursive (\nomono) programs
            \cite{griesemer2020featherweight}; 
            \glsentryshort{gotogo} is a 
            source-to-source monomorphisation translator implemented by the Go Team, 
and does not cover \textit{F}-bounded polymorphism, method parametrisation,
            receiver type formal subtyping,
            or recursive (\nomono) programs; and
            \gomacro is the
            official release with generics and has the same limitations
          as \glsentryshort{gotogo}. Both \gomacro and \gls{gotogo}
          target the full Go language, including features
          not considered by \gls{fgg}.

We implement \glsentryshort{dict} following the rules in 
\S~\ref{section:dictionary}. 
Rather than strictly follow the formalisations of \gls{fg} and \gls{dict} 
translation, we leverage the first-class functions support in Go and use 
function types~\cite{go-function-types} as dictionary fields, similar 
to using function pointers in C/C++. 
We also ignore unnecessary type assertions in \rulename{\dfield}
and \rulename{\dcall} when the translation is not on an interface.
We memorise expression typing results to accelerate compilation.
We exclude type simulation (\S~\ref{subsubsec:typecollision}) 
of non-generic types (\ie{} the size of 
the type formal is zero), and directly use type assertion for \rulename{\dassert} 
for better runtime performance. We also find if those type metadata are used, and remove them when possible. Users can also disable all 
type metadata copies if there are no type assertions in the input program.
In total, \glsentryshort{dict} contains 1160 lines of Go code.

\glsentryshort{erasure} is an alternative homogeneous translation implementation from \gls{fgg}. 
This implementation erases generic type information and uses the underlying interface type, similar to the erasure implementations for Java~\cite{odersky2000two, Igarashi99FJ}.
When calling a method,
the erased object is directly used as the receiver (If 
$\wellTyped[\Delta, \alpha{:}\iType{t}\typeActualReceive; \Gamma]{e}{\alpha}$ then $\dict[\Delta, \alpha{:}\iType{t}\typeActualReceive; \Gamma]{
    e.m\typeActualMethod(\multi{e})
}{
    \lex{e}.(\iType{t}).m(\multi{\lex{e}})
}$), in contrast to \glsentryshort{dict}'s dictionary lookup (\rulename{\ddictcall}). 
For example, \inlinelstfcgg{func f[a Foo](x a) \{x.Bar()\}} translates to \inlinelstfcgg{func f(x Any) \{x.(Foo).Bar()\}}, while \glsentryshort{dict} calls the corresponding function in a dictionary field. 
As in \S~\ref{paragraph:structure}, naively erasing type parameters breaks type assertion preservation (Definition~\ref{def:type:preservation}).
An example of \glsentryshort{erasure} is provided in
\ifnotsplit{Appendix~\ref{sec:erasure-example}}{the full version of this paper~\cite{fullversion}}.
Compared with \glsentryshort{dict}, \glsentryshort{erasure} 
provides a concise translation of generics that is fully based 
on Go's existing dynamic dispatch mechanism.  
When calling a method of a generic object as though 
it were an interface, the Go runtime looks up 
the actual method to call from a list of methods~\cite{go-interface-slow-1,go-interface-slow-2}, 
while \glsentryshort{dict} finds the actual method from the dictionary.
The implementation of \glsentryshort{erasure} contains 765 lines of Go
code.

\subsectSpace
\subsection{Evaluation Methodology}
\label{subsec:evaluation}

\myparagraph{Benchmarks}
We build two benchmark 
suites to conduct head-to-head comparisons for the 
five translators.
\textit{1) Micro Benchmarks:} 
we design five micro benchmark sets. 
Each has a configuration parameter
to demonstrate how the translated code scales with a particular 
aspect of \gls{fgg}/Go programs. 
\textit{2) Real-World Benchmarks:}  
we reimplement all benchmarks
in previous papers about generics in
Java and Scala~\cite{odersky2000two, ureche2013miniboxing}. 
\gomacro officially released generics on March 15th, 2022, and 
it is infeasible for us to find usage of generics in real Go programs.
The second benchmark suite is a reasonable substitute to reveal how the five 
translators behave in reality.

\begin{figure}[t]
\includegraphics[width=\textwidth]{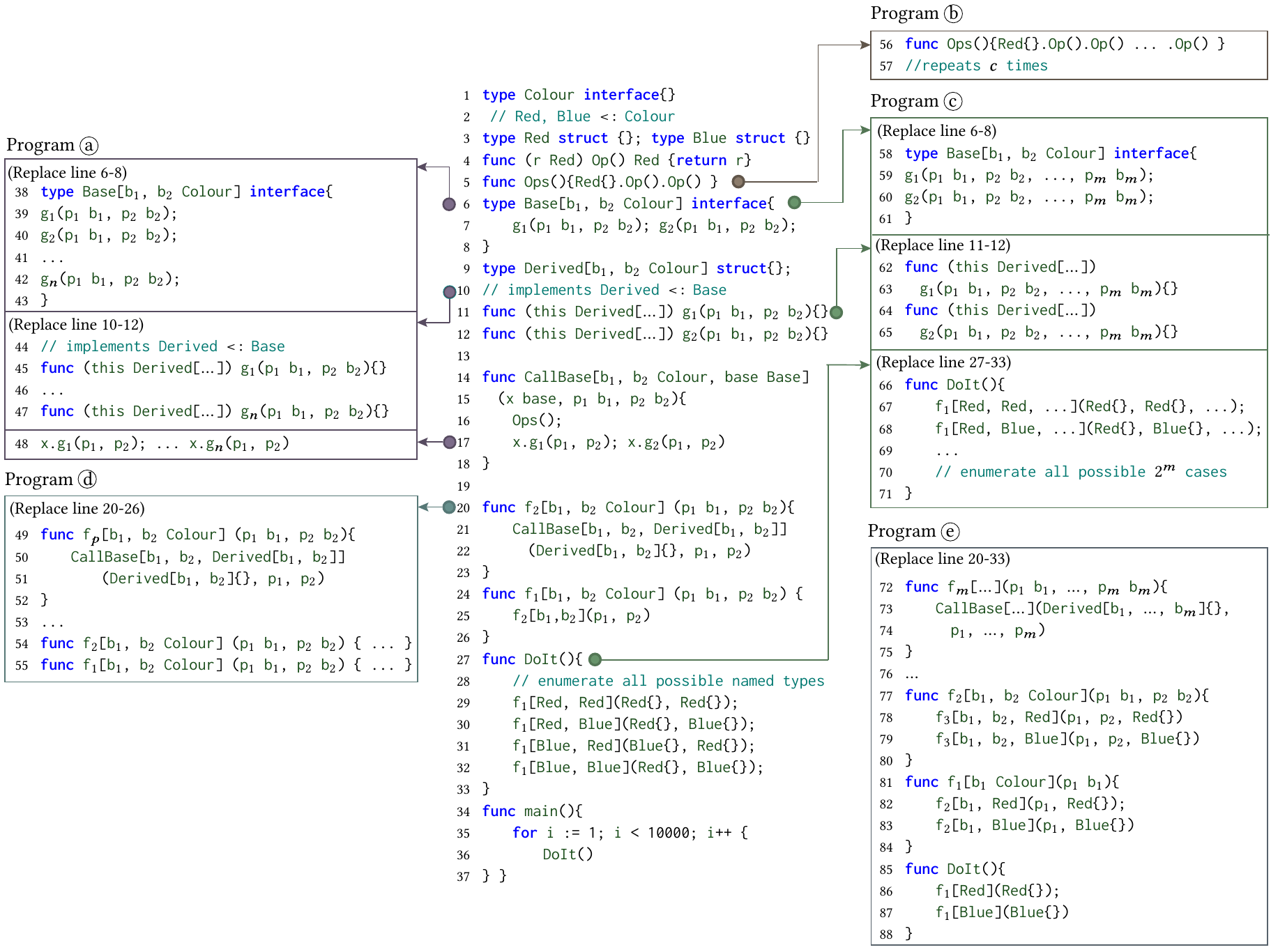}
\vspace*{-.5cm}
\caption{
    The base program and its five variations in the micro benchmarks.}
\label{fig:benchmark-prog-pdf}
\vspace*{-.5cm}
\end{figure}

\myparagraph{Micro Benchmarks} 
The five sets of micro benchmarks, \benchname{Program}~\mycircledtext{a}-\mycircledtext{e}, are all derived 
from a base program. Figure~\ref{fig:benchmark-prog-pdf} shows the base program and how the five 
benchmark sets are derived from it.
In the base program, 
lines 29--32 enumerate all possible combinations of types actual for \gocode{f$_1$()}.
Function \gocode{f$_1$()} takes two parameters and uses them 
to call \gocode{f$_2$()} on line~20, which in turn calls \gocode{CallBase()} 
on line~14. 
Function \gocode{CallBase()} calls \gocode{Ops()} on line 5, which further calls \gocode{Op()} 
twice to represent two non-generic operations. All methods of interface \gocode{Base} 
(\gocode{g$_1$()} and \gocode{g$_2$()}) are implemented by struct \gocode{Derived}, 
and called on line 17, from receiver variable \gocode{x} 
with generic type \gocode{base}. 
Function \gocode{main()} calls \gocode{DoIt()} 10,000 times (line~36)
to provide stable performance results. 

The set of \benchname{Program}~\mycircledtext{a} extends the number of methods of \gocode{Base} (lines 39--42) 
and \inlinelstfcgg{Derived} (lines 45--47) in the base program, from 2 to $n$.
\benchname{Program}~\mycircledtext{b} repeats the non-generic operation $c$ times on line 56, instead of two. 
In \benchname{Program}~\mycircledtext{c}, we increase the number of type parameters from 2 to $m$ (lines 59, 60, 63, and 65), 
and enumerate all $2^m$ type actual combinations (lines 67--70).
\benchname{Program}~\mycircledtext{d} increases the length of the call chain between \gocode{Doit()} 
and \inlinelstfcgg{CallBase()} from 2 to $p$ (lines 49--55).
\benchname{Program}~\mycircledtext{e} is particularly designed to expose the exponential complexity 
of monomorphisation (lines 72--88).
Its configuration parameter $m$ controls both
the type parameter number of \gocode{Base} (and \gocode{Derived}) and 
the number of functions called in between \gocode{DoIt()} 
and \gocode{BaseCall()} along the call chain. 
For the $m$ functions in between \gocode{DoIt()} 
and \gocode{BaseCall()}, we further configure each caller to call its callee twice,
and each callee to have one more parameter than its caller (e.g., function body of \inlinelstfcgg{f$_1$} and \inlinelstfcgg{f$_2$} on lines 77--84).

\myparagraph{Real-World Benchmarks}
We reimplement the Java and Scala programs using 
\glsentryshort{gotogo}, \gomacro, and \glsentryshort{fgg} for our evaluation. 
Since \glsentryshort{fgg} does not support all syntax  in the programs, 
we first use \glsentryshort{fgg}
to reimplement as many functionalities as possible. Then, 
we translate the \glsentryshort{fgg} code to Go
and manually insert the missed non-generic functionalities. 
On the other hand, 
\glsentryshort{gotogo} and \gomacro support all required syntax, so 
we use them to reimplement each whole program.
We manually test the reimplementations with designed testing inputs
and compare their outputs with the original versions in Java or Scala. 
Our tests achieve 100\% code coverage.

The benchmarks' functionalities are explained as follows. 
\benchname{List} \cite{ureche2013miniboxing} 
is an implementation of a linked list. It supports insert and search operations
on the linked list. 
\benchname{ResizableArray} \cite{ureche2013miniboxing} implements a resizable array.
It inserts elements into the array, 
reverses the array, and searches elements in the array. 
\benchname{ListReverse} \cite{odersky2000two} constructs a linked list and reverses it. 
It contains two reversing implementations.
\benchname{VectorReverse} \cite{odersky2000two} is to reverse an array. 
Similarly, it implements the reversing
functionality in two different ways.
\benchname{Cell} \cite{odersky2000two}
implements a generic container.
\benchname{Hashtable} \cite{odersky2000two} accesses elements in a hash table.

\myparagraph{Metrics}
We consider \emph{code size}, \emph{execution time}, and \emph{compilation time} as our metrics.
For code size, we compile each translated benchmark program into a binary executable and
disassemble the executable using objdump~\cite{objdump120:online}.
Next, we count the number of assembly instructions compiled from the benchmark program as its code size, while excluding the assembly instructions 
of linked libraries.   
To measure execution time, 
we compile each translated \gls{fg} program using
the Go compiler and compute the average execution time over {\em ten} runs.
We consider the time spent on the source-to-source translation and the compilation 
from a \gls{fg} program to an executable as the compilation time for the four source-to-source translators. For \gomacro, we measure its compilation time directly. 
We compile each benchmark program with each translator {\em ten} times 
and report the 
average compilation time.

\myparagraph{Platform \& Configurations}
All our experiments are conducted on a desktop machine, 
with AMD Ryzen 5 2600 CPU, 32GB RAM, and Ubuntu-18.04. 
To focus more on the impact of different translations for generics,
we disable garbage collection and compiler optimisations for all translators.
No benchmark requires type simulation.
Thus, we disable this option in \gls{dict}, 
allowing us to better understand the impact of method translation and dispatch.

\subsectSpace
\subsection{Evaluation Results}
\label{subsec:benchmark}
\subsubsectSpace
\subsubsection{Micro benchmarks}
\begin{figure}[t]
    % {\small
    % \begin{tabular}{lllll}
    %     \glsentryshort{gotogo}: \includegraphics[width=2em]{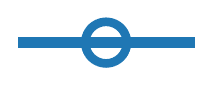}
    %     & \gomacro: \includegraphics[width=2em]{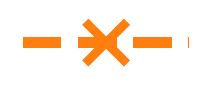}
    %     & \glsentryshort{mono}: \includegraphics[width=2em]{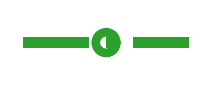}
    %     & \glsentryshort{dict}: \includegraphics[width=2em]{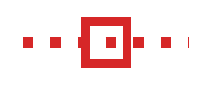}
    %     & \glsentryshort{erasure}: \includegraphics[width=2em]{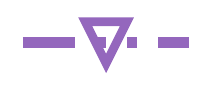}\\    
    % \end{tabular}}\\
    \input{figs/eval/legend.tex}
    \centering
    \begin{subfigure}{0.32\linewidth}
        \includegraphics[width=\linewidth]{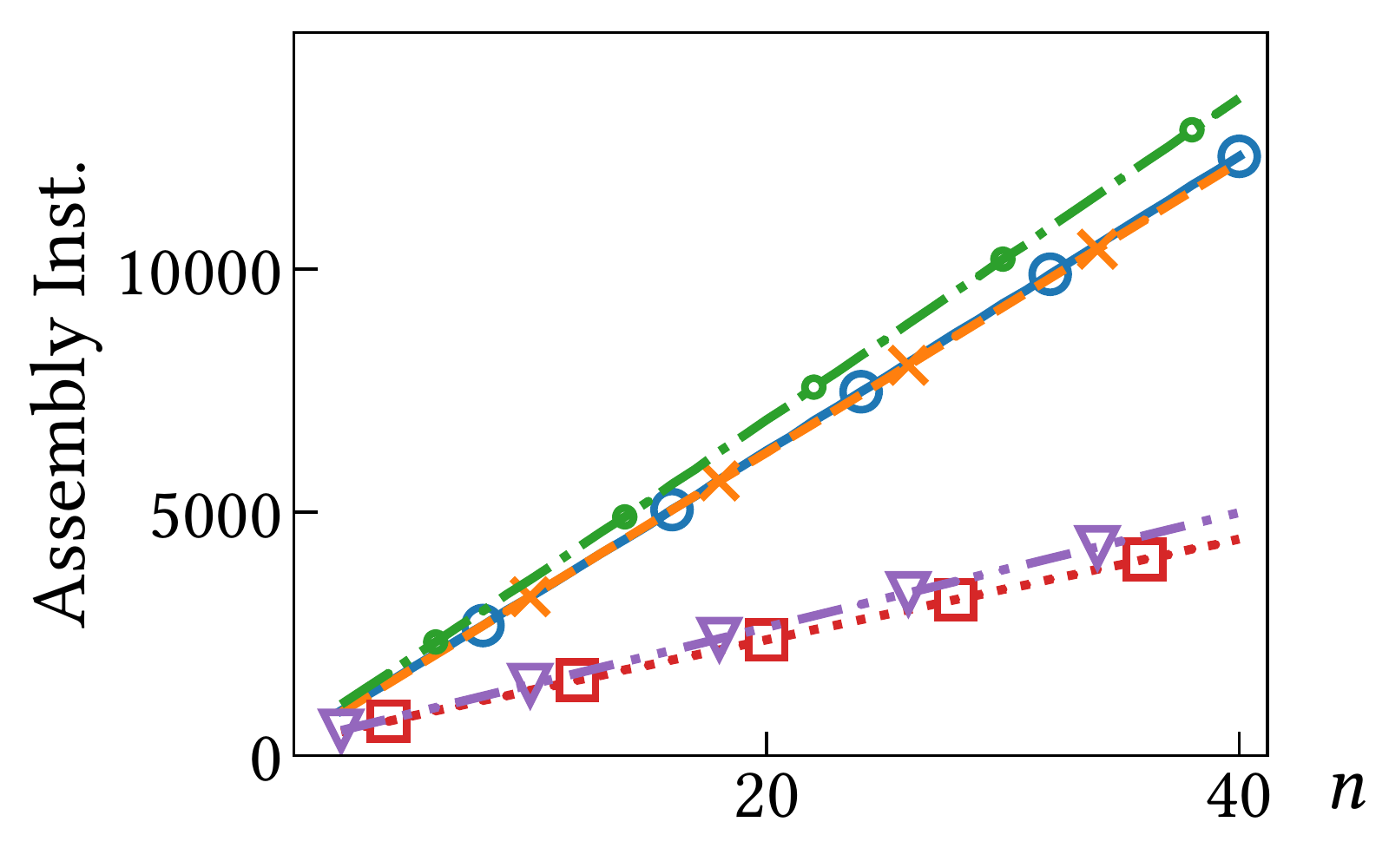}
        \vspace*{-.8cm}
        \caption{Lines of code in assembly}\label{fig:ssa-n}
    \end{subfigure}
    \begin{subfigure}{0.32\linewidth}
        \centering
        \includegraphics[width=\linewidth]{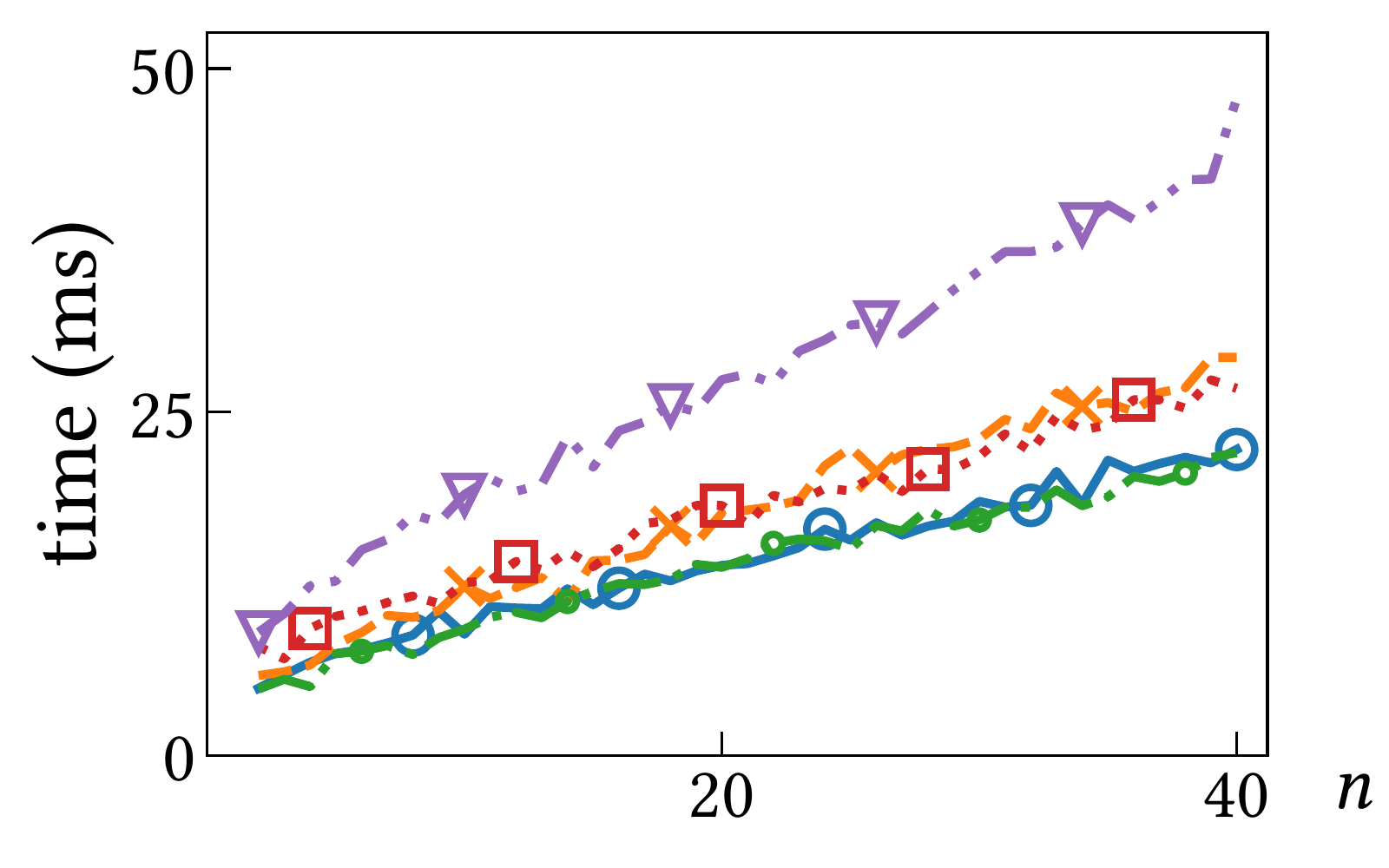}
        \vspace*{-.8cm}
        \caption{Execution time}\label{fig:time-n}
        \Description{}
    \end{subfigure}
    \begin{subfigure}{0.32\linewidth}
        \centering
        \includegraphics[width=\linewidth]{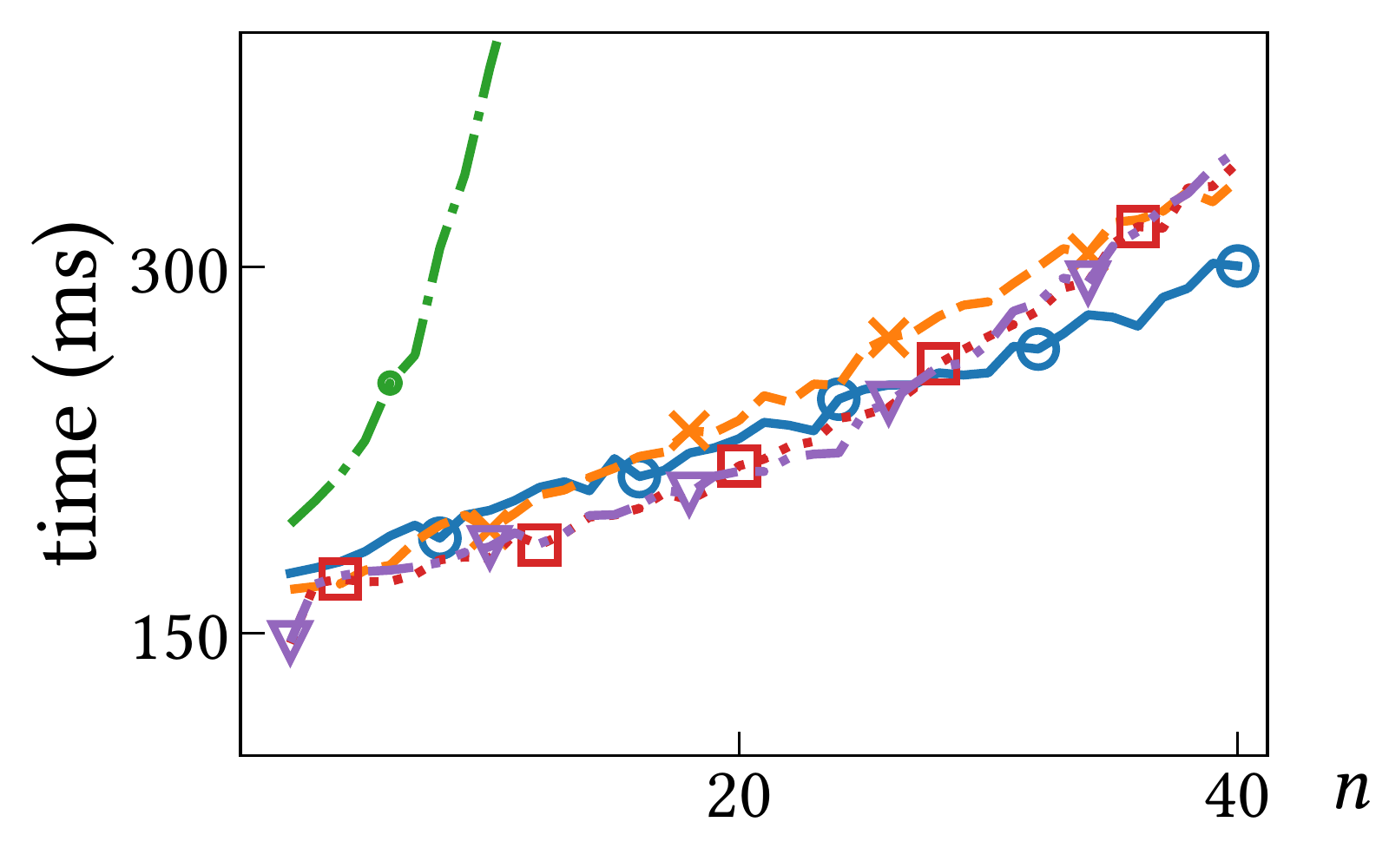}
        \vspace*{-.8cm}
        \caption{Compilation time}\label{fig:compile-time-n}
        \Description{}
    \end{subfigure}
\vspace*{-.3cm}      
\caption{Evaluation results of \benchname{Program}~\mycircledtext{a}}
\vspace*{-0.3cm}      

\label{fig:prog-a-results}
\end{figure}

\myparagraph{Program \mycircledtext{a}}
We change $n$ from $2$ to $40$ to 
analyse how the method number of a generic interface impacts
the five translators. 
As shown in Figure~\ref{fig:ssa-n}, the code size (number of assembly instructions) 
of translated \gls{fg} programs has a linear relationship with $n$ for all five translators. 
However, different translators have different coefficients. 
The coefficients of \glsentryshort{mono} ($328.8$), \glsentryshort{gotogo} ($300.8$), and \gomacro ($297.8$) are much larger 
than the coefficients of \glsentryshort{dict} ($117.9$) and \glsentryshort{erasure} ($103.8$).

Figure~\ref{fig:time-n} shows
the execution time of translated programs. 
The programs translated by 
\glsentryshort{dict} and \gomacro 
have a similar performance. 
They are slower 
than the corresponding programs translated by \glsentryshort{mono} and \glsentryshort{gotogo}. 
This is largely due to the usage of dictionaries. 
The programs generated by  \glsentryshort{erasure} have the worst performance,
since the structural typing conducted by \glsentryshort{erasure} when it translates generic method calls to 
polymorphic method calls is very slow~\cite{go-interface-slow-1,go-interface-slow-2}.

Figure~\ref{fig:compile-time-n} shows the compilation
time. 
\glsentryshort{mono} is significantly slower than the other four translators, 
and its compilation time is even not in a linear relationship with $n$. 
The compilation times of the other four translators are similar to each other.

\myparagraph{Programs \mycircledtext{b} and \mycircledtext{d}}
How the number of non-generic operations and the length of the call chain impact the three metrics is quite similar to the method number of 
generic interface \gocode{Base} in \mycircledtext{a}.
In particular, the code size, execution time, 
and compilation time are all in a linear relationship with
the two configuration parameters, except for the compilation time of \glsentryshort{mono}. 
Comparing \mycircledtext{b} with \mycircledtext{a}, 
one important difference to note is that 
for \mycircledtext{b}, 
the programs translated by \glsentryshort{dict} spend a similar execution time 
to that of the corresponding programs translated by \glsentryshort{erasure}, and the execution time
is larger than the execution time of the programs translated by \gomacro. 
However, in Figure~\ref{fig:time-n} for \mycircledtext{a}, 
the line of \glsentryshort{dict} is almost identical to the line 
of \gomacro, indicating that their execution times are similar, 
and the line of \glsentryshort{dict} is lower 
than the line of \glsentryshort{erasure}.
The reason is that when \glsentryshort{dict} translates \glsentryshort{fgg} to \glsentryshort{fg}, 
it also synthesises type assertions for the non-generic 
operations in \glsentryshort{fgg} (line 56 in Figure~\ref{fig:benchmark-prog-pdf}).
The type assertions slow down the translated \glsentryshort{fg}  programs.

\myparagraph{Program \mycircledtext{c}}
The code size, execution time, and compilation time all scale 
exponentially with $m$ for the five translators.
The underlying reason is that function \gocode{DoIt()} 
calls \gocode{f$_1$()} $2^m$ times
in each input \glsentryshort{fgg} program. 
After normalising the three metrics
with the number of characters in the \glsentryshort{fgg} programs, 
we find that the three metrics are in a linear relationship with $m$. 
Among the five translators,  \glsentryshort{erasure}'s translated programs 
have the longest execution time. \glsentryshort{dict} and \glsentryshort{erasure}
spend a similar compilation time, which is much shorter than \glsentryshort{mono}, \glsentryshort{gotogo}, and \gomacro.
\glsentryshort{dict}'s translated programs are similar in size to
 \glsentryshort{erasure}'s translated programs, but they are smaller 
compared with the programs translated by \glsentryshort{mono}, \glsentryshort{gotogo}, and \gomacro.

\begin{figure}[t]
    \centering
    \input{figs/eval/legend.tex}
    \begin{subfigure}{0.32\linewidth}
        \includegraphics[width=\linewidth]{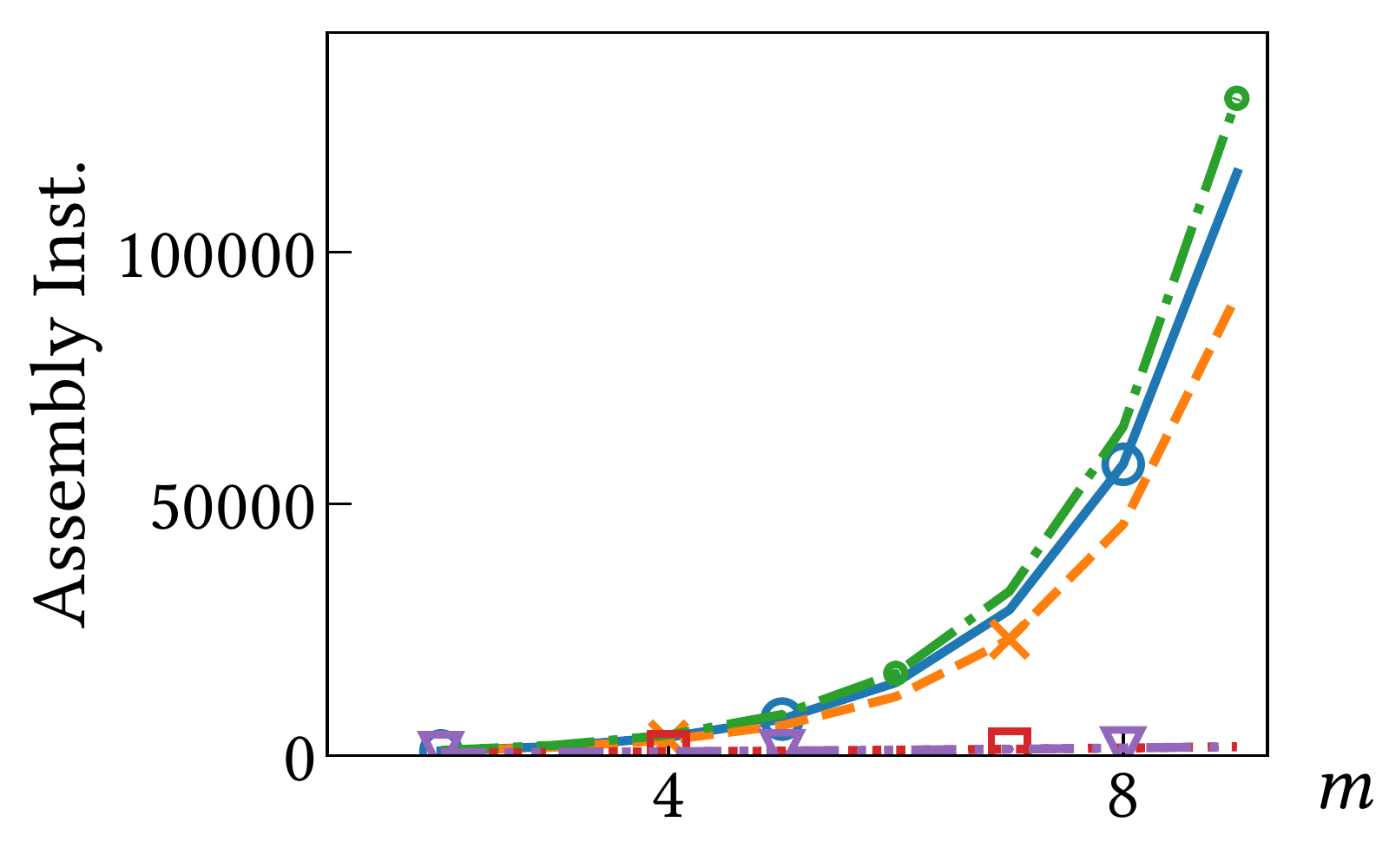}
        \vspace*{-.7cm}
        \caption{Lines of code in assembly}\label{fig:ssa-m}
    \end{subfigure}
    \begin{subfigure}{0.32\linewidth}
        \centering
        \includegraphics[width=\linewidth]{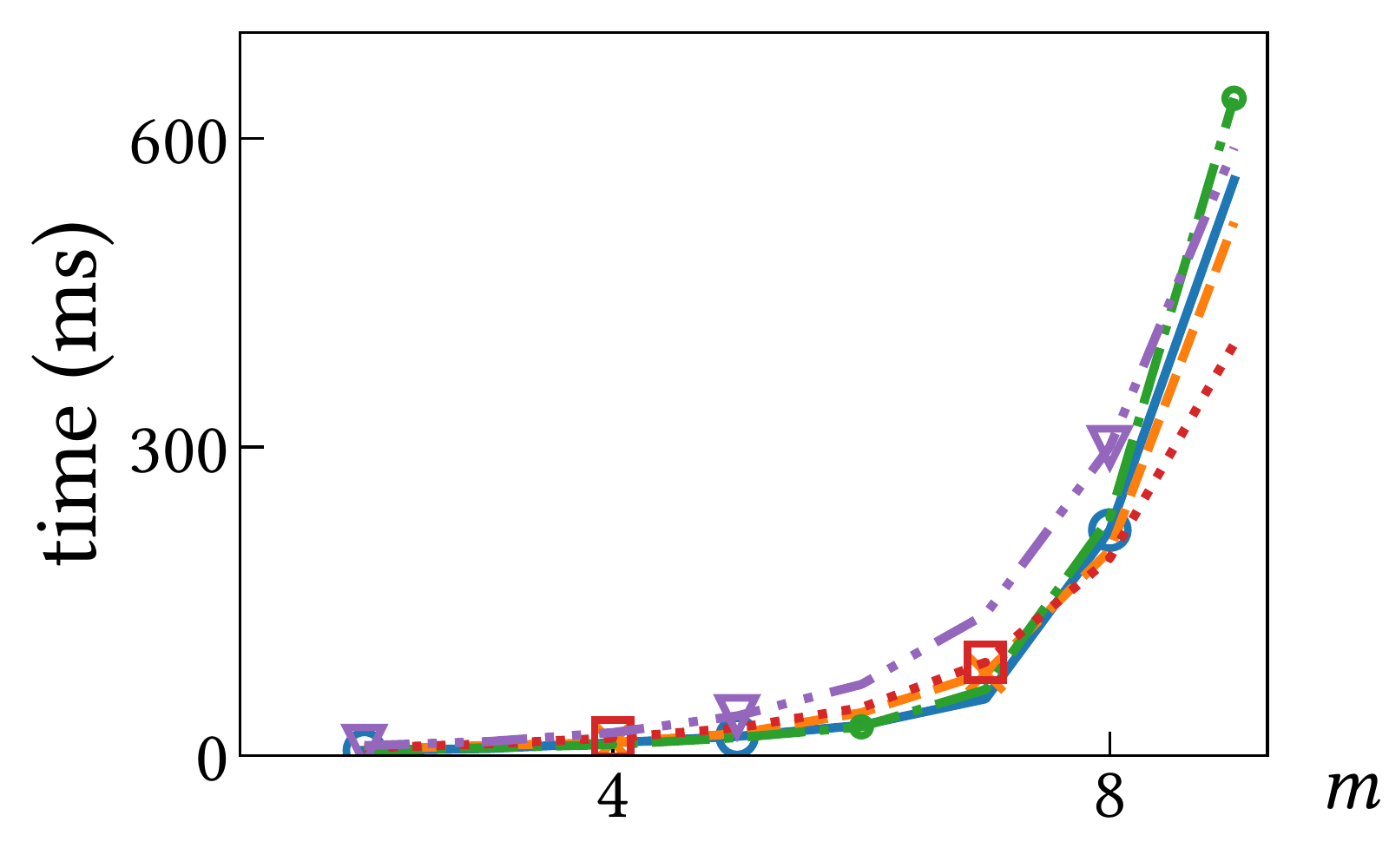}
        \vspace*{-.7cm}
        \caption{Execution time}\label{fig:time-m}
        \Description{}
    \end{subfigure}
    \begin{subfigure}{0.32\linewidth}
        \centering
        \includegraphics[width=\linewidth]{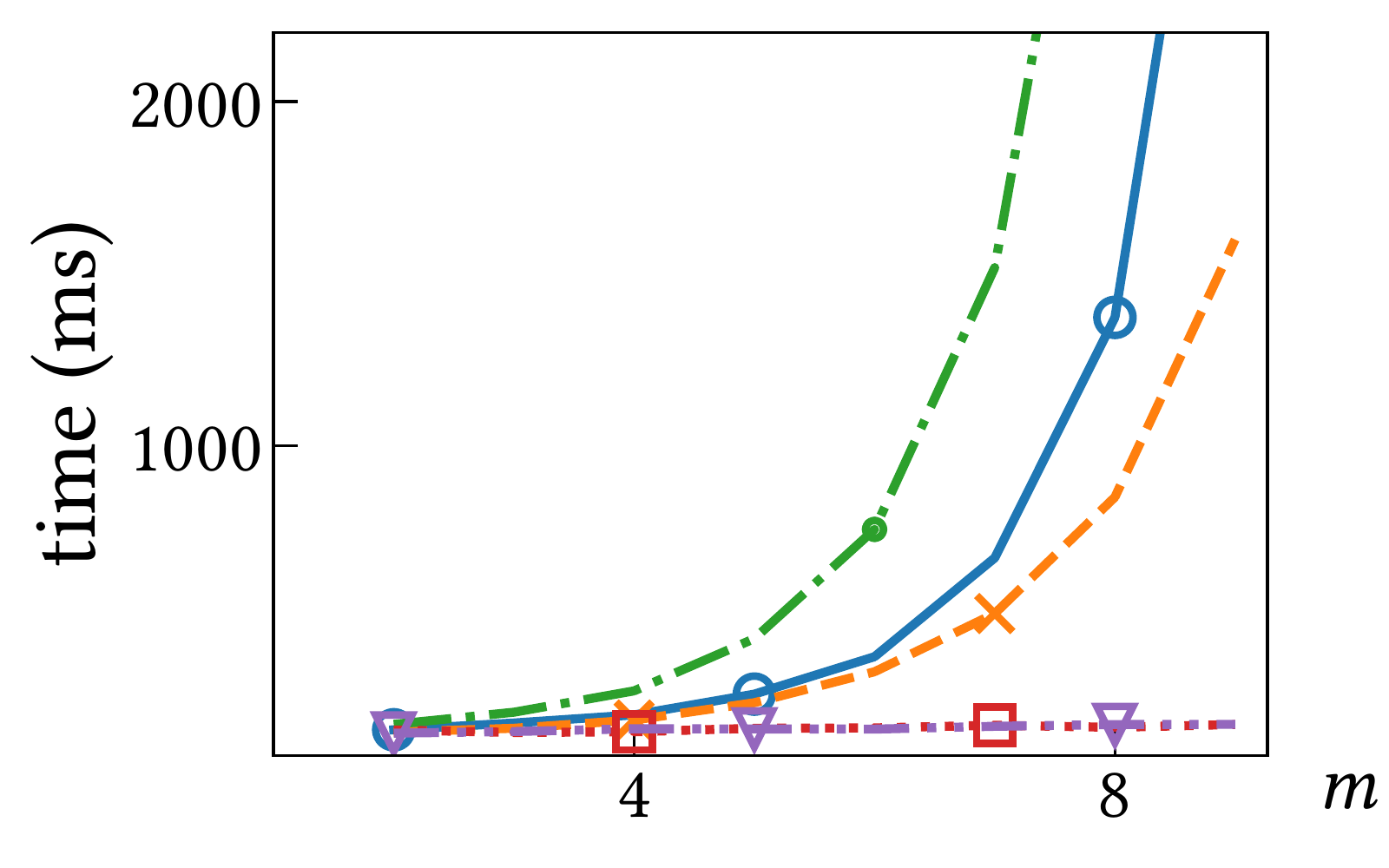}
        \vspace*{-.7cm}
        \caption{Compilation time}\label{fig:compile-m}
        \Description{}
    \end{subfigure}
    \vspace*{-.3cm}      
\caption{Evaluation results of \benchname{Program}~\mycircledtext{e}}
\vspace*{-0.5cm}      

\label{fig:prog-b-results}
\end{figure}

\myparagraph{Program \mycircledtext{e}}
As shown in Figures~\ref{fig:ssa-m}~and~\ref{fig:compile-m}, 
both the code size of the translated programs 
and the compilation time scale exponentially with $m$
for \glsentryshort{mono}, 
\glsentryshort{gotogo}, and \gomacro.
The reason is that \gocode{f$_m$()} essentially calls \gocode{CallBase()} $2^m$ times with $2^m$
distinct parameter combinations, because for $i\in[2,m), $ \gocode{f$_i$()} calls \gocode{f$_{i+1}$()}
twice, with its input parameters plus \gocode{Red} for the first time and its parameters
plus \gocode{Blue} for the second time, leading the three translators to copy  
\gocode{CallBase()} $2^m$ times. 
However, neither \glsentryshort{dict} nor  \glsentryshort{erasure}
makes any copy of \gocode{CallBase()}, 
and the code size of their translated programs is in a polynomial relationship with $m$
(e.g., for \glsentryshort{dict}'s translated programs, $\textit{size} = 12.8m^2 + 34.5m + 381$, $p<0.001$).

Contrary to the intuition, as shown in Figure~\ref{fig:time-m}, 
the programs translated by \glsentryshort{mono} 
have a worse execution performance compared with the corresponding programs translated by \glsentryshort{dict}, 
when $m$ is larger than 7. 
The reason is that when $m$ is large, a program synthesised by \glsentryshort{mono}
has a large code size, and thus many cache misses occur during its execution. 
For example, when $m$ is 9, the size of the executable file translated by \glsentryshort{mono} is 6.3MB,
and the executable triggers 6,058,156 cache misses in one run, 
while the program translated by \glsentryshort{dict} only
causes 93,695 cache misses. 

\myparagraph{Type simulation} 
As we discussed earlier, we disable the metadata copy of type simulation. 
If we enable the copy, then the translated programs
become slower (e.g., 10\% slower for \mycircledtext{a} when configuring $n$ equal to $2$). The slowdown becomes negligible when $n$ is equal to $40$.%

\subsubsectSpace
\subsubsection{Real-world benchmarks}

The evaluation results of real-world benchmarks are shown in Table~\ref{tab:real-benchmark-results}.
Overall, the translated programs of \glsentryshort{dict} and \glsentryshort{erasure} 
have a smaller code size, but a longer execution time, compared with the corresponding programs translated by 
\glsentryshort{gotogo}, \glsentryshort{mono}, and \gomacro, which is consistent 
with the results on the micro benchmarks. 
However, the compilation time does not change significantly across different translators, 
because all real-world benchmarks are small and do not have many usages of generics.

\begin{table}[t]\centering

        \resizebox{\columnwidth}{!}{
            %\begin{tabular}{@{\hspace{0pt}}l@{\hspace{1.5pt}}|@{\hspace{2pt}}c@{\hspace{3pt}}c@{\hspace{3pt}}c@{\hspace{3pt}}c@{\hspace{2pt}}c@{\hspace{3pt}}|c@{\hspace{3pt}}c@{\hspace{3pt}}c@{\hspace{3pt}}c@{\hspace{3pt}}c@{\hspace{3pt}}c@{\hspace{3pt}}c@{\hspace{3pt}}c@{\hspace{3pt}}c@{\hspace{3pt}}c@{\hspace{3pt}}c@{\hspace{0pt}}}
            %\begin{tabular}{l|ccccc|ccccc|ccccc}
            \begin{tabular}{@{\hspace{0pt}}l@{\hspace{3pt}}|@{\hspace{3pt}}c@{\hspace{2pt}}c@{\hspace{2pt}}c@{\hspace{2pt}}c@{\hspace{2pt}}c|c@{\hspace{2pt}}c@{\hspace{2pt}}c@{\hspace{3pt}}c@{\hspace{2pt}}c|c@{\hspace{2pt}}c@{\hspace{2pt}}c@{\hspace{3pt}}c@{\hspace{2pt}}c@{\hspace{2pt}}}	
                %                        Name              gotogo          mono            dict  
            %\begin{tabular}{@{\hspace{0pt}}l@{\hspace{3pt}}c@{\hspace{2pt}}c@{\hspace{4pt}}c@{\hspace{1pt}}c@{\hspace{1pt}}c@{\hspace{2pt}}c@{\hspace{2pt}}c@{\hspace{2pt}}c@{\hspace{2pt}}c@{\hspace{2pt}}c@{\hspace{0pt}}c@{\hspace{-1pt}}c@{\hspace{-1pt}}c@{\hspace{2pt}}c@{\hspace{2pt}}c@{\hspace{0pt}}c@{\hspace{-1pt}}c@{\hspace{-1pt}}}
                \toprule
                & \multicolumn{5}{@{\hspace{0pt}}c|@{\hspace{0pt}}}{Instruction Count}           & \multicolumn{5}{c|}{Execution Time (s)}  &  \multicolumn{5}{c}{Compilation Time (s)}\\

                %\midrule%\cmidrule{2-6}\cmidrule{8-12}\cmidrule{14-18}
                %Name           & \gls{gotogo}              &\gomacro                & \gls{mono} & \acrshort{dict}                             & \texttt{erasure} &  & \gls{gotogo} & \gomacro & \gls{mono} & \texttt{dict} & \texttt{erasure}& & \gls{gotogo} & \gomacro & \gls{mono} & \texttt{dict} & \texttt{erasure} \\\cmidrule{1-18}
                Name           & \smallgls{gotogo}              &{\small\gomacro}                & \smallgls{mono} & \smallgls{dict}                             & \smallgls{erasure} &  \smallgls{gotogo} & {\small\gomacro} & \smallgls{mono} & \smallgls{dict} & \smallgls{erasure}& \smallgls{gotogo} & {\small\gomacro} & \smallgls{mono} & \smallgls{dict} & \smallgls{erasure} \\\midrule
                \benchname{List}           & 1201  &1419  & 1736  & $862$   & 840           &  23.2   &22.4     & 24.4      & $37.69$      & 30.89 & 0.24 & 0.22 & 0.23 & 0.22 & 0.22         \\
                \benchname{ResizableArray} & 1976  &2281  & 1882  & $867$   & 841  &  7.0    &6.8  & 6.9  & $14.90$ & 15.00 & 0.24 & 0.22 & 0.23 & 0.23 & 0.22 \\
                \benchname{ListReverse}    & 1546  &1818  & 1753  & $1115$  & 1204 & 37.0   &35.6  & 36.0 & $42.89$ & 41.76 & 0.26 & 0.25 & 0.26 & 0.24 & 0.25 \\
                \benchname{VectorReverse}  & 985   &1072  & 1047  & $921$   & 914  & 2.99   &2.9   & 3.12 & $2.66$  & 2.68  & 0.25 & 0.24 & 0.26 & 0.24 & 0.25 \\
                \benchname{Cell}           & 112   &104   & 75    & $196$   & 151  &  .006  &.009  & .006 & $0.007$ & .007  & 0.24 & 0.23 & 0.16 & 0.24 & 0.24 \\
                \benchname{Hashtable}      & 188   &184   & 209   & $245$   & 249  &  0.24  &0.21  & 0.46 & $0.46$  & 0.45  & 0.19 & 0.17 & 0.19 & 0.19 & 0.19 \\
                \hline
                Geometric mean & 651.0 & 703.0 & 674.3 & 576.9 & 555.0 & 1.71 & 1.72 & 1.91 & 2.45 & 2.35 & 0.24 & 0.22 & 0.22 & 0.23 & 0.23 \\
                \bottomrule
            \end{tabular}
        }
    \caption{ Results of real-world benchmarks. %{\footnotesize{(Numbers in subscripts are for cases where type metadata is enabled.)}}
    }\label{tab:real-benchmark-results}
    \vspace*{-1.1cm}
\end{table}

\subsectSpace
\subsection{Discussion and Limitations}
\label{section:discussion}

Our experimental results largely reflect the common intuition that 
monomorphisation translators (\gls{mono}, \gls{gotogo}, and \gomacro) generate programs 
with a better runtime performance, 
while non-specialising translators (\gls{dict} and \gls{erasure}) synthesise programs in a smaller code size.
However, our evaluation also pinpoints cases where monomorphisation generates 
programs in an extremely large size. 
The programs trigger excessive cache misses during execution and have a very bad runtime performance.  
On the other hand, 
our experimental results motivate the introduction and usage of Go generics, 
since without generics, 
Go programmers have to implement polymorphism
using interfaces, which is exactly the same as the programs translated by \gls{erasure},
and our experimental results show that those programs are slow.

In practice, our dictionary-passing translator (\gls{dict}) constantly 
generates programs in a smaller size and takes a smaller (or comparable) compilation time 
than all existing translators (including \gomacro, 
the official generic type implementation).
Thus, it provides an alternative for real-world users of Go generics to 
strike their desired tradeoff. 
Moreover, our implementation and evaluation experience show 
that type simulation is an important component of \gls{dict},
and that type metadata incurs extra runtime overhead. 
Thus, corresponding data structures and algorithms need to be carefully designed 
for better translated programs.
 For instance, link-time optimisation can be applied to remove unused type metadata.

\myparagraph{Possible improvements for \gomacro}
First, \gomacro is very conservative in its support for GC shapes -- 
only considering pointers to have the same GC shape. 
In our experiments, we do not observe the reuse of method implementations, 
or synthesis and use of dictionaries.
Thus, to make full use of dictionaries and GC shape stenciling~\cite{go118}, 
it is necessary for the Go team to improve the current implementation and support
more GC shapes. 
Second, the Go team can 
consider dictionary-passing-based homogeneous compilation, as proposed
in this paper, since it supports polymorphic recursion, provides a faster compilation speed,
generates programs with a smaller code size, and enables separate compilation.

\myparagraph{Limitations}
Since the official generic type implementation released on
March  15th, 2022, 
there does not yet exist generic Go code from 
large, production-run Go software (e.g.,~Docker, Kubernetes, etcd).
We build the two benchmark suites to explore the translators'
asymptotic behaviours 
and inspect how they perform on representative generic programs in other languages,
which is our best effort in conducting the evaluation. 

We formalise \gls{dict} as a source-to-source translator 
to clarify design choices for future implementations and 
aid our proof of correctness (Theorem~\ref{thm:main:correctness}). 
However, this choice limits the performance of our implementation, and the evaluation results
may not reflect the true capability of dictionary-passing 
translation for two reasons: 
first, we erase all types to \gocode{Any} to ensure type preservation, 
which is slow at runtime; 
and second, Go does not allow the creation of global constant dictionaries in source code, 
but those dictionaries can potentially be created by the Go compiler
and leveraged by translated programs for a better runtime performance.

\section{Related Work}
\label{section:related}

\textbf{\emph{Implementation and benchmarks of generics}.}

\begin{wrapfigure}{r}{0.70\linewidth}
        \footnotesize
%     \vspace*{-5mm}
        \begin{tabular}{@{\hspace{0pt}}c@{\hspace{3pt}}|@{\hspace{3pt}}l@{\hspace{2pt}}l@{\hspace{2pt}}l@{\hspace{2pt}}l@{\hspace{2pt}}l}
    &Language &Translation(s) \ \ & Optimal & Optimal \\
    & & & Exec.~Time \ & Code Size \\\midrule
             Our work
             & \gls{fgg} (Go) 
             & Dict/Mono/ 
             & Mono (1st)
             & Erasure$^\dagger$ (1st)\\    
              &  
             & Erasure$^\dagger$ 
             & Dict (2nd)
             & Dict (2nd)\\
    \midrule
    Go team & Go & Mono/Hybrid & Mono & Mono \\
    % & &&\\
    \midrule
    
            \cite{ureche2013miniboxing}
                    & Scala (JVM)\ 
                    & Hybrid
                    & Hybrid
                    & Hybrid
                    \\
            \cite{odersky2000two}
                    & Pizza (Java)\ 
                    & Mono/Erasure
                    & Mono
                    & Erasure
                    \\
            \cite{kennedy2001design} 
                    & .NET CLR 
                    & Hybrid 
                    & Hybrid 
                    & N/A
                    \\ 
            \cite{Jones93} 
                    & Haskell 
                    & Dict/Mono 
                    & Mono 
                    & Mono
                    \\
        \end{tabular}
    \\[1mm]
    ($\dagger$) \gls{fgg} Erasure is not type preserving.
    \vspace*{-2mm}
        \caption{Implementations and benchmarks}
        \label{fig:tab-benchmark-works}
    \vspace*{-3mm}
    \end{wrapfigure}

To the best of our knowledge,
there is no existing work comparing implementations 
of generics in Go.
The closest ones target JVM languages \cite{odersky2000two,ureche2013miniboxing},
.NET common language runtime (CLR) \cite{kennedy2001design},
and Haskell \cite{jones1995dictionary}.
\citet{odersky2000two} 
benchmark a homogeneous
(similar to \acrshort{erasure}) and a
heterogeneous (similar to \acrshort{mono}) translation
for Pizza (an extension of Java with generics). 
They find that heterogeneity
reduces execution time, but also increases code size. 
\citet{jones1995dictionary} gave a similar comparison for Haskell, reporting
that monomorphisation produces a smaller code size;  
our work shows the opposite result.
One major reason is that unnecessary dictionary 
fields and manipulation of dictionary parameters require 
more assembly instructions in Haskell than Go, as 
Go targets 
low-level efficiency.

\citet{kennedy2001design} apply a hybrid 
dictionary and monomorphisation approach targeting the 
Just-In-Time (JIT) .NET CLR compiler. 
Object instantiation 
is conducted lazily at runtime according to an object's code 
and structure (\eg memory layout and garbage 
collection shape). Each object contains 
a pointer to a dictionary (vtable), which provides 
method entry points and type information.
With the help of lazy instantiation during runtime, 
.NET CLR supports abundant language features, 
including but not limited to $F$-bounded polymorphism 
and polymorphic recursion.
They compare their design with equivalent non-generic 
implementations using \texttt{Object}s and 
hand-specialised code. Their execution speed is 
close to that of the hand-specialised versions.
The \gomacro approach is similar to .NET CLR, 
but unlike .NET CLR, its instantiation happens at 
compile time. Due to structural typing dictionaries are instantiated 
through an approach similar to instance discovery 
in monomorphisation. 
Hence, \gomacro suffers from an inability to support polymorphic 
recursion (\ie constrained by \textit{nomono}, \S~\ref{section:nomono}) and 
the large code size of monomorphisation (\S~\ref{sec:exp}). 

\citet{ureche2013miniboxing} propose an optimised
monomorphisation approach called miniboxing using 
one monomorphised instance on
types with different sizes to reduce code size.
Methods of different types are specialised at runtime 
using a custom classloader.
They benchmark seven different settings, one achieving
at most a 22 times speedup over the default 
generics translation in Scala.
The main design goal of their benchmarks is the 
performance of reading and writing miniboxed objects 
allocated on heap by the JVM.
They test the different combinations of concrete 
types for generics (``Multi Context''), which is 
similar to the scenario of \benchname{Program}~\mycircledtext{c} (in \S~\ref{subsec:evaluation}), 
but their goal is to test the historical 
paths executed in the HotSpot JVM.
They also test the speed of one method call 
\texttt{hashCode} from generics types. 
In comparison, our benchmarks test how various 
factors impact the performance (\eg the method 
number in an interface).

\myparagraph{Formal translations of generics}
Formal translations of generics
can be split into three main techniques;
\emph{Erasure},
\emph{dictionary-passing}, and 
\emph{monomorphisation}. 
We consider the most relevant work, 
a breakdown of which
is provided in Figure~\ref{table:trans:theory}.
Where these works formally prove the correctness of their translation,
we observe that they can be grouped as 
\emph{behavioural equivalence}~\cite{griesemer2020featherweight, Igarashi99FJ}
and \emph{value preservation}~\cite{yu2004formalization}.
The former demands that during evaluation the source and target
programs are still related, whereas the latter merely requires that
the result of a productive program be preserved.
In general behavioural equivalence is a more fine-grained equivalence, as
it can be used to show value preservation. 
In this paper, we formalised and then 
proved our dictionary-passing translation 
correct using bisimulation up to dictionary-resolution, which is 
categorised as  
a behavioural equivalence.

\citet{yu2004formalization}
formalise a hybrid dictionary and
monomorphisation translation for the .NET~CLR.
\begin{wrapfigure}{r}{0.66\linewidth}
\footnotesize
\vspace{-3mm}
 \begin{tabular}{@{\hspace{0pt}}c@{\hspace{3pt}}|@{\hspace{3pt}}lllc}
        & Language   & \hspace{-2mm}Approach & \hspace{-2mm}Translation(s)\ &
        \hspace{-5mm}Formalised   \\ \midrule
        %\hspace{-2mm}
        Our work
                & \gls{fgg} (Go)
                & S-to-S
                & Dict
                & \tick                     \\
\midrule
        \cite{griesemer2020featherweight}
                & \gls{fgg} (Go)
                & S-to-S
                & Mono
                & \tick                     \\
        \cite{Igarashi99FJ}
                & Java
                & S-to-S
                & Erasure
                & \tick                     \\
        \cite{yu2004formalization} 
                & .NET CLR 
                & IR-to-IR
                & Hybrid %(Dict/Mono) 
                & \tick                     \\ 
        \cite{bottu2019coherence} 
                & Haskell 
                & S-to-IR
                & Dict
                & \tick                     \\
        \cite{OW97}
                & Pizza %(Java)
                & S-to-S
                & Mono/Erasure
                & \cross
\end{tabular}
\\
\begin{center}
        S-to-S$=$Source to Source; IR$=$Intermediate representation 
\end{center}
\vspace{-3mm}
\caption{Related Work: Theory}
\label{table:trans:theory}
\vspace{-5mm}
\end{wrapfigure}
They mostly follow the design of \cite{kennedy2001design}.
They consider a target language which can, using an object's type, request the
specific dictionary from an assumed infinite map.
This is justified for the .NET~CLR as method dictionaries
are created on-demand using an object's type.
Compare this to our translation in which we must eagerly
construct dictionaries and pass
them in addition to the objects that they describe.
\citet[Theorem 5]{yu2004formalization} show that their 
translation is value preserving; 
for expression $e$, and value $v$, 
if $e$ evaluates to $v$ ($e \Downarrow v$)
then there is a reduction
such that $\mapother{e} \red^* \mapother{v}$ 
(where $\mapother{-}$ is their translation).

\citet{bottu2019coherence} formalise
dictionary-passing in Haskell. 
Their work focuses on proving a \emph{coherency theorem}.  
They motivate this work as nominally typed languages 
featuring multiple inheritance (\ie Haskell) 
suffer from an ambiguity in dictionary-resolution such that 
the translation of 
a single source program may
\emph{non-deterministically}
produce different terms in the target language.
A translation is coherent when these target terms 
are contextually equivalent. 
We need not consider this issue, as Go's structural typing 
system does not support the multiplicity of superclass implementations 
that causes incoherence. 
\citet{bottu2019coherence} do not prove the correctness 
of their dictionary-passing translation using an equivalence 
between the source and target language.

\citet{griesemer2020featherweight} formalised the \gls{fg} and \gls{fgg} 
languages, as well as the \gls{mono} translation used in \S~\ref{sec:exp}. 
This work defines a class of \gls{fgg} 
programs that can be monomorphised, 
and proves that class membership is decidable. 
Finally, they prove that their translation forms 
a one-to-one bisimulation. 
Their behavioural equivalence is 
straightforward and does not require any up to 
techniques, as monomorphisation does not 
introduce runtime computations.

\citet{OW97} describe, but do not formalise, two 
alternative approaches -- erasure and monomorphisation -- 
to implementing 
generics in the Pizza language, a generic variant of Java. 
\citet{Igarashi99FJ} build on the erasure technique 
developed in \cite{OW97}. 
Their work formalises 
Featherweight Generic Java and  
proves a formal erasure translation 
to Featherweight Java. 
They prove the correctness of their erasure 
translation using a behavioural equivalence, 
although their translation introduces
\emph{synthetic casts} (assertions), which complicates 
the correctness theorems. 
To resolve this issue, they introduce a reduction 
for their proofs which freely adds, 
removes, or safely alters any required synthetic casts.
Correctness of their translation %$\mapother{-}$ 
is split 
into two directions, called 
\emph{weak completeness} and 
\emph{soundness} \cite[Theorem~4.5.4 and Theorem~4.5.5]{Igarashi99FJ}, 
which uses a behavioural equivalence up to 
the cast reduction. 
As with our paper, they use these theorems to show a 
value preservation corollary. 
\citet[Corollary~4.5.6]{Igarashi99FJ} also prove  
that their erasure translation is type assertion error preserving
-- in contrast to our \gls{erasure} translation, since ours does 
not preserve type assertions. This disparity is due to 
a limitation on the expressivity of assertion  in Generic Java. 
The inclusion of this limitation has been an area of contention, with other 
authors suggesting that it could be overcome with the 
use of type-reps~\cite{allen02thecase,agesen1997adding,solorzano98reflection,Viroli00reflectiveGJ,crary1998intensional}.

\myparagraph{Formal non-generics dictionary translation}
\citet{sulzmann2021dictionary} propose a
dictionary-passing translation from the non-generic \gls{fg} to an
\emph{untyped} variant of the $\lambda$-calculus
with pattern matching.
They use a dictionary-passing approach to investigate
Go's resolution mechanism for overloaded methods and
structural subtyping. 
\citet{sulzmann2021dictionary}
prove that their translation is value preserving 
using a step-indexed logical relation.

Intuitively, \citet{sulzmann2021dictionary} use an inductive 
proof technique that, using two related values $v$ and $v'$ at type $t$, 
relates any terms ($e$ and $\mapother{e}$) 
that can reduce to 
$v$ and $v'$ (\emph{resp.}) within $k$~reduction-steps. 
Step-indexed logical relations are a sophisticated extension to 
logical relations (\eg \cite{bottu2019coherence}),
and are applicable for languages with recursion. 
\citet{sulzmann2021dictionary} left 
a type-preserving translation from \gls{fg} and
a translation from \gls{fgg}
as their future work. 
No implementation or evaluation of their translation is provided.

\myparagraph{Alternatives to bisimulation up to} 
In our motivating example for \emph{up to dictionary resolution} 
(Figure~\ref{fig:example:nontriv}), 
we briefly discuss potential alternate many-to-many bisimulation approaches.
One such approach is the 
\emph{stuttering bisimulation}~\cite{browne1988charachterizing},
which has been studied extensively in the domain of model checking~\cite{baier2008principles}. 
The stutter bisimulation relates two terms when they 
both reduce to related terms in an unbounded, but finite, number of steps. 
Formally,  $e$ and $\mapother{e}$ are related by a 
\emph{stutter bisimulation} 
when \begin{enumerate*}
        \item $e\red e'$ implies that there exists a finite reduction 
                $\mapother{e}\red d_0 \red \cdots \red d_n$ ($n\ge 0$) 
                where each intermediate state $d_i$ is related to $e'$; and symmetrically,
        \item $\mapother{e}\red d$ implies that there is a finite reduction from $e$ 
                with each element being related to $d$. 
\end{enumerate*}
This approach works well for finite models, but becomes \emph{undecidable}
when applied to Turing complete languages such as \gls{fgg}. 
To overcome this issue, the works in \cite{hur2014logical,leroy2009formally}
consider restricted, decidable, variants of 
the stutter bisimulation to show the correctness of their translations. 
\citet{leroy2009formally} formulates the non-symmetric 
\emph{``star''-simulation}, which requires a well-founded ordering on reducing terms 
to ensure that either \begin{enumerate*}
        \item both source and target terms reduce infinitely; or 
        \item the source cannot reduce infinitely while the target is stuck. 
\end{enumerate*}
In practice, the well-founded ordering used in (2) 
is approximated using fixed parametric bounds. 
\citet{hur2014logical} formulate this idea 
using \emph{stuttering parametric bisimulation}, 
which bounds the number of steps that two related 
terms can take before their reductions are related. 
Such restricted variants of the stutter bisimulation 
cannot provide a sound and complete correctness proof for \gls{dict}.
More generally, our use of a fine-grained up to bisimulation
not only develops on existing correctness theorems 
for the translation generics
\cite{Igarashi99FJ,griesemer2020featherweight}, but it
can also be readily extended to include advanced language features 
such as concurrency and side effects in Go.

\section{Conclusion}
\label{section:conclusion}
In this paper, we design and formalise a new source-to-source, 
non-specalised call-site 
dictionary-passing translation of Go, and prove
essential correctness properties 
introducing a novel and general \emph{bisimulation up to} technique. 
The theory guides a correct implementation of 
the translation, 
which we empirically compare along with the recently released \gomacro, 
an erasure translator, and two existing monomorphisation 
translators~\cite{gotogo,griesemer2020featherweight},  
with micro and real-world benchmarks. 
We demonstrate that our dictionary-passing translator handles 
an important class of Go programs (\textit{F}-bounded polymorphism and \nomono{}
programs) beyond the capability of \gomacro 
and existing translations \cite{gotogo,griesemer2020featherweight},  
and provide several crucial findings and implications 
for future compiler developers to refer to. 
For instance,  \gomacro requires more improvements on GC shapes in order to
effectively generate small binary code
(See \ref{section:discussion}
for a more detailed discussion).

Beyond Go language, 
many dynamically typed languages (such as Python, JavaScript, and
Erlang) type-check at runtime, and their engines cannot 
easily decide an object's 
implemented methods nominally, similarly to Go. 
Consequently,  
many
of their implementations~\cite{salib2004starkiller,castanos2012benefits, gal2009trace}
apply similar approaches to monomorphisation to optimise
execution speed. 
Rust also supports generic via monomorphisation,
yet this is considered a major reason for slow compilation.
Our work can help in choosing alternative
optimisations for these languages to reduce  
code size and compilation time. 

In the future, we plan to inspect how other important Go language features 
(e.g., \emph{reflection}, 
\emph{packages}, \emph{first-class}, \emph{anonymous} functions) interact with generics
by proving the correctness and examining the trade-offs among runtime performance, code sizes, and compilation times.

\begin{acks} 
The authors wish to thank Ziheng Liu 
and Zi Yang for benchmark collection and initial paper discussion, 
and the anonymous reviewers for their invaluable comments and suggestions on the paper.
This work was partially supported by 
EPSRC (EP/T006544/1, EP/K011715/1, EP/K034413/1, EP/L00058X/1,
EP/N027833/1, EP/N028201/1, EP/T006544/1, EP/T014709/1, EP/V000462/1
and EP/X015955/1), NCSS/EPSRC VeTSS, 
a Mozilla Research Award, and an Ethereum Grant. 

\end{acks}

%% Bibliography
\bibliography{fcggreferences,session,related,asplos}

%%% -*-BibTeX-*-
%%% Do NOT edit. File created by BibTeX with style
%%% ACM-Reference-Format-Journals [18-Jan-2012].

\begin{thebibliography}{50}

%%% ====================================================================
%%% NOTE TO THE USER: you can override these defaults by providing
%%% customized versions of any of these macros before the \bibliography
%%% command.  Each of them MUST provide its own final punctuation,
%%% except for \shownote{}, \showDOI{}, and \showURL{}.  The latter two
%%% do not use final punctuation, in order to avoid confusing it with
%%% the Web address.
%%%
%%% To suppress output of a particular field, define its macro to expand
%%% to an empty string, or better, \unskip, like this:
%%%
%%% \newcommand{\showDOI}[1]{\unskip}   % LaTeX syntax
%%%
%%% \def \showDOI #1{\unskip}           % plain TeX syntax
%%%
%%% ====================================================================

\ifx \showCODEN    \undefined \def \showCODEN     #1{\unskip}     \fi
\ifx \showDOI      \undefined \def \showDOI       #1{#1}\fi
\ifx \showISBNx    \undefined \def \showISBNx     #1{\unskip}     \fi
\ifx \showISBNxiii \undefined \def \showISBNxiii  #1{\unskip}     \fi
\ifx \showISSN     \undefined \def \showISSN      #1{\unskip}     \fi
\ifx \showLCCN     \undefined \def \showLCCN      #1{\unskip}     \fi
\ifx \shownote     \undefined \def \shownote      #1{#1}          \fi
\ifx \showarticletitle \undefined \def \showarticletitle #1{#1}   \fi
\ifx \showURL      \undefined \def \showURL       {\relax}        \fi
% The following commands are used for tagged output and should be
% invisible to TeX
\providecommand\bibfield[2]{#2}
\providecommand\bibinfo[2]{#2}
\providecommand\natexlab[1]{#1}
\providecommand\showeprint[2][]{arXiv:#2}

\bibitem[\protect\citeauthoryear{Agesen, Freund, and Mitchell}{Agesen
  et~al\mbox{.}}{1997}]%
        {agesen1997adding}
\bibfield{author}{\bibinfo{person}{Ole Agesen}, \bibinfo{person}{Stephen~N.
  Freund}, {and} \bibinfo{person}{John~C. Mitchell}.}
  \bibinfo{year}{1997}\natexlab{}.
\newblock \showarticletitle{Adding Type Parameterization to the Java Language}.
  In \bibinfo{booktitle}{\emph{Proceedings of the 12th ACM SIGPLAN Conference
  on Object-Oriented Programming, Systems, Languages, and Applications}}
  (Atlanta, Georgia, USA) \emph{(\bibinfo{series}{OOPSLA '97})}.
  \bibinfo{publisher}{Association for Computing Machinery},
  \bibinfo{address}{New York, NY, USA}, \bibinfo{pages}{49–65}.
\newblock
\showISBNx{0897919084}
\urldef\tempurl%
\url{https://doi.org/10.1145/263698.263720}
\showDOI{\tempurl}


\bibitem[\protect\citeauthoryear{Allen and Cartwright}{Allen and
  Cartwright}{2002}]%
        {allen02thecase}
\bibfield{author}{\bibinfo{person}{Eric Allen} {and} \bibinfo{person}{Robert
  Cartwright}.} \bibinfo{year}{2002}\natexlab{}.
\newblock \showarticletitle{The Case for Run-Time Types in Generic Java}. In
  \bibinfo{booktitle}{\emph{Proceedings of the Inaugural Conference on the
  Principles and Practice of Programming, 2002 and Proceedings of the Second
  Workshop on Intermediate Representation Engineering for Virtual Machines,
  2002}} (Dublin, Ireland) \emph{(\bibinfo{series}{PPPJ '02/IRE '02})}.
  \bibinfo{publisher}{National University of Ireland},
  \bibinfo{address}{Maynooth, County Kildare, IRL}, \bibinfo{pages}{19–24}.
\newblock
\showISBNx{0 901519 87 1}


\bibitem[\protect\citeauthoryear{Baier and Katoen}{Baier and Katoen}{2008}]%
        {baier2008principles}
\bibfield{author}{\bibinfo{person}{Christal Baier} {and}
  \bibinfo{person}{{Joost P.} Katoen}.} \bibinfo{year}{2008}\natexlab{}.
\newblock \bibinfo{booktitle}{\emph{Principles of Model Checking}}.
\newblock \bibinfo{publisher}{MIT Press}, \bibinfo{address}{United States}.
\newblock
\showISBNx{978-0-262-02649-9}


\bibitem[\protect\citeauthoryear{Bottu, Xie, Marntirosian, and
  Schrijvers}{Bottu et~al\mbox{.}}{2019}]%
        {bottu2019coherence}
\bibfield{author}{\bibinfo{person}{Gert-Jan Bottu}, \bibinfo{person}{Ningning
  Xie}, \bibinfo{person}{Koar Marntirosian}, {and} \bibinfo{person}{Tom
  Schrijvers}.} \bibinfo{year}{2019}\natexlab{}.
\newblock \showarticletitle{Coherence of Type Class Resolution}.
\newblock \bibinfo{journal}{\emph{Proc. ACM Program. Lang.}}
  \bibinfo{volume}{3}, \bibinfo{number}{ICFP}, Article \bibinfo{articleno}{91}
  (\bibinfo{date}{jul} \bibinfo{year}{2019}), \bibinfo{numpages}{28}~pages.
\newblock
\urldef\tempurl%
\url{https://doi.org/10.1145/3341695}
\showDOI{\tempurl}


\bibitem[\protect\citeauthoryear{Browne, Clarke, and Grümberg}{Browne
  et~al\mbox{.}}{1988}]%
        {browne1988charachterizing}
\bibfield{author}{\bibinfo{person}{M.C. Browne}, \bibinfo{person}{E.M. Clarke},
  {and} \bibinfo{person}{O. Grümberg}.} \bibinfo{year}{1988}\natexlab{}.
\newblock \showarticletitle{Characterizing finite Kripke structures in
  propositional temporal logic}.
\newblock \bibinfo{journal}{\emph{Theoretical Computer Science}}
  \bibinfo{volume}{59}, \bibinfo{number}{1} (\bibinfo{year}{1988}),
  \bibinfo{pages}{115--131}.
\newblock
\showISSN{0304-3975}
\urldef\tempurl%
\url{https://doi.org/10.1016/0304-3975(88)90098-9}
\showDOI{\tempurl}


\bibitem[\protect\citeauthoryear{Canning, Cook, Hill, Olthoff, and
  Mitchell}{Canning et~al\mbox{.}}{1989}]%
        {canning1989f}
\bibfield{author}{\bibinfo{person}{Peter Canning}, \bibinfo{person}{William
  Cook}, \bibinfo{person}{Walter Hill}, \bibinfo{person}{Walter Olthoff}, {and}
  \bibinfo{person}{John~C. Mitchell}.} \bibinfo{year}{1989}\natexlab{}.
\newblock \showarticletitle{F-Bounded Polymorphism for Object-Oriented
  Programming}. In \bibinfo{booktitle}{\emph{Proceedings of the Fourth
  International Conference on Functional Programming Languages and Computer
  Architecture}} \emph{(\bibinfo{series}{FPCA '89})}.
  \bibinfo{publisher}{Association for Computing Machinery},
  \bibinfo{address}{New York, NY, USA}, \bibinfo{pages}{273–280}.
\newblock
\showISBNx{0897913280}
\urldef\tempurl%
\url{https://doi.org/10.1145/99370.99392}
\showDOI{\tempurl}


\bibitem[\protect\citeauthoryear{Castanos, Edelsohn, Ishizaki, Nagpurkar,
  Nakatani, Ogasawara, and Wu}{Castanos et~al\mbox{.}}{2012}]%
        {castanos2012benefits}
\bibfield{author}{\bibinfo{person}{Jose Castanos}, \bibinfo{person}{David
  Edelsohn}, \bibinfo{person}{Kazuaki Ishizaki}, \bibinfo{person}{Priya
  Nagpurkar}, \bibinfo{person}{Toshio Nakatani}, \bibinfo{person}{Takeshi
  Ogasawara}, {and} \bibinfo{person}{Peng Wu}.}
  \bibinfo{year}{2012}\natexlab{}.
\newblock \showarticletitle{On the Benefits and Pitfalls of Extending a
  Statically Typed Language JIT Compiler for Dynamic Scripting Languages}. In
  \bibinfo{booktitle}{\emph{Proceedings of the ACM International Conference on
  Object Oriented Programming Systems Languages and Applications}} (Tucson,
  Arizona, USA) \emph{(\bibinfo{series}{OOPSLA '12})}.
  \bibinfo{publisher}{Association for Computing Machinery},
  \bibinfo{address}{New York, NY, USA}, \bibinfo{pages}{195–212}.
\newblock
\showISBNx{9781450315616}
\urldef\tempurl%
\url{https://doi.org/10.1145/2384616.2384631}
\showDOI{\tempurl}


\bibitem[\protect\citeauthoryear{{Clement Rey}}{{Clement Rey}}{2018}]%
        {go-interface-slow-2}
\bibfield{author}{\bibinfo{person}{{Clement Rey}}.}
  \bibinfo{year}{2018}\natexlab{}.
\newblock \bibinfo{title}{go-internals}.
\newblock
  \bibinfo{howpublished}{\url{https://github.com/teh-cmc/go-internals/blob/master/chapter2_interfaces/README.md\#anatomy-of-an-interface}}.
\newblock
\newblock
\shownote{(Accessed on 04/15/2022).}


\bibitem[\protect\citeauthoryear{Crary, Weirich, and Morrisett}{Crary
  et~al\mbox{.}}{1998}]%
        {crary1998intensional}
\bibfield{author}{\bibinfo{person}{Karl Crary}, \bibinfo{person}{Stephanie
  Weirich}, {and} \bibinfo{person}{Greg Morrisett}.}
  \bibinfo{year}{1998}\natexlab{}.
\newblock \showarticletitle{Intensional Polymorphism in Type-Erasure
  Semantics}.
\newblock \bibinfo{journal}{\emph{SIGPLAN Not.}} \bibinfo{volume}{34},
  \bibinfo{number}{1} (\bibinfo{date}{sep} \bibinfo{year}{1998}),
  \bibinfo{pages}{301–312}.
\newblock
\showISSN{0362-1340}
\urldef\tempurl%
\url{https://doi.org/10.1145/291251.289459}
\showDOI{\tempurl}


\bibitem[\protect\citeauthoryear{Docker}{Docker}{2021}]%
        {docker}
\bibfield{author}{\bibinfo{person}{Docker}.} \bibinfo{year}{2021}\natexlab{}.
\newblock \bibinfo{title}{{Docker - Build, Ship, and Run Any App, Anywhere}}.
\newblock \bibinfo{howpublished}{\url{https://www.docker.com/}}.
\newblock
\newblock
\shownote{(Accessed on 04/15/2022).}


\bibitem[\protect\citeauthoryear{Driesen and H\"{o}lzle}{Driesen and
  H\"{o}lzle}{1996}]%
        {driesen1996direct}
\bibfield{author}{\bibinfo{person}{Karel Driesen} {and} \bibinfo{person}{Urs
  H\"{o}lzle}.} \bibinfo{year}{1996}\natexlab{}.
\newblock \showarticletitle{The Direct Cost of Virtual Function Calls in C++}.
  In \bibinfo{booktitle}{\emph{Proceedings of the 11th ACM SIGPLAN Conference
  on Object-Oriented Programming, Systems, Languages, and Applications}} (San
  Jose, California, USA) \emph{(\bibinfo{series}{OOPSLA '96})}.
  \bibinfo{publisher}{Association for Computing Machinery},
  \bibinfo{address}{New York, NY, USA}, \bibinfo{pages}{306–323}.
\newblock
\showISBNx{089791788X}
\urldef\tempurl%
\url{https://doi.org/10.1145/236337.236369}
\showDOI{\tempurl}


\bibitem[\protect\citeauthoryear{Ellis, Zhu, Yoshida, and Song}{Ellis
  et~al\mbox{.}}{2022a}]%
        {aritfact-entry}
\bibfield{author}{\bibinfo{person}{Stephen Ellis}, \bibinfo{person}{Shuofei
  Zhu}, \bibinfo{person}{Nobuko Yoshida}, {and} \bibinfo{person}{Linhai Song}.}
  \bibinfo{year}{2022}\natexlab{a}.
\newblock \bibinfo{title}{{Artifact of "Generic Go to Go: Dictionary-Passing,
  Monomorphisation, and Hybrid"}}.
\newblock
\newblock
\urldef\tempurl%
\url{https://doi.org/10.5281/zenodo.7067362}
\showDOI{\tempurl}


\bibitem[\protect\citeauthoryear{Ellis, Zhu, Yoshida, and Song}{Ellis
  et~al\mbox{.}}{2022b}]%
        {zhu22github}
\bibfield{author}{\bibinfo{person}{Stephen Ellis}, \bibinfo{person}{Shuofei
  Zhu}, \bibinfo{person}{Nobuko Yoshida}, {and} \bibinfo{person}{Linhai Song}.}
  \bibinfo{year}{2022}\natexlab{b}.
\newblock \bibinfo{title}{fgg2go}.
\newblock \bibinfo{howpublished}{\url{https://github.com/sfzhu93/fgg2go}}.
\newblock


\bibitem[\protect\citeauthoryear{Ellis, Zhu, Yoshida, and Song}{Ellis
  et~al\mbox{.}}{2022c}]%
        {zhu22heritage}
\bibfield{author}{\bibinfo{person}{Stephen Ellis}, \bibinfo{person}{Shuofei
  Zhu}, \bibinfo{person}{Nobuko Yoshida}, {and} \bibinfo{person}{Linhai Song}.}
  \bibinfo{year}{2022}\natexlab{c}.
\newblock \bibinfo{title}{fgg2go}.
\newblock
  \bibinfo{howpublished}{\url{https://archive.softwareheritage.org/browse/origin/directory/?origin_url=https://github.com/sfzhu93/fgg2go}}.
\newblock


\bibitem[\protect\citeauthoryear{Foundation}{Foundation}{2021}]%
        {objdump120:online}
\bibfield{author}{\bibinfo{person}{Free~Software Foundation}.}
  \bibinfo{year}{2021}\natexlab{}.
\newblock \bibinfo{title}{objdump(1) - Linux manual page}.
\newblock
  \bibinfo{howpublished}{\url{https://man7.org/linux/man-pages/man1/objdump.1.html}}.
\newblock
\newblock
\shownote{(Accessed on 04/15/2022).}


\bibitem[\protect\citeauthoryear{Gal, Eich, Shaver, Anderson, Mandelin,
  Haghighat, Kaplan, Hoare, Zbarsky, Orendorff, Ruderman, Smith, Reitmaier,
  Bebenita, Chang, and Franz}{Gal et~al\mbox{.}}{2009}]%
        {gal2009trace}
\bibfield{author}{\bibinfo{person}{Andreas Gal}, \bibinfo{person}{Brendan
  Eich}, \bibinfo{person}{Mike Shaver}, \bibinfo{person}{David Anderson},
  \bibinfo{person}{David Mandelin}, \bibinfo{person}{Mohammad~R. Haghighat},
  \bibinfo{person}{Blake Kaplan}, \bibinfo{person}{Graydon Hoare},
  \bibinfo{person}{Boris Zbarsky}, \bibinfo{person}{Jason Orendorff},
  \bibinfo{person}{Jesse Ruderman}, \bibinfo{person}{Edwin~W. Smith},
  \bibinfo{person}{Rick Reitmaier}, \bibinfo{person}{Michael Bebenita},
  \bibinfo{person}{Mason Chang}, {and} \bibinfo{person}{Michael Franz}.}
  \bibinfo{year}{2009}\natexlab{}.
\newblock \showarticletitle{Trace-Based Just-in-Time Type Specialization for
  Dynamic Languages}.
\newblock \bibinfo{journal}{\emph{SIGPLAN Not.}} \bibinfo{volume}{44},
  \bibinfo{number}{6} (\bibinfo{date}{jun} \bibinfo{year}{2009}),
  \bibinfo{pages}{465–478}.
\newblock
\showISSN{0362-1340}
\urldef\tempurl%
\url{https://doi.org/10.1145/1543135.1542528}
\showDOI{\tempurl}


\bibitem[\protect\citeauthoryear{{gitchander}}{{gitchander}}{2021}]%
        {gitchanderpermute}
\bibfield{author}{\bibinfo{person}{{gitchander}}.}
  \bibinfo{year}{2021}\natexlab{}.
\newblock \bibinfo{title}{Permutation}.
\newblock
  \bibinfo{howpublished}{\url{https://github.com/gitchander/permutation}}.
\newblock
\newblock
\shownote{(Accessed on 04/15/2022).}


\bibitem[\protect\citeauthoryear{Griesemer, Hu, Kokke, Lange, Taylor, Toninho,
  Wadler, and Yoshida}{Griesemer et~al\mbox{.}}{2020}]%
        {griesemer2020featherweight}
\bibfield{author}{\bibinfo{person}{Robert Griesemer}, \bibinfo{person}{Raymond
  Hu}, \bibinfo{person}{Wen Kokke}, \bibinfo{person}{Julien Lange},
  \bibinfo{person}{Ian~Lance Taylor}, \bibinfo{person}{Bernardo Toninho},
  \bibinfo{person}{Philip Wadler}, {and} \bibinfo{person}{Nobuko Yoshida}.}
  \bibinfo{year}{2020}\natexlab{}.
\newblock \showarticletitle{Featherweight Go}.
\newblock \bibinfo{journal}{\emph{Proc. ACM Program. Lang.}}
  \bibinfo{volume}{4}, \bibinfo{number}{OOPSLA}, Article
  \bibinfo{articleno}{149} (\bibinfo{date}{nov} \bibinfo{year}{2020}),
  \bibinfo{numpages}{29}~pages.
\newblock
\urldef\tempurl%
\url{https://doi.org/10.1145/3428217}
\showDOI{\tempurl}


\bibitem[\protect\citeauthoryear{gRPC}{gRPC}{2021}]%
        {grpc}
\bibfield{author}{\bibinfo{person}{gRPC}.} \bibinfo{year}{2021}\natexlab{}.
\newblock \bibinfo{title}{{A high performance, open-source universal RPC
  framework}}.
\newblock \bibinfo{howpublished}{\url{https://github.com/grpc/grpc-go}}.
\newblock


\bibitem[\protect\citeauthoryear{Hu}{Hu}{2021}]%
        {fgg-artifact}
\bibfield{author}{\bibinfo{person}{Raymond Hu}.}
  \bibinfo{year}{2021}\natexlab{}.
\newblock \bibinfo{booktitle}{\emph{Mini prototype of FG/FGG/FGR in Go}}.
\newblock
\urldef\tempurl%
\url{https://github.com/rhu1/fgg}
\showURL{%
\tempurl}


\bibitem[\protect\citeauthoryear{Hur, Neis, Dreyer, and Vafeiadis}{Hur
  et~al\mbox{.}}{2014}]%
        {hur2014logical}
\bibfield{author}{\bibinfo{person}{Chung-Kil Hur}, \bibinfo{person}{Georg
  Neis}, \bibinfo{person}{Derek Dreyer}, {and} \bibinfo{person}{Viktor
  Vafeiadis}.} \bibinfo{year}{2014}\natexlab{}.
\newblock \bibinfo{booktitle}{\emph{A logical step forward in parametric
  bisimulations}}.
\newblock \bibinfo{type}{{T}echnical {R}eport}. \bibinfo{institution}{Technical
  Report MPI-SWS-2014-003, MPI-SWS}.
\newblock


\bibitem[\protect\citeauthoryear{Igarashi, Pierce, and Wadler}{Igarashi
  et~al\mbox{.}}{1999}]%
        {Igarashi99FJ}
\bibfield{author}{\bibinfo{person}{Atshushi Igarashi},
  \bibinfo{person}{Benjamin Pierce}, {and} \bibinfo{person}{Philip Wadler}.}
  \bibinfo{year}{1999}\natexlab{}.
\newblock \showarticletitle{Featherweight Java: A Minimal Core Calculus for
  Java and GJ}. In \bibinfo{booktitle}{\emph{Proceedings of the 14th ACM
  SIGPLAN Conference on Object-Oriented Programming, Systems, Languages, and
  Applications}} (Denver, Colorado, USA) \emph{(\bibinfo{series}{OOPSLA '99})}.
  \bibinfo{publisher}{Association for Computing Machinery},
  \bibinfo{address}{New York, NY, USA}, \bibinfo{pages}{132–146}.
\newblock
\showISBNx{1581132387}
\urldef\tempurl%
\url{https://doi.org/10.1145/320384.320395}
\showDOI{\tempurl}


\bibitem[\protect\citeauthoryear{Jones}{Jones}{1993}]%
        {Jones93}
\bibfield{author}{\bibinfo{person}{Cliff~B. Jones}.}
  \bibinfo{year}{1993}\natexlab{}.
\newblock \showarticletitle{Constraining Inference in an Object-Based Design
  Model}. In \bibinfo{booktitle}{\emph{Proceedings of the International Joint
  Conference CAAP/FASE on Theory and Practice of Software Development}}
  \emph{(\bibinfo{series}{TAPSOFT '93})}. \bibinfo{publisher}{Springer-Verlag},
  \bibinfo{address}{Berlin, Heidelberg}, \bibinfo{pages}{136–150}.
\newblock
\showISBNx{3540566104}
\urldef\tempurl%
\url{https://doi.org/10.1007/3-540-56610-4_61}
\showDOI{\tempurl}


\bibitem[\protect\citeauthoryear{Jones}{Jones}{1995}]%
        {jones1995dictionary}
\bibfield{author}{\bibinfo{person}{Mark~P. Jones}.}
  \bibinfo{year}{1995}\natexlab{}.
\newblock \showarticletitle{Dictionary-Free Overloading by Partial Evaluation}.
\newblock \bibinfo{journal}{\emph{Lisp Symb. Comput.}} \bibinfo{volume}{8},
  \bibinfo{number}{3} (\bibinfo{date}{sep} \bibinfo{year}{1995}),
  \bibinfo{pages}{229–248}.
\newblock
\showISSN{0892-4635}
\urldef\tempurl%
\url{https://doi.org/10.1007/BF01019005}
\showDOI{\tempurl}


\bibitem[\protect\citeauthoryear{Kennedy and Syme}{Kennedy and Syme}{2001}]%
        {kennedy2001design}
\bibfield{author}{\bibinfo{person}{Andrew Kennedy} {and} \bibinfo{person}{Don
  Syme}.} \bibinfo{year}{2001}\natexlab{}.
\newblock \showarticletitle{Design and Implementation of Generics for the .NET
  Common Language Runtime}. In \bibinfo{booktitle}{\emph{Proceedings of the ACM
  SIGPLAN 2001 Conference on Programming Language Design and Implementation}}
  (Snowbird, Utah, USA) \emph{(\bibinfo{series}{PLDI '01})}.
  \bibinfo{publisher}{Association for Computing Machinery},
  \bibinfo{address}{New York, NY, USA}, \bibinfo{pages}{1–12}.
\newblock
\showISBNx{1581134142}
\urldef\tempurl%
\url{https://doi.org/10.1145/378795.378797}
\showDOI{\tempurl}


\bibitem[\protect\citeauthoryear{Kubernetes}{Kubernetes}{2021}]%
        {kubernetes}
\bibfield{author}{\bibinfo{person}{Kubernetes}.}
  \bibinfo{year}{2021}\natexlab{}.
\newblock \bibinfo{title}{{Production-Grade Container Orchestration}}.
\newblock \bibinfo{howpublished}{\url{https://kubernetes.io/}}.
\newblock
\newblock
\shownote{(Accessed on 04/15/2022).}


\bibitem[\protect\citeauthoryear{Leroy}{Leroy}{2009}]%
        {leroy2009formally}
\bibfield{author}{\bibinfo{person}{Xavier Leroy}.}
  \bibinfo{year}{2009}\natexlab{}.
\newblock \showarticletitle{A {Formally} {Verified} {Compiler} {Back}-end}.
\newblock \bibinfo{journal}{\emph{Journal of Automated Reasoning}}
  \bibinfo{volume}{43}, \bibinfo{number}{4} (\bibinfo{date}{Nov.}
  \bibinfo{year}{2009}), \bibinfo{pages}{363}.
\newblock
\showISSN{1573-0670}
\urldef\tempurl%
\url{https://doi.org/10.1007/s10817-009-9155-4}
\showDOI{\tempurl}


\bibitem[\protect\citeauthoryear{Merrick}{Merrick}{2020}]%
        {go-developer-survey}
\bibfield{author}{\bibinfo{person}{Alice Merrick}.}
  \bibinfo{year}{2020}\natexlab{}.
\newblock \bibinfo{title}{{Go Developer Survey 2020 Results}}.
\newblock
\newblock
\urldef\tempurl%
\url{https://blog.golang.org/survey2020-results}
\showURL{%
\tempurl}
\newblock
\shownote{(Accessed on 04/15/2022).}


\bibitem[\protect\citeauthoryear{Milner and Sangiorgi}{Milner and
  Sangiorgi}{1992}]%
        {MiSa92}
\bibfield{author}{\bibinfo{person}{Robin Milner} {and} \bibinfo{person}{Davide
  Sangiorgi}.} \bibinfo{year}{1992}\natexlab{}.
\newblock \showarticletitle{Barbed Bisimulation}. In
  \bibinfo{booktitle}{\emph{19th {ICALP}}} \emph{(\bibinfo{series}{LNCS})},
  \bibfield{editor}{\bibinfo{person}{W.~Kuich}} (Ed.),
  Vol.~\bibinfo{volume}{623}. \bibinfo{publisher}{Springer},
  \bibinfo{pages}{685--695}.
\newblock
\urldef\tempurl%
\url{https://doi.org/10.1007/3-540-55719-9_114}
\showDOI{\tempurl}


\bibitem[\protect\citeauthoryear{Odersky, Runne, and Wadler}{Odersky
  et~al\mbox{.}}{2000}]%
        {odersky2000two}
\bibfield{author}{\bibinfo{person}{Martin Odersky}, \bibinfo{person}{Enno
  Runne}, {and} \bibinfo{person}{Philip Wadler}.}
  \bibinfo{year}{2000}\natexlab{}.
\newblock \showarticletitle{Two ways to bake your pizza—translating
  parameterised types into Java}.
\newblock In \bibinfo{booktitle}{\emph{Generic Programming}}.
  \bibinfo{publisher}{Springer}, \bibinfo{pages}{114--132}.
\newblock
\urldef\tempurl%
\url{https://doi.org/10.1007/3-540-39953-4_10}
\showDOI{\tempurl}


\bibitem[\protect\citeauthoryear{Odersky and Wadler}{Odersky and
  Wadler}{1997}]%
        {OW97}
\bibfield{author}{\bibinfo{person}{Martin Odersky} {and}
  \bibinfo{person}{Philip Wadler}.} \bibinfo{year}{1997}\natexlab{}.
\newblock \showarticletitle{{Pizza into Java: Translating Theory into
  Practice}}. In \bibinfo{booktitle}{\emph{POPL '97}} (Paris, France).
  \bibinfo{publisher}{ACM Press}, \bibinfo{pages}{146--159}.
\newblock
\showISBNx{0-89791-853-3}
\urldef\tempurl%
\url{https://doi.org/10.1145/263699.263715}
\showDOI{\tempurl}


\bibitem[\protect\citeauthoryear{Randall}{Randall}{2020a}]%
        {go-dict-proposal}
\bibfield{author}{\bibinfo{person}{Keith Randall}.}
  \bibinfo{year}{2020}\natexlab{a}.
\newblock \bibinfo{title}{Generics implementation - Dictionaries}.
\newblock
  \bibinfo{howpublished}{\url{https://go.googlesource.com/proposal/+/refs/heads/master/design/generics-implementation-dictionaries.md}}.
\newblock
\newblock
\shownote{(Accessed on 04/15/2022).}


\bibitem[\protect\citeauthoryear{Randall}{Randall}{2020b}]%
        {google-hybrid}
\bibfield{author}{\bibinfo{person}{Keith Randall}.}
  \bibinfo{year}{2020}\natexlab{b}.
\newblock \bibinfo{title}{{Generics implementation - GC Shape Stenciling}}.
\newblock
\newblock
\urldef\tempurl%
\url{https://go.googlesource.com/proposal/+/refs/heads/master/design/generics-implementation-gcshape.md}
\showURL{%
\tempurl}
\newblock
\shownote{(Accessed on 04/15/2022).}


\bibitem[\protect\citeauthoryear{Randall}{Randall}{2020c}]%
        {google-mono}
\bibfield{author}{\bibinfo{person}{Keith Randall}.}
  \bibinfo{year}{2020}\natexlab{c}.
\newblock \bibinfo{title}{{Generics implementation - Stenciling}}.
\newblock
\newblock
\urldef\tempurl%
\url{https://go.googlesource.com/proposal/+/refs/heads/master/design/generics-implementation-stenciling.md}
\showURL{%
\tempurl}
\newblock
\shownote{(Accessed on 04/15/2022).}


\bibitem[\protect\citeauthoryear{Randall}{Randall}{2022}]%
        {go118}
\bibfield{author}{\bibinfo{person}{Keith Randall}.}
  \bibinfo{year}{2022}\natexlab{}.
\newblock \bibinfo{title}{Go 1.18 Implementation of Generics via Dictionaries
  and Gcshape Stenciling}.
\newblock
\newblock
\urldef\tempurl%
\url{https://github.com/golang/proposal/blob/e9af402b19db4352e7831b33a3f47719e86a5267/design/generics-implementation-dictionaries-go1.18.md}
\showURL{%
\tempurl}
\newblock
\shownote{(Accessed on 04/15/2022).}


\bibitem[\protect\citeauthoryear{{Russ Cox}}{{Russ Cox}}{2009}]%
        {go-interface-slow-1}
\bibfield{author}{\bibinfo{person}{{Russ Cox}}.}
  \bibinfo{year}{2009}\natexlab{}.
\newblock \bibinfo{title}{Go Data Structures: Interfaces}.
\newblock \bibinfo{howpublished}{\url{https://research.swtch.com/interfaces}}.
\newblock


\bibitem[\protect\citeauthoryear{Salib}{Salib}{2004}]%
        {salib2004starkiller}
\bibfield{author}{\bibinfo{person}{Michael Salib}.}
  \bibinfo{year}{2004}\natexlab{}.
\newblock \emph{\bibinfo{title}{Starkiller: A static type inferencer and
  compiler for Python}}.
\newblock \bibinfo{thesistype}{Ph.D. Dissertation}.
  \bibinfo{school}{Massachusetts Institute of Technology}.
\newblock


\bibitem[\protect\citeauthoryear{Solorzano and Alagi\'{c}}{Solorzano and
  Alagi\'{c}}{1998}]%
        {solorzano98reflection}
\bibfield{author}{\bibinfo{person}{Jose~H. Solorzano} {and}
  \bibinfo{person}{Suad Alagi\'{c}}.} \bibinfo{year}{1998}\natexlab{}.
\newblock \showarticletitle{Parametric Polymorphism for Java: A Reflective
  Solution}. In \bibinfo{booktitle}{\emph{Proceedings of the 13th ACM SIGPLAN
  Conference on Object-Oriented Programming, Systems, Languages, and
  Applications}} (Vancouver, British Columbia, Canada)
  \emph{(\bibinfo{series}{OOPSLA '98})}. \bibinfo{publisher}{Association for
  Computing Machinery}, \bibinfo{address}{New York, NY, USA},
  \bibinfo{pages}{216–225}.
\newblock
\showISBNx{1581130058}
\urldef\tempurl%
\url{https://doi.org/10.1145/286936.286959}
\showDOI{\tempurl}


\bibitem[\protect\citeauthoryear{Spectrum}{Spectrum}{2022}]%
        {TopProgr17:online}
\bibfield{author}{\bibinfo{person}{IEEE Spectrum}.}
  \bibinfo{year}{2022}\natexlab{}.
\newblock \bibinfo{title}{Top Programming Languages - IEEE Spectrum}.
\newblock
  \bibinfo{howpublished}{\url{https://spectrum.ieee.org/top-programming-languages/}}.
\newblock
\newblock
\shownote{(Accessed on 04/15/2022).}


\bibitem[\protect\citeauthoryear{{Stack Overflow}}{{Stack Overflow}}{2021}]%
        {stackoverflow-developer-survey}
\bibfield{author}{\bibinfo{person}{{Stack Overflow}}.}
  \bibinfo{year}{2021}\natexlab{}.
\newblock \bibinfo{title}{{2021 Developer Survey}}.
\newblock
\newblock
\urldef\tempurl%
\url{https://insights.stackoverflow.com/survey/2021}
\showURL{%
\tempurl}
\newblock
\shownote{(Accessed on 04/15/2022).}


\bibitem[\protect\citeauthoryear{Stroustrup}{Stroustrup}{1997}]%
        {StroustrupB:cpppl}
\bibfield{author}{\bibinfo{person}{Bjarne Stroustrup}.}
  \bibinfo{year}{1997}\natexlab{}.
\newblock \bibinfo{booktitle}{\emph{The C++ Programming Language}
  (\bibinfo{edition}{3$^{\text{rd}}$} ed.)}.
\newblock \bibinfo{publisher}{Addison Wesley}.
\newblock


\bibitem[\protect\citeauthoryear{Sulzmann and Wehr}{Sulzmann and Wehr}{2021}]%
        {sulzmann2021dictionary}
\bibfield{author}{\bibinfo{person}{Martin Sulzmann} {and}
  \bibinfo{person}{Stefan Wehr}.} \bibinfo{year}{2021}\natexlab{}.
\newblock \showarticletitle{A Dictionary-Passing Translation of Featherweight
  Go}. In \bibinfo{booktitle}{\emph{Programming Languages and Systems: 19th
  Asian Symposium, APLAS 2021, Chicago, IL, USA, October 17–18, 2021,
  Proceedings}} (Chicago, IL, USA). \bibinfo{publisher}{Springer-Verlag},
  \bibinfo{address}{Berlin, Heidelberg}, \bibinfo{pages}{102–120}.
\newblock
\showISBNx{978-3-030-89050-6}
\urldef\tempurl%
\url{https://doi.org/10.1007/978-3-030-89051-3_7}
\showDOI{\tempurl}


\bibitem[\protect\citeauthoryear{Taylor and Griesemer}{Taylor and
  Griesemer}{2021}]%
        {type-parameters-proposal}
\bibfield{author}{\bibinfo{person}{Ian~Lance Taylor} {and}
  \bibinfo{person}{Robert Griesemer}.} \bibinfo{year}{March 19,
  2021}\natexlab{}.
\newblock \bibinfo{title}{Type Parameters Proposal}.
\newblock
  \bibinfo{howpublished}{\url{https://go.googlesource.com/proposal/+/refs/heads/master/design/43651-type-parameters.md}}.
\newblock


\bibitem[\protect\citeauthoryear{{The Go Team}}{{The Go Team}}{2021a}]%
        {go-function-types}
\bibfield{author}{\bibinfo{person}{{The Go Team}}.}
  \bibinfo{year}{2021}\natexlab{a}.
\newblock \bibinfo{title}{The Go Programming Language Specification - The Go
  Programming Language}.
\newblock
\newblock
\urldef\tempurl%
\url{https://https://golang.org/ref/spec\#Function_types}
\showURL{%
\tempurl}
\newblock
\shownote{(Accessed on 04/15/2022).}


\bibitem[\protect\citeauthoryear{{The Go Team}}{{The Go Team}}{2021b}]%
        {gotogo}
\bibfield{author}{\bibinfo{person}{{The Go Team}}.}
  \bibinfo{year}{2021}\natexlab{b}.
\newblock \bibinfo{title}{The go2go playground}.
\newblock \bibinfo{howpublished}{\url{https://go2goplay.golang.org/}}.
\newblock
\newblock
\shownote{(Accessed on 04/15/2022).}


\bibitem[\protect\citeauthoryear{{The Go Team}}{{The Go Team}}{2022}]%
        {go-release-notes}
\bibfield{author}{\bibinfo{person}{{The Go Team}}.}
  \bibinfo{year}{2022}\natexlab{}.
\newblock \bibinfo{title}{{Go 1.18 Release Notes}}.
\newblock
\newblock
\urldef\tempurl%
\url{https://tip.golang.org/doc/go1.18}
\showURL{%
\tempurl}
\newblock
\shownote{(Accessed on 04/15/2022).}


\bibitem[\protect\citeauthoryear{Ureche, Talau, and Odersky}{Ureche
  et~al\mbox{.}}{2013}]%
        {ureche2013miniboxing}
\bibfield{author}{\bibinfo{person}{Vlad Ureche}, \bibinfo{person}{Cristian
  Talau}, {and} \bibinfo{person}{Martin Odersky}.}
  \bibinfo{year}{2013}\natexlab{}.
\newblock \showarticletitle{Miniboxing: improving the speed to code size
  tradeoff in parametric polymorphism translations}. In
  \bibinfo{booktitle}{\emph{Proceedings of the 2013 ACM SIGPLAN international
  conference on Object oriented programming systems languages \&
  applications}}. \bibinfo{pages}{73--92}.
\newblock
\urldef\tempurl%
\url{https://doi.org/10.1145/2509136.2509537}
\showDOI{\tempurl}


\bibitem[\protect\citeauthoryear{Viroli and Natali}{Viroli and Natali}{2000}]%
        {Viroli00reflectiveGJ}
\bibfield{author}{\bibinfo{person}{Mirko Viroli} {and} \bibinfo{person}{Antonio
  Natali}.} \bibinfo{year}{2000}\natexlab{}.
\newblock \showarticletitle{Parametric Polymorphism in Java: An Approach to
  Translation Based on Reflective Features}. In
  \bibinfo{booktitle}{\emph{Proceedings of the 15th ACM SIGPLAN Conference on
  Object-Oriented Programming, Systems, Languages, and Applications}}
  (Minneapolis, Minnesota, USA) \emph{(\bibinfo{series}{OOPSLA '00})}.
  \bibinfo{publisher}{Association for Computing Machinery},
  \bibinfo{address}{New York, NY, USA}, \bibinfo{pages}{146–165}.
\newblock
\showISBNx{158113200X}
\urldef\tempurl%
\url{https://doi.org/10.1145/353171.353182}
\showDOI{\tempurl}


\bibitem[\protect\citeauthoryear{Yu}{Yu}{2004}]%
        {yu2004thesis}
\bibfield{author}{\bibinfo{person}{Dachuan Yu}.}
  \bibinfo{year}{2004}\natexlab{}.
\newblock \emph{\bibinfo{title}{Safety Verification of Low-Level Code}}.
\newblock \bibinfo{thesistype}{Ph.D. Dissertation}.
\newblock


\bibitem[\protect\citeauthoryear{Yu, Kennedy, and Syme}{Yu
  et~al\mbox{.}}{2004}]%
        {yu2004formalization}
\bibfield{author}{\bibinfo{person}{Dachuan Yu}, \bibinfo{person}{Andrew
  Kennedy}, {and} \bibinfo{person}{Don Syme}.} \bibinfo{year}{2004}\natexlab{}.
\newblock \showarticletitle{Formalization of Generics for the .NET Common
  Language Runtime}. In \bibinfo{booktitle}{\emph{Proceedings of the 31st ACM
  SIGPLAN-SIGACT Symposium on Principles of Programming Languages}} (Venice,
  Italy) \emph{(\bibinfo{series}{POPL '04})}. \bibinfo{publisher}{Association
  for Computing Machinery}, \bibinfo{address}{New York, NY, USA},
  \bibinfo{pages}{39–51}.
\newblock
\showISBNx{158113729X}
\urldef\tempurl%
\url{https://doi.org/10.1145/964001.964005}
\showDOI{\tempurl}


\end{thebibliography}

% Appendix
\ifnotsplit{
  \appendix
\onecolumn
\section{Appendix: Generic Implementations of Top 16 Statically Typed Generic Programming Languages}
\label{app:implementations}

\begin{table}[!htp]\centering
\scriptsize
\begin{tabular}{lllll}\toprule
Programming Language &Mainstream Implementation &Memory Management &Runtime Environment \\\cmidrule{1-4}
Java &Erasure &Garbage Collection &JVM \\
Kotlin &Erasure &Garbage Collection &JVM \\
Scala &Erasure &Garbage Collection &JVM \\
C\# &Just-In-Time Specialisation + Non-Specialised Dictionary &Garbage Collection &.NET CLR \\
Visual Basic &Just-In-Time Specialisation + Non-Specialised Dictionary &Garbage Collection &.NET CLR \\
Dart &Erasure &Garbage Collection &Virtual Machine \\
Swift &Non-Specialised Dictionary/Monomorphisation* &Reference Counting &Native \\
Objective-C &Non-Specialised Dictionary/Monomorphisation* &Reference Counting &Native \\
Haskell &Non-Specialised Dictionary &Garbage Collection &Native \\
Go & Monomorphisation + Specialised Dictionary &Garbage Collection &Native \\
D & Monomorphisation &Garbage Collection &Native \\
C++ & Monomorphisation &Manual &Native \\
Rust & Monomorphisation &Manual &Native \\
Delphi & Monomorphisation &Manual &Native \\
Ada & Monomorphisation &Manual &Native \\
Fortran & Monomorphisation &Manual &Native \\
\bottomrule
\end{tabular}
\caption{Generic implementations of top 16 statically typed programming languages with generics. Languages are selected from the top 40 languages by IEEE Spectrum in 2021~\cite{TopProgr17:online}.
(*when source code available or specified by users.)
}\label{tab:app-pl-table }
\end{table}
\section{Appendix: \glsentrylong{fg}}
\label{appendix:fg}
\label{app:fg}
For the reviewer's convenience, this section provides 
more explanations of the syntax and the full definitions of the typing
system from those in \cite{griesemer2020featherweight}. 
 
\subsection{\glsentrylong{fg} Syntax}
\label{app:fg:syntax}
We explain 
the syntax of \gls{fg} in 
Figure~\ref{fig:fg:syntax}.
The meta variables for field ($f$), method ($m$), variable
($x$), structure type names ($t_S, u_S$), and interface type names
($t_I, u_I$) range over their respective namespaces. Types ($t, u$)
range over both structures and interfaces.
A program ($\prog$) is given by a sequence of declarations ($\multi{D}$)
along with a {\bf main} function which acts as the top-level expression. 
We often shorten this as $\prog = \program{e}$.

Expressions in \gls{fg} are  
variables ($x$), method calls ($e.m(\multi{e})$), structure literals
($t_S\{\multi{e}\}$), field selection ($e.f$), and type assertion
($e.(t)$). 

Declarations ($D$) can take three forms; 
\emph{structure}, \emph{interface}, or \emph{method declaration}. The
structure declaration ($\struct{\multi{f\ t}}$) gives a sequence of
typed fields whereas the interface declaration ($\interface{\multi{S}}$)
gives the method specifications which instances of that interface
should implement. A method specification ($m(\multi{x~t})~t$)
prescribes the name and type for implementing methods. 

A method declaration ($\func (x\ t_S)\ m(\multi{x\ t})\ t_r\ \{b\}$)
defines a method $m$ on the structure $t_S$. This method accepts the
arguments $\multi{x\ t}$, which along with the receiver $x$ are passed
to the method body $b$. On a successful computation this method will
return a result of type $t_r$. The special $\lit{main}$ function acts
as the entrance function, and thus has no receiver, arguments or
return value. 

\subsection{\glsentrylong{fg} Typing}
\label{app:fg:types}

\begin{figure}
\small
    Implements and well-formed types
        \hfill
        \fbox{$t <: u \vphantom{\ok}$} \quad \fbox{$t\ok$}
                {\footnotesize
                    \begin{equation*}
                        \begin{gathered}
                           \namedRule{\subS}{
                                \axiom{\sType{t} <: \sType{t}}
                            }
                            \namedRule{\subI}{
                                \infer{
                                    t <: \iType{t}
                                }{
                                    \methods(t) \supseteq \methods(\iType{t})
                                }   
                            }
                            \namedRule{\tNamed}{
                                \infer{
                                    t \ok
                                }{
                                    t \in \tdecls(\multi{D})
                            }}
                        \end{gathered}
                    \end{equation*}
                }
           Well-formed method specifications and type literals
                \hfill
                \fbox{$S \ok$} \  \fbox{$T \ok$}
            {\footnotesize 
                \begin{equation*}
                    \begin{gathered}
                        \namedRule{\tspecification}{
                            \infer{
                                m(\multi{x~t})~t \ok
                            }{
                                \distinct(\multi{x})
                                & \multi{t \ok}
                                & t\ok
                            }
                        }
                        \ 
                        \namedRule{\tstruct}{
                            \infer{
                               \struct{\multi{f~t}}\ok
                            }{  
                                \distinct(\multi{f})
                                & \multi{t\ok}
                            }
                        }
                        \
                        \namedRule{\tinterface}{
                            \infer{
                                \interface{\multi{S}}\ok
                            }{
                                \unique(\multi{S})
                                & \multi{S\ok}
                            }
                            }                        
                    \end{gathered}
                    \end{equation*}
                }
            Well formed declarations
                \hfill
                \fbox{$D \ok$} 
            {\footnotesize 
                \begin{equation*}
                    \begin{gathered}
                        \namedRule{\ttype}{
                            \infer{
                                \type~t~T \ok
                            }{
                                T \ok
                            }
                        }
                        \
                        \namedRuleTwo{\tfunc}{
                            \sType{t} \ok \quad \multi{t \ok} \quad u \ok
                        }{
                            \infer{
                                \funcDelc{\sType{t}}{m}{\multi{x~t}}{u}{\return e} \ok
                            }{
                                \wellTyped[x : \sType{t}, \multi{x : t}]{e}{u}
                                & \distinct(x, \multi{x})
                            }
                        }
                    \end{gathered}
                \end{equation*}
            }
            Expressions
%             Programs 
            \hfill
            \fbox{$\wellTyped{e}{t}$}
            {\footnotesize
                \begin{equation*}
                    \begin{gathered}
                        \namedRule{\tvar}
                            {\infer{
                                \wellTyped{x}{t}
                            }{
                                (x:t)\in\Gamma
                            }
                        }
                        \namedRuleTwo{\tliteral}{\sType{t}\ok}{
                            \infer{
                                \wellTyped{\sTypeInit{t}{\multi{e}}}{\sType{t}}
                            } {
                                & \wellTypedMulti{e}{t}
                                & (\multi{f~u}) = fields(\sType{t})
                                & \multi{t <: u}
                            }
                        }\ 
                        \namedRule{\tcall}{
                            \infer{
                                \wellTyped{e.m(\multi{e})}{u}
                            } {
                                \wellTyped{e}{t}
                            & \wellTypedMulti{e}{t}
                            & (m(\multi{x~u})~u) \in \methods(t)
                            & \multi{t <: u}
                            }
                        }\\ 
                        \namedRule{\tfield}{
                            \infer {
                                \wellTyped{e.f_i}{u_i}
                            } {
                                \wellTyped{e}{\sType{t}}
                            & (\multi{f~u}) =\fields(\sType{t})
                            }
                        }\ 
                        \namedRule{\tasserts}{
                            \infer {
                                \wellTyped{e.(\sType{t})}{\sType{t}}
                            } {
                                \sType{t} \ok
                                & \wellTyped{e}{\iType{u}}
                                & \sType{t} <: \iType{u}
                            }
                        } \ 
                        \namedRule{\tasserti}{
                            \infer {
                                \wellTyped{e.(\iType{t})}{\iType{t}}
                            } {
                                \iType{t} \ok
                                & \wellTyped{e}{\iType{u}}
                            }
                        }\ 
                       \fbox{$
                            \namedRule{\tstupid}{
                                \infer {
                                    \wellTyped{e.(t)}{t}
                                } {
                                    t \ok
                                & \wellTyped{e}{\sType{u}}
                                }
                           }$
                        }
                    \end{gathered}
                \end{equation*}
            }
            Programs 
            \hfill
            \fbox{$\prog\ok$}
            {\footnotesize
            \begin{equation*}
                \begin{gathered}
                        \namedRule{\tprog}{
                            \infer {
                                \textbf{ package main};\ \multi{D}\ \textbf{func main}() \sytxBrace{\_ = e}\ok
                            } {
                                \distinct(\tdecls(\multi{D}))
                            & \distinct(\mdecls(\multi{D}))
                            & \multi{D\ok}
                            & \wellTyped[\emptyset]{e}{t}
                            }
                        }
                \end{gathered}
            \end{equation*}
            }
\rule{\columnwidth}{0.5mm}
{\footnotesize
    \begin{equation*}
            \begin{gathered}    
                \methods(\sType{t}) = \{ mM \mid (\func~(x~\sType{t})~mM~\{\return e\}) \in \multi{D}\}\\
\quad                 \infer{
                    \methods(\iType{t}) = \multi{S}
                } {
                    \type~\iType{t}~\interface{\multi{S}} \in \multi{D}
                } \quad 
                \infer{
                    \unique(\multi{S})
                } {
                    mM_1, mM_2 \in \multi{S}~\text{implies}~M_1 = M_2
                }\quad
                \infer{
                    \fields (t_S) = \multi{f~t}
                } {
                    (\type t_S \struct{\multi{f~t}}) \in \multi{D}
                }\\
                \tdecls(\multi{D}) = [t \mid (\type~t~T) \in
                \multi{D}] \quad  
                \mdecls(\multi{D}) = [t_S.m \mid (\func~(x~t_S)~mM~\{\return e\}) \in \multi{D}]
            \end{gathered}
        \end{equation*}
}
        \caption{Typing and auxiliary function.} 
        \label{fig:fg:types}
%    \end{framed}
\end{figure}

For the reviewer's convinience,
we reproduce the typing system with the minimum explainations.  
Figure~\ref{fig:fg:types} gives the \gls{fg} typing rules
and auxiliary functions.
Environment $\Gamma$ is a sequence
of typed variable names ($\multi{x : t}$). 
We assume 
all variables in $\Gamma$ are distinct, and 
write $\Gamma,x : t$ if $x\not\in \dom{\Gamma}$.

\myparagraph{Implements} 
    The \emph{implements relation} ($t <: u$) holds if type $t$ is a subtype of type $u$,
    understood as a relationship in which a variable of type $u$ can be
    substituted by any variable of type $t$. 
    A structure can only only be
    implemented by itself (\rulename{<:s}). An interface $t_I$ can be
    implemented by any type that possesses at least the same methods as
    $t_I$ (\rulename{<:i}).

\myparagraph{Well-formedness} 
A well-formed term is one that is not only syntactically correct, but
one that also has semantic meaning. 
$x \ok$ holds if $x$ is well
formed according to the typing rules, with the extension
$\wellFormed{x}$ if the term is well-formed in the environment
$\Gamma$.
A type declaration is
well-formed when the type it declares is well-formed (\rulename{t-type})
which happens when it is either a structure with distinct and well
formed fields (\rulename{t-struct}), or an interface with unique and
well-formed method specifications (\rulename{t-interface}). Method
specifications are well-formed when all argument types and the return
type are well-formed (\rulename{t-specification}).

\myparagraph{Method body and statement type checking}
The typing judgement $\wellTyped{x}{t}$ holds if the term $x$ has type $t$
in the environment $\Gamma$.  A method
($\textbf{func}~(x~t_S)~m(\multi{x~t})~u~\{b\}$) is well-formed if the
type of the receiver, the return, and all arguments are well-formed,
with all names being distinct from one
another. 

    A structure literal ($\sType{t}\{\multi{e}\}$) is well-typed when each 
    field instantiation ($\multi{e}$) subtypes the field's 
    declared type (\rulename{t-literal}).
Field 
assignment and access follow the order of declaration. 

\myparagraph{Expression type checking}
Given an expression $e$ of type $u$ the type assertion $e.(t)$ casts the expression to 
type $t$. There are three non-overlapping type assertion rules. 
The Go specification only permits type assertions from an interface type, which 
informs rules \rulename{t-assert$_I$}. 
An assertion between two interface types (\rulename{t-assert$_I$}) does not statically 
check the assertion since the expression $e$ could evaluate to a term that 
implements the target type $t$.
Assertion from an interface type~$u$ to a non-interface type~$t$
is allowed only if $t$~implements~$u$ (\rulename{t-assert$_S$}). 

Not part of the Go specification and not need for compile time
checking, 
the rule \rulename{tr-stupid} 
is only used for the type assertion $e.(t)$ where $e$ has evaluated to a concrete non-interface type.  
This assertion provides no utility at compile time as an assertion from a non-interface type is either a no-op
or it unnecessarily erases type information -- yet without this rule a term may become ill-typed during evaluation. 

More detailed explanations can be found in
\cite[\S~3.3]{griesemer2020featherweight}. 

\begin{theorem}[Preservation] 
\label{fcgSubjtCong}
\label{thm:fgTypePreservation}
{\rm (Theorem 3.3 in \cite[\S~3.3]{griesemer2020featherweight})} 
If $\wellTyped[\emptyset]{e}{u}$ and $\reduction{e}{e'}$ 
then $\wellTyped[\emptyset]{e'}{t}$ for some $t <: u$.
\end{theorem}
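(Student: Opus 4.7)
The plan is to proceed by induction on the derivation of $\reduction{e}{e'}$ with case analysis on the four reduction rules \rulename{\rfields}, \rulename{\rcall}, \rulename{\rassert}, and \rulename{\rcontext}. Two auxiliary lemmas will do most of the work: a \emph{substitution lemma} stating that if $\wellTyped[\Gamma, \multi{x:t}]{e}{u}$ and $\wellTyped[\Gamma]{\multi{v}}{t'}$ with $\multi{t' <: t}$, then $\wellTyped[\Gamma]{e[\multi{x \by v}]}{u'}$ for some $u' <: u$; and a \emph{context preservation lemma} stating that replacing a subterm by one of a smaller type inside an evaluation context $E$ yields a term whose type can only shrink (with respect to $<:$).

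For the base cases, \rulename{\rfields} follows immediately from inversion: applying \rulename{\tfield} to $\sType{t}\{\multi{v}\}.f_i$ and then \rulename{\tliteral} to the struct literal gives $\wellTyped[\emptyset]{v_i}{t_i}$ with $t_i <: u_i$, which is precisely what we need. The \rulename{\rassert} case is also direct: by inversion on \rulename{\tasserts}, \rulename{\tasserti}, or \rulename{\tstupid}, the assertion expression has type $t$, while the reduction condition $\vtype(v) <: t$ together with the fact that values are always typed by $\vtype$ gives us a witness subtype. The \rulename{\rcall} case uses inversion on \rulename{\tcall} to obtain the typings of receiver and arguments together with the relevant method specification; the well-formedness of the method declaration (\rulename{\tfunc}) supplies a typing of the body in the appropriate extended environment, and the substitution lemma delivers the final typing up to subtyping.

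The inductive case \rulename{\rcontext} is handled via the context preservation lemma, proved by induction on $E$. Each form of evaluation context (hole, struct arguments, receiver/arguments of a method call, field selection, type assertion) corresponds to a typing rule that places some subtyping requirement on its sub-expression, so refining the type of the hole to a subtype keeps the whole term well-typed, potentially at a subtype of the original.

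The main obstacle will be the substitution lemma, because subtyping interacts non-trivially with method resolution: in \rulename{\tcall}, the typing of $e.m(\multi{e})$ depends on $\methods(t)$ of the receiver's type $t$, and when we substitute a value whose dynamic type is a proper subtype, the method set grows and the declared return type may be refined. The argument therefore needs an induction over the structure of $e$ that tracks how each typing rule's premises respond to subtype refinement of its sub-derivations; the field-select case \rulename{\tfield} requires particular care because it demands a concrete structure type at the receiver, so we must invoke the fact that $<:s$ forces equality of structure types. Once this lemma is in place, assembling the four reduction cases is routine.
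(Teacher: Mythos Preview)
The paper does not supply its own proof of this theorem: it is stated in the appendix purely as a citation of Theorem~3.3 from \cite{griesemer2020featherweight}, with no accompanying argument. Consequently there is nothing in the present paper to compare your proposal against.

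That said, your outline is the standard subject-reduction argument for Featherweight-style calculi and is essentially what the cited paper does: induction on the reduction derivation, with a substitution lemma for \rulename{\rcall} and a replacement/inversion argument for \rulename{\rcontext}. Your identification of the delicate points---that substituting a value of a proper subtype can refine the receiver type in \rulename{\tcall}, and that \rulename{\tfield} forces structure-type equality via \rulename{\subS}---is accurate. The proposal is sound.
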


\begin{theorem}[Progress] 
\label{thm:fgProgress}
{\rm (Theorem 3.4 in \cite[\S~3.3]{griesemer2020featherweight})}
If $\wellTyped[\emptyset]{e}{u}$ then  
$e$ is either a value, $\reduction{e}{e'}$ for some $e'$ or 
$e$ panics. 
\end{theorem}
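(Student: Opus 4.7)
The plan is to prove Progress by structural induction on the typing derivation $\wellTyped[\emptyset]{e}{u}$. Since the environment is empty, the rule \rulename{\tvar} is vacuous. A short preliminary canonical forms lemma, derivable directly from the value grammar and the only value-introducing typing rule \rulename{\tliteral}, shows that any closed well-typed value has the shape $v = \sTypeInit{s}{\multi{v}}$ for some struct type $\sType{s}$ with $\sType{s} <: u$. This is the only inversion result I need, because field selection, method dispatch, and type assertion each eventually touch a value whose head is a struct literal.

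For each remaining typing rule I would first push every non-value subexpression to a value using the induction hypothesis together with \rulename{\rcontext}, propagating any panic through the surrounding evaluation context. In the \rulename{\tliteral} case on $\sTypeInit{t}{\multi{e}}$, either all $e_i$ are already values and $\sTypeInit{t}{\multi{e}}$ is itself a value, or the leftmost non-value sub-expression reduces or panics by the IH and lifts through the evaluation context $\sTypeInit{t}{\multi{v}, E, \multi{e}}$. For \rulename{\tfield} and \rulename{\tcall}, once all subterms have been driven to values the canonical forms lemma turns the receiver into a struct literal; \rulename{\rfields} applies directly, and for \rulename{\rcall} the method-lookup premise of \rulename{\tcall} combined with the subtype relation $\sType{s} <: t$ delivered by canonical forms ensures that the method body is defined on $\vtype(v)$, so the side condition of \rulename{\rcall} is met and a head step exists.

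The main obstacle will be the three type assertion rules \rulename{\tasserts}, \rulename{\tasserti}, and \rulename{\tstupid}. Once the inner subexpression has been reduced to a value $v = \sTypeInit{s}{\multi{v}}$, a runtime check is unavoidable: if $\vtype(v) <: t$ then \rulename{\rassert} fires, and otherwise $v.(t)$ is itself a type assertion error in the sense of Definition~\ref{def:panics}, discharging the third disjunct of the conclusion. The subtlety here is that none of the three typing rules statically guarantees that a cast will succeed---\rulename{\tasserti} relates two interfaces, and \rulename{\tstupid} is introduced precisely so that intermediate states remain well-typed after partial reduction---so the ``panics'' disjunct is essential rather than an edge case. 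Finally, \rulename{\rcontext} uniformly lifts both head reductions and panics through arbitrary evaluation contexts, which closes the induction.
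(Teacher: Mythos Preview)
Your proposal is a correct, standard progress proof by induction on the typing derivation, with the usual canonical-forms step and case analysis on type assertions for the panic disjunct. Note, however, that this paper does not actually prove the theorem: it is merely restated from \cite[Theorem~3.4]{griesemer2020featherweight} as background, so there is no in-paper proof to compare against; your argument is essentially the one given in the original Featherweight Go paper.
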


\section{Appendix: \glsentrylong{fgg}}
\label{app:fgg}
\label{appendix:fgg}

{
\begin{figure}
    \footnotesize
\onecolumn
    Implements and well-formed types
    \hfill
    \fbox{$\subtype{\tau}{\sigma} \vphantom{\ok}$} \quad \fbox{$\Phi<:\Psi$}
    {\footnotesize
        \begin{equation*}
            \begin{gathered}
                \namedRule{\subParam}{
                    \axiom{\subtype{\alpha}{\alpha}}
                }
                \namedRule{\subS}{
                    \axiom{\subtype{\sType{\tau}}{\sType{\tau}}}
                }
                \namedRule{\subI}{
                    \infer{
                        \subtype{\tau}{\iType{\sigma}}
                    }{
                        \methods[\Delta](\tau) \supseteq \methods[\Delta](\iType{\sigma})
                    }
                }
                \namedRule{\subFormal}{
                    \infer{
                        (\typeFormal) <: (\typeFormal[\multi{\beta~\iType{\sigma}}])
                    } {
                        \subtypeMulti[]{\iType{\tau}}{\iType{\sigma}}
                    }
                }
            \end{gathered}
        \end{equation*}
    }
    Well-formed types and actuals
    \hfill
    \fbox{$\wellFormed[\Delta]{\tau}$} \quad
    \fbox{$\wellFormed[\Delta]{\phi}$}
    {\footnotesize
        \begin{equation*}
            \begin{gathered}
                \namedRule{\tParam}{
                    \infer{
                        \wellFormed[\Delta]{\alpha}
                    }{
                        (\alpha : \iType{\tau}) \in \Delta
                    }
                }
                \namedRule{\tNamed}{
                \infer{
                \wellFormed[\Delta]{t[\phi]}
                }{
                \wellFormed[\Delta]{\phi}
                & \type~t[\Phi]~T \in \multi{D}
                & \eta = (\Phi \by_\Delta \phi)
                }}
                \namedRule{\tActual}{
                    \infer{
                        \wellFormed[\Delta]{\multi{\tau}}
                    }{
                        \wellFormedMulti[\Delta]{\tau}
                    }
                }
            \end{gathered}
        \end{equation*}
    }
    Well-formed types formal
    \hfill
    \fbox{$\wellFormed[\Phi]{\Psi}$} \quad
    \fbox{$\Phi; \Psi \ok \Delta$}
    {\footnotesize
        \begin{equation*}
            \begin{gathered}
                \namedRule{\tformal}{
                    \infer{
                        \wellFormed[{\typeFormal[\multi{\beta~\iType{\sigma}}]}]{\typeFormal}
                    }{
                        \distinct(\multi{\beta}, \multi{\alpha})
                        & \wellFormedMulti[{\typeFormal[\multi{\alpha~\iType{\tau}},\multi{\beta~\iType{\sigma}}]}]{\iType{\tau}}
                    }
                }
                \namedRule{\tnested}{
                    \infer{
                        \Phi; \Psi \ok \Delta
                    } {
                        \wellFormed[\emptyset]{\Phi}
                        & \wellFormed[\Phi]{\Psi}
                        & \Delta = \Phi, \Psi
                    }
                }
            \end{gathered}
        \end{equation*}
    }
    Well-formed method specifications,  type literals, and declarations
    \hfill
    \fbox{$\wellFormed[\Phi]{S}$} \  \fbox{$\wellFormed[\Phi]{T}$}\ 
    \fbox{$D \ok$}
    {\footnotesize
        \begin{equation*}
            \begin{gathered}
                \namedRuleTwo{\tspecification}{
                    \Phi;\Psi\ok\Delta
                }{
                    \infer{
                        \wellFormed[\Phi]{m[\Psi](\multi{x~\tau})~\tau}
                    }{
                        \distinct(\multi{x})
                        & \wellFormedMulti[\Delta]{\tau}
                        & \wellFormed[\Delta]{\tau}
                    }
                }
                \
                \namedRule{\tstruct}{
                    \infer{
                        \wellFormed[\Phi]{\struct{\multi{f~\tau}}}
                    }{
                        \distinct(\multi{f})
                        & \wellFormedMulti[\Phi]{\tau}
                    }
                }
                \
                \namedRule{\tinterface}{
                    \infer{
                        \wellFormed[\Phi]{\interface{\multi{S}}}
                    }{
                        \unique(\multi{S})
                        & \wellFormedMulti[\Phi]{S}
                    }
                }\\
                \namedRule{\ttype}{
                \infer{
                \type~t[\Phi]~T \ok
                }{
                \wellFormed[\emptyset]{\Phi}
                &\wellFormed[\Phi]{T}
                }
                }
                \namedRuleTwo{t-func}{
                    \distinct(x, \multi{x})
                    \quad (\type~\sType{t}\typeFormalType~T) \in \multi{D}
                }
                {
                    \infer
                    {
                        \funcDelc{\sType{t}[\multi{\alpha}]}{m[\Psi]}{\multi{x~\tau}}{\sigma}{\return e} \ok
                    }
                    {
                        \Phi; \Psi\ok \Delta
                        & \wellFormedMulti[\Delta]{\tau}
                        & \wellFormed[\Delta]{\sigma}
                        & \wellTyped[\Delta ; x : \sType{t}{[\multi{\alpha}]},
                            \multi{x : \tau}
                        ]{e}{\tau}
                        & \subtype{\tau}{\sigma}
                    }
                }
            \end{gathered}
        \end{equation*}
    }
    Expressions and 
    Programs 
    \hfill
    \fbox{$\wellTyped{e}{t}$} \ 
    \fbox{$\prog\ok$}
    {\footnotesize
        \let\oldwt\wellTyped
        \renewcommand{\wellTyped}[3][\Delta;\Gamma]{#1 \vdash #2 : #3}
        \renewcommand{\wellTypedMulti}[3][\Delta;\Gamma]{#1 \vdash \multi{#2 : #3}}
        \begin{equation*}
            \begin{gathered}
                \namedRule{\tvar}{
                    \infer{
                        \wellTyped{x}{\tau}
                    }{
                        (x:\tau)\in\Gamma
                    }
                }\!
                \namedRuleTwo{\tliteral}{
                    \wellFormed[\Delta]{\sType{\tau}}
                }{
                    \infer{
                        \wellTyped{\sTypeInit{\tau}{\multi{e}}}{\sType{\tau}}
                    } {
                        & \wellTypedMulti{e}{\tau}
                        & (\multi{f~\sigma}) = \fields(\sType{\tau})
                        & \multi{\tau <: \sigma}
                    }
                }\!
                \namedRuleTwo{\tcall}{
                    (m\typeFormalMethod(\multi{x~\sigma})~\sigma) \in \methods[\Delta](\tau)
                }{
                    \infer{
                        \wellTyped{e.m\typeActualMethod(\multi{e})}{\sigma[\eta]}
                    } {
                        \wellTyped{e}{\tau}
                        & \wellTypedMulti{e}{\tau}
                        & \eta = (\Psi \by_\Delta \psi)
                        & \subtypeMulti{\tau}{\sigma}[\eta]
                    }
                }\\
                \namedRule{\tfield}{
                    \infer {
                        \wellTyped{e.f_i}{\sigma_i}
                    } {
                        \wellTyped{e}{\sType{\tau}}
                        & (\multi{f~\sigma}) =\fields(\sType{\tau})
                    }
                }
                \namedRule{\tasserts}{
                    \infer {
                        \wellTyped{e.(\sType{\tau})}{\sType{\tau}}
                    } {
                        \wellFormed[\Delta]{\sType{\tau}}
                        & \wellTyped{e}{\jType{\sigma}}
                        & \subtype{\sType{\tau}}{\bounds(\jType{\sigma})}
                    }
                }
                \namedRule{\tasserti}{
                    \infer {
                        \wellTyped{e.(\jType{\tau})}{\jType{\tau}}
                    } {
                        \jType{\tau} \ok
                        & \wellTyped{e}{\jType{\sigma}}
                    }
                }\\
                \fbox{$
                        \namedRule{\tstupid}{
                            \infer {
                                \wellTyped{e.(\tau)}{\tau}
                            } {
                                \wellFormed[\Delta]{\tau}
                                & \wellTyped{e}{\sType{\sigma}}
                            }
                        }$
                }
                \namedRule{\tprog}{
                    \infer {
                        \textbf{ package main};\ \multi{D}\ \textbf{func main}() \sytxBrace{\_ = e}\ok
                    } {
                        \distinct(\tdecls(\multi{D}))
                        & \distinct(\mdecls(\multi{D}))
                        & \multi{D\ok}
                        & \wellTyped[\emptyset; \emptyset]{e}{t}
                    }
                }
            \end{gathered}
        \end{equation*}
    }
    \rule{\columnwidth}{0.5mm}
    {\footnotesize
        \begin{equation*}
            \begin{gathered}
                \infer{
                    \eta = (\typeFormal \by \multi{\tau})
                }{
                    \eta = (\multi{\alpha \by \tau})
                }
                \qquad
                \infer{
                    \eta = (\typeFormal \by_\Delta \phi)
                }{
                    \eta = (\typeFormal \by \phi)
                    & \Delta \vdash \multi{\alpha <: \iType{\tau}}[\eta]
                }
                \qquad
                \infer{
                    \fields (\sType{t}\typeActualReceive) = \multi{f~\tau}[\eta]
                } {
                    (\type \sType{t}\typeFormalType \struct{\multi{f~\tau}}) \in \multi{D}
                    & \eta = (\Phi \by \phi)
                }
                \\
                \methods[\Delta](\sType{t}\typeActualReceive) = \{
                mM[\eta] \mid (\func~(x~\sType{t}\typeFormalReceive)~mM~\{\return e\}) \in \multi{D},
                (\type~\sType{t}\typeFormalType~T) \in \multi{D},
                \eta = (\Phi \by_\Delta \phi)
                \}
                \\
                \infer{
                    \methods[\Delta](\iType{t}\typeActualReceive) = \multi{S}[\eta]
                } {
                    \type~\iType{t}\typeFormalType~\interface{\multi{S}} \in \multi{D}
                    & \eta = (\Phi \by \phi)
                }
                \quad
                \infer{
                    \methods[\Delta](\alpha) = \methods[\Delta](\iType{\tau})
                }{
                    (\alpha : \iType{\tau}) \in \Delta
                }
                \quad
                \infer{
                    \unique(\multi{S})
                } {
                    mM_1, mM_2 \in \multi{S}~\text{implies}~M_1 = M_2
                }
                \\
                \tdecls(\multi{D}) = [t \mid (\type~t\typeFormalType~T) \in \multi{D}]
                \quad
                \mdecls(\multi{D}) = [\sType{t}.m \mid (\func~(x~\sType{t}\typeFormalReceive)~mM~\{\return e\}) \in \multi{D}]
                \\
                \infer{
                    \bounds(\alpha) = \iType{\tau}
                }{
                    (\alpha : \iType{\tau}) \in \Delta
                }
                \quad
                \bounds(\sType{\tau}) = \sType{\tau}
                \quad
                \bounds(\iType{\tau}) = \iType{\tau}
                \\
                \infer{
                    \body(\sType{t}\typeActualReceive.m\typeActualMethod) = (x:\sType{t}\typeActualReceive, \multi{x:\tau}).e[\theta]
                }{
                    (\funcDelc{\sType{t}\typeFormalReceive}{m\typeFormalMethod}{\multi{x~\tau}}{\tau}{\return e}) \in \multi{D}
                    \quad \theta = (\multi{\alpha},\Psi \by \phi,\psi)
                }
            \end{gathered}
        \end{equation*}
    }
    \caption{Typing and auxiliary function.}
    \label{fig:fcg:types}
    %    \end{framed}
\end{figure}
}
For the reviewer's convenience, this section provides 
the definitions and more explanations of the typing
system from those in \cite{griesemer2020featherweight}. 
\subsection{\glsentrylong{fgg} Typing Rules}
Judgements are extended with the type environment $\Delta$ which
relates types than type names. 
The subtyping $\subtype{\tau}{\sigma}$ uses $\Delta$ where both $\tau$
and $\sigma$ may have type parameters in $\Delta$; 
judgement $\wellFormed[\Delta]{\tau}$ says 
the argument of channel type ($\tau$) is well-formed 
w.r.t. all type parameters declared in $\Delta$;  
a method declaration is well-formed if 
$\Phi$ and $\Psi$ are well-formed types formal of the receiver and the
method, yielding $\Delta$ ($\Phi;~\Psi\ok~\Delta$) and the receiver's type is
declared by $\Phi'$ such that $\Phi <: \Phi'$. 
Judgements for expressions,  
method calls and processes are extended w.r.t. $\Delta$, accordingly.

This is so that interface subtyping (\rulename{<:i}) may ensure that
type parameters still implement all methods that an interface
requires. 
The type formal subtyping rule ($\Phi <: \Psi$) ensures that if the
type substitution $\Phi :=_\Delta \multi{\tau}$ is well-defined, 
then $\Psi :=_\Delta \multi{\tau}$ is well-defined.  

We deviate from \cite{griesemer2020featherweight} 
in our typing for \rulename{\tfunc}. We require that 
receiver types formal are identical to those in the structures 
declaration. This more closely follows the official Go proposal \cite{type-parameters-proposal}.
Rather than require the developer to write a full type formal 
which must exactly match the structure's declaration they instead provide 
a receiver type parameter list which is converted to a type formal by 
looking up the structure type formal. 

When looking at method typing it becomes necessary to consider two
types formal, with the methods type formal ($\Psi$) depending on the
receiver's ($\Phi$).  A methods type environment $\Delta$ is
constructed by the well formed composition of the receiver and
method's types formal ($\Phi;~\Psi\ok~\Delta$). This environment is
well formed when $\Phi$ is well formed in an empty environment while
method's type formal $\Psi$ is well formed under $\Phi$.  Type formal
well formedness ($\wellFormed[\Phi]{\Psi}$) holds when there is no
repetition in the type parameters between $\Phi$ and $\Psi$ and all
bounds in $\Psi$ are well formed in the $\Phi, \Psi$ environment. This
definition allows mutually recursive bounds in $\Psi$.

A type declaration includes a type formal, this type formal is the
environment that either the structure or interface must be well formed
under. A structure is well formed in an environment $\Phi$ when each
of its fields is well formed under $\Phi$. An interface is well formed
in $\Phi$ when each method is specifies is also well formed under
$\Phi$. 

A method specifications is well-formed
($\wellFormed[\Phi]{m\typeFormalMethod(\multi{x~\tau})~\tau}$) when
its composite type environment is well-formed ($\Phi;~\Psi\ok~\Delta$)
and if its argument and return types are well-formed under that
composite type environment.

\begin{theorem}[Preservation]
{\rm (Theorem 4.3 in \cite[\S~4]{griesemer2020featherweight})} 
\label{lemma:fcgg:subjReduction:Expression}
\label{thm:fggTypePreservation}
    If $\wellTyped[\emptyset; \emptyset]{e}{\tau}$
    and $\reduction{e}{e'}$
    then $\wellTyped[\emptyset; \emptyset]{e'}{\tau'}$,
    for some $\tau'$ such that $\subtype[\emptyset; \emptyset]{\tau'}{\tau}$.
\end{theorem}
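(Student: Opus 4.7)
The plan is to prove preservation by induction on the derivation of the reduction $\reduction{e}{e'}$, with cases on the last rule applied (\rulename{\rfields}, \rulename{\rcall}, \rulename{\rassert}, \rulename{\rcontext}). Before the main induction, I would establish a collection of auxiliary lemmas about the \gls{fgg} type system: (i) a weakening lemma for $\Delta$ and $\Gamma$; (ii) a type-substitution lemma stating that if $\wellTyped[\Delta;\Gamma]{e}{\tau}$ and $\eta = (\Phi \by_\Delta \phi)$ with $\subtype[\Delta]{\phi}{\Phi[\eta]}$, then $\wellTyped[\Delta;\Gamma[\eta]]{e[\eta]}{\tau'}$ for some $\tau' <: \tau[\eta]$, together with its companion facts that type substitution preserves well-formedness ($\wellFormed[\Delta]{\tau}$) and subtyping; (iii) a value-substitution lemma stating that if $\wellTyped[\Delta;\Gamma, \multi{x:\tau}]{e}{\sigma}$ and $\wellTyped[\Delta;\Gamma]{\multi{v:\tau'}}{\multi{\tau'}}$ with $\multi{\tau' <: \tau}$, then $\wellTyped[\Delta;\Gamma]{e[\multi{x\by v}]}{\sigma'}$ for some $\sigma' <: \sigma$; and (iv) a $\methods$/$\fields$ coherence lemma that relates these auxiliaries with the $\body$ function after type instantiation.

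For \rulename{\rfields}, inversion on \rulename{\tfield} gives $\wellTyped{\sTypeInit{\tau}{\multi{v}}}{\sType{\tau}}$ and the field types via $\fields(\sType{\tau})$; a further inversion via \rulename{\tliteral} exposes that each $v_i$ has a subtype of the declared field type, closing the case. For \rulename{\rassert}, the reduction premise $\subtype[\emptyset]{\vtype(v)}{\tau}$ combines directly with the typing to give the expected type $\tau$ (and in the \rulename{\tstupid} variant, we still get a type no coarser than the original). The \rulename{\rcontext} case is dispatched by a standard replacement lemma showing that well-typed contexts compose with typing derivations, followed by the induction hypothesis on the inner reduction; here I would be careful that the subtyping slack $\tau'<:\tau$ introduced by the inner step propagates up through each expression form (field select, call receiver, argument, assertion), which succeeds because each typing rule only requires its subterm types to subtype the expected positions.

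The main obstacle is the \rulename{\rcall} case. Inversion on \rulename{\tcall} for $v.m\typeActualMethod(\multi{v})$ gives, at the receiver's type $\vtype(v) = \sType{t}\typeActualReceive$, a method signature $m\typeFormalMethod(\multi{x~\sigma})~\sigma \in \methods[\emptyset](\sType{t}\typeActualReceive)$ together with subtyping of arguments against $\multi{\sigma}[\eta]$ under $\eta = (\Psi \by_{\emptyset} \psi)$. From the corresponding method declaration, the receiver type formal $\Phi$ gives a substitution $\eta_0 = (\Phi \by \phi)$, and the reduction rule \rulename{\rcall} unfolds the body under $\theta = (\multi{\alpha},\Psi \by \phi,\psi)$. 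The critical step is to compose the receiver type substitution with the method's type-parameter substitution and then with the value substitution, using the lemmas (ii) and (iii). The chief subtlety is that bounded polymorphism requires the instantiating types $\phi,\psi$ to satisfy the bounds after substitution—this is exactly the content of $\eta = (\Phi \by_\Delta \phi)$—and must be maintained when composing the two substitutions, because the bounds in $\Psi$ may mention parameters in $\Phi$. I expect this to be the only step needing careful mutual induction (on type well-formedness, substitution, and subtyping simultaneously).

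Finally, I would discharge the theorem by assembling the cases, noting that the result is stated up to subtyping ($\tau' <: \tau$) precisely to accommodate (a) the \rulename{\tliteral}-style slack at field positions, (b) the subtyping introduced by \rulename{\tstupid} / \rulename{\tasserti} assertions from interface types, and (c) the postcomposition $\sigma[\eta]$ inherited from the method's declared return type. No new constructs are introduced at runtime, so all side conditions discharged by $\prog\ok$ remain available throughout.
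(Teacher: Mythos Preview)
The paper does not give its own proof of this statement: it is stated in the appendix purely as a citation of Theorem~4.3 of \cite{griesemer2020featherweight}, with no accompanying argument. There is therefore nothing in the paper to compare your proposal against.

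That said, your plan is the standard one for subject reduction in Featherweight-style calculi (induction on the reduction derivation, with type- and term-substitution lemmas discharging the \rulename{\rcall} case and a replacement lemma for \rulename{\rcontext}), and it is essentially the approach taken in \cite{griesemer2020featherweight}, which in turn follows the Featherweight Java development of \cite{Igarashi99FJ}. Your identification of the composed substitution $\theta = (\multi{\alpha},\Psi \by \phi,\psi)$ as the crux, and of the need to check that the bounds in $\Psi$ are respected after instantiating $\Phi$, is exactly right; this is handled in \cite{griesemer2020featherweight} via their Lemma~4.1 (type substitution) and Lemma~4.2 (term substitution), which match your lemmas (ii) and (iii).
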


\begin{theorem}[Progress] 
\label{thm:fggProgress}
{\rm (Theorem 3.4 in \cite[\S~3.3]{griesemer2020featherweight})}
If $\wellTyped[\emptyset;\emptyset]{e}{\tau}$ then  
$e$ is either a value, $\reduction{e}{e'}$ for some $e'$ or 
$e$ panics. 
\end{theorem}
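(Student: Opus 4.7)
The plan is to prove Progress for \gls{fgg} by structural induction on the typing derivation $\wellTyped[\emptyset;\emptyset]{e}{\tau}$, following the standard template for type soundness proofs and mirroring the structure used for \gls{fg} Progress in Theorem~\ref{thm:fgProgress}. For each typing rule concluding with $\wellTyped[\emptyset;\emptyset]{e}{\tau}$, I would show that $e$ falls into one of three disjoint cases: $e$ is a value $v = \sTypeInit{\tau}{\multi{v}}$, $e$ reduces via some rule in Figure~\ref{fig:fgg:semantics}, or $e$ is (reduces to) a type assertion error in the sense of Definition~\ref{def:panics}.

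First I would dispatch the trivial cases. Rule \rulename{\tvar} is vacuous under an empty $\Gamma$. For \rulename{\tliteral}, if every $e_i$ in $\sTypeInit{\tau}{\multi{e}}$ is already a value we are done; otherwise pick the leftmost non-value $e_i$, apply the induction hypothesis, and lift the resulting reduction (or panic) through the evaluation context $\sTypeInit{\tau}{\multi{v}, E, \multi{e}}$ using \rulename{\rcontext}. The elimination rules \rulename{\tfield}, \rulename{\tcall}, and the three assertion rules all follow the same pattern: apply the IH to the receiver (and, for \rulename{\tcall}, to each argument in left-to-right order), and if all subterms are already values, exhibit a concrete reduction. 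For \rulename{\tfield}, Canonical Forms for $\sType{\tau}$ guarantees the value has shape $\sTypeInit{t}{\multi{v}}$ with matching field count from $\fields$, so \rulename{\rfields} fires. For \rulename{\tcall}, the receiver value's type supplies a method $m$ via the $\methods_\Delta$ auxiliary; I need to show $\body(\vtype(v).m\typeActualMethod)$ is defined, which boils down to the method existing for the underlying structure $\sType{t}\typeActualReceive$ and $\theta$ being a well-defined substitution, then \rulename{\rcall} applies.

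The main obstacle, and the one that requires genuine \gls{fgg}-specific work rather than a direct port of the \gls{fg} argument, is the assertion family \rulename{\tasserts}/\rulename{\tasserti}/\rulename{\tstupid}. Here I would need a Canonical Forms lemma for \gls{fgg}: if $\wellTyped[\emptyset;\emptyset]{v}{\tau}$ then $v = \sTypeInit{\sigma}{\multi{v}}$ with $\subtype[\emptyset]{\sType{\sigma}}{\tau}$. Given that, \rulename{\rassert} can fire exactly when $\subtype[\emptyset]{\vtype(v)}{\tau'}$ for the asserted type $\tau'$; in the negative case, $e$ matches the definition of a type assertion error verbatim. Two subtle points arise: (i) assertions target arbitrary (possibly open at declaration time, but closed at runtime) types $\tau$, so I must rely on the fact that the reduction of a closed, well-typed term keeps types closed — this follows from Preservation (Theorem~\ref{thm:fggTypePreservation}) which guarantees we only ever reduce under closed type environments, so $\bounds$ and $\subtype[\emptyset]{-}{-}$ are meaningful; (ii) the \rulename{\tstupid} rule is only needed mid-reduction, but Progress must still classify every well-typed closed term, and for such residual terms the same canonical forms argument applies.

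Finally, I would wrap the induction with a standard Canonical Forms lemma stated explicitly for \gls{fgg} (the only value form is a struct literal with well-typed, fully-evaluated fields, and its $\vtype$ subtypes the declared type), and a small auxiliary lemma stating that $\methods_\Delta$ and $\body$ are total on closed struct-typed values whose typing witnesses them as subtypes of the expected interface. The hardest technical step, as noted, is the assertion case, because it is the unique place where \gls{fgg}'s richer subtype relation $\subtype[\emptyset]{-}{-}$ over instantiated types — rather than \gls{fg}'s nominal $<:$ — directly controls whether progress is by reduction or by panic; everything else is a routine adaptation of the \gls{fg} Progress proof to account for type actuals $\phi,\psi$ and the substitutions $\eta, \theta$ in \rulename{\rfields} and \rulename{\rcall}.
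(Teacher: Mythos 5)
The paper itself contains no proof of this statement: it is imported from \citet{griesemer2020featherweight} and merely restated in the appendix with a citation (note, incidentally, that the bracketed reference to ``Theorem 3.4'' is the \gls{fg} progress theorem; the main text correctly points to Theorems 4.3 and 4.4 of that work for \gls{fgg}). Your proposal therefore has to be judged against the proof in the cited work, and it is a faithful reconstruction of that standard argument: induction on the typing derivation, lifting of reductions and panics through the evaluation context via \rulename{\rcontext}, canonical forms plus definedness of $\body$ for the \rulename{\rcall} case, and the observation that a failed runtime subtype check in the assertion case is, by Definition~\ref{def:panics}, precisely a type assertion error rather than a stuck term. Your identification of the assertion family as the only place where the instantiated subtype relation $\subtype[\emptyset]{-}{-}$ decides between reduction and panic is exactly the crux of the \gls{fgg}-specific work.

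One small imprecision worth fixing: \gls{fgg} has no subsumption rule, so your Canonical Forms statement ``$v = \sTypeInit{\sigma}{\multi{v}}$ with $\subtype[\emptyset]{\sType{\sigma}}{\tau}$'' is both stronger than what holds and stronger than what you need. Under \rulename{\tliteral} a closed value is typed at exactly its struct type, so in the empty environment canonical forms degenerates to: $\tau$ is a struct type and $\vtype(v)=\tau$. Subtyping enters only at argument, field, and assertion positions, and (via Preservation) when classifying residual terms mid-reduction. This does not affect the soundness of your argument, but stating the lemma in the subsumption-style form invites a circularity you do not actually need.
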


\pagebreak

\setlength{\proofrightwidth}{0.3\linewidth}
\section{Appendix of Section~\ref{section:properties}}
\label{app:proofs}

This section provides the detailed proofs for the main theorem.

\begin{definition}[Dictionary map]
    To simplify definitions we sometimes use a functional notation ($\map[]{-}$)
    for the dictionary-passing translation, defined as
    \begin{align*}
        \map{e}       & = \lex e~\text{\it where }~\dict{e}{\lex{e}}                                     \\
        \map[]{\prog} & = \lex \prog~\text{\it where }~\dict[]{\prog}{\lex{\prog}}                       \\
        \map{\Gamma}  & = \dom{\Gamma} : \any, \multi{\dictlit : \dictName{\Delta(\eta^{-1}(\dictlit))}}
    \end{align*}
    While $\map{\Gamma}$ seems complex at first glance it simply
    erases all variables already in $\Gamma$ while adding
    appropriate dictionary variables. This is done by finding
    the type parameter name for $\dictlit_i$ from $\eta^{-1}$
    and then getting the type bound of the type parameter from $\Delta$.
    This type bound is used to decide the type of the $\dictlit_i$ variable.
    Note that $\eta$ is bijective since we assume all type parameter names and
    dictionary variable names are unique, as such $\eta^{-1}$ exists.
  \end{definition}

\lemprec*
\begin{proof}
    \begin{enumerate}
        \item This case is shown by the deterministic nature of $\red$ and the fact
              that $\dicttrans \subseteq \red^+$.
        \item This case is immediate for most $\dictpattern$ cases. Only the
              $\methName{t,m}\sytxBrace{}.\apply(\multi{e})$ case provides
              a complexity. Once we realise, however, that each $e_i$ is
              used linearly in the method $\methName{t,m}\sytxBrace{}.\apply$,
              as created by $\makeDictMeth{}$,
              it becomes clear that this case does not interact with any
              reductions in $\multi{e}$.
    \end{enumerate}
\end{proof}

\begin{restatable}[Type specialisation is resolved by $\precongdict$ ]
    {lemrest}{lempreposttype}
  \label{lem:preposttype}
  Let $\Delta = \multi{\alpha : \iType{\tau}}$,
  expression $e$ be of type
  $\wellTyped[\Delta; \Gamma]{e}{\tau}$,
  and $\multi{\sigma}$ be a type actual such that $\subtypeMulti[\Delta]{\sigma}{\iType{\tau}}$.
  Let the map $\eta = \{\multi{\alpha \mapsto \dictlit}\}$.
  then
  $\map{e}[
          \multi{\dictlit \by \makeDict[\emptyset; \emptyset]{\sigma, \iType{\tau}}}
      ] \precongdict^*
      \map[\emptyset; \emptyset; \Gamma]{e[\multi{\alpha \by \sigma}]}$
\end{restatable}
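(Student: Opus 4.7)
The plan is to proceed by induction on the structure of $e$, following the dictionary-passing translation rules \rulename{\dvar}, \rulename{\dfield}, \rulename{\dassert}, \rulename{\dvalue}, \rulename{\dcall}, and \rulename{\ddictcall}. Before attacking the induction, I will establish two auxiliary lemmas that capture the key behaviour of $\precongdict$ on the machinery introduced by our translation. The first is a \emph{dictionary substitution lemma}: if $v$ is the value $\makeDict[\emptyset;\emptyset]{\sigma, \iType{\tau'}}$ for some $\iType{\tau'}$ whose bound depends on $\multi{\alpha}$, then $\makeDict[\Delta;\eta]{\tau, \iType{\tau'}}[\multi{\dictlit \by v}] \precongdict^* \makeDict[\emptyset;\emptyset]{\tau[\multi{\alpha \by \sigma}], \iType{\tau'}[\multi{\alpha \by \sigma}]}$. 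This follows because $\makeDict$ either (i) copies method pointers and the $\typeField$ from an existing dictionary (the $\alpha$ case), in which case $\precongdict$ reduces each $v.\dictlit_i$, $v.m$, and $v.\typeField$ using the $\dictpattern$ rules; or (ii) builds a fresh struct of method-pointer abstractors, which is already a value after substitution. The second is a \emph{type-metadata substitution lemma} of the same shape for $\typemeta{\tau}$, handled analogously since $\zeta = (-.\typeField)\circ\eta$ resolves via the $v.\typeField_i$ and $v.\typeField$ patterns.

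With these two lemmas, the structural cases are routine. Case \rulename{\dvar} is immediate. Cases \rulename{\dfield} and \rulename{\dassert} follow from the IH combined with the metadata substitution lemma (noting that $\precongdict$ is closed under the congruence context $\congEval$, which includes field selection, type assertion, struct-literal slots, and method-call positions). Case \rulename{\dvalue} follows from the IH applied field-wise and the dictionary substitution lemma applied to each generated $\makeDict{\phi_i, \Phi_i}$ entry.

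The main obstacle, as anticipated, is the method-call case, and specifically the interaction between \rulename{\ddictcall} and \rulename{\dcall}. When $e$ has type $\alpha$, the translation produces $\eta(\alpha).m.\apply(\lex{e}, \lex\psi, \multi{\lex{e}})$. After the substitution $\multi{\alpha \by \sigma}$, the specialised source expression is typed with receiver of type $\sigma$, so its translation uses \rulename{\dcall} if $\sigma = t[\phi]$ (or remains \rulename{\ddictcall} if $\sigma$ is itself a type variable in the empty context, which is vacuous here). The plan is to use the $\precongdict$ steps for $\dictName{t}\{\multi{v}\}.f$ to extract the method-pointer $\methName{t,m}\{\}$ stored in position $m$ of the substituted dictionary, then reduce $\methName{t,m}\{\}.\apply(\ldots)$ via the corresponding $\dictpattern$ rule; this unfolds to exactly the body produced by \rulename{\dcall}, namely a call on $\lex{e}.(t).m(\ldots)$, modulo the type-assertion refinement clause of $\precongdict$ ($u <: t$) needed because $\makeDict$ stores the precise type while \rulename{\dcall} produces an assertion to the declared type. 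The recursively constructed sub-dictionaries $\lex\psi$ are handled by the dictionary substitution lemma, and the argument terms $\multi{\lex{e}}$ are handled by the IH; both stay within $\congEval$, so $\precongdict^*$ applies. Finally, I appeal to Lemma~\ref{lem:rec}(2) (confluence of $\precongdict$) to sequence the resolutions in any convenient order, closing the induction.
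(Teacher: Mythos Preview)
Your proposal is correct and follows essentially the same approach as the paper: induction on the translation rule for $e$, with the substantive work concentrated in the \rulename{\ddictcall} and \rulename{\dassert} cases. Your two auxiliary lemmas (for $\makeDict$ and $\typemeta$) package exactly what the paper does inline within those cases; the paper also applies the substitution of each $\alpha_i$ in turn and observes that the receiver-dictionary form $\this.\dictlit_i$ reduces to the parameter-dictionary form via $\dictpattern$, which you cover implicitly.

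One small correction: the type-assertion refinement clause of $\precongdict$ is not needed in this lemma. In the \rulename{\ddictcall} case, the applicator body produced by $\makeDictMeth{u,mM}$ already asserts the receiver to exactly $u$, and after specialisation with $\sigma = u[\phi]$ the \rulename{\dcall} translation also asserts to $u$, so the two sides match without refinement. The refinement clause is instead used in the companion value-substitution lemma (Lemma~\ref{lem:prepostval}), where a substituted value may have a strictly smaller type than the variable it replaces.
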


\begin{proof}
    By induction on $e$, we apply the substitution of each $\alpha_i$ in turn.
    Note that dictionary $\eta(\alpha_i)$ is either of the form $v.\dictlit_i$ when
    $\alpha_i$ comes from the receiver or $\dictlit_i$ when it is a method parameter.
    We can limit our considerations to the latter case as
    if the former holds we can transform it to the latter as
    $\dictpattern$ resolves the desired receiver field access.

    \begin{itemize}
        \item[] \caseof{\ddictcall}\\
              We assume that $\wellTyped[\alpha:\iType{\tau}; \Gamma]{e}{\alpha}$
              such that the translation of
              $e.m\typeActualMethod(\multi{e})$ is
              \[\dict{e.m\typeActualMethod(\multi{e})}
                  {\eta(\alpha_i).m.\apply(\map{e}, \map{\psi}, \multi{\map{e}})}\]
              $\eta(\alpha_i)$ is either of the form $v.\dictlit_i$ when
              $\alpha_i$ comes from the receiver or $\dictlit_i$ when it is a method parameter.
              In the former case $\dictpattern$ resolves the desired receiver field access,
              so we may limit our considerations the latter.

              Let $\sigma = u[\phi]$ and $\iType{\tau} = \iType{t}[\phi']$ then
              applying the substitution $\theta = [\dictlit \by \makeDict[\emptyset; \emptyset]{\sigma, \iType{\tau}}]$
              we produce the term
              \[\makeDict[\emptyset; \emptyset]{u[\phi], \iType{t}[\phi']}.m.\apply(\map{e}[\theta], \map{\psi}[\theta], \multi{\map{e_2}[\theta]})\]
              where method $m$ is in the method abstractors of the given dictionary, such that
              \[\makeDict[\emptyset; \emptyset]{u[\phi], \iType{t}[\phi']} = \dictName{\iType{t}}\{\multi{v}, \methName{u,m}, \multi{v}\}\]

              $\dictpattern$ resolves as follows
              \begin{flalign*}
                  \qquad    \dictName{\iType{t}}\{\cdots\}.&m.\apply(\map{e}[\theta], \map{\psi}[\theta], \multi{\map{e}[\theta]}) &\\
                  & \precongdict \methName{u,m}.\apply(\map{e}[\theta], \map{\psi}[\theta], \multi{\map{e}[\theta]}) &\\
                  & \precongdict \map{e}[\theta].(u).m(\map{\psi}[\theta], \multi{\map{e}[\theta]})
              \end{flalign*}
              By the induction hypothesis
              \begin{flalign*}
                \qquad    \map{e}[\theta].(u)&.m(\map{\psi}[\theta], \multi{\map{e}[\theta]}) \precongdict^* \\
                & \map[\emptyset; \emptyset; \Gamma]{e[\multi{\alpha \by \sigma}]}.(u).m(\map[\emptyset; \emptyset; \Gamma]{\psi[\multi{\alpha \by \sigma}]}, \multi{\map{e[\emptyset; \emptyset; \Gamma][\multi{\alpha \by \sigma}]}})
            \end{flalign*}

              Note that we can only apply induction on the arguments to $m$
              because $\dictpattern$ is defined on the pre-congruence evaluation
              context $\congEval$. We explicitly do not use $\assertpattern$
              as part of this induction.

              To resolve $\map[\emptyset; \emptyset; \Gamma]{e.m\typeActualMethod(\multi{e})[\multi{\alpha \by \sigma}]}$
              we need first observe that the type substitution specifies all type variables
              by our assumptions. This means that the dictionary-passing translation uses
              the homomorphic rule \rulename{\dcall}. We also know that the type of $e$ ($\alpha$)
              is mapped to $\sigma$ ($u[\phi]$).
              \begin{flalign*}
                  \qquad & \map[\emptyset; \emptyset; \Gamma]{e.m\typeActualMethod(\multi{e})[\multi{\alpha \by \sigma}]} &\\
                  &\qquad = \map[\emptyset; \emptyset; \Gamma]{e[\multi{\alpha \by \sigma}].m[\psi[\multi{\alpha \by \sigma}]](\multi{e[\multi{\alpha \by \sigma}]})} &\\
                  &\qquad = \map[\emptyset; \emptyset; \Gamma]{e[\multi{\alpha \by \sigma}]}.(u).m(\map[\emptyset; \emptyset; \Gamma]{\psi[\multi{\alpha \by \sigma}]}, \multi{\map[\emptyset; \emptyset; \Gamma]{e[\multi{\alpha \by \sigma}]}})
              \end{flalign*}
        \item[] \caseof{\dassert} $\wellTyped[\Delta, \Gamma]{e.(\alpha)}{\alpha}$\\
              We start by considering the $e.(\alpha)[\alpha \by \sigma]$ side.
              \begin{flalign*}
                  \qquad & \map[\emptyset; \emptyset; \Gamma]{e.(\alpha)[\alpha \by \sigma]}
                  = \map[\emptyset; \emptyset; \Gamma]{e[\alpha \by \sigma].(\sigma)}
                  = \typemeta[\emptyset]{\sigma}.\trycast(\map[\emptyset; \emptyset; \Gamma]{e[\alpha \by \sigma]}) &
              \end{flalign*}
              We now look at the lhs. Let $\zeta = (-.\typeField) \circ \eta$
              \begin{flalign*}
                  \qquad  \map{e.(\alpha)} & = \typemeta{\alpha}.\trycast(\map{e}) &\\
                  & = \zeta(\alpha).\trycast(\map{e}) \\
                  & = (\eta(\alpha).\typeField).\trycast(\map{e})\\
                  & = \dictlit.\typeField.\trycast(\map{e})
              \end{flalign*}
              If $\iType{\tau} = \iType{t}[\phi]$ then we have that
              $\makeDict[\emptyset; \emptyset]{\sigma, \iType{\tau}} =
                  \dictName{\iType{t}}\sytxBrace{\multi{v}, \typemeta{\sigma}}$.
              We also know that $\sigma$ is well typed in the empty environment ($\subtype[]{\sigma}{\iType{\tau}}$)
              so $\typemeta{\sigma} = \typemeta[\emptyset]{\sigma}$.
              With the previously derived $\map{e.(\alpha)} = \dictlit.\typeField.\trycast(\map{e})$
              {\small
              \begin{flalign*}
                  \quad \!& \dictlit.\typeField.\trycast(\map{e})[\dictlit \by \dictName{\iType{t}}\sytxBrace{\multi{v}, \typemeta[\emptyset]{\sigma}}] &\\
                  & \quad = \dictName{\iType{t}}\sytxBrace{\multi{v}, \typemeta[\emptyset]{\sigma}}.\typeField.\trycast(\map{e}[\dictlit \by \dictName{\iType{t}}\sytxBrace{\multi{v}, \typemeta[\emptyset]{\sigma}}])\\
                  & \quad \precongdict \typemeta[\emptyset]{\sigma}.\trycast(\map{e}[\dictlit \by \dictName{\iType{t}}\sytxBrace{\multi{v}, \typemeta[\emptyset]{\sigma}}])
              \end{flalign*}
              }
              We can now apply the induction hypothesis
              \begin{flalign*}
                  \qquad & \typemeta[\emptyset]{\sigma}.\trycast(\map{e}[\dictlit \by \dictName{\iType{t}}\sytxBrace{\multi{v}, \typemeta[\emptyset]{\sigma}}]) &\\
                  & \quad \precongdict^* \typemeta[\emptyset]{\sigma}.\trycast(\map[\emptyset; \emptyset; \Gamma]{e[\alpha \by \sigma]})
              \end{flalign*}
        \item[] \caseof{\dassert} $\wellTyped[\Delta, \Gamma]{e.(\tau)}{\tau}$
              If $\alpha \not\in \fv{\tau}$ then this proof is immediate by induction.
              If we instead assume that $\alpha \in \fv{\tau}$ then
              with $\zeta = (-.\typeField) \circ \eta$
              \begin{flalign*}
                  \qquad & \map{e.(\tau)}[\dictlit \by \makeDict[\emptyset; \emptyset]{\sigma, \iType{\tau}}] &\\
                  & \quad = \typemeta{\tau}.\trycast(\map{e})[\dictlit \by \makeDict[\emptyset; \emptyset]{\sigma, \iType{\tau}}]
              \end{flalign*}
              We further assume that $\tau = t[\alpha, \multi{\sigma}]$. While it may be that
              $\alpha$ is a type specialisation of a type specialisation ($t[u[\alpha]]$),
              this case does not significantly alter the proof, so we assume $\tau$
              is as given. Naturally this also applies if $\alpha$ is used more than
              once in $\tau$ (we assume $\alpha \not\in \fv{\multi{\sigma}}$).

              Computing $\typemeta{\tau}$ we get $\metadataName{t}\sytxBrace{\zeta(\alpha), \multi{\typemeta{\sigma}}}$,\\
              with $\zeta(\alpha) = \dictlit.\typeField$.
              {\small
              \begin{flalign*}
                  \qquad & = \typemeta{\tau}.\trycast(\map{e})[\dictlit \by \makeDict[\emptyset; \emptyset]{\sigma, \iType{\tau}}]&\\
                  & \quad = \metadataName{t}\sytxBrace{\dictlit.\typeField, \multi{\typemeta{\sigma}}}.\trycast(\map{e})[\dictlit \by \makeDict[\emptyset; \emptyset]{\sigma, \iType{\tau}}]\\
              \end{flalign*}
              }
              Furthermore we know $\makeDict[\emptyset; \emptyset]{\sigma, \iType{\tau}} = \dictName{\iType{t}}\sytxBrace{\multi{v}, \typemeta{\sigma}}$,
              and so the above substitution becomes
              {\small
              \begin{flalign*}
                  \qquad & = \metadataName{t}\sytxBrace{\dictlit.\typeField, \multi{\typemeta{\sigma}}}.\trycast(\map{e})[\dictlit \by \makeDict[\emptyset; \emptyset]{\sigma, \iType{\tau}}]&\\
                  & = \metadataName{t}\sytxBrace{\dictName{\iType{t}}\sytxBrace{\multi{v}, \typemeta{\sigma}} \\
                      & \qquad \qquad .\typeField, \multi{\typemeta{\sigma}}}.\trycast(\map{e}[\dictlit \by \makeDict[\emptyset; \emptyset]{\sigma, \iType{\tau}}])\\
                  & \precongdict \metadataName{t}\sytxBrace{\typemeta{\sigma}, \multi{\typemeta{\sigma}}}.\trycast(\map{e}[\dictlit \by \makeDict[\emptyset; \emptyset]{\sigma, \iType{\tau}}])\\
              \end{flalign*}
              }
              Applying induction
              {\small
              \begin{flalign*}
                  \qquad &\metadataName{t}\sytxBrace{\typemeta{\sigma}, \multi{\typemeta{\sigma}}}.\trycast(\map{e}[\dictlit \by \makeDict[\emptyset; \emptyset]{\sigma, \iType{\tau}}])&\\
                  &\quad \precongdict^*  \metadataName{t}\sytxBrace{\typemeta{\sigma}, \multi{\typemeta{\sigma}}}.\trycast(\map[\emptyset; \emptyset; \Gamma]{e[\alpha \by \sigma]})
              \end{flalign*}
              }

              We now look to $\map[\emptyset; \emptyset; \Gamma]{e.(\tau)[\alpha \by \sigma]}$.
              We again assume that $\tau = t[\alpha, \multi{\sigma}]$ with $\alpha \not\in \fv{\multi{\sigma}}$
              \begin{flalign*}
                  \qquad &\map[\emptyset; \emptyset; \Gamma]{e.(\tau)[\alpha \by \sigma]} &\\
                  & \quad = \map[\emptyset; \emptyset; \Gamma]{e.(t[\alpha, \multi{\sigma}])[\alpha \by \sigma]}\\
                  & \quad = \map[\emptyset; \emptyset; \Gamma]{e[\alpha \by \sigma].(t[\sigma, \multi{\sigma}])}\\
                  & \quad = \typemeta[\emptyset]{t[\sigma, \multi{\sigma}]}.\trycast(\map[\emptyset; \emptyset; \Gamma]{e[\alpha \by \sigma]})\\
                  & \quad = \metadataName{t}\sytxBrace{\typemeta{\sigma}, \multi{\typemeta{\sigma}}}.\trycast(\map[\emptyset; \emptyset; \Gamma]{e[\alpha \by \sigma]})
              \end{flalign*}
    \end{itemize}
\end{proof}

\begin{restatable}[Subtype preservation]{lemrest}{lemsubtypepres}
    \label{lem:sub:pres}
    Let $\dict[]{\prog}{\lex{\prog}}$. 
    If $\subtype[\emptyset]{u[\psi]}{t\typeActualReceive}$ in 
    $\prog$ then 
    $u<:t$ in $\lex{\prog}$.
\end{restatable}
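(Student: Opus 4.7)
My plan is to proceed by case analysis on the derivation of $\subtype[\emptyset]{u[\psi]}{t[\phi]}$. Since $u[\psi]$ is a named type rather than a type parameter, only \rulename{<:s} and \rulename{<:i} are applicable. In the \rulename{<:s} case we have $u = t$ and $\psi = \phi$ with $u$ a struct; \rulename{\dstruct} translates both source types to the same target struct name, so $u <: t$ follows from FG's \rulename{<:s}. The work is in the \rulename{<:i} case, where $t$ is an interface and $\methods_\emptyset(u[\psi]) \supseteq \methods_\emptyset(t[\phi])$, and we must show that $\methods(u) \supseteq \methods(t)$ in the translated program so that FG's \rulename{<:i} applies.

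By \rulename{\dinterface}, the translated interface $t$'s method set consists, for each source specification $m[\Psi](\multi{x\ \tau})\ \tau$, of (a) a primary method $m(\lex{\Psi}, \multi{x\ \any})\ \any$ with $\lex{\Psi} = \asParameters{\Psi}$, and (b) a simulation method $\specName{m}()\ \fnMeta{n}$ with $n = \arities{\Psi(\multi{x\ \tau})\ \tau}$. I will show that $u$'s translated method set contains (a) and (b) for each such $m$, regardless of whether $u$ is itself a struct (translated by \rulename{\dmeth} on each of its declared methods) or an interface (translated by \rulename{\dinterface}).

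The key observation is that the translated primary method signature depends only on the method name, the sequence of interface head-names extracted from the bounds of $\Psi$ by $\asParameters{}$, and the value-argument arity, because all value and return types erase to $\any$. Similarly, the translated simulation signature $\specName{m}()\ \fnMeta{n}$ depends only on $m$ and the integer arity $n$. Since FGG's method set inclusion is defined by syntactic equality of specifications (after substitution by $\psi$, $\phi$), the two source specifications of $m$ on $u[\psi]$ and $t[\phi]$ must have identical type formals and identical argument/return types; in particular they have identical interface-head names in the bounds of their method type formals and identical total arities. The structural pieces that the translation inspects therefore coincide, so the translated signatures agree on both the nose.

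The main obstacle is the careful bookkeeping needed to match rule \rulename{\dmeth} (on the struct side when $u$ is a struct) against rule \rulename{\dinterface} (on the interface side for $t$), verifying in particular that $\asParameters{}$ applied to matching source bounds yields pointwise identical sequences of dictionary parameters, and that the abstractor/applicator pair generated by $\makeDictMeth{}$ does not introduce additional methods into the visible interface of $u$ (it only introduces fresh named types and their \apply\ methods, which live outside $\methods(u)$). Once this is established, FG's \rulename{<:i} yields $u <: t$, completing the case.
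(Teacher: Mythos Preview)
Your proposal is correct and follows the same approach as the paper: case analysis on the FGG subtyping derivation, with the \rulename{<:s} case immediate and the \rulename{<:i} case handled by showing that the erased method signatures (produced by \rulename{\dspec}/\rulename{\dmeth} and $\specMetadata{}$) depend only on data that is invariant under the receiver-level substitutions $[\Phi\by\phi]$ and $[\Phi'\by\psi]$. The paper's own proof is a single line (``By case analysis on $\subtype[\emptyset]{u[\psi]}{t\typeActualReceive}$''), so you have simply supplied the details that the paper elides, including the key observation that $\asParameters{}$ reads only interface head-names of bounds (unaffected by substitution) and that $\makeDictMeth{}$ places its \apply\ method on a fresh struct rather than on $u$.
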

\begin{proof}
    By case analysis on $\subtype[\emptyset]{u[\psi]}{t\typeActualReceive}$. 
  \end{proof}

\begin{restatable}[Value substitution is compositional upto $\precongdict$ ]
    {lemrest}{lemprepostval}
  \label{lem:prepostval}
  Let $\Gamma = \multi{x : \tau}$,
  expression $e$ be of type
  $\wellTyped[\emptyset; \Gamma]{e}{\tau'}$,
  and expressions $\multi{v}$ be typed by
  $\wellTypedMulti[\emptyset; \emptyset]{v}{\sType{\sigma}}$
  such that
  $\subtypeMulti[\emptyset]{\sType{\sigma}}{\tau}$.
  We have that
  $\map[\emptyset, \emptyset, \Gamma]{e}[
          \multi{x\by \map[\emptyset, \emptyset, \emptyset]{v}}
      ] \precongdict^*
      \map[\emptyset; \emptyset; \emptyset]{e[\multi{x\by v}]}$
\end{restatable}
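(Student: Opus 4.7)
The plan is to proceed by structural induction on the expression $e$, following the dictionary-passing translation rules of Figure~\ref{fig:dict:prog}. Since $\Delta = \emptyset$ and $\eta = \emptyset$ in every ambient environment, several simplifications apply: the rule \rulename{\ddictcall} is inapplicable (its side condition requires the receiver's type to be a type parameter), every application of \textit{makeDict} produces a fully concrete dictionary independent of $x$, and the auxiliary map $\zeta = (-.\typeField) \circ \eta$ is empty so all occurrences of \typemeta{-} in the translation of $e$ are already closed.

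The base case \rulename{\dvar} is immediate: $\map[\emptyset;\emptyset;\Gamma]{x_i} = x_i$, so substituting by $\map[\emptyset;\emptyset;\emptyset]{v_i}$ matches $\map[\emptyset;\emptyset;\emptyset]{v_i} = \map[\emptyset;\emptyset;\emptyset]{x_i[\multi{x\by v}]}$ exactly, with no $\precongdict$ steps required. For \rulename{\dvalue}, the dictionary payload $\lex{\phi}$ is constant in $x$, so the result follows by applying the inductive hypothesis componentwise to each field expression. For \rulename{\dassert}, the target type metadata $\typemeta[\emptyset]{\tau}$ does not mention $x$, so only the subject expression requires induction.

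The key cases are \rulename{\dfield} and \rulename{\dcall}, which insert a synthetic type assertion whose asserted type is derived from the static type of the receiver under $\Gamma$. Concretely, for $e.f$ typed with $\wellTyped[\emptyset;\Gamma]{e}{\sType{t}\typeActualReceive}$ we have $\map{e.f} = \map{e}.(\sType{t}).f$. After substitution, if $e$ mentions some $x_i$, then the translated $e[\multi{x\by v}]$ may have a strictly more specific structure type, say $\sType{t'}[\phi']$ with $\subtype[\emptyset]{\sType{t'}[\phi']}{\sType{t}\typeActualReceive}$ (guaranteed by the subtyping hypothesis on the substituted values together with Lemma~\ref{lem:sub:pres}). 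The translation of the substituted expression then synthesises the stronger assertion $\map{e[\multi{x\by v}]}.(\sType{t'}).f$. The second clause of $\precongdict$ in Definition~\ref{def:dictreso} is precisely designed to close this gap: it refines any $v.(t)$ to $v.(u)$ whenever $\wellTyped[]{v}{u}$ and $u<:t$. Thus $\map{e}[\multi{x\by\map{v}}].(\sType{t}).f$ first reduces by induction on the subterm to $\map{e[\multi{x\by v}]}.(\sType{t}).f$ (using $\precongdict$ under the context $\congEval = \hole.(\sType{t}).f$) and then by one further $\precongdict$ step to $\map{e[\multi{x\by v}]}.(\sType{t'}).f$. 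The case \rulename{\dcall} is analogous; the dictionary arguments $\lex{\psi}$ are themselves unaffected by substitution because \textit{makeDict} is evaluated in the empty environment.

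The main obstacle is therefore not the inductive machinery but the bookkeeping on synthetic type assertions: one must show that every mismatch between the types chosen by the two translations is either identical or strictly refinable, and that such a refinement is always admissible under the appropriate $\congEval$ context. To discharge this cleanly I would first prove a small auxiliary claim stating that if $\wellTyped[\emptyset;\Gamma]{e}{\tau'}$ and $\multi{v}$ satisfies the hypotheses of the lemma, then $\wellTyped[\emptyset;\emptyset]{e[\multi{x\by v}]}{\tau''}$ for some $\tau''<:\tau'$ (a direct consequence of the standard substitution lemma for \gls{fgg}). Combining this with the confluence of $\precongdict$ (Lemma~\ref{lem:rec}) ensures that the $\precongdict^\ast$ witnesses produced in the inductive cases compose coherently when they appear as subterms of an outer \rulename{\dvalue}, \rulename{\dfield}, \rulename{\dcall}, or \rulename{\dassert} step, yielding the desired relation at the top level.
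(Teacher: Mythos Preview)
Your proposal is correct and follows essentially the same approach as the paper: induction on the translation rule, with the crux being the synthetic type assertions inserted by \rulename{\dcall} (and \rulename{\dfield}), which after substitution may need to be refined via the second clause of $\precongdict$ using the FGG substitution lemma together with Lemma~\ref{lem:sub:pres}. One minor simplification: in the \rulename{\dfield} case the receiver already has a structure type $\sType{t}\typeActualReceive$, and since structure subtyping in \gls{fgg} is reflexive (\rulename{\subS}), the post-substitution type is identical and no refinement step is actually needed there---only \rulename{\dcall} genuinely requires the assertion-refinement clause of $\precongdict$.
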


\begin{proof}
    By induction on the translation rule used,
    we apply the substitution of each $x_i$ in turn.

    \begin{itemize}
        \item[] \caseof{d-call}
              \[
                  \namedRule{\dcall}{
                      \infer{
                          \dict[\emptyset; \emptyset; \Gamma]{
                              e.m\typeActualMethod(\multi{e})
                          }{
                              \map[\emptyset;\emptyset;\Gamma]{e}.(t).m(\lex{\psi}, \multi{ \map[\emptyset;\emptyset;\Gamma]{e}})
                          }
                      }{
                          \wellTyped[\emptyset; \Gamma]{e}{t\typeActualReceive}
                          & \lex{\psi} = \makeDict[\emptyset; \emptyset]{\psi, \Psi}
                          & (m[\Psi](\multi{x~\tau})~\tau) \in \methods[\Delta](t\typeActualReceive)
                      }
                  }
              \]
              By the substitution lemma \cite[Lemma~4.2]{griesemer2020featherweight}
              we have
              \begin{pfsteps*}
                  \pf[1]{\wellTyped[\emptyset; \emptyset]{e[\multi{x\by v}]}{u[\psi']}}
                  \pf[2]{\subtype[\emptyset]{u[\psi']}{t\typeActualReceive}}
              \end{pfsteps*}
              and
              \begin{pfsteps*}
                  \pf[3]{\dict[\emptyset; \emptyset; \emptyset]{
                          e[\multi{x\by v}].m\typeActualMethod(\multi{e[\multi{x\by v}]})
                      }{
                          \map[\emptyset;\emptyset;\emptyset]{e[\multi{x\by v}]}.(u).m(\lex{\psi}, \multi{ \map[\emptyset;\emptyset;\emptyset]{e[\multi{x\by v}]}})
                      }}
              \end{pfsteps*}
              By lemma~\ref{lem:sub:pres} \pfref{2} we have that $u<: t$.
              We now have
              \begin{pfsteps*}
                  \pf[4]{\map[\emptyset;\emptyset;\Gamma]{e}.(t).m(\lex{\psi}, \multi{ \map[\emptyset;\emptyset;\Gamma]{e}})
                      \precongdict
                      \map[\emptyset;\emptyset;\Gamma]{e}.(u).m(\lex{\psi}, \multi{ \map[\emptyset;\emptyset;\Gamma]{e}})
                  }
              \end{pfsteps*}
              and by the induction we have
              \begin{pfsteps*}
                  \pf[5]{
                      \map[\emptyset;\emptyset;\Gamma]{e}.(u).m(\lex{\psi}, \multi{ \map[\emptyset;\emptyset;\Gamma]{e}})
                      \precongdict^*
                      \map[\emptyset;\emptyset;\emptyset]{e[\multi{x\by v}]}.(t).m(\lex{\psi}, \multi{ \map[\emptyset;\emptyset;\emptyset]{e[\multi{x\by v}]}})
                  }
              \end{pfsteps*}
    \end{itemize}
\end{proof}

\begin{lemma}[Method specification simulaiton preserves substitution]
    \label{lem:methspec}
    Let $\wellFormedMulti[\alpha:\iType{\tau}]{M}$, and assume $\multi{\sigma}$ such that
    $\subtypeMulti[\emptyset]{\sigma}{\iType{\tau}}$.
    We also assume $\this = \metadataName{\iType{t}}\sytxBrace{\multi{\typemeta[\emptyset]{\sigma}}}$.
    For $\zeta = \{\multi{\alpha \mapsto \this.\typeField}\}$ it
    holds that $\signatureMeta[\zeta]{M}~\prepostsim^*~\signatureMeta[\emptyset]{M[\alpha \by \sigma]}$.
\end{lemma}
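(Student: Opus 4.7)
The plan is to reduce the statement to a more primitive claim about the type-metadata construction $\typemeta[-]{-}$, and then reassemble the signature metadata from its components. Concretely, I would first prove the following auxiliary lemma by induction on the structure of a type $\rho$ with $\fv{\rho} \subseteq \multi{\alpha} \cup \multi{\beta}$:
\begin{quote}
\emph{If $\zeta' = \zeta, \{\beta_i \mapsto \paramTypeMeta_i\{\}\}_i$ and $\zeta'' = \{\beta_i \mapsto \paramTypeMeta_i\{\}\}_i$, then}
$\typemeta[\zeta']{\rho} \;\prepostsim^*\; \typemeta[\zeta'']{\rho[\multi{\alpha \by \sigma}]}$.
\end{quote}
The interesting base case is $\rho = \alpha_i$: by the definitions, $\typemeta[\zeta']{\alpha_i} = \zeta(\alpha_i) = \this.\typeField_i$, and since $\this = \metadataName{\iType{t}}\sytxBrace{\multi{\typemeta[\emptyset]{\sigma}}}$ is a value, the field-access redex lies in the $\assertpattern$ pattern set, so one $\prepostsim$ step yields $\typemeta[\emptyset]{\sigma_i}$, which is exactly $\typemeta[\zeta'']{\alpha_i[\multi{\alpha \by \sigma}]}$ (noting that $\sigma_i$ is closed, so the choice of mapping is irrelevant). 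The base case for $\rho = \beta_j$ is immediate: both sides give $\paramTypeMeta_j\{\}$, and $\beta_j$ is unaffected by the substitution. The case $\rho = t[\multi{\rho'}]$ unfolds to $\metadataName{t}\sytxBrace{\multi{\typemeta[\zeta']{\rho'}}}$ and uses the induction hypothesis componentwise, after which confluence/closure of $\prepostsim^*$ under the constructor $\metadataName{t}\sytxBrace{\cdots}$ (a straightforward consequence of the context structure implicit in $\prepostsim$) finishes the case.

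With this auxiliary lemma in hand, I would then unfold $\signatureMeta[\zeta]{M}$ on both sides. Writing $M = [\typeFormal[\beta~\iType{\sigma}]](\multi{x~\tau})~\tau_r$, the left-hand side expands to
\[
\fnMeta{n}\{\multi{\typemeta[\zeta']{\iType{\sigma}}},\ \multi{\typemeta[\zeta']{\tau}},\ \typemeta[\zeta']{\tau_r}\},
\]
while the right-hand side expands (after pushing the substitution inside, using the fact that $\alpha$-conversion lets us assume $\multi{\beta}$ is disjoint from $\multi{\alpha}$ and does not appear in $\multi{\sigma}$) to
\[
\fnMeta{n}\{\multi{\typemeta[\zeta'']{\iType{\sigma}[\multi{\alpha \by \sigma}]}},\ \multi{\typemeta[\zeta'']{\tau[\multi{\alpha \by \sigma}]}},\ \typemeta[\zeta'']{\tau_r[\multi{\alpha \by \sigma}]}\}.
\]
Applying the auxiliary lemma to each component gives a $\prepostsim^*$ reduction in each slot, and these reductions can be sequenced (since $\prepostsim$ is a reduction under an evaluation context and the struct constructor $\fnMeta{n}\{\cdots\}$ evaluates its arguments left-to-right) to obtain the desired $\signatureMeta[\zeta]{M} \prepostsim^* \signatureMeta[\emptyset]{M[\multi{\alpha \by \sigma}]}$.

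The main obstacle I anticipate is the bookkeeping around the extended map $\zeta'$ and the capture-avoidance when pushing $[\multi{\alpha \by \sigma}]$ past the method-level binder $[\typeFormal[\beta~\iType{\sigma}]]$. This is not conceptually deep, but it requires care to ensure that \emph{(i)} the additional $\beta_i$-bindings in $\zeta'$ coincide with those in $\zeta''$ and therefore contribute nothing extra to the reduction, and \emph{(ii)} the well-formedness premise $\wellFormedMulti[\alpha:\iType{\tau}]{M}$ guarantees $\fv{\iType{\sigma}}, \fv{\tau}, \fv{\tau_r} \subseteq \multi{\alpha}\cup\multi{\beta}$, which is exactly the hypothesis needed to apply the auxiliary lemma. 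The rest is a mechanical traversal, and no other reduction pattern beyond the $v.\typeField_i$ case of $\assertpattern$ is invoked.
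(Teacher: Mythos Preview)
Your proposal is correct and follows essentially the same route as the paper: both unfold $\signatureMeta{-}$, reduce the claim to a componentwise statement about $\typemeta[-]{-}$, and prove that by structural induction on types with the same three cases ($\alpha_i$, $\beta_j$, $t[\multi{\rho'}]$). The only cosmetic difference is that the paper substitutes one $\alpha_i$ at a time (justified by distinctness) while you handle the whole tuple $\multi{\alpha\by\sigma}$ simultaneously; your treatment of the evaluation-context closure under $\fnMeta{n}\{\cdots\}$ and the capture-avoidance bookkeeping is slightly more explicit than the paper's, but the underlying argument is identical.
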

\begin{proof}
    Since each $\alpha_i$ is distinct we can consider each separately.
    We begin by noting that $\arities{M[\alpha_i \by \sigma_i]} = \arities{M} = n$.
    We also define a suitable $\zeta'$ for the $\paramTypeMeta\{\}$ map,
    such that $\alpha_i \not \in \dom{\zeta'}$.
    \begin{flalign*}
        \qquad & \signatureMeta[\emptyset]{M[\alpha_i \by \sigma_i]}&\\
        & \quad = \fnMeta{n}\sytxBrace{
            \multi{\typemeta[\zeta']{\tau[\alpha_i \by \sigma_i]}}
        }\\
        & \signatureMeta[\zeta]{M} \\
        &\quad = \fnMeta{n}\sytxBrace{
            \multi{\typemeta[\zeta', \zeta]{\tau}}
        }
    \end{flalign*}
    Where $\multi{\tau} = \multi{\tau_0}, \multi{\tau_1}, \tau_2$ for
    $M = [\multi{\beta ~ \tau_0}](\multi{x~\tau_1})\tau_2$.
    It now suffices to show that for all $\tau$. $\typemeta[\zeta', \zeta]{\tau} \prepostsim^*
        \typemeta[\zeta']{\tau[\alpha_i \by \sigma_i]}$. This is done by induction on $\tau$.
    \begin{itemize}
        \item[] \caseStd $\tau = \alpha_i$\\
              The term $\typemeta[\zeta']{\alpha_i[\alpha_i \by \sigma_i]}$ becomes $\typemeta[\zeta']{\sigma_i}$
              which is equal to $\typemeta[\emptyset]{\sigma_i}$ since $\sigma_i$ is defined
              outside the scope of $\zeta'$.
              The other term $\typemeta[\zeta', \zeta]{\alpha_i}$ is equal to
              $(\zeta', \zeta)(\alpha_i)$ which by the definition of $\zeta$ is\\
              $\metadataName{\iType{t}}\sytxBrace{\multi{\typemeta[\emptyset]{\sigma_i}}}.\typeField_i \prepostsim \typemeta[\emptyset]{\sigma_i}$.
        \item[] \caseStd $\tau = \beta$ where $\beta \noteq \alpha_i$
              Both terms are immediately equal to $\zeta'(\beta)$.
        \item[] \caseStd $\tau = t[\multi{\tau}]$
              By induction on each $\tau_i$.
    \end{itemize}
\end{proof}

\begin{lemma}
    \label{lem:typeformalsmatchasparams}
    If $\Phi \by_\Delta \phi$ with $\eta$ such that 
    $\dom{\eta} = \dom{\Delta}$ then
    $\vtype(\makeDict{\phi, \Phi}) <: \asParameters{\Phi}$.
\end{lemma}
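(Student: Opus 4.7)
The plan is to reduce the claim to its pointwise form and then do case analysis on the three rules defining $\makeDict$ in Figure~\ref{fig:dict:aux}. Since both $\makeDict{\phi,\Phi}$ and $\asParameters{\Phi}$ distribute componentwise over the type formal $\Phi = \typeFormal[\multi{\alpha~\iType{t}\typeActualReceive}]$ and the type actual $\phi = \multi{\tau}$, and since $\asParameters{\Phi} = \multi{\dictlit~\dictName{\iType{t}}}$, it suffices to prove that for each pair $(\tau_i,\iType{t}_i\typeActualReceive_i)$ we have $\vtype(\makeDict{\tau_i,\iType{t}_i\typeActualReceive_i}) <: \dictName{\iType{t}_i}$. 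The hypothesis $\Phi \by_\Delta \phi$ unfolds (via the auxiliary definition in Figure~\ref{fig:fcg:types}) to $\Delta \vdash \multi{\tau <: \iType{t}\typeActualReceive}[\eta]$, so for each $i$ we may assume $\subtype[\Delta]{\tau_i}{\iType{t}_i\typeActualReceive_i}$.

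The inductive step then proceeds by case analysis on the shape of $\tau_i$, matching exactly one of the three rules for $\makeDict$. First, when $\tau_i$ is a named type $t[\phi']$, the rule produces a struct literal $\sType{t}\{\multi{\methName{t,m}},\mathit{meta}\}$ with $\sType{t} = \dictName{\iType{t}_i}$, so $\vtype$ of the result is literally $\dictName{\iType{t}_i}$ and the conclusion follows by reflexivity via \rulename{\subS}. Second, when $\tau_i$ is a type parameter $\alpha$ with $\iType{t}_i\typeActualReceive_i$ a strict subtype of $\Delta(\alpha)$, the rule again constructs a fresh struct literal whose head is the name $\dictName{\iType{t}_i}$, so the same reflexivity argument applies. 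Third, when $\tau_i = \alpha$ and $\Delta(\alpha) = \iType{t}_i\typeActualReceive_i$ exactly, the rule returns $\eta(\alpha)$ directly, and the conclusion reduces to the invariant that $\vtype(\eta(\alpha)) = \dictName{\Delta(\alpha)} = \dictName{\iType{t}_i}$.

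The main obstacle is establishing this last invariant on $\eta$. It is never stated as a standalone lemma, but it is baked into the definition of $\map{\Gamma}$ from the start of Appendix~\ref{app:proofs}: dictionary variables are always introduced with type $\dictName{\Delta(\eta^{-1}(\dictlit))}$, and the translation rules \rulename{\dmeth} and \rulename{\dstruct} maintain this correspondence whenever they extend $\eta$ (each fresh dictionary argument or receiver field is typed using $\asParameters{\Phi}$). Thus the assumption $\dom{\eta} = \dom{\Delta}$ suffices to conclude the typing invariant, and the third case closes. Combining the three cases with the componentwise decomposition yields the desired subtype judgement $\vtype(\makeDict{\phi,\Phi}) <: \asParameters{\Phi}$.
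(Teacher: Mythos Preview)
Your proof is correct and follows the same route the paper takes---the paper's entire argument is the single line ``Immediately from the definition of $\makeDict[]{}$ and $\asParameters{}$,'' and your case analysis is exactly the unfolding of that claim. Your treatment is in fact more careful than the paper's: you correctly isolate the one non-syntactic case (when $\makeDict{\alpha,\iType{\tau}}$ returns $\eta(\alpha)$ verbatim) and trace the needed typing invariant back to how $\map{\Gamma}$ and the rules \rulename{\dmeth}/\rulename{\dstruct} introduce dictionary variables, which the paper leaves implicit.
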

\begin{proof}
    Immediately from the definition of $\makeDict[]{}$ and
    $\asParameters{}$.
\end{proof}

\lemtypepres*
\begin{proof}
    By induction on the type of $e$.
    \begin{itemize}
        \item[] \caseof{\tfield}
              \begin{pfsteps*}
                  \pf[1]{\wellTyped[\Delta; \Gamma]{e.f}{\tau}}
                  \pf[2]{\dict{e.f}{\lex{e}.(\sType{t}).f}}
              \end{pfsteps*}
              For \pfref{1} to hold $e$ must be of type $\sType{t}\typeActualReceive$.
              By the induction hypothesis, either $\wellTyped[\map{\Gamma}]{\map{e}}{\any}$
              or $\wellTyped[\map{\Gamma}]{\map{e}}{\sType{t}}$.
              In either case $\map{e}.(\sType{t})$ is well typed by \rulename{\tasserts}
              or \rulename{\tstupid} (\emph{resp.}). Since $f$ is a field of
              type $\sType{t}\typeActualReceive$ it must also be a field of $\sType{t}$.
              We get the final typing judgement $\wellTyped[\map{\Gamma}]{\map{e}.(\sType{t}).f}{\any}$.
        \item[] \caseof{\tvar}
              Immediate by our definition of $\map{\Gamma}$.
        \item[] \caseof{\tliteral}
              \begin{pfsteps*}
                  \pf[1]{\wellTyped[]{\sType{t}\typeActualReceive\sytxBrace{\multi{e}}}{\sType{t}\typeActualReceive}}
                  \pf[2]{\dict{\sType{t}\typeActualReceive\sytxBrace{\multi{e}}}{
                          \sType{t}\sytxBrace{\multi{\lex{e}}, \lex{\phi}}}}
                  \pf[3]{\lex{\phi} = \makeDict{\phi, \Phi}}
                  \pfstep{\inversion{\tliteral}}
                  {4}{\type \sType{t}\typeFormalType \struct{x~\tau}}
                  \pfstep{\rulename{\dstruct}}
                  {5}{\dict[]{\type \sType{t}\typeFormalType \struct{x~\tau}}{
                          \type \sType{t} \struct{x~\any, \asParameters{\Phi}}
                      }}
              \end{pfsteps*}
              Each $\lex{e_i}$ implements the $\any$ type while by
              Lemma \ref{lem:typeformalsmatchasparams}, $\lex{\phi}$
              implements $\asParameters{\Phi}$. As such
              \begin{pfsteps*}
                  \pf[6]{\wellTyped{\sType{t}\sytxBrace{\multi{\lex{e}}, \lex{\phi}}}{\sType{t}}}
              \end{pfsteps*}
        \item[] \caseof{\tcall}
              \begin{pfsteps*}
                  \pf[1]{\wellTyped{e.m\typeActualMethod(\multi{e})}{\tau}}
                  \pfSubcase{\wellTyped{e}{\alpha}}
                  \pf[2]{\dict{
                          e.m\typeActualMethod(\multi{e})
                      }{
                          \eta(\alpha).m.\apply(\lex{e}, \lex\psi, \multi{\lex{e}})
                      }}
              \end{pfsteps*}
              Since \pfref{1} we know that the bounds of $\alpha$
              ($\iType{t}\typeActualReceive = \Delta(\alpha)$)
              contains the method $m$ and that the type of the
              dictionary $\eta(\alpha)$ is $\dictName{\iType{t}}$
              we know that $\dictName{\iType{t}}$ has a field $m$.
              We further know that the field $m$ has type
              $\nAryFunction{n}$ where $n = |\psi| + |\multi{e}|$.
              Because all arguments to the $\apply$ method are of type
              $\any$ the rhs of \pfref{2} is well typed.
              \pfSubcase{\wellTyped{e}{t\typeActualReceive}}
              \begin{pfsteps*}
                  \pf[3]{\dict{
                          e.m\typeActualMethod(\multi{e})
                      }{
                          \lex{e}.m(\lex\psi, \multi{\lex{e}})
                      }}
              \end{pfsteps*}
              We can combine the cases where $t$ is a structure
              or an interface since \rulename{\dmeth} and
              \rulename{\dspec} both do the same thing.
              If $m\typeFormalMethod(\multi{x~\tau})~\tau \in
                  \methods_{\Delta}(t\typeActualReceive)$ then the translation
              produces $m(\lex\Psi, \multi{x~\any})~\any \in
                  \methods(t)$.
    \end{itemize}
\end{proof}

\thmopcorrespond*

\begin{proof}

    By induction over the assumed reduction.

    \begin{itemize}
        \item[] \caseof{\rfields} --- (a) direction
              \begin{pfsteps*}
                  \pf[1]{\sType{t}\typeActualReceive\sytxBrace{\multi{v}}.f_i \red v_i}
                  \pf[2]{\dict[\emptyset; \emptyset; \emptyset]{
                          \sType{t}\typeActualReceive\sytxBrace{\multi{v}}.f_i
                      }{
                          \sTypeInit{t}{\multi{\lex{v}}, \makeDict[\emptyset; \emptyset]{\phi, \Phi}}.f_i
                      }}
              \end{pfsteps*}
              Inversion on \rulename{\rfields} \pfref{1} and the the definition of
              $\fields$ gives us
              \begin{pfsteps*}
                  \pf[3]{(\multi{f~\tau})[\eta] = \fields(\sType{t}\typeActualReceive)}
                  \pf[4]{\type \sType{t}\typeFormalType \struct{\multi{f~\tau}}\in \multi{D}}
              \end{pfsteps*}
              Applying the dictionary translation rule \rulename{\dstruct} to \pfref{4} we get
              \begin{pfsteps*}
                  \pf[5]{\type \sType{t} \struct{\multi{f~\any}, \multi{\dictlit~u}} \in \multi{\lex{D}}}
                  \pf[6]{(\multi{f~\any}, \multi{\dictlit~u}) = \fields(\sType{t})}
                  \pfstep{\rulename{\rfields} \pfref{6, 2}}
                  {7}{\reduction{
                          \sTypeInit{t}{\multi{\lex{v}}, \makeDict[\emptyset; \emptyset]{\phi, \Phi}}.f_i
                      }{
                          \lex{v_i}
                      }}
                  \pfstep{\inversion{\dvalue} \pfref{2}}
                  {8}{
                      \dict[\emptyset;\emptyset;\emptyset]{v_i}{\lex{v_i}}
                  }
              \end{pfsteps*}
        \item[] \caseof{\rfields} --- (b) direction mostly the same as
              the (a) direction, since there are no $\precongdict$ reductions available to
              $\sTypeInit{t}{\multi{\lex{v}}, \makeDict[\emptyset; \emptyset]{\phi, \Phi}}.f_i$
              as both $\multi{\lex{v}}$ and $\makeDict[\emptyset;\emptyset]{\phi,\Phi}$ are values.
        \item[] \caseof{\rcall} --- (a) direction \\
              We begin by stating our assumptions explicitly
              \begin{pfsteps*}
                  \pf[1]{\reduction{
                          v.m\typeActualMethod(\multi{v})
                      }{
                          e[\theta][\this\by v, \multi{x\by v}]
                      }
                  }
                  \pf[2]{
                      \wellTyped[\emptyset; \emptyset]{v.m\typeActualMethod(\multi{v})}{\tau[\theta]}
                  }
              \end{pfsteps*}
              with $v$ of the form
              \begin{pfsteps*}
                  \pf[vform]{v = \sType{t}\typeActualReceive\sytxBrace{\multi{v_1}}}
                  \pf[7]{\wellTyped[\emptyset; \emptyset]{v}{\sType{t}\typeActualReceive}}
              \end{pfsteps*}
              By analysing the proof tree of \pfref{1} using inversion on \rulename{\rcall}
              and the definition of $\body$ we get
              \begin{pfsteps*}
                  \pf[4]{
                      (\this:\sType{t}\typeActualReceive,~ \multi{x:\tau}).e[\theta]
                      = \body(\vtype(v).m\typeActualMethod)
                  }
                  \pf[5]{\theta = (\Phi, \Psi := \phi, \psi)}
                  \pf[6]{\funcDelc{\sType{t}\typeFormalReceive}{m\typeFormalMethod}{\multi{x~\tau}}{\tau}{\return~e} \in \multi{D}}
              \end{pfsteps*}
              and so $v.m\typeActualMethod(\multi{v})$ is translated using rule \rulename{\dcall}
              \begin{pfsteps*}
                  \pf[8]{\dict[\emptyset;\emptyset; \emptyset]{
                          v.m\typeActualMethod(\multi{v})
                      }{
                          \lex{v}.(\sType{t}).m(\makeDict[\emptyset; \emptyset]{\psi, \Psi}, \multi{\lex{v}})
                      }}
              \end{pfsteps*}
              where $\lex{v}$ is defined using \rulename{\dvalue}
              \begin{pfsteps*}
                  \pf[vddagger]{
                      \dict[\emptyset; \emptyset; \emptyset]{
                          \sType{t}\typeActualReceive\sytxBrace{\multi{v_1}}
                      }{
                          \sType{t}\sytxBrace{\multi{\lex{v_1}}, \makeDict[\emptyset, \emptyset]{\phi, \Phi}}
                      }
                  }
              \end{pfsteps*}
              With $\Phi = (\typeFormal)$ and
              $\Psi = (\typeFormal[\multi{\beta~\iType{u}[\Psi']}])$,
              the method definition \pfref{6} is translated using \rulename{\dmeth}
              \begin{pfsteps*}
                  \pf[9]{\eta = \multi{\alpha\mapsto\this.\dictlit}, \multi{\beta \mapsto \dictlit}}
                  \pf[10]{\dict[\Phi,\Psi; \eta; \this : {\sType{t}[\multi{\alpha}]}, \multi{x:\tau}]{e}{\lex{e}}}
                  \pf[11]{\dict[]{
                          \funcDelc{\sType{t}\typeFormalReceive}{m\typeFormalMethod}{\multi{x~\tau}}{\tau}{\return~e} \pfbreakline
                      }{
                          \funcDelc{\sType{t}}{m}{\multi{\dictlit~\dictName{\iType{u}}},~\multi{x~\any}}{\any}{\return~\lex{e}}
                      }}
              \end{pfsteps*}
              From here on we write $\lex{e}$ using the functional notation
              \[\lex{e} = \map[\Phi,\Psi; \eta; \this : {\sType{t}[\multi{\alpha}]}, \multi{x:\tau}]{e}\]
              Now that we have fleshed out the translation we begin to look at
              the translated term's reductions.
              For our value $v$ of type $\sType{t}\typeFormalReceive$, the translated
              term $\lex{v}$ is both a value and of type $\sType{t}$.
              This is immediately evident by \rulename{\dvalue}.
              As such the assertion is always resolved by $\prepre$.
              \begin{pfsteps*}
                  \pf[12]{\lex{v}.(\sType{t}).m(\makeDict[\emptyset; \emptyset]{\psi, \Psi}, \multi{\lex{v}}) \prepre
                      \lex{v}.m(\makeDict[\emptyset; \emptyset]{\psi, \Psi}, \multi{\lex{v}})}
              \end{pfsteps*}
              resolving the method call to the implementation in \pfref{11}
              \begin{pfsteps*}
                  \pf[13]{
                  \lex{v}.m(\makeDict[\emptyset; \emptyset]{\psi, \Psi}, \multi{\lex{v}})
                  \pfbreakline\red
                  \map[\Phi,\Psi; \eta; \this : {\sType{t}[\multi{\alpha}]}, \multi{x:\tau}]{e}
                  [\this \by \lex{v},
                  \multi{\dictlit} \by \makeDict[\emptyset; \emptyset]{\psi, \Psi},
                  \multi{x \by \lex{v}}]
                  }
              \end{pfsteps*}
              By the definition of $\lex{v}$, we can separate the substitution $\this\by \lex{v}$ into\\
              $\this\by \lex{v}, \multi{\this.\dictlit \by \lex{v}.\dictlit}$ meaning that we can
              rewrite the reduced term and then apply Lemma~\ref{lem:preposttype} and~\ref{lem:prepostval}
              \begin{pfsteps*}
                  \pf[13]{
                  \map[\Phi,\Psi; \eta; \this : {\sType{t}[\multi{\alpha}]}, \multi{x:\tau}]{e}
                  [\this \by \lex{v},
                  \multi{\dictlit} \by \makeDict[\emptyset; \emptyset]{\psi, \Psi},
                  \multi{x \by \lex{v}}]\pfbreakline
                  =
                  \map[\Phi,\Psi; \eta; \this : {\sType{t}[\multi{\alpha}]}, \multi{x:\tau}]{e}
                  [\multi{\this.\dictlit \by \lex{v}.\dictlit},
                  \multi{\dictlit} \by \makeDict[\emptyset; \emptyset]{\psi, \Psi},
                  \this \by \lex{v},
                  \multi{x \by \lex{v}}]\pfbreakline
                  =
                  \map[\Phi,\Psi; \eta; \this : {\sType{t}[\multi{\alpha}]}, \multi{x:\tau}]{e}
                  [\multi{\this.\dictlit} \by \makeDict[\emptyset;\emptyset]{\phi, \Phi},
                  \multi{\dictlit} \by \makeDict[\emptyset; \emptyset]{\psi, \Psi},
                  \this \by \lex{v},
                  \multi{x \by \lex{v}}]\pfbreakline
                  \precongdict^*
                  \map[\emptyset; \emptyset; \this : {\sType{t}[\multi{\alpha}]}, \multi{x:\tau}]{e[\theta]}
                  [ \this \by \lex{v},
                  \multi{x \by \lex{v}}]
                  \pfbreakline
                  \precongdict^*
                  \map[\emptyset; \emptyset; \emptyset]{e[\theta][\this \by v,
                              \multi{x \by v}]}
                  }
              \end{pfsteps*}
        \item[] \caseof{\rcall} --- (b) direction
              \begin{pfsteps*}
                  \pf[1]{\wellTyped[\emptyset; \emptyset]{
                          \sType{t}\typeActualReceive\sytxBrace{\multi{v_1}}.m\typeActualMethod(\multi{v_2})
                      }{\tau[\theta]}}
                  \pf[2]{\dict[\emptyset;\emptyset;\emptyset]{
                          \sType{t}\typeActualReceive\sytxBrace{\multi{v_1}}.m\typeActualMethod(\multi{v_2})
                      }{
                          \sType{t}\sytxBrace{\multi{\map[\emptyset;\emptyset;\emptyset]{v_1}}, \lex{\phi}}.(\sType{t}).m(\lex{\psi},\multi{\map[\emptyset;\emptyset;\emptyset]{v_2}})
                      }
                  }
                  \pf[3]{\lex{\phi} = \makeDict[\emptyset;\emptyset]{\phi, \Phi}}
                  \pf[4]{\lex{\psi} = \makeDict[\emptyset;\emptyset]{\psi, \Psi}}
              \end{pfsteps*}
              By the welltypedness of \pfref{1} we know that
              \begin{pfsteps*}
                  \pf[5]{\funcDelc{\sType{t}\typeFormalReceive}{m\typeFormalMethod}{ \multi{x~\tau}}{\tau}{\return e} \in \multi{D}}
                  \pf[6]{\type \sType{t}\typeFormalType \struct{\multi{y~\sigma}} \in \multi{D}}
              \end{pfsteps*}
              Translating \pfref{5} with $\Delta = \Phi, \Psi$ where $\Phi = \multi{\alpha~\iType{\tau}}$,
              $\Psi = \multi{\beta~\iType{\sigma}}$, and
              $\eta = \multi{\alpha \mapsto \this.\dictlit}, \multi{\beta \mapsto \dictlit}$ we get
              \begin{pfsteps*}
                  \pf[7]{\funcDelc{\sType{t}}{m}{ \multi{x~\any}}{\any}{\return \map{e}} \in \multi{\lex D}}
              \end{pfsteps*}
              The (b) direction assumes a reduction on the translated term.
              We first note that $\makeDict[\emptyset;\emptyset]{\cdots}$ is always
              a value. We then consider the trivial $\prepre$ reduction available
              before taking the \rulename{r-call} step.
              \begin{pfsteps*}
                  \pf[8]{
                      \sType{t}\sytxBrace{\multi{\map[\emptyset;\emptyset;\emptyset]{v_1}}, \lex{\phi}}.(\sType{t}).m(\lex{\psi},\multi{\map[\emptyset;\emptyset;\emptyset]{v_2}})
                      \pfbreakline \prepre
                      \sType{t}\sytxBrace{\multi{\map[\emptyset;\emptyset;\emptyset]{v_1}}, \lex{\phi}}.m(\lex{\psi},\multi{\map[\emptyset;\emptyset;\emptyset]{v_2}})
                      \pfbreakline \red
                      \map{e}[\this \by \sType{t}\sytxBrace{\multi{\map[\emptyset;\emptyset;\emptyset]{v_1}}, \lex{\phi}},
                          \multi{x \by \map[\emptyset;\emptyset;\emptyset]{v_2}},
                          \multi{\dictlit} \by \lex{\psi}]
                      \pfbreakline =
                      \map{e}[\multi{\this.\dictlit \by \sType{t}\sytxBrace{\multi{\map[\emptyset;\emptyset;\emptyset]{v_1}}, \lex{\phi}}.\dictlit},
                          \multi{\dictlit} \by \lex{\psi}]
                      [
                          \this \by \sType{t}\sytxBrace{\multi{\map[\emptyset;\emptyset;\emptyset]{v_1}}, \lex{\phi}},
                          \multi{x \by \map[\emptyset;\emptyset;\emptyset]{v_2}}
                      ]
                  }
              \end{pfsteps*}
              When we consider the $\prepostdict$ reduction
              we can relate $\multi{\this.\dictlit \by \sType{t}\sytxBrace{\multi{\map[\emptyset;\emptyset;\emptyset]{v_1}}, \lex{\phi}}.\dictlit}$
              and $\multi{\this.\dictlit \by  \lex{\phi}}$. This allows us to use
              Lemma \ref{lem:preposttype} and \ref{lem:prepostval}.
              \begin{pfsteps*}
                  \pf[8]{
                      \map{e}[\multi{\this.\dictlit \by \sType{t}\sytxBrace{\multi{\map[\emptyset;\emptyset;\emptyset]{v_1}}, \lex{\phi}}.\dictlit},
                          \multi{\dictlit} \by \lex{\psi}]
                      [
                          \this \by \sType{t}\sytxBrace{\multi{\map[\emptyset;\emptyset;\emptyset]{v_1}}, \lex{\phi}},
                          \multi{x \by \map[\emptyset;\emptyset;\emptyset]{v_2}}
                      ]
                      \pfbreakline \prepostdict^*
                      \map[\emptyset;\emptyset;\Gamma]{e[\Phi \by \phi, \Psi \by \Psi]}
                      [
                          \this \by \sType{t}\sytxBrace{\multi{\map[\emptyset;\emptyset;\emptyset]{v_1}}, \lex{\phi}},
                          \multi{x \by \map[\emptyset;\emptyset;\emptyset]{v_2}}
                      ]
                      \pfbreakline \prepostdict^*
                      \map[\emptyset;\emptyset;\emptyset]{e[\Phi \by \phi, \Psi \by \Psi]
                              [\this \by \sType{t}[\phi]\sytxBrace{\multi{v_1}},
                                  \multi{x \by v_2}]
                      }
                  }
              \end{pfsteps*}
              We now look at the (only) reduction available to the original term
              \begin{pfsteps*}
                  \pf[9]{\sType{t}\typeActualReceive\sytxBrace{\multi{v_1}}.m\typeActualMethod(\multi{v_2})
                      \red e[\Phi \by \phi, \Psi \by \psi][
                              \this \by \sType{t}\typeActualReceive\sytxBrace{\multi{v_1}},
                              \multi{x \by v_2}]}
              \end{pfsteps*}
        \item[] \caseof{\rassert} --- (a) direction
              \begin{pfsteps*}
                  \pf[1]{\sType{t}\typeActualReceive\sytxBrace{\multi{v}}.(\tau)
                      \red \sType{t}\typeActualReceive\sytxBrace{\multi{v}}}
                  \pf[2]{\dict[\emptyset;\emptyset;\emptyset]
                      {\sType{t}\typeActualReceive\sytxBrace{\multi{v}}.(\tau)}
                      {\typemeta[\emptyset]{\tau}.\trycast(\sTypeInit{t}{\multi{\lex{v}}, \lex{\phi}})}}
                  \pf[3]{\type~\sType{t}\typeFormalType~T \in \multi{D}}
                  \pf[4]{\lex{\phi} = \makeDict[\emptyset;\emptyset]{\phi, \Phi}}
                  \pfstep{\inversion{\rassert}~\pfref{1}}
                  {5}{\subtype[\emptyset]{\sType{t}\typeActualReceive}{\tau}}
              \end{pfsteps*}
              As such we know that $\typemeta[\emptyset]{\tau}.\trycast$
              should return (as opposed to panicking) if and only if
              $\subtype[\emptyset]{\sType{t}\typeActualReceive}{\tau}$
              \pfSubcase{\tau = \iType{u}\typeActualMethod}
              For \pfref{5} to hold the following must hold
              \begin{pfsteps*}
                  \pf[6]{\methods_\emptyset(\sType{t}\typeActualReceive) \supseteq
                      \methods_\emptyset(\iType{u}\typeActualMethod)}
                  \pf[7]{\type~\iType{u}\typeFormalMethod~\interface{\multi{S}} \in \multi{D}}
              \end{pfsteps*}
              For all $mM_{u} \in \multi{S}$ there
              must exist a function
              \[\func~(\this~\sType{t}\typeFormalReceive)~mM_t~\sytxBrace{\return e} \in \multi{D}\]
              such that \[M_u[\Psi \by \psi] = M_t[\Phi \by \phi]\]
              To show that this is property is preserved we need first elaborate
              a number of other definitions.
              Let $\Psi = (\typeFormal[\multi{\beta~\iType{\sigma}}])$,
              and the map $\zeta$ be $\{\multi{\beta \mapsto \this.\typeField}\}$. %\typemeta[\emptyset]{\sigma}
              \begin{pfsteps*}
                  \pf[9]{\dict[]{\type~\iType{u}\typeFormalMethod~\interface{\multi{S}}}{ \pfbreakline
                      \type \iType{u} \interface{\multi{\lex{S}},~\multi{\specMetadata{S}}}\pfbreakline
                      \funcDelc{\metadataName{\iType{u}}}{\trycast}{x~\any}{\any}{\pfbreakline
                          \qquad\left\{
                          \lit{if} (x.(\iType{u}).\specName{m}() \noteq
                          \signatureMeta{M_u}
                          )~ \sytxBrace{ \lit{panic}
                          }
                          ~\middle|~
                          \begin{matrix}
                              mM_u \in \multi{S}
                          \end{matrix}
                          \right\} ;\pfbreakline
                          \qquad \return x
                          \pfbreakline
                          }
                      }}
              \end{pfsteps*}
              And for $\Phi = (\typeFormal)$ and $\phi = \multi{\tau}$ the map
              $\zeta' = \{\alpha \mapsto \this.\dictlit_i.\typeField\}$.
              \begin{pfsteps*}
                  \pf[10]{\dict[]{
                          \func~(\this~\sType{t}\typeFormalReceive)~mM_t~\sytxBrace{\return e}\pfbreakline
                      }{
                          \func~(\this~\sType{t})~\specName{m}()~\fnMeta{n}~\sytxBrace{\return \signatureMeta[\zeta']{M_t}}
                      }}
                  \pf[11]{\lex{\phi} = \makeDict[\emptyset;\emptyset]{\multi{\tau}, \typeFormal}
                      = \multi{\dictName{\iType{t}}\sytxBrace{\multi{ptr}, \typemeta[\emptyset]{\tau}}  }}
              \end{pfsteps*}
              We may now consider the reduction of the translated term $\typemeta[\emptyset]{\tau}.\trycast(\sTypeInit{t}{\multi{\lex{v}}, \lex{\phi}})$
              {\small
              \begin{pfsteps*}
                  \pf[12]{
                  \typemeta[\emptyset]{\iType{u}\typeActualMethod}.\trycast(\sTypeInit{t}{\multi{\lex{v}}, \lex{\phi}})
                  \pfbreakline \red \pfbreakline
                  \left\{
                  \lit{if} (\sTypeInit{t}{\multi{\lex{v}}, \lex{\phi}}.(\iType{u}).\specName{m}() \noteq
                  \signatureMeta{M_u}
                  )~ \sytxBrace{ \lit{panic}
                  }
                  ~\middle|~
                  \begin{matrix}
                      mM_u \in \multi{S}
                  \end{matrix}
                  \right\} ;~ \pfbreakline\return \sTypeInit{t}{\multi{\lex{v}}, \lex{\phi}}
                  }
              \end{pfsteps*}
              }
              We can now use Lemma \ref{lem:methspec} to resolve $\zeta$
              {\small
              \begin{pfsteps*}
                  \pf[13]{ \left\{
                  \lit{if} (\sTypeInit{t}{\multi{\lex{v}}, \lex{\phi}}.(\iType{u}).\specName{m}() \noteq
                  \signatureMeta[\emptyset]{M_u[\Psi \by \psi]}
                  )~ \sytxBrace{ \lit{panic}
                  }
                  ~\middle|~
                  \begin{matrix}
                      mM_u \in \multi{S}
                  \end{matrix}
                  \right\} ;~ \pfbreakline \return \sTypeInit{t}{\multi{\lex{v}}, \lex{\phi}}}
              \end{pfsteps*}
              }
              Using the $\assertpattern$ we can further reduce the term.
              While this would happen in a sequential order we simplify the presentation of the proof.
              We begin by looking at $\sTypeInit{t}{\multi{\lex{v}}, \lex{\phi}}.(\iType{u})$.
              Since $\subtype[\emptyset]{\sType{t}\typeActualReceive}{\iType{u}\typeActualMethod}$
              we know that $\sType{t}$ must posses each method defined by $\iType{u}$.
              {\small
              \begin{pfsteps*}
                  \pf[14]{ \prepostsim^* \left\{
                      \lit{if} (\sTypeInit{t}{\multi{\lex{v}}, \lex{\phi}}.\specName{m}() \noteq
                      \signatureMeta[\emptyset]{M_u[\Psi \by \psi]}
                      )~ \sytxBrace{ \lit{panic}
                      }
                      ~\middle|~
                      \begin{matrix}
                          mM_u \in \multi{S}
                      \end{matrix}
                      \right\} ;~ \pfbreakline \quad \return \sTypeInit{t}{\multi{\lex{v}}, \lex{\phi}}
                      \pfbreakline
                      \prepostsim^*
                      \left\{
                      \lit{if} (\signatureMeta[\zeta']{M_t} \noteq
                      \signatureMeta[\emptyset]{M_u[\Psi \by \psi]}
                      )~ \sytxBrace{ \lit{panic}
                      }
                      ~\middle|~
                      \begin{matrix}
                          mM_u \in \multi{S}
                      \end{matrix}
                      \right\} ;~ \pfbreakline \quad \return \sTypeInit{t}{\multi{\lex{v}}, \lex{\phi}}
                  }
              \end{pfsteps*}
              }
              We can now use Lemma \ref{lem:methspec} to resolve $\zeta'$
              {\small
              \begin{pfsteps*}
                  \pf[15]{
                      \prepostsim^*
                      \left\{
                      \lit{if} (\signatureMeta[\emptyset]{M_t[\Phi \by \phi]} \noteq
                      \signatureMeta[\emptyset]{M_u[\Psi \by \psi]}
                      )~ \sytxBrace{ \lit{panic}
                      }
                      ~\middle|~
                      \begin{matrix}
                          mM_u \in \multi{S}
                      \end{matrix}
                      \right\} ;~ \pfbreakline\quad \return \sTypeInit{t}{\multi{\lex{v}}, \lex{\phi}}
                  }
              \end{pfsteps*}
              }
              Since $M_u[\Psi \by \psi] = M_t[\Phi \by \phi]$, no $\lit{if}$ is
              triggered.
              \begin{pfsteps*}
                  \pf[16]{
                      \prepostsim^*
                      \return \sTypeInit{t}{\multi{\lex{v}}, \lex{\phi}}
                      \pfbreakline
                      \prepostsim \sTypeInit{t}{\multi{\lex{v}}, \lex{\phi}}}
              \end{pfsteps*}
              which is the desired term.
              \pfSubcase{\tau = \sType{t}[\phi]}
              For \pfref{5} to hold if $\tau$ is a structure type then it must
              be precisely the same type as target of the assertion.
              \begin{pfsteps*}
                  \pf[17]{\typemeta[\emptyset]{\tau}.\trycast(\sTypeInit{t}{\multi{\lex{v}}, \lex{\phi}})
                  \pfbreakline
                  = \metadataName{\sType{t}}\sytxBrace{\typemeta[\emptyset]{\phi}}.\trycast(\sTypeInit{t}{\multi{\lex{v}}, \lex{\phi}})
                  \pfbreakline
                  \red \{\lit{if} ~ \metadataName{\sType{t}}\sytxBrace{\typemeta[\emptyset]{\phi}}.\typeField_i \noteq \sTypeInit{t}{\multi{\lex{v}}, \lex{\phi}}.(\sType{t}).\dictlit_i.\typeField~\sytxBrace{\lit{panic}}\}_{i<n} ; \return \sTypeInit{t}{\multi{\lex{v}}, \lex{\phi}}
                  }
              \end{pfsteps*}
              Once again we use $\assertpattern$ to resolve assertion. We also
              use the same proof simplification and ignore explicit sequentiality.
              {\small
              \begin{pfsteps*}
                  \pf[18]{\{\lit{if} ~ \metadataName{\sType{t}}\sytxBrace{\typemeta[\emptyset]{\phi}}.\typeField_i \noteq \sTypeInit{t}{\multi{\lex{v}}, \lex{\phi}}.(\sType{t}).\dictlit_i.\typeField~\sytxBrace{\lit{panic}}\}_{i<n} ; \return \sTypeInit{t}{\multi{\lex{v}}, \lex{\phi}}
                  \pfbreakline \prepostsim^*
                  \{\lit{if} ~ \typemeta[\emptyset]{\phi_i} \noteq \sTypeInit{t}{\multi{\lex{v}}, \lex{\phi}}.(\sType{t}).\dictlit_i.\typeField~\sytxBrace{\lit{panic}}\}_{i<n} ; \return \sTypeInit{t}{\multi{\lex{v}}, \lex{\phi}}
                  \pfbreakline \prepostsim^*
                  \{\lit{if} ~ \typemeta[\emptyset]{\phi_i} \noteq \sTypeInit{t}{\multi{\lex{v}}, \lex{\phi}}.\dictlit_i.\typeField~\sytxBrace{\lit{panic}}\}_{i<n} ; \return \sTypeInit{t}{\multi{\lex{v}}, \lex{\phi}}
                  \pfbreakline \prepostsim^*
                  \{\lit{if} ~ \typemeta[\emptyset]{\phi_i} \noteq  \lex{\phi_i}.\typeField~\sytxBrace{\lit{panic}}\}_{i<n} ; \return \sTypeInit{t}{\multi{\lex{v}}, \lex{\phi}}
                  \pfbreakline \prepostsim^*
                  \{\lit{if} ~ \typemeta[\emptyset]{\phi_i} \noteq  \typemeta[\emptyset]{\phi_i}~\sytxBrace{\lit{panic}}\}_{i<n} ; \return \sTypeInit{t}{\multi{\lex{v}}, \lex{\phi}}
                  \pfbreakline \prepostsim^*
                  \sTypeInit{t}{\multi{\lex{v}}, \lex{\phi}}
                  }
              \end{pfsteps*}
              }
        \item[] \caseof{\rassert} --- (b) direction\\
              This direction follows closely the (a) direction other than that it
              does not assume \\$\sType{t}\typeActualReceive\sytxBrace{\multi{v}}.(\tau)
                  \red \sType{t}\typeActualReceive\sytxBrace{\multi{v}}$.
              Yet by our assumption that $\sType{t}\typeActualReceive\sytxBrace{\multi{v}}.(\tau)$
              is not a type assertion error this reduction must exist. It
              then suffices to show that the source and target terms' reductions match,
              which is given in (a).
        \item[] \caseof{\rassert} --- (c) direction\\
              We first note that $e = v.(\tau)$ is the only case for (c)
              as no other term can produce a panic, and that
              $\dicttrans$ is defined as the greatest reduction available.
              As such for $\lex{e} \dicttrans e'$ there is no further $e' \prepostsim$.
              \begin{pfsteps*}
                  \pf[1]{v.(\tau)~\panic}
                  \pf[2]{\subtype[\emptyset]{\vtype(v)\not}{\tau}}
                  \pf[3]{\dict[]{v.(\tau)}{\typemeta[\emptyset]{\tau}.\trycast(\map[\emptyset;\emptyset;\emptyset]{v})}}
                  \pf[4]{v = \sType{t}\typeActualReceive\sytxBrace{\multi{v}}}
                  \pf[5]{\map[\emptyset;\emptyset;\emptyset]{v} =
                      \sTypeInit{t}{\multi{\lex{v}}, \makeDict[\emptyset;\emptyset]{\phi, \Phi}}}
              \end{pfsteps*}
              \pfSubcase{\tau = \iType{u}\typeActualMethod}\\
              For \pfref{2} to hold there must be at least one method
              $mM \in \methods_\emptyset(\tau)$
              such that $mM \not\in \methods_\emptyset(\vtype(v))$.

              Let $\Psi = (\typeFormal[\multi{\beta~\iType{\sigma}}])$,
              and the map $\zeta$ be $\{\multi{\beta \mapsto this.\typeField}\}$.
              \begin{pfsteps*}
                  \pf[9]{\dict[]{\type~\iType{u}\typeFormalMethod~\interface{\multi{S}}}{ \pfbreakline
                      \type \iType{u} \interface{\multi{\lex{S}},~\multi{\specMetadata{S}}}\pfbreakline
                      \funcDelc{\metadataName{\iType{u}}}{\trycast}{x~\any}{\any}{\pfbreakline
                          \qquad\left\{
                          \lit{if} (x.(\iType{u}).\specName{m}() \noteq
                          \signatureMeta{M_u}
                          )~ \sytxBrace{ \lit{panic}
                          }
                          ~\middle|~
                          \begin{matrix}
                              mM_u \in \multi{S}
                          \end{matrix}
                          \right\} ;\pfbreakline
                          \qquad \return x
                          \pfbreakline
                          }
                      }}
              \end{pfsteps*}

              The translated term will always be able to make the reduction
              \begin{pfsteps*}
                  \pf[8]{
                  \typemeta[\emptyset]{\tau}.\trycast(\map[\emptyset;\emptyset;\emptyset]{v})
                  \red
                  \left\{
                  \lit{if} (\map[\emptyset;\emptyset;\emptyset]{v}.(\iType{u}).\specName{m}() \noteq
                  \signatureMeta{M_u}
                  )~ \sytxBrace{ \lit{panic}
                  }
                  ~\middle|~
                  \begin{matrix}
                      mM_u \in \multi{S}
                  \end{matrix}
                  \right\} ; \pfbreakline\quad \return x
                  }
              \end{pfsteps*}
              For convenience we assume that the problematic method $m$ is the
              first to be checked. If this is not the case then we may reduce
              all ok checks using $\prepostsim$ as described in the \rulename{\rassert}
              (a) case.

              \begin{pfsteps*}
                  \pf[7]{
                  \left\{
                  \lit{if} (\map[\emptyset;\emptyset;\emptyset]{v}.(\iType{u}).\specName{m}() \noteq
                  \signatureMeta{M_u}
                  )~ \sytxBrace{ \lit{panic}
                  }
                  ~\middle|~
                  \begin{matrix}
                      mM_u \in \multi{S}
                  \end{matrix}
                  \right\} ; \pfbreakline\quad \return x
                  \pfbreakline
                  \prepostsim^*
                  \lit{if} (\map[\emptyset;\emptyset;\emptyset]{v}.(\iType{u}).\specName{m}() \noteq
                  \signatureMeta{M_u}
                  )~ \sytxBrace{ \lit{panic}
                  } ; \cdots
                  }

              \end{pfsteps*}

              We now need to consider the two possible cases in which
              $mM \not\in \methods_\emptyset(\vtype(v))$ could hold.
              Either there is no method called $m$ in $\methods_\emptyset(\vtype(v))$
              or there is a method $m$ but with a different method signatures.
              In the former case the assertion $E[\map[\emptyset;\emptyset;\emptyset]{v}.(\iType{u})]$
              will panic as by our assumption that translation will never introduce a
              name collision the method $m$ will not be in $\methods(\sType{t})$
              (the methods of $\vtype(\map[\emptyset;\emptyset;\emptyset]{v})$).

              In the latter we assume
              that for $mM_t[\Phi \by \phi] \in \methods_\emptyset(\vtype(v))$
              and $mM_u[\Psi \by \psi] \in \methods_\emptyset(\iType{u}\typeActualMethod)$
              such that $M_t[\Phi \by \phi] \noteq M_u[\Psi \by \psi]$
              then the $\lit{if}$ branches to $\lit{panic}$.

              Let $\Phi = (\typeFormal)$, $\phi = \multi{\tau}$, and the map
              $\zeta' = \{\alpha \mapsto \this.\dictlit_i.\typeField\}$.
              \begin{pfsteps*}
                  \pf[10]{\dict[]{
                          \func~(\this~\sType{t}\typeFormalReceive)~mM_t~\sytxBrace{\return e}\pfbreakline
                      }{
                          \func~(\this~\sType{t})~\specName{m}()~\fnMeta{n}~\sytxBrace{\return \signatureMeta[\zeta']{M_t}}
                      }}
                  \pf[11]{\map[\emptyset;\emptyset;\emptyset]{v}
                      =
                      \sTypeInit{t}{\multi{\lex{v}}, \multi{\dictName{\iType{t}}\sytxBrace{\multi{ptr}, \typemeta[\emptyset]{\tau}}  }}}
                  \pf[12]{
                  \lit{if} (\map[\emptyset;\emptyset;\emptyset]{v}.(\iType{u}).\specName{m}() \noteq \signatureMeta{M_u})~ \sytxBrace{ \lit{panic}} ; \cdots
                  \pfbreakline
                  \prepostsim
                  \lit{if} (\map[\emptyset;\emptyset;\emptyset]{v}.\specName{m}() \noteq \signatureMeta{M_u})~ \sytxBrace{ \lit{panic}} ; \cdots
                  \pfbreakline
                  \prepostsim
                  \lit{if} (\signatureMeta[\zeta']{M_t} \noteq \signatureMeta{M_u})~ \sytxBrace{ \lit{panic}} ; \cdots
                  }
              \end{pfsteps*}
              We can now apply Lemma~\ref{lem:methspec}, first to the lhs then rhs
              \begin{pfsteps*}
                  \pf[13]{
                      \lit{if} (\signatureMeta[\zeta']{M_t} \noteq \signatureMeta{M_u})~ \sytxBrace{ \lit{panic}} ; \cdots
                      \pfbreakline
                      \prepostsim^*
                      \lit{if} (\signatureMeta[\emptyset]{M_t[\Phi\by\phi]} \noteq \signatureMeta[\emptyset]{M_u[\Psi\by\psi]})~ \sytxBrace{ \lit{panic}} ; \cdots
                  }
              \end{pfsteps*}
              By $M_t[\Phi \by \phi] \noteq M_u[\Psi \by \psi]$ this reduces to
              the desired $\lit{panic}$.
              \pfSubcase{\tau = \sType{u}\typeActualMethod}
              {\small
              \begin{pfsteps*}
                  \pf[14]{\dict[]{\type \sType{u}[\Phi] \struct{\cdots}\pfbreakline}{
                  \type \metadataName{\sType{u}} \struct{\multi{\typeField~\typeMetadataLit}}
                  \pfbreakline
                  \funcDelc{\this~\metadataName{\sType{u}}}{\trycast}{x~\any}{\any}{
                  \pfbreakline
                  \qquad x.(\sType{u})~;~\{\lit{if} ~ \this.\typeField_i \noteq x.(\sType{u}).\dictlit_i.\typeField~\sytxBrace{\lit{panic}}\}_{i<n} ; \return x
                  \pfbreakline
                  }
                  }}
              \end{pfsteps*}
              }
              For $\vtype(v) = \sType{t}\typeActualReceive$.
              If $\tau$ is a struct then there are two case.
              Either
              $\sType{u} \noteq \sType{t}$, or
              $\sType{u} = \sType{t}$
              but for $\phi = \multi{\sigma}$ and $\psi = \multi{\tau}$ there
              exists an $i$ such that $\sigma_i \noteq \tau_i$.

              We first consider the case $\sType{u} \noteq \sType{t}$.
              Note that  $\vtype(\map[\emptyset;\emptyset;\emptyset]{v}) = \sType{t}\typeActualReceive$.
              \begin{pfsteps*}
                  \pf[20]{\typemeta[\emptyset]{\sType{u}\typeActualMethod}.\trycast(\sTypeInit{t}{\multi{\lex{v}}, \makeDict[\emptyset;\emptyset]{\phi, \Phi}})
                      \pfbreakline
                      = \metadataName{\sType{u}}\sytxBrace{\typemeta[\emptyset]{\psi}}.\trycast(\sTypeInit{t}{\multi{\lex{v}}, \makeDict[\emptyset;\emptyset]{\phi, \Phi}})
                      \pfbreakline
                      \red \sTypeInit{t}{\multi{\lex{v}}, \makeDict[\emptyset;\emptyset]{\phi, \Phi}}.(\sType{u})~;\cdots
                  }
              \end{pfsteps*}
              By our assumption $\sType{u} \noteq \sType{t}$ we get the desired
              \panic.

              We now consider the case of $\sType{u} = \sType{t}$
              but for $\phi = \multi{\sigma}$ and $\psi = \multi{\tau}$ there
              exists an $i$ such that $\sigma_i \noteq \tau_i$.

              \begin{pfsteps*}
                  \pf[21]{\typemeta[\emptyset]{\tau}.\trycast(\sTypeInit{t}{\multi{\lex{v}}, \makeDict[\emptyset;\emptyset]{\phi, \Phi}})
                  \pfbreakline
                  = \metadataName{\sType{t}}\sytxBrace{\typemeta[\emptyset]{\psi}}.\trycast(\sTypeInit{t}{\multi{\lex{v}}, \makeDict[\emptyset;\emptyset]{\phi, \Phi}})
                  \pfbreakline
                  \red \sTypeInit{t}{\multi{\lex{v}}, \makeDict[\emptyset;\emptyset]{\phi, \Phi}}.(\sType{t})~;~\{\lit{if} ~ \metadataName{\sType{u}}\sytxBrace{\typemeta[\emptyset]{\psi}}.\typeField_i \noteq
                  \pfbreakline \qquad \qquad \sTypeInit{t}{\multi{\lex{v}}, \makeDict[\emptyset;\emptyset]{\phi, \Phi}}.(\sType{t}).\dictlit_i.\typeField~\sytxBrace{\lit{panic}}\}_{i<n} ; \return \cdots
                  \pfbreakline
                  \{\lit{if} ~ \metadataName{\sType{u}}\sytxBrace{\typemeta[\emptyset]{\psi}}.\typeField_i \noteq
                  \pfbreakline \qquad \qquad \sTypeInit{t}{\multi{\lex{v}}, \makeDict[\emptyset;\emptyset]{\phi, \Phi}}.(\sType{t}).\dictlit_i.\typeField~\sytxBrace{\lit{panic}}\}_{i<n} ; \return \cdots
                  }
              \end{pfsteps*}
              We once again only need to consider the (lowest) $i$ for which  $\sigma_i \noteq \tau_i$.
              All prior $\lit{if}$ statement pass as per \rulename{\rassert} (a).

              \begin{pfsteps*}
                  \pf[22]{
                  \{\lit{if} ~ \metadataName{\sType{u}}\sytxBrace{\typemeta[\emptyset]{\psi}}.\typeField_i \noteq
                  \pfbreakline \qquad \qquad \sTypeInit{t}{\multi{\lex{v}}, \makeDict[\emptyset;\emptyset]{\phi, \Phi}}.(\sType{t}).\dictlit_i.\typeField~\sytxBrace{\lit{panic}}\}_{i<n} ; \return \cdots
                  \pfbreakline
                  \prepostsim^*
                  \lit{if} ~ \metadataName{\sType{u}}\sytxBrace{\typemeta[\emptyset]{\psi}}.\typeField_i \noteq
                  \pfbreakline \qquad \qquad \sTypeInit{t}{\multi{\lex{v}}, \makeDict[\emptyset;\emptyset]{\phi, \Phi}}.(\sType{t}).\dictlit_i.\typeField~\sytxBrace{\lit{panic}} ; \return \cdots
                  \pfbreakline
                  \prepostsim
                  \lit{if} ~ \typemeta[\emptyset]{\tau_i} \noteq
                  \sTypeInit{t}{\multi{\lex{v}}, \makeDict[\emptyset;\emptyset]{\phi, \Phi}}.(\sType{t}).\dictlit_i.\typeField~\sytxBrace{\lit{panic}} ; \return \cdots
                  \pfbreakline
                  \prepostsim
                  \lit{if} ~ \typemeta[\emptyset]{\tau_i} \noteq
                  \sTypeInit{t}{\multi{\lex{v}}, \makeDict[\emptyset;\emptyset]{\phi, \Phi}}.\dictlit_i.\typeField~\sytxBrace{\lit{panic}} ; \return \cdots
                  \pfbreakline
                  \prepostsim
                  \lit{if} ~ \typemeta[\emptyset]{\tau_i} \noteq
                  \makeDict[\emptyset;\emptyset]{\sigma_i, \Phi_i}.\typeField~\sytxBrace{\lit{panic}} ; \return \cdots
                  \pfbreakline
                  \prepostsim
                  \lit{if} ~ \typemeta[\emptyset]{\tau_i} \noteq
                  \typemeta[\emptyset]{\sigma_i}~\sytxBrace{\lit{panic}} ; \return \cdots
                  }
              \end{pfsteps*}
              by our assumption that $\tau_i \noteq \sigma_i$ we get the desired
              \panic.
        \item[] \caseof{\rassert} --- (d) direction\\
              Once again we need only consider $e= v.(\tau)$.
              This case follows from case (c), but we must first
              show that there exists at least an $\lex{e}\prepre^*\red d$
              reduction. This $d$ then reduces by $\prepostsim^*$ to $e'$, where
              $e'$ is a type assertion error.
              We know that $d$ exists by observing that the translation of $v.(\tau)$
              will always reduce ($\red$) by \rulename{r-call} on $\trycast$.
              This $d$ will then reduce ($\prepostsim^*$) to $e'$, which by
              the same logic as (c) is a type assertion error.
        \item[] \caseof{\rcontext} --- (a) direction\\
              The only non-immediate case for \rulename{\rcontext} is when 
              $E = \hole.m\typeActualMethod(\multi{v})$. 
              \begin{pfsteps*}
                  \pf[1]{\infer{
                    e.m\typeActualMethod(\multi{v}) 
                    \red 
                    d.m\typeActualMethod(\multi{v})
                  }{
                    e\red d
                  }}
                  \pf[2]{
                      \wellTyped[\emptyset; \emptyset]{ e.m\typeActualMethod(\multi{v}) }{\sigma}
                  }
                  \pfstep{\inversion{\tcall}}
                    {3}{\wellTyped[\emptyset; \emptyset]{e}{t[\phi]}} 
              \end{pfsteps*}
              By preservation (lemma~\ref{thm:fggTypePreservation})
              \begin{pfsteps*}
                \pf[4]{\wellTyped[\emptyset; \emptyset]{d}{u[\phi']}}
                \pf[5]{\subtype[\emptyset]{u[\phi']}{t[\phi]}} 
            \end{pfsteps*}
            Translating $E[e]$ and $E[d]$ we get 
            \begin{pfsteps*}
                \pf[6]{\map[\emptyset;\emptyset;\emptyset]{E[e]} = 
                \map[\emptyset;\emptyset;\emptyset]{e}.(t).m(\lex{\psi}, 
                \multi{\map[\emptyset;\emptyset;\emptyset]{e}}  ) 
                }
                \pf[7]{\map[\emptyset;\emptyset;\emptyset]{E[d]} = 
                \map[\emptyset;\emptyset;\emptyset]{d}.(u).m(\lex{\psi}, 
                \multi{\map[\emptyset;\emptyset;\emptyset]{e}}  ) 
                }
            \end{pfsteps*}
            By the induction hypothesis on $e\red d$ 
            \begin{pfsteps*}
                \pf[8]{\map[\emptyset;\emptyset;\emptyset]{e}
                \dicttrans \precongdict^\ast
                \map[\emptyset;\emptyset;\emptyset]{d}}
            \end{pfsteps*}
            Using the evaluation context $\hole.(t).m(\lex{\psi}, 
            \multi{\map[\emptyset;\emptyset;\emptyset]{e}})$
            \begin{pfsteps*}
                \pf[9]{\map[\emptyset;\emptyset;\emptyset]{e}.(t).m(\lex{\psi}, 
                \multi{\map[\emptyset;\emptyset;\emptyset]{e}})
                \dicttrans \precongdict^\ast
                \map[\emptyset;\emptyset;\emptyset]{d}.(t).m(\lex{\psi}, 
                \multi{\map[\emptyset;\emptyset;\emptyset]{e}})
                }
            \end{pfsteps*}
            Using synthetic assertion specialisation and \ref{lem:sub:pres} on \pfref{5}
            \begin{pfsteps*}
                \pf[10]{\map[\emptyset;\emptyset;\emptyset]{d}.(t).m(\lex{\psi}, 
                \multi{\map[\emptyset;\emptyset;\emptyset]{e}})
                \precongdict 
                \map[\emptyset;\emptyset;\emptyset]{d}.(u).m(\lex{\psi}, 
                \multi{\map[\emptyset;\emptyset;\emptyset]{e}})}
            \end{pfsteps*}
        \item[] \caseof{\rcontext} --- (a) direction\\
        The only non-immediate case for \rulename{\rcontext}
        is for $E = \hole.m\typeActualMethod(\multi{v})$
        \begin{pfsteps*}
            \pf[0]{\wellTyped[\emptyset;\emptyset]{e}{t[\phi]}}
            \pf[2]{\map[\emptyset;\emptyset;\emptyset]{E[e]} = 
                \map[\emptyset;\emptyset;\emptyset]{e}.(t).m(\lex{\psi}, 
                \multi{\map[\emptyset;\emptyset;\emptyset]{e}}  ) 
                }
            \pf[1]{ \map[\emptyset;\emptyset;\emptyset]{e}.(t).m(\lex{\psi}, 
                \multi{\map[\emptyset;\emptyset;\emptyset]{e}}) \dicttrans e'.(t).m(\lex{\psi}, 
                \multi{\map[\emptyset;\emptyset;\emptyset]{e}}) } 
        \end{pfsteps*}
        By inversion on the reduction $\dicttrans$ using context 
        $E' = \hole.(t).m(\lex{\psi}, 
        \multi{\map[\emptyset;\emptyset;\emptyset]{e}})$ 
        \begin{pfsteps*}
            \pf[3]{ \map[\emptyset;\emptyset;\emptyset]{e} \dicttrans e'}
        \end{pfsteps*}
        By the induction hypothesis on \pfref{3} there exists $d$ 
        \begin{pfsteps*}
            \pf[4]{e\red d}
            \pf[5]{e' \precongdict^* \map[\emptyset;\emptyset;\emptyset]{d}}
            \pfstep{lemma~\ref{thm:fggTypePreservation} \pfref{0}}
            {7}{\wellTyped[\emptyset;\emptyset]{d}{u[\phi']}}
            \pf[8]{\subtype[\emptyset]{u[\phi']}{t[\phi]}}
        \end{pfsteps*}
        Applying \pfref{5} on context $C=E'$ we get that 
        \begin{pfsteps*}
            \pf[6]{e'.(t).m(\lex{\psi}, 
            \multi{\map[\emptyset;\emptyset;\emptyset]{e}}) \precongdict^* 
            \map[\emptyset;\emptyset;\emptyset]{d}.(t).m(\lex{\psi}, 
                \multi{\map[\emptyset;\emptyset;\emptyset]{e}}) }
        \end{pfsteps*}
        Using synthetic assertion specialisation and \ref{lem:sub:pres} on \pfref{8}
        \begin{pfsteps*}
            \pf[6]{\map[\emptyset;\emptyset;\emptyset]{d}.(t).m(\lex{\psi}, 
            \multi{\map[\emptyset;\emptyset;\emptyset]{e}}) 
            \precongdict^* 
            \map[\emptyset;\emptyset;\emptyset]{d}.(u).m(\lex{\psi}, 
                \multi{\map[\emptyset;\emptyset;\emptyset]{e}}) }
        \end{pfsteps*}
        Using \rulename{\rcontext} on \pfref{4} and context $E$. 
        \begin{pfsteps*}
            \pf[10]{E[e]\red E[d]}
        \end{pfsteps*}
        Finally, using the typing of $d$ \pfref{7} we get the translation of $E[d]$
        \begin{pfsteps*}
            \pf[11]{\map[\emptyset;\emptyset;\emptyset]{E[d]} = 
            \map[\emptyset;\emptyset;\emptyset]{d}.(u).m(\lex{\psi}, 
            \multi{\map[\emptyset;\emptyset;\emptyset]{e}})}
        \end{pfsteps*}
    \end{itemize}
\end{proof}

\lemredrewrite*
  \begin{proof}
    (1) is immediate as if $d_1\red d_2$ then  $E[d_1] \red E[d_2] \red e_3$.\\
    (2) is by case analysis on the reduction $e_2\red e_3$.
    \begin{itemize}
        \item[] \caseof{r-field}:
              We have that $e_2 = \sTypeInit{t}{\multi{v}}$.
              There are two possible options for the congruence
              evaluation context $C$, either it is $\hole$ or it is
              of the form $\sTypeInit{t}{\multi{v}, \hole, \multi{v}}$.
              In either case we get a contradiction as both cases
              are captured by the standard evaluation context.
        \item[] \caseof{r-call}:
              Same logic as \rulename{r-field}.
        \item[] \caseof{r-assert}:
              Same logic as \rulename{r-field}.
        \item[] \caseof{r-context}:
              We begin with an assumption that the
              congruence context $C$ deviated from the standard context
              at the top level. Namely there does not exist $E'$, $C'$ such
              that $C = E'[C']$. We do this for clarity.
              In the situation that this does not
              hold, we may simply use add $E'$ where appropriate.
  
              There are two cases for $C\neq E$. Either
              $C$ is of the form
              $\sTypeInit{t}{\multi{v}, e, \multi{e_1}, C_{\text{sub}}, \multi{e_2}}$ or
              $v.m(\multi{v}, e, \multi{e_1}, C_{\text{sub}}, \multi{e_2})$.
              We focus on the former.
  
              The starting term $e_1$ is $\sTypeInit{t}{\multi{v}, e, \multi{e_1}, C_{\text{sub}}[d_1], \multi{e_2}}$,
              the subsequent term $e_2$ is $\sTypeInit{t}{\multi{v}, e, \multi{e_1}, C_{\text{sub}}[d_2], \multi{e_2}}$,
              and the final term $e_3$ is $\sTypeInit{t}{\multi{v}, d, \multi{e_1}, C_{\text{sub}}[d_2], \multi{e_2}}$
              for some $d$ such that $e\red d$.
  
              Our initial term may make a standard reduction to
              $e_2' = \sTypeInit{t}{\multi{v}, d, \multi{e_1}, C_{\text{sub}}[d_1], \multi{e_2}}$,
              followed by a $\dictred$ reduction using $C'' = \sTypeInit{t}{\multi{v}, d, \multi{e_1}, C_{\text{sub}}, \multi{e_2}}$
              to $e_3$.
  
    \end{itemize}
  \end{proof}

\lemredvalue*
  \begin{proof}
    Assume for contradiction that $\dictred \not\in \red$. 
    Either 
    \begin{itemize}
        \item[] \caseStd $C[e']\dictred C[d] = v$ where $C$ is not the 
            standard $\red$ reduction context. Since there 
            must be another $\red$ reduction from $C[d]$ using the 
            standard reduction context $E$ it cannot be a value. 
        \item[] \caseStd $C[e'.(u)]\dictred C[e'.(t)]$. Immediate as $ C[e'.(t)]$
            is not a value. 
    \end{itemize}
  \end{proof}
\section{Appendix: Motivating Example Translated using erasure}
\label{sec:erasure-example}

% \hspace{-4.5mm}
\begin{figure}[H]
    \centering
\begin{minipage}[ht]{0.5\linewidth }
\begin{lstfcgg}
type List interface {
    permute() Any
    insert( val Any, i Any) Any
    map(func(Any) Any) Any
    len() Any }

type Cons struct {
    head Any
    tail Any
}
type Nil struct {}

// Naively produce all possible list orderings of list this. 
func (this Cons) permute() Any { 
    if this.len().(int) == 1 {
    return Cons{this, Nil{}}
    } else {
    return flatten(this.tail.permute().Map(
        func(l Any) Any{
        var l_new Any = Nil{}
        for i := 0; i <= l.(List).len().(int); i++ {
            l_new = Cons{l.(List).insert(this.head, i), l_new}
        }
        return l_new
}))}}
func (this Nil) permute() Any { return Nil{} }
\end{lstfcgg}
\end{minipage}
\caption{The \gls{erasure} translation of the example in \S~\ref{sec:introduction}}
\end{figure}

}{}

\end{document}